\documentclass[reqno,10pt,letterpaper]{amsart}

\usepackage{lipsum}
\usepackage{amsmath}
\usepackage{amssymb}
\usepackage{amsthm}
\usepackage{mathrsfs}
\usepackage{accents}
\usepackage{calc}
\usepackage{arydshln}
\usepackage{upgreek}
\usepackage{slashed}
\usepackage{xifthen}
\usepackage{graphicx}
\usepackage{longtable}
\usepackage[inline]{enumitem}


\usepackage{xcolor}
\definecolor{winered}{rgb}{0.6,0,0}
\definecolor{lessblue}{rgb}{0,0,0.7}

\usepackage[pdftex,colorlinks=true,linkcolor=winered,citecolor=lessblue,urlcolor=lessblue,breaklinks=true,bookmarksopen=true]{hyperref}

\hyphenation{Schwarz-schild}
\hyphenation{Min-kow-ski}
\hyphenation{Pro-po-si-tion}
\hyphenation{hy-per-sur-face}

\setcounter{tocdepth}{3}
\setcounter{secnumdepth}{3}

\makeatletter
\newcommand{\myitem}[2]{\item[\rm(#2)]\def\@currentlabel{#2}\label{#1}}
\makeatother

\addtolength{\textheight}{0.4in}
\addtolength{\oddsidemargin}{-0.5in}
\addtolength{\evensidemargin}{-0.5in}
\addtolength{\textwidth}{1.0in}
\setlength{\topmargin}{0.00in}
\setlength{\headheight}{0.18in}
\setlength{\marginparwidth}{1.0in}
\setlength{\abovedisplayskip}{0.2in}
\setlength{\belowdisplayskip}{0.2in}
\setlength{\parskip}{0.05in}


\usepackage{titletoc}

\makeatletter

\def\@tocline#1#2#3#4#5#6#7{
\begingroup
  \par
    \parindent\z@ \leftskip#3 \relax \advance\leftskip\@tempdima\relax
                  \rightskip\@pnumwidth plus 4em \parfillskip-\@pnumwidth
    \ifcase #1 
       \vskip 0.6em \hskip 0em 
       \or
       \or \hskip 0em 
       \or \hskip 1em 
    \fi%
    %
    #6
    %
    \nobreak\relax{\leavevmode\leaders\hbox{\,.}\hfill}
    \hbox to\@pnumwidth {\@tocpagenum{#7}}
  \par
\endgroup
}

 \def\l@section{\@tocline{0}{0pt}{0pc}{}{}}

\renewcommand{\tocsection}[3]{%
  \indentlabel{\@ifnotempty{#2}{ 
    \ignorespaces\bfseries{#2. #3}}}
  \indentlabel{\@ifempty{#2}{\ignorespaces\bfseries{#3}}{}} 
    \vspace{1.5pt}}

\renewcommand{\tocsubsection}[3]{%
  \indentlabel{\@ifnotempty{#2}{
    \ignorespaces#2. #3}}
  \indentlabel{\@ifempty{#2}{\ignorespaces #3}{}}
    \vspace{1.5pt}}

\renewcommand{\tocsubsubsection}[3]{%
  \indentlabel{\@ifnotempty{#2}{
    \ignorespaces#2. #3}}
  \indentlabel{\@ifempty{#2}{\ignorespaces #3}{}}
    \vspace{1.5pt}}

\makeatother

\makeatletter
\def\@nomenstarted{0}
\newlength{\@nomenoldtabcolsep}

\newcommand{\nomenstart}
  {%
    \def\@nomenstarted{1}%
    \setlength{\@nomenoldtabcolsep}{\tabcolsep}%
    \setlength{\tabcolsep}{3.5pt}%
    \begin{longtable}{p{0.11\textwidth} p{0.86\textwidth}}
  }

\newcommand{\nomenitem}[2]{%
    \ifcase\@nomenstarted%
      \or 
      \or \\ 
    \fi%
    #1\,{\leavevmode\leaders\hbox{\,.}\hfill} & #2%
    \def\@nomenstarted{2}%
  }%
\newcommand{\nomenend}
  {\\%
      \end{longtable}%
      \setlength{\tabcolsep}{\@nomenoldtabcolsep}%
      \def\@nomenstarted{0}%
  }
\makeatother

\makeatletter
\newcommand{\BIG}{\bBigg@{3.5}}
\newcommand{\vast}{\bBigg@{4}}
\newcommand{\Vast}{\bBigg@{5}}
\newcommand{\VAST}[1]{\bBigg@{#1}}
\makeatother

\allowdisplaybreaks

\numberwithin{equation}{section}
\numberwithin{figure}{section}
\newtheorem{thm}{Theorem}[section]

\newtheorem{prop}[thm]{Proposition}
\newtheorem{lemma}[thm]{Lemma}
\newtheorem{cor}[thm]{Corollary}
\newtheorem{conj}[thm]{Conjecture}

\newtheorem*{thm*}{Theorem}
\newtheorem*{prop*}{Proposition}
\newtheorem*{cor*}{Corollary}
\newtheorem*{conj*}{Conjecture}

\theoremstyle{definition}
\newtheorem{definition}[thm]{Definition}

\theoremstyle{remark}
\newtheorem{rmk}[thm]{Remark}

\makeatletter
\newcommand{\fakephantomsection}{%
  \Hy@MakeCurrentHref{\@currenvir.\the\Hy@linkcounter}
  \Hy@raisedlink{\hyper@anchorstart{\@currentHref}\hyper@anchorend}%
  \Hy@GlobalStepCount\Hy@linkcounter%
}
\makeatother


\newcommand{\mc}{\mathcal}
\newcommand{\cA}{\mc A}

\newcommand{\cC}{\mc C}

\newcommand{\cI}{\mc I}

\newcommand{\cK}{\mc K}
\newcommand{\cL}{\mc L}
\newcommand{\cM}{\mc M}

\newcommand{\cO}{\mc O}

\newcommand{\cR}{\mc R}

\newcommand{\cU}{\mc U}
\newcommand{\cV}{\mc V}

\newcommand{\ms}{\mathscr}

\newcommand{\sE}{\ms E}

\newcommand{\sR}{\ms R}

\newcommand{\sV}{\ms V}
\newcommand{\sW}{\ms W}


\newcommand{\TT}{\mathbb{T}}

\newcommand{\C}{\mathbb{C}}
\newcommand{\N}{\mathbb{N}}
\newcommand{\R}{\mathbb{R}}

\newcommand{\Sph}{\mathbb{S}}


\newcommand{\sfb}{\mathsf{b}}

\newcommand{\sfG}{\mathsf{G}}


\newcommand{\bfB}{\mathbf{B}}

\newcommand{\fa}{\mathfrak{a}}

\newcommand{\fm}{\mathfrak{m}}
\newcommand{\fp}{\mathfrak{p}}

\newcommand{\ft}{\mathfrak{t}}



\newcommand{\slg}{\slashed{g}{}}






\newcommand{\ran}{\operatorname{ran}}

\renewcommand{\Re}{\operatorname{Re}}

\newcommand{\Id}{\operatorname{Id}}

\newcommand{\mathspan}{\operatorname{span}}
\newcommand{\supp}{\operatorname{supp}}

\newcommand{\tr}{\operatorname{tr}}

\newcommand{\diag}{\operatorname{diag}}


\newcommand{\dS}{{\mathrm{dS}}}

\newcommand{\Ups}{\Upsilon}

\newcommand{\eps}{\epsilon}

\newcommand{\hra}{\hookrightarrow}
\newcommand{\la}{\langle}

\newcommand{\ol}{\overline}
\newcommand{\pa}{\partial}
\newcommand{\dd}{{\mathrm d}}
\newcommand{\ra}{\rangle}

\newcommand{\ul}[1]{\underline{#1}{}}

\newcommand{\wh}{\widehat}


\newcommand{\ubar}[1]{\underaccent{\bar}#1}
\newcommand{\pfstep}[1]{$\bullet$\ \underline{\textit{#1}}}
\newcommand{\pfsubstep}[2]{{\bf#1}\ \textit{#2}}

\newcommand{\bop}{{\mathrm{b}}}


\newcommand{\cp}{{\mathrm{c}}}

\newcommand{\Diff}{\mathrm{Diff}}

\newcommand{\Vb}{\cV_\bop}

\newcommand{\Diffb}{\Diff_\bop}


\newcommand{\Tb}{{}^{\bop}T}

\newcommand{\half}{{\tfrac{1}{2}}}


\newcommand{\loc}{{\mathrm{loc}}}
\newcommand{\CI}{\cC^\infty}
\newcommand{\CIdot}{\dot\cC^\infty}

\newcommand{\CIc}{\cC^\infty_\cp}

\newcommand{\Hb}{H_{\bop}}


\newcommand{\Riem}{\mathrm{Riem}}
\newcommand{\Ric}{\mathrm{Ric}}

\newcommand{\bhm}{\fm}
\newcommand{\bha}{\fa}

\newcommand{\openbigpmatrix}[1]
  {%
    \def\@bigpmatrixsize{#1}%
    \addtolength{\arraycolsep}{-#1}%
    \begin{pmatrix}%
  }
\newcommand{\closebigpmatrix}
  {%
    \end{pmatrix}%
    \addtolength{\arraycolsep}{\@bigpmatrixsize}%
  }




\newlength{\enummargin}\setlength{\enummargin}{1.5em}

\newcommand{\usref}[1]{{\upshape\ref{#1}}}



\DeclareGraphicsExtensions{.mps}

\makeatletter
\newcommand*{\fwbw}[1]{\expandafter\@fwbw\csname c@#1\endcsname}
\newcommand*{\@fwbw}[1]{\ifcase #1 \or {\rm fw}\or {\rm bw}\fi}
\AddEnumerateCounter{\fwbw}{\@fwbw}
\makeatother

\begin{document}

\title[Stability of the expanding region of KdS]{Stability of the expanding region of Kerr--de~Sitter spacetimes and smoothness at the conformal boundary}

\date{\today}

\begin{abstract}
  We give a new proof of the recent result by Fournodavlos--Schlue on the nonlinear stability of the expanding region of Kerr--de~Sitter spacetimes as solutions of the Einstein vacuum equations with positive cosmological constant. Our gauge is a modification of a generalized harmonic gauge introduced by Ringstr\"om in which the asymptotic analysis becomes particularly simple. Due to the hyperbolic character of our gauge, our stability result is local near points on the conformal boundary. We show furthermore that, in yet another gauge, the conformally rescaled metric is smooth down to the future conformal boundary, with the coefficients of its Fefferman--Graham type asymptotic expansion featuring a mild singularity at future timelike infinity of the black hole.
\end{abstract}

\subjclass[2010]{Primary: 83C05, 35B35. Secondary: 35C20, 35L05}

\author{Peter Hintz}
\address{Department of Mathematics, ETH Z\"urich, R\"amistrasse 101, 8092 Z\"urich, Switzerland}
\email{peter.hintz@math.ethz.ch}

\author{Andr\'as Vasy}
\address{Department of Mathematics, Stanford University, CA 94305-2125, USA}
\email{andras@math.stanford.edu}

\maketitle

\section{Introduction}
\label{SI}

We study the stability of expanding regions of solutions of the Einstein vacuum equations
\begin{equation}
\label{EqIEin}
  \Ric(g) - \Lambda g = 0
\end{equation}
where the cosmological constant $\Lambda$ is positive; we fix $\Lambda=3$ (which can always be achieved by scaling). Here $g$ is a Lorentzian metric (with signature $(-,+,+,+)$) on a 4-dimensional smooth manifold $M^\circ$. The basic example is the de~Sitter solution
\begin{equation}
\label{EqIdS}
  M^\circ = (0,\infty)_\tau \times \R^3_x,\qquad
  g_\dS = \frac{-\dd\tau^2+\dd x^2}{\tau^2}.
\end{equation}
The conformal rescaling $\tau^2 g_\dS=-\dd\tau^2+\dd x^2$ is smooth down to the boundary of
\[
  M:=[0,\infty)_\tau\times\R^3_x,
\]
which is called the \emph{conformal boundary}. The de~Sitter metric is often encountered in a different coordinate system $\tilde t=-\frac12\log(|x|^2-\tau^2)$, $\tilde r=\frac{|x|}{\tau}$, $\omega=\frac{x}{|x|}$, where it takes the form
\begin{equation}
\label{EqIgdS}
  g_\dS = -(\tilde r^2-1)^{-1}\,\dd\tilde r^2 + (\tilde r^2-1)\,\dd\tilde t^2 + \tilde r^2\slg,
\end{equation}
with $\slg$ is the standard metric on $\Sph_\omega^2$. These coordinates are valid in the expanding region $|x|>\tau$. (In the \emph{static region} $|x|<\tau$, setting $t=-\frac12\log(\tau^2-|x|^2)$ yields the same expression for the metric.)

The Schwarzschild--de~Sitter (SdS) metric describes a (non-rotating) black hole in de~Sitter space. The metric depends on a mass parameter $\bhm\in\R$ and is given by
\[
  g_\bhm = -\Bigl(\tilde r^2-\frac{2\bhm}{\tilde r}-1\Bigr)^{-1}\,\dd\tilde r^2 + \Bigl(\tilde r^2-\frac{2\bhm}{\tilde r}-1\Bigr)^{-1}\,\dd\tilde t^2 + \tilde r^2\slg
\]
in the expanding region, which is the region where $\tilde r$ is larger than the largest real root of $\tilde r^2-\frac{2\bhm}{\tilde r}-1$. Comparing this expression with~\eqref{EqIgdS}, the mass $\bhm$ thus contributes metric coefficients of relative size $\cO(\tilde r^{-3})$ as $r\to\infty$. In the coordinates $z=(\tau,x)$, one finds that
\begin{equation}
\label{EqISdS}
  g_\bhm = g_\dS + h_{\mu\nu}\frac{\dd z^\mu}{\tau}\frac{\dd z^\nu}{\tau}
\end{equation}
where $h_{\mu\nu}=h_{\mu\nu}(\tilde r)=\cO(\tilde r^{-3})$; in fact, $\tilde r^3 h_{\mu\nu}$ is smooth in $\tilde r^{-1}=\frac{\tau}{|x|}$ near $\tilde r=\infty$. The set (in an appropriate compactification, introduced below) where $\tilde t<\infty$, $\tilde r^{-1}=0$ defines the conformal boundary of the SdS black hole spacetime. A similar construction can be performed for the more general Kerr--de~Sitter (KdS) metric $g_\sfb$, $\sfb=(\bhm,\bha)$, describing a black hole of mass $\bhm$ and specific angular momentum $\bha$ in de~Sitter space.

We interpret this geometrically as follows: we blow up the point $(\tau,x)=(0,0)$ in $M$ to define a new manifold with corners $\breve M$; a neighborhood of the conformal boundary of $\breve M$ is then covered by the chart
\begin{equation}
\label{EqIbreveM}
  [0,\infty)_\rho \times [0,\infty)_R \times \Sph^2_\omega \subset \breve M,\qquad \rho:=\frac{\tau}{|x|}=\tilde r^{-1},\quad R:=|x|,
\end{equation}
with the conformal boundary being the interior of $\cI^+:=\{0\}\times[0,\infty)\times\Sph^2$, while the interior of $\cK:=[0,\infty)\times\{0\}\times\Sph^2$ contains all points at $\tilde t=\infty$ of the level sets of $\tilde r$ (as is evident from $R=\frac{\tilde r e^{-\tilde t}}{\sqrt{\tilde r^2-1}}\approx e^{-\tilde t}$). See Figure~\ref{FigIMfd}. Thus, $\cK$ is a blown-up version of $i^+$.\footnote{We use the letter $\cK$ since $i^+$ is too similar to $\cI^+$.}

\begin{figure}[!ht]
\centering
\includegraphics{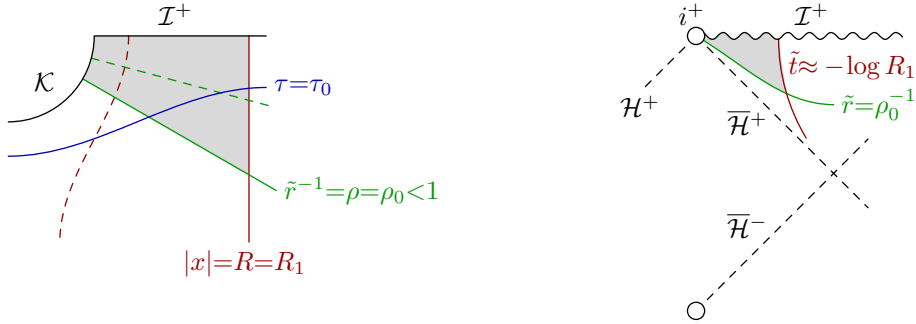}
\caption{\textit{On the left:} the manifold $\breve M$, with level sets of the coordinates $\tau$ (blue) and $|x|=R$ (red) on de~Sitter space; the level sets of $\tilde t$ in the region above the solid line $\rho=\rho_0<1$ (green) are approximately the same as those of $R$ and thus not shown. The level sets of $\rho=\tilde r^{-1}$ are shown in green. The highlighted region is part of the expanding region of SdS. \textit{On the right:} part of the Penrose diagram of SdS, with the part of the expanding region from the left highlighted.}
\label{FigIMfd}
\end{figure}

The point is that while the asymptotic behavior of the de~Sitter metric is the same at all points of the conformal boundary, the description of the asymptotic behavior of a KdS metric near its conformal boundary necessarily involves two asymptotic regimes ($\cI^+$ and $\cK$).

We shall study the initial value problem for~\eqref{EqIEin} when the initial data are given on a level set $\rho=\rho_0$ and asymptote to those of the KdS metric $g_\sfb$ as $R\to 0$ (i.e.\ as one approaches $\cK\approx i^+$). Recall that initial data are the first and second fundamental forms $\gamma,k$ of $\{\rho=\rho_0\}$, and they are subject to the constraint equations
\[
  R_\gamma-|k|_\gamma^2+(\tr_\gamma k)^2-2\Lambda=0,\qquad
  \delta_\gamma k+\dd\tr_\gamma k=0.
\]
(Here $R_\gamma$ is the scalar curvature, and $(\delta_\gamma k)_\mu=-\gamma^{\kappa\lambda}k_{\mu\kappa;\lambda}$.) Note that in the de~Sitter and KdS geometries, the endpoints of future causal curves starting at a point on $\rho=\rho_0$ lie in the interior of $\cI^+$ and are thus far from $\cK$. It is thus natural to expect that the spacetime metric evolving from such initial data still asymptotes to the same KdS metric $g_\sfb$ at $\cK$. On the other hand, away from $\cK$, we are, in a sense, studying a perturbation of de~Sitter space. (This is rigorously true when $R\geq R_1>0$ and $\rho=\rho_0\ll 1$ is small depending on $R_1$.) Thus classical theorems by Friedrich \cite{FriedrichDeSitterPastSimple} and Ringstr\"om \cite{RingstromEinsteinScalarStability} allow one to control the evolution of this part of the initial data. (We recall these results below.)

\begin{figure}[!ht]
\centering
\includegraphics{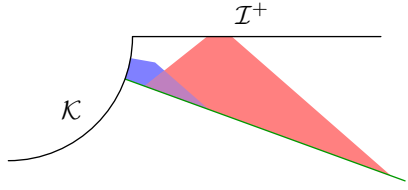}
\caption{Initial data are given on the green (spacelike) hypersurface. In the blue region, one can easily control the evolution of asymptotically-KdS data: the metrics remain asymptotic to the same KdS metric at $\cK$. In the red region, stability results for de~Sitter space apply. In a neighborhood of the corner, new methods are needed.}
\label{FigIRegions}
\end{figure}

The main difficulty is thus to control the evolving spacetime metric near the corner $\cK\cap\cI^+$; see Figure~\ref{FigIRegions}. This was first overcome in the recent work of Fournodavlos--Schlue \cite{FournodavlosSchlueExpanding} who considered initial data posed on a cylinder $\R_{\tilde t}\times\Sph_\omega^2$ on which the data asymptote to KdS data near $\{\pm\infty\}\times\Sph^2$ (see Figure~\ref{FigIData} below). Concretely, \cite[Theorem~1]{FournodavlosSchlueExpanding} shows that, under a smallness condition on the initial data relative to some fixed KdS data (measured in a Sobolev space with exponential weights as $|\tilde t|\to\infty$), the evolving spacetime metric $g$ can be written in the form
\[
  g(s,x) = -\Phi(s,x)\dd s^2 + g_{i j}(s,x)\dd x^i\,\dd x^j,\qquad s\geq 0,
\]
where $x=(\tilde t,\omega)$ denotes points on the cylinder $\R\times\Sph^2$. (The rough translation to present notation is $\rho\sim e^{-s}$ and $R\sim e^{-\tilde t}$.) Here $g_{i j}(s,x)=g_{i j}^\infty(x)e^{2 s}$ plus a $\cO(1)$ remainder as $s\to\infty$, while $\Phi(s,x)=1+e^{-2 s}\Phi^\infty(x)$ plus a $\cO(e^{-4 s})$ remainder. Furthermore, the various pieces of $\Phi,g_{i j}$, including $\Phi^\infty,g_{i j}^\infty$, differ from their KdS reference values by decaying amounts as $\tilde t\to\pm\infty$, i.e.\ as one approaches either of the two KdS black holes. This entails convergence to the de~Sitter type metric $-\dd s^2+e^{2 s}g_{i j}^\infty(x)\dd x^i\,\dd x^j$ (which for $g_{i j}^\infty=\delta_{i j}$ is the same as~\eqref{EqIdS} with $\tau=e^{-s}$) as $s\to\infty$. We highlight two features of the result and approach of \cite{FournodavlosSchlueExpanding}.
\begin{enumerate}
\item A parabolic gauge is used in which $\Phi$ is coupled to the mean curvature of the slices $s={\rm const}$. The non-hyperbolic nature of this gauge explains why Fournodavlos--Schlue work with a complete initial data set. It would be interesting to see if the arguments in \cite{FournodavlosSchlueExpanding} can be adapted to handle incomplete initial data sets.
\item Passing to $\rho=e^{-s}$, we have $\rho^2 g=-\dd\rho^2+g_{i j}^\infty(x)\dd x^i\,\dd x^j+\cO(\rho)$; thus, the conformally rescaled metric is Lipschitz down to $\tau=0$, but no higher order regularity is obtained (and it is not clear how much regularity one can expect in the chosen gauge).
\end{enumerate}

We revisit the stability problem of the expanding region of KdS spacetimes with the following goals in mind:
\begin{enumerate}
\item[(1*)] We use a generalized wave coordinate condition closely related to that of \cite{RingstromEinsteinScalarStability}. The hyperbolic character of this gauge condition allows us to prove a localized stability result.
\item[(2*)] We show how to upgrade the rough asymptotic control on the metric arising in the basic stability proof to smoothness of the conformally rescaled metric.
\end{enumerate}
Furthermore,
\begin{enumerate}
\setcounter{enumi}{2}
\item[(3*)] we introduce a robust framework for analyzing wave equations near the corner $\cI^+\cap\cK$; see~\S\ref{SssIExSdS}. This includes (higher order) energy estimates as well as a simple linear algebra mechanism (based on \emph{indicial roots}) for obtaining sharp decay and asymptotic expansions.
\end{enumerate}

In order to state our main result, we define the following norm for functions $u=u(R,\omega)$ defined for $R\leq R_0$:
\begin{equation}
\label{EqIHb}
  \|u\|_{R^\alpha\Hb^k}^2 := \sum_{j+|\gamma|\leq k} \int_{\Sph^2}\int_0^{R_0} | R^{-\alpha} (R\pa_R)^j \sV^\gamma u(R,\omega) |^2\,\frac{\dd R}{R}\,\dd\slg;
\end{equation}
here $\sV$ is the set of vector fields on $\Sph^2$ which generate rotations around the three coordinate axes. Since $R\sim e^{-\tilde t}$, the regularity here is regularity in $\tilde t$ and the angular variables.\footnote{The relationship of this norm with pointwise bounds is as follows. For $k\geq 2$, $\|u\|_{R^\alpha\Hb^k}<\infty$ implies $|u|\lesssim R^\alpha$; see Lemma~\ref{Lemma0bSob}. Conversely, $R^\alpha$ upper bounds for $u$ and $k$ of its derivatives imply $u\in R^{\alpha'}\Hb^k$ for all $\alpha'<\alpha$.} Below, we use the same notions for norms of tensors, and mean by that the sum of norms of their coefficients in smooth coordinate charts. We write $R^\alpha\Hb^k$ for the space of all functions with finite $\|\cdot\|_{R^\alpha\Hb^k}$-norm, and set $R^\alpha\Hb^\infty=\bigcap_{k\in\N_0}R^\alpha\Hb^k$. Furthermore, in order to ensure compatibility of the KdS metric $g_\sfb$ with the precise asymptotic expansion of the dynamical metric $g$ at $\cI^+$, we denote by
\[
  g_\sfb^{\rm FG}
\]
a presentation of the KdS metric $g_\sfb$ which is in \emph{Fefferman--Graham form} at $\cI^+$ (a notion we explain after the statement of the Theorem).

\begin{thm}[Main theorem, rough version]
\label{ThmIMain}
  Let $R_0>0$, and let $\rho_0>0$ be such that $\Sigma^\circ_{\rho_0,R_0}:=\{\rho_0\}\times(0,R_0]\times\Sph^2_\omega$ is spacelike for the KdS metric $g_\sfb$; denote the initial data on $\Sigma^\circ_{\rho_0,R_0}$ induced by $g_\sfb$ by $\gamma_\sfb,k_\sfb$. Suppose $\gamma,k$ are initial data on $\Sigma^\circ_{\rho_0,R_0}$ (i.e.\ solutions of the constraint equations) so that $\tilde\gamma:=\gamma-\gamma_\sfb$, $\tilde k:=k-k_\sfb$ lie in $R^\alpha\Hb^\infty$, and have $R^\alpha\Hb^d$-norms $<\eps$ where $\eps>0$ is small and $d\in\N$ is large. Then the maximal globally hyperbolic development of the data $\gamma,k$ contains a region isometric to
  \[
    (\Omega_{\rho_0,R_0}^\circ,g),
  \]
  where:
  \begin{enumerate}
  \item $\Omega_{\rho_0,R_0}^\circ$ is the domain defined by the inequalities $\rho\leq\rho_0$, $\rho R\geq\rho_0 R_0-\frac12(R_0-R)$, and the boundary hypersurfaces $\Sigma_{\rho_0,R_0}^\circ$ and $\Sigma_{\rho_0,R_0}^{+,\circ}:=\{\rho R=\rho_0 R_0-\frac12(R_0-R)\}$ are spacelike for $g$;
  \item the metric $g$ is of the form $g=g_\sfb^{\rm FG}+h$, where, in the frame $\tau\pa_\tau$, $\tau\pa_{x^i}$ ($i=1,2,3$),\footnote{We recall that $x=R\omega$ and $\tau=R\rho$.} the coefficients of $h=h(\rho,R,\omega)$ are smooth down to $\rho=0$ and of class $R^\alpha\Hb^\infty$ in $(R,\omega)$.
  \end{enumerate}
  More precisely, there exist $h_m=h_{m,i j}\frac{\dd x^i}{\tau}\frac{\dd x^j}{\tau}$ for $m=0,2,3,4,\ldots$ with $h_{m,i j}=h_{m,i j}(R,\omega)\in R^\alpha\Hb^\infty$ so that for all $N\in\N$ we have
  \begin{equation}
  \label{EqIMainExp}
    g - \biggl( g_\sfb^{\rm FG} + h_0 + \sum_{m=2}^N \rho^m h_m \biggr) = \cO(\rho^{N+1}),
  \end{equation}
  in the sense that this is the restriction of an element of $\rho^{N+1}\CI\bigl([0,\rho_0];R^\alpha\Hb^\infty([0,R_0]\times\Sph^2)\bigr)$ to $\Omega_{\rho_0,R_0}^\circ$. Furthermore:
  \begin{enumerate}
  \item $h_2$ is nonzero unless the metric $g_{(0)}:=\dd x^2+h_{(0)}$, $h_{(0)}:=h_{0,i j}(|x|,\frac{x}{|x|})\,\dd x^i\,\dd x^j$, on $\{0<|x|\leq R_0\}\subset\R^3$ is flat;
  \item denote by $g_3$ the $\rho^3$ coefficient of $g$. Then the tensor $g_{(3)}:=g_{3,i j}(|x|,\frac{x}{|x|})\dd x^i\,\dd x^j$ is a weighted TT (transverse-traceless) tensor, meaning $\tr_{g_{(0)}}g_{(3)}=0$ and $\delta_{g_{(0)}}(|x|^{-3}g_{(3)})=0$.
  \end{enumerate}
\end{thm}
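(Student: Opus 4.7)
The plan is to reduce the Einstein equation on $\breve M$ to a quasilinear wave equation via a generalized harmonic gauge of Ringstr\"om type, solve the resulting Cauchy problem by weighted energy estimates in the b-Sobolev scale $R^\alpha\Hb^k$, and then extract the Fefferman--Graham type expansion~\eqref{EqIMainExp} by an iteration driven by the indicial roots of the linearization at $\cI^+$. Concretely, I would write $g=g_\sfb^{\rm FG}+h$ and impose a gauge condition $\Ups(g;g_\sfb^{\rm FG})=0$ so that the principal part of the reduced equation (call it $P(g)h=0$) is componentwise $\Box_g$ and the background $g_\sfb^{\rm FG}$ solves it exactly; the gauge one-form on $\Sigma^\circ_{\rho_0,R_0}$ is chosen so that $\Ups$ and its normal derivative vanish there, and constraint propagation combined with the hyperbolicity of the gauge equation then ensures that the solution of the gauge-fixed system is Einstein. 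Hyperbolicity is also what makes the truncated region $\Omega_{\rho_0,R_0}^\circ$ meaningful: its outer boundary $\Sigma_{\rho_0,R_0}^{+,\circ}$ is spacelike by design, so no artificial boundary data are needed and a standard domain of dependence argument applies.

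Local nonlinear stability is then proved by a Picard iteration in $R^\alpha\Hb^d$ for large $d$. The vector fields $\rho\pa_\rho$, $R\pa_R$, and the elements of $\sV$ on $\Sph^2$ are all tangent to the boundary of $\breve M$, so they commute favorably with $P(g_\sfb^{\rm FG})$ and pay for higher regularity. Using $-\rho\pa_\rho$ as a timelike multiplier produces an energy on the level sets $\{\rho=\text{const}\}$ that is controlled uniformly down to $\rho\to 0$; the weight $R^\alpha$ with $\alpha>0$ small encodes the decay toward $\cK$ and is preserved provided $\alpha$ lies below the smallest positive indicial root of the normal operator of $P(g_\sfb^{\rm FG})$ at $\cK$, this normal operator being essentially the exact KdS evolution at $i^+$. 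Closing the iteration by a tame estimate on the semilinear terms, then promoting the $(R,\omega)$-regularity to $R^\alpha\Hb^\infty$ by commuting with arbitrarily many b-vector fields, yields the rough form of the theorem: a solution $h$ that is smooth in $\rho\in[0,\rho_0]$ with values in $R^\alpha\Hb^\infty$.

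To upgrade this to~\eqref{EqIMainExp} I would invoke the indicial operator of $P(g_\sfb^{\rm FG})$ at $\cI^+$. Written in terms of $\rho\pa_\rho$, it is a polynomial with coefficients that are differential in $(R,\omega)$; its indicial roots are those of the pure de~Sitter wave operator at $\cI^+$, namely the classical Fefferman--Graham exponents $0$ and $3$, with all other nonnegative integers nonresonant. Substituting $h = h_0 + \sum_{m\geq 2}\rho^m h_m$ with $h_m\in R^\alpha\Hb^\infty$ and matching order by order produces algebraic equations $I(m)h_m = F_m(h_0,\ldots,h_{m-1})$ that are uniquely solvable at each nonresonant $m\in\{2,4,5,\ldots\}$; the $\rho^1$ term is absent because $1$ is not an indicial root and $F_1\equiv 0$ by the choice of $g_\sfb^{\rm FG}$, while at the resonant order $m=3$ consistency of $I(3)h_3=F_3$ is precisely the tracelessness and transversality of $g_{(3)}$ in conclusion~(2), both of which are forced by the linearized Bianchi identity applied to $P(g)h$. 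Borel summation of the formal series and one more weighted energy estimate on the remainder then yield the full expansion. A direct computation identifies $h_2$ with a nonzero multiple of the Ricci tensor of the three-dimensional metric $g_{(0)}$, and since Ricci-flatness in dimension three is flatness this gives conclusion~(1). The main obstacle throughout is closing all these estimates uniformly up to the corner $\cI^+\cap\cK$ without losing the weight $R^\alpha$, which is exactly the essential technical point of the framework announced in~(3*).
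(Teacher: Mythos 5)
Your proposal follows the broad strategy of the paper (generalized wave gauge; weighted b-Sobolev energy estimates using $-\rho\pa_\rho$ as multiplier; commutation to get higher b-regularity; indicial roots to drive the asymptotic expansion), but it omits three ingredients without which the argument does not close.

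First, the fixed gauge $\Ups(g;g_\sfb^{\rm FG})=0$ does not work. As the paper explains in \S\ref{SssIEdS}, the unmodified wave map gauge has a pure gauge indicial root at $\frac12(3-\sqrt{33})\in(-2,-1)$, so solutions of the linearized gauge-fixed equation generically \emph{grow} at $\cI^+$, and no bounded $h_0$ can be extracted. Even after adding the modification $E_{g_0}$ to push the pure gauge roots into $\Re\lambda>0$, the gauge condition $\Ups_E(g;g_\sfb^{\rm FG})=0$ \emph{cannot hold even at $\rho=0$} for a metric of the form $g_\sfb^{\rm FG}+h_0+\tilde h$ unless $h_{(0)}$ is trace-free with respect to $\dd x^2$, which it will not be after the second iteration step. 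This is why the paper makes the background metric dynamic ($g_0=g_\sfb+\chi h_0$) and introduces a decaying gauge potential $\theta$ as a further unknown; and separately it must introduce constraint damping ($\tilde\delta_g^*$ in place of $\delta_g^*$), since otherwise there is a \emph{spurious} negative indicial root of the gauge-fixed operator that is neither physical nor pure gauge. You would need all of these, not just the bare harmonic gauge.

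Second, a Picard iteration does not close in these spaces. The decay-upgrade step, which converts the $\rho^{-N}$-growing energy estimate into the assertion that $v=\chi v_0+\tilde v$ with $v_0\in R^\alpha\Hb^\infty$ and $\tilde v\in R^\alpha\rho^\beta\Hb^\infty$, loses a fixed number $d$ of b-derivatives (Proposition~\ref{PropE2Decay}); the paper therefore runs a Nash--Moser scheme with the tame estimates of Proposition~\ref{PropE2Reg} and the smoothing operators of Lemma~\ref{Lemma0bSmooth}. Also, the claim that the $R^\alpha$ weight must be below ``the smallest positive indicial root at $\cK$'' is not needed and is not how the paper proceeds: the weight at $\cK$ is decoupled from the $\rho$-asymptotics and is preserved by commutation with b-vector fields, for \emph{any} $\alpha>0$, cf.\ the end of \S\ref{SssIExSdS}.

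Third, the Fefferman--Graham expansion~\eqref{EqIMainExp} cannot be read off by order-by-order matching in the gauge-fixed equation. The output of Theorem~\ref{ThmESol} is $g=g_\sfb+\chi h_0+\tilde h$ with $\tilde h$ only in $R^\alpha\rho^\beta\Hb^\infty$ (not smooth at $\cI^+$), because the gauge 1-form $\theta\in R^\alpha\rho^\beta\Hb^\infty$ need not have any Taylor expansion at $\rho=0$. The paper needs a further three-stage argument in \S\ref{SSt}: (i) a Borel-type construction of diffeomorphisms $\phi_{a,b}$ so that the pulled-back gauge 1-form vanishes to infinite order at $\cI^+$ (Proposition~\ref{PropStGauge}); (ii) log-smoothness of $\tilde h$ in this improved gauge (Lemma~\ref{LemmaStLog}); (iii) elimination of all log terms and reduction to Fefferman--Graham form by further diffeomorphisms, using the \emph{ungauged} indicial analysis of Lemmas~\ref{LemmaStCISeq}--\ref{LemmaStCISolv} (Proposition~\ref{PropStCI}). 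Your order-by-order substitution $I(m)h_m=F_m$ conflates the indicial family of the gauge-fixed operator (roots $0,2,3,4$ by Lemma~\ref{LemmaEIndRoot}; not just $0,3$) with the indicial family of $D\Ric-\Lambda$ modulo $\ran I(\delta^*)$, and does not explain why the $\rho^1$ and $\rho(\log\rho)^m$ terms may be removed; these are removed by pullback, not because $1$ is non-indicial for the gauge-fixed operator.
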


See Theorem~\ref{ThmEStabEin} for the full result, and Figure~\ref{FigIData} for an illustration of the domain on which we work. Our approach to the proof is discussed in~\S\S\ref{SsIEx}--\ref{SsIE} below. The broader context of the precise asymptotic expansion~\eqref{EqIMainExp}, the weighted TT property, and the sense in which Theorem~\ref{ThmIMain} is optimal are explained in~\S\ref{SsIdS}. We immediately point out that the expansion~\eqref{EqIMainExp} is devoid of logarithmic terms; it therefore implies in particular that
\[
  \parbox{0.9\textwidth}{\emph{the conformally rescaled metric $\tau^2 g$, expressed in the frame $\tau\pa_\tau,\tau\pa_{x^1},\tau\pa_{x^2},\tau\pa_{x^3}$, is smooth across $\cI^+=\{\rho=0\}$.}}
\]

We now explain what it means for $g_\sfb^{\rm FG}$ to be in \emph{Fefferman--Graham form}: $g_\sfb^{\rm FG}$ has an expansion
\begin{equation}
\label{EqIgbFG}
  g_\sfb^{\rm FG}\sim g_{\sfb,0}+\sum_{m\geq 2}\rho^m g_{\sfb,m},\qquad \rho\to 0
\end{equation}
(i.e.\ equality of Taylor series at $\rho=0$), where $g_{\sfb,m}=(g_{\sfb,m})_{i j}\frac{\dd x^i}{\tau}\frac{\dd x^j}{\tau}$, with each $(g_{\sfb,m})_{i j}$ a smooth function on $[0,\infty)_R\times\Sph^2_\omega$, and with $g_{\sfb,(3)}:=(g_{\sfb,3})_{i j}\dd x^i\,\dd x^j$ a weighted TT tensor with respect to $\dd x^2$. (The existence of such a presentation $g_\sfb^{\rm FG}$ of the KdS metric is shown in Proposition~\ref{PropStCI}\eqref{ItStCICI}.)

\begin{figure}[!ht]
\centering
\includegraphics{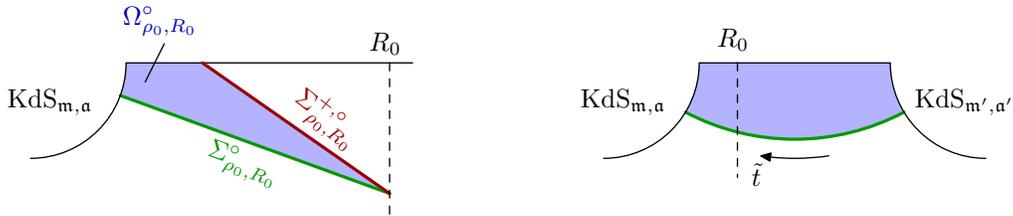}
\caption{\textit{On the left:} the Cauchy hypersurface $\Sigma_{\rho_0,R_0}^\circ$ (green), and the domain $\Omega_{\rho_0,R_0}^\circ$ (blue) on which the spacetime metric is controlled. \textit{On the left:} Theorem~\ref{ThmIMain}. \textit{On the right:} \cite[Theorem~1]{FournodavlosSchlueExpanding}.}
\label{FigIData}
\end{figure}

\begin{rmk}[Original of initial data]
\label{RmkIData}
  We do not concern ourselves here with the construction of initial data $\gamma,k$ satisfying the hypotheses of Theorem~\ref{ThmIMain}. Recall however that the nonlinear stability of the exterior region (more precisely, a neighborhood of the domain of outer communications---not to be confused with the cosmological region) of slowly rotating Kerr--de~Sitter black holes was proved in \cite{HintzVasyKdSStability,FangKdS} (with nontrivial initial data constructed in \cite[\S{11.3}]{HintzVasyKdSStability}). The region on which these stability results apply extends to any fixed finite value $\tilde r_+$ exceeding the radius of the cosmological horizon (with the required smallness of the initial data in the proofs of the references depending on $\tilde r_+$). The data induced by the spacetime metric on the corresponding level set $\rho=\rho_0=\tilde r_+^{-1}$ then satisfies the assumptions of Theorem~\ref{ThmIMain}. While subextremality is a crucial requirement for the nonlinear stability of KdS, the black hole parameters $\sfb=(\bhm,\bha)$ are \emph{unrestricted} in the setting of Theorem~\ref{ThmIMain}: it does not matter whether they are subextremal, extremal, superextremal, or even have negative mass, since only the asymptotic (de~Sitter) geometry as $\tilde r\to\infty$ (which is valid regardless of $\sfb$) matters for present purposes. We refer the reader to Remark~\ref{RmkEOrigInitial} for further discussion.
\end{rmk}

\begin{rmk}[Stability of de~Sitter space]
\label{RmkIdS}
  Our methods apply directly to the stability of (parts of) de~Sitter space in $(3+1)$-dimensions and thus yield a new proof of \cite{RingstromEinsteinScalarStability} (restricted to the case that the scalar field vanishes identically). In fact, our proof simplifies since one can work with standard Sobolev spaces on sets of bounded $x$: there is no more need for $R$-weights and b-regularity. More precisely, if $\tau_0>0$, then the spacetime evolving from sufficiently small and regular perturbations of the de~Sitter initial data at\footnote{For simplicity, we give ourselves plenty of room; posing data at $|x|<\tau_0+\delta$ for any fixed $\delta>0$ would suffice for the evolving spacetime to contain a piece of the conformal boundary.} $\{\tau=\tau_0,\ |x|<4\tau_0\}$ contains a region of the form
  \[
    \{ \tau\leq\tau_0,\ |x|<3\tau_0-2(\tau_0-\tau) \}
  \]
  equipped with a metric $g$ which has a full Taylor expansion
  \[
    g(\tau,x) \sim -\frac{\dd\tau^2}{\tau^2} + (\delta_{i j}+h_{0,i j}(x))\frac{\dd x^i}{\tau}\frac{\dd x^j}{\tau} + \sum_{m\geq 2} h_{m,i j}(x)\frac{\dd x^i}{\tau}\frac{\dd x^j}{\tau}
  \]
  at $\tau=0$, where $h_{m,i j}\in\CI(\R^3_x)$ and $h_{3,i j}(x)\dd x^i\,\dd x^j$ is transverse-traceless with respect to $(\delta_{i j}+h_{0,i j}(x))\dd x^i\,\dd x^j$. Also the global stability of de~Sitter space $-\dd t^2+(\cosh t)^2 g_{\Sph^3}$, as first proved in \cite{FriedrichDeSitterPastSimple}, follows from (a simpler version of) our arguments. --- We point out that we are able to obtain a conformally smooth solution by suitably modifying (in a constructive, and thus essentially explicit, manner) an already constructed solution in generalized harmonic gauge. In particular, we expect our approach to allow one to obtain sharp (Fefferman--Graham type) asymptotics for perturbations of de~Sitter space also in odd spacetime dimensions (where the existing results of \cite{FriedrichDeSitterPastSimple,AndersonStabilityEvenDS} do not apply, and logarithmic singularities are known to necessarily appear); we leave this to future work. Analogous results in the Riemannian setting of conformally compact Poincar\'e--Einstein metrics were obtained in \cite{ChruscielDelayLeeSkinnerConfCompReg}; see \cite{FeffermanGrahamAmbient,KichenassamyFeffermanGraham} for the analytic setting and \cite{AndersonBoundaryRegularity} for the $4$-dimensional case.
\end{rmk}

\begin{rmk}[Stability of larger regions]
\label{RmkILarger}
  The methods apply with only minor notational modifications to the initial data of \cite{FournodavlosSchlueExpanding}. The main change is that one now needs to work with two weights, one for each of the boundary hypersurfaces $\cK$ and $\cK'$ corresponding to future timelike infinity of the two KdS black holes (see Figure~\ref{FigIData}); the proof of our main energy estimate (Proposition~\ref{PropE2Reg}) is robust enough to handle this setting with only notational modifications. The conclusion is that on the blue region on the right in Figure~\ref{FigIData} we can put the dynamical spacetime metric into a form so that an expansion completely analogous to~\eqref{EqIMainExp} holds, where the coefficients $h_m$ are now elements of a doubly weighted Sobolev space on the cylinder $\R_{\tilde t}\times\Sph^2$. We remark that due to the domain-dependence of the gauge condition we use, we cannot simply patch together the local solutions which are produced when applying our proof of Theorem~\ref{ThmIMain} to various incomplete patches (such as $\Sigma_{\rho_0,R_0}^\circ$ or the data of Remark~\ref{RmkIdS}) of initial data, as the various local solutions are constructed in what might well be (slightly) different gauges; instead we must prove stability directly in the full desired region.
\end{rmk}

Valiente Kroon and collaborators have been developing an approach for studying the stability of the cosmological region of Schwarzschild--de~Sitter spacetimes based on an extension of Friedrich's conformal field equations \cite{FriedrichDeSitterPastSimple}. In Friedrich's equations, the conformal factor ($\tau$ in present notation) is one of the unknowns, and the equations (and their solutions) extend non-degenerately across the conformal boundary. This allowed Friedrich to reduce the global nonlinear stability of de~Sitter spacetime to a standard local-in-time result for his symmetric hyperbolic system. In the SdS case however, the conformal field equations cease to be regular at future timelike infinity $i^+$ of the SdS black hole (essentially because the SdS metric is not smooth in $\tau,x$ there, cf.\ the discussion of~\eqref{EqISdS}), and thus the results obtained using this approach are, at present, incomplete: \cite{GasperinValienteKroonAsymptoticSdS} constructs asymptotically SdS cosmological regions using an asymptotic initial value problem (closely related to \cite[Theorem~(3.2)]{FriedrichDeSitterPastSimple}), and \cite{MinucciValienteKroonSdSStab} controls solutions of initial value problems with near-SdS data in the domain of dependence of cylinders $[-T,T]_{\tilde t}\times\Sph^2$ in the notation of \cite{FournodavlosSchlueExpanding} and Figure~\ref{FigIData} which, in particular, does not contain a neighborhood of the corner $\cK\cap\cI^+$ (see also \cite[Figure~4]{MinucciValienteKroonSdSStab}).

Another possible avenue towards the stability of the cosmological region, based on geometric foliations and control of the Weyl tensor, was explored by Schlue in \cite{SchlueOpticaldS,SchlueWeylDecay}. For scalar waves propagating in the cosmological region of subextremal KdS spacetimes, decay results were obtained in \cite{SchlueCosmological}. Bernhardt \cite{BernhardtLinearExpanding} continued the study of linear scalar waves in this setting and obtained a partial asymptotic expansion at the conformal boundary (analogous to~\eqref{EqIMainExp} for $N=3$, and with $\alpha=0$) as well as a scattering result.

The plan for the remainder of this introduction is as follows.
\begin{enumerate}
\item In~\S\ref{SsIdS}, we put the asymptotic expansion~\eqref{EqIMainExp} into context and explain its optimality.
\item In~\S\ref{SsIEx}, we explain the basic ideas behind the analysis of wave equations near conformal boundaries, in both the de~Sitter type and KdS type settings.
\item In~\S\ref{SsIE}, we apply these ideas to the Einstein equations and explain our gauge choices, constraint damping, and the mechanism underlying the existence of the expansion~\eqref{EqIMainExp}.
\end{enumerate}

\subsection{Stability of de~Sitter space and asymptotics at the conformal boundary}
\label{SsIdS}

Friedrich's stability result for de~Sitter space \cite{FriedrichDeSitterPastSimple} demonstrates that small perturbations of de~Sitter initial data evolve into a spacetime $(M,g)$ where $M\subset\tilde M:=\R\times\Sph^3$ is given by the set $\{\Omega>0\}$, and $\Omega^2 g$ extends smoothly to $\bar M\subset\tilde M$ (and beyond). (Note that de~Sitter space itself is of this form for $\Omega=\cos s$ and $\Omega^2 g=-\dd s^2+g_{\Sph^3}$.) Moreover, \cite[Theorem~(3.2)]{FriedrichDeSitterPastSimple} establishes a 1-1 correspondence (at least near the conformal boundary) between such asymptotically simple solutions of the field equations and \emph{scattering data} defined at the future conformal boundary $S\subset\Omega^{-1}(0)$, also in cases where the spatial manifold $S$ is an arbitrary compact orientable 3-manifold: these data are a Riemannian metric $g_{(0)}$ and a TT tensor $g_{(3)}$ on $S$. (These are the restriction of $\Omega^2 g$ and certain components of the rescaled Weyl tensor of $g$ to $S$.)

Now, given such $g_{(0)},g_{(3)}$, it is a classical result by Fefferman--Graham \cite{FeffermanGrahamAmbient,FeffermanGrahamAmbientBook} that one can construct a formal solution of the Einstein vacuum equations~\eqref{EqIEin} of the Fefferman--Graham form (as described after Theorem~\ref{ThmIMain})
\begin{equation}
\label{EqIdSFG}
  g \sim \tau^{-2}\biggl( -\dd\tau^2 + g_{(0)}(x,\dd x) + \sum_{m\geq 2} \tau^m g_{(m)}(x,\dd x) \biggr),\qquad \tau\to 0;
\end{equation}
furthermore, the terms in the expansion~\eqref{EqIdSFG} are \emph{uniquely determined by $g_{(0)},g_{(3)}$}. It was moreover shown in \cite{RodnianskiShlapentokhRothmanSelfSimilar,HintzAsymptoticallydS} that this formal power series is the Taylor expansion of a true solution defined for $\tau<\tau_0(x)$ for some sufficiently small positive continuous function $\tau_0>0$. In combination (albeit in a rather indirect fashion), we can thus conclude that smooth perturbations of de~Sitter space are described by a metric of the form~\eqref{EqIdSFG}. (See also \cite{GrahamLeeConformalEinstein,KichenassamyFeffermanGraham,GurskySzekelyhidiLocalPoincareEinstein} for the construction of true solutions in the Riemannian setting.)

With this background, it is now clear that the description~\eqref{EqIMainExp} is optimal: the power series $g_\sfb^{\rm FG}+h_0+\sum_{m\geq 2}\rho^m h_m$ at $\rho=0$ is in Fefferman--Graham form and thus its coefficients are uniquely determined by the scattering data $g_{(0)}=\dd x^2+h_{(0)}$ and $g_{\sfb,(3)}+h_{(3)}$ in the notation of Theorem~\ref{ThmIMain} and \eqref{EqIgbFG} and the subsequent discussion. We remark that the weighted TT property of $h_{(3)}$, i.e.\ the TT property of $h^\natural_3(x,\dd x):=|x|^{-3}h_{(3)}(x,\dd x)$, becomes consistent with~\eqref{EqIdSFG} once we observe that this tensor appears in the Taylor expansion of $g$ at the conformal boundary $\cI^+$ via ($\tau^{-2}$ times) $\rho^3 h_{(3)}=(\rho|x|)^3 h^\natural_3=\tau^3 h^\natural_3$. 

\begin{rmk}[Black holes from scattering data]
\label{RmkIBHFromScat}
  The work of Mars--Pe\'on-Nieto \cite{MarsPeonNietoKdSFromScri} discusses the characterization of KdS metrics via their data at the conformal boundary. Also the construction of de~Sitter spacetimes containing several black holes in \cite{HintzGluedS,VerlemannGlueCharged} is, at least on a conceptual level, based on this scattering perspective.
\end{rmk}

It is then natural to make the following conjecture.

\begin{conj}[Scattering data]
\label{ConjIScatter}
  Fix KdS parameters $\sfb$. Suppose we are given tensors $h_{(m),i j}=h_{(m),i j}(R,\omega)\in R^\alpha\Hb^\infty$ for $m=0,3$ and $R\leq R_0$, with small norms in $R^\alpha\Hb^d$ for some sufficiently large $d$, so that $g_{(3)}:=g_{\sfb,(3)}+h_{(3)}$ is a weighted TT tensor with respect to $g_{(0)}:=\dd x^2+h_{(0)}$. Then there exists a solution $g$ of the Einstein vacuum equations in a neighborhood of $\{R\leq R_0\}\subset\cI^+$ with scattering data $g_{(0)},g_{(3)}$ which is asymptotic to the KdS metric with parameters $\sfb$ at $\cK$.
\end{conj}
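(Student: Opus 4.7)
Plan. Given the scattering data $(g_{(0)}, g_{(3)})$ satisfying the hypotheses, I would construct $g$ by (i) building a formal Fefferman--Graham solution at $\cI^+$ from these data, (ii) Borel summing in $\rho$ to obtain an approximate solution $g^{\rm app}$, and (iii) correcting $g^{\rm app}$ to a true Einstein vacuum solution by solving the gauge-fixed equations \emph{backward} from $\cI^+$ with trivial asymptotic data, using a backward analogue of the b-energy estimates developed in this paper.

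Formal solution. The classical Fefferman--Graham recursion produces from $(g_{(0)}, g_{(3)})$ a formal Taylor series $g^\infty \sim g_0 + \sum_{m\geq 2}\rho^m g_m$ in the presentation of~\eqref{EqIgbFG}, solving $\Ric(g) - 3 g = 0$ to infinite order at $\cI^+$. In dimension $3{+}1$ the coefficients $g_m$ for $m\neq 0, 3$ are determined from $g_{(0)}, g_{(3)}$ by purely algebraic operations and tangential differentiation, so membership in $R^\alpha\Hb^\infty$ is preserved at every order. Combined with Proposition~\ref{PropStCI}\eqref{ItStCICI} applied to $g_\sfb$ and the smallness of $h_{(0)}, h_{(3)}$, this shows that $g^\infty - g_\sfb^{\rm FG}$ decays in $R^\alpha\Hb^\infty$ at $\cK$. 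Borel summation in $\rho$ then produces a smooth tensor $g^{\rm app}$ on a neighborhood of $\{R\leq R_0\}\subset\cI^+$ whose Taylor series at $\rho=0$ is $g^\infty$ and which agrees with $g_\sfb^{\rm FG}$ at $\cK$ modulo the same weighted error.

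Correction by backward evolution. Write $g = g^{\rm app} + h$, impose the generalized harmonic gauge of this paper relative to a reference built from $g^{\rm app}$, and reduce the Einstein equations to a quasilinear wave equation $\BoxGauge(g) h = F(h)$. Since $g^{\rm app}$ solves Einstein to infinite order at $\cI^+$, one has $F(0) \in \rho^\infty\CI([0,\rho_0];\, R^\alpha\Hb^\infty)$. I would solve for $h$ by Picard iteration seeded at $h = 0$, combining a backward version of the b-energy estimate underlying Proposition~\ref{PropE2Reg} on slabs $\{\rho_1\leq\rho\leq\rho_0\}$ (uniform as $\rho_1\to 0$) with the indicial-root mechanism of item~(3*) of the introduction: the indicial roots of $\BoxGauge$ at $\cI^+$ corresponding to free scattering data are $0$ and $3$, precisely the parts already captured by $g^{\rm app}$, while all further roots have bounded-below positive real parts, so a standard inductive formal construction gains arbitrary orders of $\rho$-decay from the $\cO(\rho^\infty)$ decay of the source. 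Constraint damping, chosen as in the proof of Theorem~\ref{ThmIMain}, propagates the gauge, and the iteration closes in $\rho^\infty\CI([0,\rho_0];\, R^\alpha\Hb^\infty)$.

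Main obstacle. The principal new ingredient is the backward direction: $\cI^+$ is a limit of spacelike hypersurfaces rather than a spacelike hypersurface itself, so Proposition~\ref{PropE2Reg} does not apply verbatim, and the b-energy estimate must be set up uniformly on shrinking slabs $\{\rho_1\leq\rho\leq\rho_0\}$. The identity is essentially symmetric in the direction of propagation, but the corner $\cI^+\cap\cK$ is delicate: the $R^\alpha$-weight at $\cK$ must be propagated simultaneously with an unbounded number of orders of $\rho$-decay at $\cI^+$. Establishing this uniform backward estimate, choosing the function space in which the Picard iteration closes, and finally verifying that the resulting $g$ realizes $(g_{(0)}, g_{(3)})$ as its scattering data, is where the bulk of the remaining work lies.
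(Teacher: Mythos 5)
This statement is labeled a \emph{Conjecture} in the paper, and the paper does not prove it: after stating it, the authors observe only that the putative solution $g$ would be unique (by localization and an appeal to \cite{HintzAsymptoticallydS}), and that a linear scalar analogue with $\alpha=0$ is known from \cite{BernhardtLinearExpanding}. Your proposal is therefore not competing with a proof in the paper, and it cannot be assessed as ``matching'' or ``deviating from'' one.

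As a strategy sketch, your outline is sensible and consistent with the authors' own perspective. The formal Fefferman--Graham step and Borel summation are indeed what one would do (and Lemma~\ref{LemmaStCISolv} plus the indicial root analysis of \S\ref{SsStCI} carry out essentially this recursion term by term, confirming that the $R^\alpha\Hb^\infty$ class is preserved). The correction step, solving a gauge-fixed quasilinear wave equation backward from $\cI^+$ with $\cO(\rho^\infty)$ source, is the right shape, and you correctly identify where the missing work lies.

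However, the ``main obstacle'' you flag is not a technicality you have merely deferred --- it is precisely why this remains a conjecture, and your remark that ``the identity is essentially symmetric in the direction of propagation'' understates the difficulty. In the forward estimate of Proposition~\ref{PropE2Reg}, the multiplier weight $\rho^N$ with $N\gg 1$ and the additional factor $e^{\digamma\rho^{2\beta}/2\beta}$ are tuned to make the bulk term signed \emph{as $\rho\to 0$}; the sign is not symmetric under reversing the time orientation, and in the backward problem one instead needs arbitrarily strong weights $\rho^{-M}$ to absorb the $\cO(\rho^\infty)$ source, with constants uniform both in $M$ and in the spatial weight $R^\alpha$ near $\cK$. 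Moreover $\cI^+$ is not a spacelike Cauchy hypersurface, so the Picard iteration is not a standard IVP: one must specify the sense in which the approximate solution data at $\rho=0$ are attained (this is a scattering problem, not a Cauchy problem), and one must verify that the domain of dependence considerations near $\cI^+\cap\cK$ do not shrink the region to nothing as $M\to\infty$. Until this backward scattering estimate is actually proved in the doubly weighted b-framework --- uniformly at the corner --- the argument is incomplete, and the conjecture remains open.
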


Applying the uniqueness statements of \cite{HintzAsymptoticallydS} to the restriction of the scattering data $g_{(0)},g_{(3)}$ to $R\geq\delta>0$ and letting $\delta\searrow 0$, the solution $g$ in Conjecture~\ref{ConjIScatter} is easily seen to be necessarily unique (up to isometries) on an appropriate domain of dependence; see Figure~\ref{FigIScatter}. For the proof of a version of Conjecture~\ref{ConjIScatter} for linear scalar waves with $\alpha=0$, see \cite[Theorem~1.4]{BernhardtLinearExpanding}.

\begin{figure}[!ht]
\centering
\includegraphics{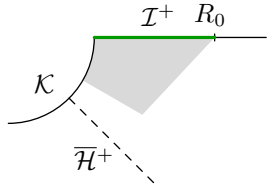}
\caption{The domain on which an asymptotically KdS metric is expected to exist, given scattering data on a part of the conformal boundary (highlighted in green). The dashed line indicates the cosmological horizon of KdS; to avoid blueshift instabilities when solving backwards, we stop at a positive distance from there.}
\label{FigIScatter}
\end{figure}

\subsection{Analysis near conformal boundaries}
\label{SsIEx}

Our analysis of the Einstein equations will build on a perspective which turns asymptotic analysis near conformal boundary into problems in linear algebra. In de~Sitter type settings, this perspective already played an important role in many works including \cite{FeffermanGrahamAmbient,RendallLambdaAsymptotics,VasyWaveOndS,HintzXiedS}, and in elliptic settings in \cite{MazzeoMelroseHyp,GrahamLeeConformalEinstein,MazzeoEdge}. A novelty of the present paper is an extension of this perspective which allows one to deal with singularities on the conformal boundary of the type given by future timelike infinity $i^+$ of KdS.

\subsubsection{de~Sitter space}
\label{SssIExdS}

For concreteness, we study the following toy model. Fix a smooth Riemannian metric $(g_{(0)})_{i j}(x)\dd x^i\,\dd x^j$ on the torus $x\in\TT^3$, and consider
\begin{equation}
\label{EqIEx}
  \Box\phi := \bigl(-(\tau\pa_\tau)^2 + 3\tau\pa_\tau + \tau^2 g_{(0)}^{i j}(x)\pa_{x^i}\pa_{x^j} \bigr)\phi(\tau,x) = 0.
\end{equation}
(For $g_{(0)}^{i j}=\delta^{i j}$, this is the covariant wave operator for the metric~\eqref{EqIdS}.) The initial data, at $\tau=1$, say, are assumed to be smooth in $x$; one can also allow for a nontrivial right hand side (with appropriate regularity and decay requirements as $\tau\to 0$), though we shall not do so here for the sake of exposition. If one only keeps the $\tau\pa_\tau$ terms of $\Box$, one obtains the \emph{indicial operator}, here
\[
  I(\tau\pa_\tau):=-(\tau\pa_\tau)^2+3\tau\pa_\tau,
\]
which is a constant coefficient regular-singular (or Fuchsian) ordinary differential equation (ODE). The \emph{indicial family} is its characteristic polynomial, so $I(\lambda):=-\lambda^2+3\lambda$, and the \emph{indicial roots} are its roots, $\lambda=0,3$. Since therefore $\Box(\tau^\lambda u)=\cO(\tau^{\lambda+1})$ for $\lambda=0,3$ and any $u=u(x)$, one anticipates that solutions of~\eqref{EqIEx} have the form $\phi(\tau,x)=\tau^0\phi_0(x)+\tau^3\phi_3(x)+\cdots$, where `$\cdots$' indicates terms that (at least in Taylor series at $\tau=0$) can be computed from $\phi_0,\phi_3$.

This heuristic can be made rigorous, as shown in the references above. We outline here a two-step strategy which generalizes easily to the more singular setting of the present paper.

\textit{Step~1.1. Basic energy estimate.} One can easily obtain a spacetime energy bound
\begin{equation}
\label{EqIExEn0}
  \|\phi\|_{\tau^{-N}H_0^1}^2 := \int_{\TT^3}\int_0^1 |\tau^N \phi|^2 + |\tau^N \tau\pa_\tau\phi|^2 + |\tau^N \tau\pa_x\phi|^2 \frac{\dd\tau}{\tau}\,\dd x \lesssim {\rm data}
\end{equation}
for some $N>0$. The particular value of $N$ will be of no concern to us; this will be advantageous when passing to tensorial equations for which precise energy estimates may be more difficult to obtain (e.g.\ due to computational complexities). The notation $\tau^{-N}H_0^1$ reflects the weight ($\phi$ is allowed to grow like $\tau^{-N}$) and the notion of regularity ($\tau\pa_\tau$, $\tau\pa_x$), which is \emph{0-regularity} on $[0,\infty)_\tau\times\TT^3_x$ in the parlance of Mazzeo--Melrose \cite{MazzeoMelroseHyp}. Importantly, the same value of $N$ works if $\Box$ is perturbed by terms which, relative to $\tau\pa_\tau$, $\tau\pa_x$, decay at $\tau=0$; we write such terms as $\cO(\tau)$ below.

\medskip

\textit{Step~1.2. Higher regularity.} Discarding $x$-derivatives in the above heuristic requires $\tau\pa_x\phi$ to be of lower order (in the sense of decay) than $\tau\pa_\tau\phi$. The estimate~\eqref{EqIExEn0} does not entail this. We thus need to improve~\eqref{EqIExEn0} to a higher regularity estimate, \emph{with the same weight $\tau^{-N}$}, in which we control $\pa_x$-derivatives of $\phi$. Concretely, we claim that
\begin{equation}
\label{EqIExEnk}
  \|\phi\|_{\tau^{-N}H_{0;\bop}^{1;k}}^2 := \sum_{j+|\alpha|\leq k} \|(\tau\pa_\tau)^j\pa_x^\alpha\phi\|_{\tau^{-N}H_0^1}^2 \lesssim {\rm data}.
\end{equation}
(The notation $H_{0;\bop}^{1;k}$ reflects the derivative types: we now control $k$ \emph{b-derivatives} ($\tau\pa_\tau,\pa_x$) in the parlance of \cite{MelroseTransformation,MelroseMendozaB,MelroseAPS} in addition to $1$ 0-derivative.) We accomplish this via a highly robust commutation argument which only relies on the \emph{structural properties} of $\tau\pa_\tau,\tau\pa_x$ and their relationships with $\tau\pa_\tau,\pa_x$. We illustrate this only for $k=1$: we then have the commuted equations
\begin{alignat*}{2}
  \Box(\tau\pa_\tau\phi) &= \tau\pa_\tau\Box\phi - 2\tau^2 g_{(0)}^{i j}(x)\pa_{x^i}\pa_{x^j}\phi &&= \cO(\tau\cdot\tau\pa_x) \pa_x\phi, \\
  \Box(\pa_x\phi) &= \pa_x\Box\phi + \tau^2(\pa_x g_{(0)}^{i j})\pa_{x^i}\pa_{x^j}\phi &&= \cO(\tau\cdot\tau\pa_x) \pa_x\phi.
\end{alignat*}
Therefore, the vector $\Phi:=(\tau\pa_\tau\phi,\pa_x\phi)$ satisfies a principally scalar system of equations which is
\[
  (\Box + \cO(\tau))\Phi = 0.
\]
To this equation, the estimate~\eqref{EqIExEn0} applies \emph{with the same value of $N$}, giving~\eqref{EqIExEnk} for $k=1$.

\begin{rmk}[General situation: triangular structure]
\label{RmkIExTriangle}
  If one replaced $3\tau\pa_\tau$ in~\eqref{EqIEx} by $(3+a(x))\tau\pa_\tau$, the equation for $\pa_x\phi$ would have an additional term $(\pa_x a)\tau\pa_\tau\phi$ on the right hand side, whose coefficients do \emph{not} decay. Instead, one now gets a strictly lower triangular system of the schematic form
  \begin{equation}
  \label{EqIExTriangle}
    \left(\begin{pmatrix} \Box & 0 \\ \pa_x a & \Box \end{pmatrix} + \cO(\tau)\right)\Phi = 0.
  \end{equation}
  For this system, one can still prove~\eqref{EqIExEn0} \emph{for the same $N$}, essentially by using~\eqref{EqIExEn0} for each component of $\Phi$ separately (with the `data' term now involving a norm on a spacetime source term) and taking a weighted sum of the two estimates to absorb the size of $\pa_x a$. In the tensorial equations of interest in this paper, we do encounter variable coefficients of this type (although they will leave the indicial roots unaffected, unlike $(3+a)\tau\pa_\tau$ here).
\end{rmk}

\medskip

\textit{Step 2. Decay.} Having arbitrarily many b-derivatives---in particular, $x$-derivatives---under control, we can now justify putting $x$-derivatives of $\phi$ on the right hand side of~\eqref{EqIEx}. This leads to
\[
  I(\tau\pa_\tau)\phi = \cO(\tau^2)\pa_x^2\phi.
\]
If $\phi\in\tau^{-N}H_{0;\bop}^{1;k}$, the right hand side lies in $\tau^{-N+2}H_{0;\bop}^{1;k-2}$. Integrating this ODE in $\tau$, with $x$ acting merely as a parameter, gives $\phi\in\tau^{-N+2}H_{0;\bop}^{1;k-2}$, when the indicial roots $0,3$ fall outside of the interval $[-N,-N+2]$. Iterating this argument thus allows one to show $\phi\in\tau^{-\delta}H_{0;\bop}^{1;k-J}$ for some $J$ (depending only on the growth rate $N$ allowed for by the basic energy estimate) for any $\delta>0$; repeating the argument one more time, the indicial root $0$ enters and produces
\begin{equation}
\label{EqIExPhi}
  \phi(\tau,x) \equiv \phi_0(x) \bmod \tau^{2-\delta}H_{0;\bop}^{1;k-J-2},\qquad \phi_0\in H^{k-J-2}(\TT^3).
\end{equation}
One can easily continue this scheme further and extract a full asymptotic expansion for $\phi$. This is the place where logarithmic terms can arise due to repeated roots and integer coincidences; see in particular~\cite{VasyWaveOndS,CicortasScatteringdS,BernhardtLinearExpanding}. In our proof of a basic nonlinear stability result (see Theorem~\ref{ThmESol}), it will suffice to get a leading order term plus a decaying remainder, and therefore we stop here. (The Fefferman--Graham asymptotics are largely obtained via formal arguments at $\rho=0$ and thus of a different flavor; see~\S\ref{SssIEFG} below.)

\medskip

We emphasize that the proofs of the (higher order) energy estimates in Step~1 do not rely on any particular structure of the underlying operator beyond the fact that it is built from $\tau\pa_\tau,\tau\pa_x$; in particular, they apply immediately to tensorial wave equations (with arbitrary lower order terms) as well. Similarly, the arguments in Step~2 only rely on the properties of the indicial family $I(\lambda)$, which in the case of tensorial equations is a polynomial with values in square matrices of the appropriate dimension; the asymptotic behavior of $\phi$ is then determined by the indicial roots ($\lambda\in\C$ with $\det I(\lambda)=0$) and the corresponding spaces $\ker I(\lambda)$ of \emph{indicial solutions}.\footnote{In full generality, the indicial roots may depend on $x$, as they do in the setting of Remark~\ref{RmkIExTriangle}. This does not happen in the settings considered in the present paper. Asymptotic expansions in the presence of variable indicial roots are studied in \cite{KrainerMendozaBundle}.}

\medskip

This approach to linear decay estimates can easily be combined with simple nonlinear methods (Moser-type product estimates, tame estimates combined with bootstrap arguments or a Nash--Moser iteration) to show the small data global well-posedness of suitable nonlinear equations, where `suitable' refers to the requirement that the nonlinear terms, when applied to $\phi$ of the form~\eqref{EqIExPhi} given by linear theory, produce decaying spacetime terms. (A simple toy example is $\Box\phi=(\tau\pa_\tau\phi)(\tau\pa_x\phi)$.) The proof of tame estimates in the present paper is essentially straightforward, although it does cause a significant bookkeeping overhead; thus we shall not comment on these standard nonlinear issues in the remainder of this introduction, instead referring the interested reader directly to~\S\ref{SsE2}.

\subsubsection{Expanding regions of de~Sitter black hole spacetimes.}
\label{SssIExSdS}

We wish to apply a similar approach in the expanding region of SdS and KdS spacetimes. As a consequence of the structure~\eqref{EqISdS}, the scalar wave operator $\Box_{g_\bhm}$ will again be a 0-differential operator, i.e.\ built from $\tau\pa_\tau$, $\tau\pa_x$, but its coefficients will no longer be smooth in $(\tau,x)$ (i.e.\ on $M$) but only in $\rho,R,\omega$ (i.e.\ on $\breve M$), as defined in~\eqref{EqIbreveM}. The expressions for $\tau\pa_\tau$, $\tau\pa_x$ in terms of $\rho,R,\omega$ are linear combinations of
\begin{equation}
\label{EqIEx0b}
  \rho\pa_\rho,\ \rho R\pa_R,\ \rho\pa_\omega,
\end{equation}
where we schematically write $\pa_\omega$ for derivatives on $\Sph^2$. (In $R\gtrsim 1$, i.e.\ in the cosmological region far from $i^+\approx\cK$, these are $\rho\pa_\rho\sim\tau\pa_\tau$ and $\rho\pa_R,\rho\pa_\omega\sim\tau\pa_x$; on the other hand, in $\rho\gtrsim 1$, i.e.\ far from the conformal boundary, they are $\pa_\rho\sim\pa_{\tilde r}$, $R\pa_R\sim\pa_{\tilde t}$, and $\pa_\omega$, i.e.\ the natural derivatives for analysis in spatially compact regions of a(n asymptotically) stationary black hole spacetime.) The wave operator is thus of the form
\[
  \Box_{g_\bhm} = \sum_{i+j+|\gamma|\leq 2} \ell_{i j\gamma}(\rho,R,\omega) (\rho\pa_\rho)^i(\rho R\pa_R)^j(\rho\pa_\omega)^\gamma,\qquad \ell_{i j\gamma}\in\CI([0,\bar\rho]_\rho\times[0,R_0]_R\times\Sph^2).
\]
(Here $\bar\rho,R_0>0$. The key point is that the coefficients $\ell_{i j\gamma}$ are smooth down to $\rho=0$ and $R=0$.)

Analogously to~\eqref{EqIEx}, we consider an initial value problem for
\[
  \Box_{g_\bhm}\phi = 0,
\]
with initial data posed at $\rho=\rho_0>0$ (i.e.\ $\tilde r=\rho_0^{-1}$), which we assume to be a spacelike hypersurface as in Figure~\ref{FigIMfd}; this happens for sufficiently large $\tilde r$. The analogues of the 2 steps in~\S\ref{SssIExdS} are as follows. (For easier readability, we are imprecise with the specification of domains of integrations etc.\ below; they are to be taken according to domain of dependence considerations.)

\medskip

\textit{Step~1.1. Basic energy estimate.} Since we now need to distinguish weights near the black hole ($\cK$) from weights at the conformal boundary ($\cI^+$), we work with doubly weighted norms
\[
  \|\phi\|_{\rho^{-N}R^\alpha H_{0,\bop}^1}^2 := \int_{\Sph^2}\int_0^{R_0}\int_0^{\rho_0} |\rho^N R^{-\alpha} (\rho\pa_\rho,\rho R\pa_R,\rho\pa_\omega)^{\leq 1}\phi|^2 \frac{\dd\rho}{\rho}\frac{\dd R}{R}\dd\slg.
\]
(The notation reflects the 0-nature---i.e.\ the vanishing---of the derivatives~\eqref{EqIEx0b} at $\rho=0$, and the b-nature---i.e.\ the tangency to $R=0$---at $R=0$. Note also that the integral over $\rho=\rho_0$ without the $\rho\pa_\rho$-derivative matches~\eqref{EqIHb}.) For fixed $\alpha$, given by the decay rate (or growth) of the initial data of $\phi$, there then exists $N$ so that
\[
  \|\phi\|_{\rho^{-N}R^\alpha H_{0,\bop}^1} \lesssim {\rm data}.
\]
The initial data norm here is the $R^\alpha\Hb^1\oplus R^\alpha\Hb^0$-norm. For the proof of this estimate, one can use the (future timelike) vector multiplier $-R^{-2\alpha}\rho^{2 N}\rho\pa_\rho$ for which the bulk term (deformation tensor) has a good sign when $N$ is large enough. (See Step~1 in the proof of Proposition~\ref{PropE2Reg}.)

\medskip

\textit{Step~1.2. Higher regularity.} This step is completely analogous to before: one now considers the system of commuted equations satisfied by $\rho\pa_\rho\phi$, $R\pa_R\phi$, $\pa_\omega\phi$ (which away from $R=0$ are equivalent to $\tau\pa_\tau\phi$, $\pa_x\phi$ as known from the de~Sitter discussion, and away from $\rho=0$ to $\pa_{\tilde r}\phi$, $\pa_{\tilde t}\phi$, $\pa_\omega\phi$). Due entirely to structural properties of the vector fields~\eqref{EqIEx0b} in relation to the vector fields $\rho\pa_\rho$, $R\pa_R$, $\pa_\omega$, this system has, at worst, a lower triangular structure analogous to~\eqref{EqIExTriangle}. This gives
\[
  \|\phi\|_{\rho^{-N}R^\alpha H_{0,\bop;\bop}^{1;k}}^2 := \sum_{i+j+|\gamma|\leq k} \|(\rho\pa_\rho)^i(R\pa_R)^j\pa_\omega^\gamma\phi\|_{\rho^{-N}R^\alpha H_{0,\bop}^1}^2 \lesssim {\rm data},
\]
where $N$ is \emph{fixed} and $k$ is only limited by the regularity of the initial data. (See Step~2 in the proof of Proposition~\ref{PropE2Reg}.)

\medskip

\textit{Step~2. Decay.} We can now regard all derivatives in the expression for $\Box_{g_\bhm}$ except for those \emph{only} involving $\rho\pa_\rho$ as error terms. That is, we rewrite the equation for $\phi$ as
\[
  I(\rho\pa_\rho,R,\omega)\phi = {\rm error} \in \rho^{-N+1}R^\alpha H_{0,\bop;\bop}^{1;k-2},\qquad
  I(\rho\pa_\rho,R,\omega) := \sum_{i=0}^2 \ell_{i 0 0}(0,R,\omega) (\rho\pa_\rho)^i.
\]
This is a family of ODEs in $\rho$ with parametric dependence on $R,\omega$, and can be integrated from initial data (or indeed from $\rho=\rho_1$ for any $\rho_1\in(0,\rho_0)$). Since $g_\bhm$ and $g_\dS$ agree to leading order at the conformal boundary $\rho=0$, the indicial operator is in fact \emph{the same} as for the wave equation on de~Sitter space (and in the toy model under consideration here independent of $R,\omega$). Therefore, the asymptotic behavior of $\phi=\phi(\rho,R,\omega)$ at $\rho=0$ is fully determined by the indicial roots (here $0,3$). In the present case, if $-N<0<-N+1$, we encounter the indicial root $0$ and thus obtain
\[
  \phi(\rho,R,\omega) \equiv \phi_0(R,\omega) \bmod \rho^{-N+1}R^\alpha H_{0,\bop;\bop}^{1;k-2},\qquad
  \phi_0 \in R^\alpha\Hb^{k-2}.
\]
(See Proposition~\ref{PropE2Decay} for details.) The only differences to the de~Sitter setting are thus:
\begin{enumerate}
\item the expansion is in terms of powers of $\rho$, not $\tau=\rho R$;
\item the terms in the expansion do not lie in standard Sobolev spaces in $x$, but in $R^\alpha\Hb^k$ (with the same $\alpha$ for all terms in the expansion).
\end{enumerate}

\medskip

We stress that the asymptotics of $\phi$ at $\cK=\{R=0\}$ and $\cI^+=\{\rho=0\}$ are completely decoupled: the decay rate $\alpha$ (or growth rate, if negative) of the initial data at $\cK$ propagates along $\cK$, but it has no bearing on the powers of $\rho$ appearing in the asymptotic expansion at the conformal boundary.

\subsection{The Einstein equations, gauges, constraint damping}
\label{SsIE}

Analogously to~\S\ref{SsIEx}, we first consider the de~Sitter setting (in $(3+1)$ dimensions) before explaining the simple modifications (given the framework explained in~\S\ref{SssIExSdS}) required for the Schwarzschild--de~Sitter case.

\subsubsection{de~Sitter space}
\label{SssIEdS}

The Einstein vacuum equations~\eqref{EqIEin} being nonlinear, we first consider their linearization
\[
  L_{g_\dS}:=D_{g_\dS}\Ric-\Lambda
\]
around $g_\dS$; this is $\frac12$ times $\Box_{g_\dS}+2\sR_{g_\dS}-\delta_{g_\dS}^*\delta_{g_\dS}\sfG_{g_\dS}-2\Lambda$ where $(\Box_g u)_{\mu\nu}=-g^{\kappa\lambda}u_{\mu\nu;\kappa\lambda}$ is the tensor wave operator, $\sR_{g_\dS}$ is a curvature operator, and we write $(\delta_g^*\omega)_{\mu\nu}=\frac12(\omega_{\mu;\nu}+\omega_{\nu;\mu})$, $(\delta_g h)_\mu=-g^{\kappa\lambda}h_{\mu\kappa;\lambda}$, and $\sfG_g h=h-\frac12 g\tr_g h$.

Expressing metric perturbations in (the symmetric second tensor power of) the frame $\frac{\dd\tau}{\tau}$, $\frac{\dd x^i}{\tau}$ ($i=1,2,3$), one can then write $L_{g_\dS}$ as a matrix of 0-differential operators (i.e.\ built from $\tau\pa_\tau$, $\tau\pa_x$). (It is not of wave type and not principally scalar, due to gauge issues addressed below.) The indicial family $I(L_{g_\dS},\lambda)$ is correspondingly a matrix-valued second order polynomial in $\lambda$ (see~\eqref{EqStCIIndRic} for the explicit expression). It is not invertible for \emph{any} $\lambda$, corresponding to the infinitesimal diffeomorphism invariance of the linearized Einstein equations, i.e.\ $L_{g_\dS}\circ\delta_{g_\dS}^*=0$ (so $\ran I(\delta_{g_\dS}^*,\lambda)\subset\ker I(L_{g_\dS},\lambda)$ for all $\lambda$). Since one wishes to disregard infinitesimal diffeomorphisms (Lie derivatives) as unphysical and expects to be able to eliminate them by suitable gauge choices, the more pertinent question is then to characterize the quotient space
\begin{equation}
\label{EqIEdSQuotient}
  \ker I(L_{g_\dS},\lambda) / \ran I(\delta_{g_\dS}^*,\lambda).
\end{equation}
This is a simple problem in linear algebra and solved in Lemma~\ref{LemmaStCISeq}. The upshot is that this space (with a mild modification required for the special value $\lambda=-1$) is trivial unless $\lambda=0,3$.\footnote{We argue that this should be regarded as the correct statement of mode stability in the de~Sitter context!} (This is already highly suggestive of the fact that the asymptotic degrees of freedom of perturbations of de~Sitter space are the coefficients $g_{(0)}$, $g_{(3)}$ in expansions such as~\eqref{EqIdSFG}.) Furthermore, the quotient space for $\lambda=0,3$ is spanned by tangential-tangential tensors $h_{i j}\frac{\dd x^i}{\tau}\frac{\dd x^j}{\tau}$ which are trace-free ($\sum_{i=1}^3 h_{i i}=0$). For now, our aim is to understand the nonlinear stability, in particular asymptotics and decay (however mild) towards $\tau=0$, and thus we focus on the indicial root $\lambda=0$. Since the indicial family governs asymptotics at each point of the conformal boundary individually, one may thus reasonably expect that a perturbation of de~Sitter space asymptotes to a metric of the form
\begin{equation}
\label{EqIdSUpdate}
  g_\dS + h_0 = -\frac{\dd\tau^2}{\tau^2} + \tau^{-2}(\dd x^2 + h_{0,i j}(x)\dd x^i\,\dd x^j),\qquad h_0=h_{0,i j}(x)\frac{\dd x^i}{\tau}\frac{\dd x^j}{\tau}.
\end{equation}
To go beyond heuristics, we need to supplement the (linearized) Einstein equations with a gauge condition in order to turn them into a wave equation admitting a well-posed initial value problem.

\medskip
\noindent
{\bf Gauges, I: eliminating non-decaying pure gauge solutions.} We deal with the diffeomorphism invariance by working with a (generalized) harmonic gauge. To motivate our particular choice, consider first the simple wave map (or DeTurck \cite{DeTurckPrescribedRicci}) gauge
\begin{equation}
\label{EqIdSUps}
  \Ups_\mu(g;g_\dS) := g_{\mu\nu}g^{\kappa\lambda}(\Gamma(g)_{\kappa\lambda}^\nu-\Gamma(g_\dS)_{\kappa\lambda}^\nu)=0.
\end{equation}
(This is a well-defined 1-form since the difference of two connections is a tensor; see~\eqref{EqEGauge} for a manifestly covariant expression.) The standard procedure for solving $\Ric(g)-\Lambda g=0$ in the gauge $\Ups(g;g_\dS)=0$ is then to consider the gauge-fixed Einstein equation
\[
  P_0(g) = \Ric(g)-\Lambda g - \delta_g^*\Ups(g;g_\dS) = 0.
\]
Given initial data $\gamma,k$, one then constructs Cauchy data for $g$ inducing $\gamma,k$ at $\rho=\rho_0$ for which moreover $\Ups(g;g_\dS)=0$ at $\rho=\rho_0$; once one has solved $P_0(g)=0$, the constraint equations imply that also the transversal derivative of $\Ups(g;g_\dS)$ at $\rho=\rho_0$ vanishes, and since the second Bianchi identity $\delta_g\sfG_g\Ric(g)=0$ implies the decoupled equation $\delta_g\sfG_g\delta_g^*\Ups(g;g_\dS)=0$, we conclude that $\Ups(g;g_\dS)=0$ and thus $\Ric(g)-\Lambda g=0$.

Consider now the linearization $L_0:=D_{g_\dS}P_0$. (This equals $\frac12\Box_{g_\dS}+\sR_{g_\dS}-\Lambda$ and is thus a wave operator on symmetric 2-tensors.) Let us determine the residual gauge freedom by computing those indicial solutions which are `pure gauge': this amounts to computing the indicial roots of
\begin{equation}
\label{EqIdSGauge}
  I\bigl(D_1|_{g_\dS}\Ups(\cdot;g_\dS)\circ\delta_{g_\dS}^*,\lambda\bigr)=I(-\delta_{g_\dS}\sfG_{g_\dS}\circ\delta_{g_\dS}^*,\lambda).
\end{equation}
It turns out that $\lambda=\frac12(3-\sqrt{33})\in(-2,-1)$ is one of them, with indicial solution denoted $\omega$. Thus, solutions of $L_0 h=0$ typically feature $\tau^\lambda$ \emph{growth} (with spatial profile an $x$-dependent multiple of the symmetric 2-tensor $I(\delta_{g_\dS}^*,\lambda)\omega$).

It is advantageous (see also Remark~\ref{RmkIdSHarmGauge} below) to devise a better gauge condition, namely one for which all `pure gauge' indicial roots have $\Re\lambda>0$. This would ensure that, modulo decaying remainders, solutions of $L_0 h=0$ are free of non-decaying gauge artefacts (and thus should be of the form $h_{0,i j}(x)\frac{\dd x^i}{\tau}\frac{\dd x^j}{\tau}+o(1)$, following the previous discussion). We arrange this by working with the gauge condition
\[
  \Ups_E(g;g_\dS) := \Ups(g;g_\dS) + E_{g_\dS}(g-g_\dS) = 0,
\]
where $E_{g_\dS}$ is a suitably chosen bundle map mapping 1-forms to symmetric 2-tensors; the requirement is simply that `pure gauge mode stability holds', i.e.\ all indicial roots of~\eqref{EqIdSGauge} with $\Ups_E$ in place of $\Ups$ are all positive. A possible choice for $E_{g_\dS}$ is given in~\eqref{EqEGauge}.\footnote{There is a small caveat, namely $-1$ is an indicial root, regardless of the choice of gauge modification, since $(\pa_{x^i})^\flat=\tau^{-2}\dd x^i=\tau^{-1}\frac{\dd x^i}{\tau}$ (corresponding to spatial translations) is a Killing 1-form on de~Sitter space. Since the symmetric gradient of this vanishes, it does not contribute non-decaying terms to solutions of the linearized gauge-fixed Einstein equations. See~\S\ref{SssE1Aux}.}

\begin{rmk}[Harmonic gauge]
\label{RmkIdSHarmGauge}
  In \cite[Appendix~C]{HintzVasyKdSStability}, the nonlinear stability of the \emph{static patch} of de~Sitter space, or more precisely of the slightly larger region $\frac{|x|}{\tau}<1+\delta$, is proved using an unmodified harmonic gauge. The indicial root $\frac12(3-\sqrt{33})$ arises there as a \emph{resonance}. The nonlinear iteration scheme of \cite{HintzVasyKdSStability} is capable of dealing with growing modes of the linearized equation by means of a black-box mechanism which, from a growing pure gauge mode $\delta^*\omega$, computes a 1-form modification $\theta=\theta(\omega)$ of the gauge condition so that in the gauge condition $\Ups-\theta=0$, the mode $\delta^*\omega$ does not arise \emph{at that particular iteration step}. (The modification $\theta$ in which global stability ultimately holds is thus part of the unknown.) --- In the present setting, where we are interested in the stability of a region of de~Sitter space which contains a nonempty open subset of the conformal boundary, the required gauge modifications would need to lie in an infinite-dimensional space in order to eliminate the $x$-dependent growing mode contribution $\delta^*\omega$ at all points $(0,x)$ on the conformal boundary at once. It is, however, not clear at present how to implement the black-box mechanism in this infinite-dimensional setting in a sufficiently robust manner so that it applies in a nonlinear iteration scheme.
\end{rmk}

\begin{rmk}[Ringstr\"om's gauge, I]
\label{RmkIdSRingstrom}
  In \cite[(46)--(50)]{RingstromEinsteinScalarStability}, Ringstr\"om introduces a gauge condition which is expressed \emph{in the particular global coordinate system $(-\log\tau,x)$}, namely $g^{\alpha\beta}\Gamma(g)_{\mu\alpha\beta}-3 g_{0\mu}=0$. (Since $g_\dS^{\alpha\beta}\Gamma(g_\dS)_{\mu\alpha\beta}=-3\delta_{0\mu}=3(g_\dS)_{0\mu}$, this gauge condition, at least for metrics with $g_{0\mu}=-\delta_{0\mu}+o(1)$, is equivalent to $g^{\alpha\beta}\Gamma(g)_{\mu\alpha\beta}-g_\dS^{\alpha\beta}\Gamma(g_\dS)_{\mu\alpha\beta}=0$.) While somewhat similar to~\eqref{EqIdSUps}, it has the conceptual disadvantage of not being covariant. Nonetheless, pure gauge mode stability does hold for this gauge (see Remark~\ref{RmkE1GaugeModOrigin}), albeit just barely since $0$ \emph{is} an indicial root.
\end{rmk}

\begin{rmk}[Weak global stability via patching static patches]
  In the context of Remark~\ref{RmkIdSRingstrom}, we remark that one could modify the nonlinear stability proof of \cite[Appendix~C]{HintzVasyKdSStability} to take place in Ringstr\"om's gauge (and with constraint damping, discussed below); one could then dispense of all gauge modifications (i.e.\ work in the \emph{fixed} gauge), and allow the final metric to deviate from the de~Sitter model. Applying such a result on each static patch (parameterized by the point $(0,x)$ on the conformal boundary) separately, one would thus obtain a global stability result for de~Sitter space since all perturbed static patches would automatically fit together. However, the regularity of the resulting solution would only be b-regularity in each static patch, meaning 0-regularity globally, which is far too weak to draw conclusions such as strong asymptotic expansions~\eqref{EqIMainExp}. See however \cite[\S{4.5}]{HintzVasySemilinear} for such an approach for the solution of nonlinear toy models.
\end{rmk}

\medskip
\noindent
{\bf Gauges, II: adjusting the background metric.} For the linearization of
\begin{equation}
\label{EqIdSP1}
  P_1(g) := \Ric(g) - \Lambda g - \delta_g^*\Ups_E(g;g_\dS)
\end{equation}
around $g=g_\dS$, the `physical' indicial root at $0$ (cf.\ the discussion following~\eqref{EqIEdSQuotient}) persists, and indeed one may expect solutions to asymptote to some tensor $h_0=h_{0,i j}(x)\frac{\dd x^i}{\tau}\frac{\dd x^j}{\tau}$. (It turns out that general solutions of the linearized equation still grow due to the existence of an indicial root which is not physical or pure gauge, but rather arises from an unfortunate cancellation of the Einstein and the gauge part of the operator; we deal with this using constraint damping below, and ignore this issue for the time being.) In a nonlinear iteration scheme, one might, in the next step, expect to have to consider the linearization of $P_1$ around a metric of the form $g=g_\dS+h_0+\tilde h$ (cf.\ \eqref{EqIdSUpdate}), with $\tilde h$ decaying towards $\tau=0$.

It turns out, however, that the gauge condition $\Ups_E(g_\dS+h_0+\tilde h;g_\dS)=0$ (and also $\Ups(g_\dS+h_0+\tilde h;g_\dS)$) \emph{cannot hold}, even to leading order at $\tau=0$ (i.e.\ ignoring $\tilde h$), for general $h_0$.\footnote{More precisely, in the second step of the iteration, $h_{(0)}=h_{0,i j}\dd x^i\,\dd x^j$ is trace-free with respect to $\dd x^2$, as discussed before~\eqref{EqIdSUpdate}; this saves the gauge condition $\Ups=0$ at $\tau=0$. But the next step would involve the linearization around $g_\dS+h_0+h'_0$ where $h'_0$ is trace-free with respect to $\dd x^2+h_{(0)}$; but $h_{(0)}+h'_{(0)}$ is typically no longer trace-free with respect to $\dd x^2$. This is why we need to study the gauge condition for $g_\dS+h_0$ assuming only that $h_0$ is tangential-tangential.} The idea is thus to replace the background metric $g_\dS$, which no longer captures the correct final geometry, by the new final geometry $g_\dS+h_0$. We implement this by regarding the leading order term $h_0$ and the decaying tail $\tilde h$ as separate unknowns, thus considering
\[
  P_2(h_0,\tilde h) := \Ric(g) - \Lambda g - \delta_g^*\Ups_E(g;g_0),\ \ \text{where}\ \ g=g_\dS+h_0+\tilde h,\ \ g_0=g_\dS+h_0.
\]
For any fixed $h_0$, this is a quasilinear wave equation for $\tilde h$. The change of the final background metric from $g_\dS$ to $g_\dS+h_0$ couples to the decaying remainder $\tilde h$ of the spacetime metric when evaluating the gauge 1-form $\Ups_E$; this necessitates the introduction of a further, decaying, gauge modification $\theta$. (Concretely, $\theta$ will lie in $\tau^\beta\Hb^k([0,1)\times\TT^3_x)$ for some $\beta\in(0,1)$.) See Lemma~\ref{LemmaPfGaugeMod}.

\begin{rmk}[Ringstr\"om's gauge, II]
\label{RmkIdSRingstrom2}
  An advantage of Ringstr\"om's gauge \cite[(46)--(50)]{RingstromEinsteinScalarStability} is that it is satisfied to leading order at $\tau=0$ for \emph{all} metrics of the form $g_\dS+h_0$. Thus, no adjustments of the gauge condition at $\tau=0$ are needed in this case. The fact that in our more geometric gauge we do adjust the background metric ultimately leads to significantly simpler computations of the indicial families (which end up being independent of the background metric in suitable bundle splittings, see~\eqref{EqEIndRootPf}), at a very minor technical expense (essentially Lemma~\ref{LemmaPfGaugeMod}).
\end{rmk}

\medskip
\noindent
{\bf Constraint damping.} For now, we return to the linearization $L_1$ of the operator~\eqref{EqIdSP1} around $g=g_\dS$. In the above discussion, we have in effect assumed that solutions of $L_1 h=0$ are sums of physical solutions (as in~\eqref{EqIdSUpdate}), pure gauge solutions, and decaying remainders. This is, however, not true: there is a negative indicial root, again at $\lambda=\frac12(3-\sqrt{33})\in(-2,-1)$, for which the corresponding indicial solution neither lies in $\ker I(D_{g_\dS}\Ric-\Lambda,\lambda)$ nor satisfies the linearized gauge condition. The corresponding growing ($\cO(\tau^\lambda)$) solution would arise for general initial data, which one does need to consider in a Nash--Moser iteration scheme for the solution of the nonlinear equation (or when solving the gauge-fixed Einstein equations numerically, as already pointed out in \cite{RingstromEinsteinScalarStability}).

The fix, going back to \cite{BrodbeckFrittelliHubnerReulaSCP,GundlachCalabreseHinderMartinConstraintDamping,PretoriusBinaryBlackHole}, is to modify the symmetric gradient $\delta_g^*$ coupling the gauge condition and the Ricci tensor. This was also used in an ad hoc fashion in \cite[(51)--(54)]{RingstromEinsteinScalarStability}, and played a crucial role in the nonlinear stability proof \cite{HintzVasyKdSStability}. (In a bootstrap approach, it can be avoided \cite{FangKdS}, but since it is easy to arrange, we might as well arrange it.) To wit, we replace $\delta_g^*$ in~\eqref{EqIdSP1} by $\tilde\delta_g^*:=\delta_g^*+\tilde E$ for a suitably chosen bundle map from symmetric 2-tensors to 1-forms. The only requirement is that
\begin{equation}
\label{EqIdSCD}
  \parbox{0.8\textwidth}{\centering all indicial roots of $I(\delta_{g_\dS}\sfG_{g_\dS}\circ\tilde\delta_{g_\dS}^*,\lambda)$ are positive.}
\end{equation}
Possible choices of $\tilde E$ are given in~\eqref{EqECD} (corresponding to \cite[(51)--(54)]{RingstromEinsteinScalarStability}) or \cite[(C.8)]{HintzVasyKdSStability} (see Remark~\ref{RmkE1ChoicetildeE}). We then consider the linearization $\tilde L_1$ of
\[
  \tilde P_1(g) := \Ric(g) - \Lambda g - \tilde\delta_g^*\Ups_E(g;g_\dS).
\]
The utility of~\eqref{EqIdSCD} is the following: if $\lambda<0$ and $I(\tilde L_1,\lambda)h=0$, then the linearized second Bianchi identity implies that also $I(\delta_{g_\dS}\sfG_{g_\dS}\circ\tilde\delta_{g_\dS}^*,\lambda)\eta=0$ where $\eta=I(D_1|_{g_\dS}\Ups_E(\cdot;g_\dS),\lambda)h$ measures the extent to which $h$ violates the linearized gauge condition. But then $\eta$ must vanish in view of~\eqref{EqIdSCD}, and thus $h$ is in fact automatically an indicial solution of the linearized Einstein vacuum equations, i.e.\ $I(D_{g_\dS}\Ric-\Lambda,\lambda)h=0$. Since $\lambda<0$, it is pure gauge, and since the gauge condition disallows growing pure gauge solution, it must vanish.

For the gauge-fixed Einstein equation with gauge modification and constraint damping,
\begin{equation}
\label{EqIEdSP}
\begin{split}
  &P(h_0,\tilde h,\theta) := \Ric(g) - \Lambda g - \tilde\delta_g^*(\Ups_E(g;g_0)-\theta), \\
  &\hspace{6em} \text{where}\ \ g=g_\dS+h_0+\tilde h,\ \ g_0=g_\dS+h_0,
\end{split}
\end{equation}
we thus expect that all indicial roots are $\geq 0$; and this is indeed the case. See Lemma~\ref{LemmaEIndRoot}.

Using the analytic techniques explained in~\S\ref{SsIEx}, one can then prove the existence of a global solution $h_0,\tilde h,\theta$ of $P(h_0,\tilde h,\theta)=0$ with given (gauged) initial data close to those of de~Sitter space. (See Theorem~\ref{ThmESol} for the black hole case.) The strategy is to obtain precise asymptotics for solutions of the linearization of $P$ in $\tilde h$ (which is a tensorial wave equation), read off updates for $h_0$ (related to the final spatial metric), $\tilde h$ (the decaying remainder of the metric perturbation), and $\theta$ (the decaying gauge modification), and close the iteration using a Nash--Moser scheme.

\subsubsection{Expanding regions of de~Sitter black hole spacetimes}
\label{SssIESdS}

The considerations in~\S\ref{SssIEdS} are entirely on the level of indicial roots (except for the adjustment of the background metric). Using the analytic modifications needed to pass from de~Sitter to KdS already discussed in~\S\ref{SssIExSdS}, it is thus clear that we can prove the nonlinear stability of the expanding region of KdS \emph{using the same nonlinear operator}~\eqref{EqIEdSP}, except we need to replace $g_\dS$ by the KdS metric $g_\sfb$. See Theorem~\ref{ThmESol} for the resulting nonlinear small data global existence result.

\begin{rmk}[Ringstr\"om's gauge, III]
\label{RmkIESdSRingstrom}
  It is conceivable that one can prove the nonlinear stability of the expanding region entirely in Ringstr\"om's gauge, though for reasons of conceptual clarity (as explained in Remarks~\ref{RmkIdSRingstrom} and \ref{RmkIdSRingstrom2}) we do not pursue this here.
\end{rmk}

\subsubsection{Fefferman--Graham type expansion at the conformal boundary}
\label{SssIEFG}

Having solved the Einstein equations~\eqref{EqIEin} in a gauge $\Ups_E(g;g_0)-\theta=0$, the spacetime metric $g$ is typically not conformally smooth. At this point, we disregard the evolution character of the initial value problem, and instead aim to construct diffeomorphisms $\phi=\Id+\cO(\rho^\beta)$ ($\beta\in(0,1)$) so that $\phi^*g$ becomes as regular as possible at the conformal boundary $\rho=0$. We do this in two steps.

\medskip
\textit{Step~1. Simplify the gauge condition.} The idea is that, for $\phi=e^V$ with $V=\cO(\rho^\beta)$ a small 0-vector field, $\Ups_E(\phi^*g;g_0)$ differs from $\Ups_E(g;g_0)$ by a term which is roughly equal to $D_1|_g\Ups_E(\cL_V g;g_0)$; this is a wave operator acting on $V$. By inverting the indicial operator of this wave operator, one can then find successively better choices of $V$ so that $\Ups_E(\phi^*g;g_0)$ has successively higher orders of vanishing at $\rho=0$. A Borel lemma type argument produces a diffeomorphism $\phi$ with $\Ups_E(\phi^*g;g_0)\equiv 0$ (i.e.\ infinite order vanishing at $\rho=0$); see Proposition~\ref{PropStGauge}. Once this is done, an indicial root based analysis of the gauge-fixed Einstein equations shows that $\phi^*g$ is log-smooth down to $\rho=0$ (Lemma~\ref{LemmaStLog}).

\medskip
\textit{Step~2. Obtaining Fefferman--Graham asymptotics.} This part of the argument applies to any solution of $\Ric(g)-\Lambda g=0$ which asymptotes to an asymptotically de~Sitter metric at the conformal boundary and is log-smooth there. To wit, we consider each term in the generalized Taylor expansion of $g$ separately (starting with the $\rho(\log\rho)^m$ terms); using the Einstein equation and simple linear algebra based on the description of~\eqref{EqIEdSQuotient}, one can easily eliminate all $\rho^1$ and all $\rho^2(\log\rho)^m$, $m\geq 1$, terms. The $\rho^3$ and $\rho^4$ levels are more delicate due to the fact that the solvability theory for $I(D\Ric-\Lambda,\lambda)h=f$ for $\lambda=3,4$ is somewhat delicate and requires $f$ to have a special structure which must be verified. The elimination of log terms at subsequent levels ($\rho^5$, $\rho^6$, etc.) is again straightforward using the triviality of the quotient space~\eqref{EqIEdSQuotient}. The details are given in (the proof of) Proposition~\ref{PropStCI}.

\medskip
The full nonlinear stability result for the Einstein equations is then Theorem~\ref{ThmEStabEin}.

\subsection{Outline of the paper}
\label{SsIO}

In~\S\ref{S0b}, we discuss 0-, b-, and (0,b)-operators (as motivated in~\S\ref{SsIEx}) in more detail, and how to describe de~Sitter and Kerr--de~Sitter metrics using related notions. Furthermore, we define the corresponding weighted Sobolev spaces and prove some of their properties as required for linear and nonlinear analysis.

In~\S\ref{SE}, we analyze the gauge-fixed Einstein vacuum equations (in the form motivated in~\S\ref{SssIEdS}) in detail. This includes the study of the indicial roots of their linearizations, and the proofs of (higher order) energy estimates and sharp asymptotics, with tame estimates, for solutions of their linearizations (following the outline given in~\S\S\ref{SssIExdS}--\ref{SssIExSdS}). This section concludes with a proof of Theorem~\ref{ThmESol}, i.e.\ small data global existence for the gauge-fixed Einstein equations.

In~\S\ref{SSt}, we demonstrate how to improve the asymptotic behavior of the spacetime metrics $g$ produced by Theorem~\ref{ThmESol} when they arise from initial data satisfying the constraint equations (and thus $g$ satisfies the Einstein vacuum equations). The main result are the precise asymptotics stated in Theorem~\ref{ThmEStabEin}.

\subsection*{Acknowledgments}

The authors would like to thank Grigorios Fournodavlos and Volker Schlue for many fruitful conversations on the topic of this paper. A.V.~gratefully acknowledges support from the National Science Foundation under grant number DMS-2247004 as well as a Simons Visiting Professorship at the Mathematisches Forschungsinstitut Oberwolfach.

\section{Kerr--de~Sitter space and 0-b-structures}
\label{S0b}

\subsection{Kerr--de~Sitter metrics as asymptotically de~Sitter metrics}
\label{Ss0bMet}

Recall from~\eqref{EqIdS} that the half space model of $(1+3)$-dimensional de~Sitter space is
\begin{equation}
\label{Eq0bdS}
  M^\circ := (0,\infty)_\tau \times \R^3_x,\qquad g_\dS := \frac{-\dd\tau^2+\dd x^2}{\tau^2}.
\end{equation}
(This is more commonly expressed in terms of $\tau=e^{-t_*}$ as $-\dd t_*^2+e^{-2 t_*}\dd x^2$.) The Kerr--de~Sitter metric $g_\sfb$ \cite{CarterHamiltonJacobiEinstein} with parameters $\sfb=(\bhm,a)$, $\bhm,a\in\R$, and $\Lambda=3$ is
\begin{equation}
\label{Eq0bKdS}
\begin{split}
  &g_\sfb = -\frac{\mu(r)}{\varrho^2}\Bigl(\dd t-\frac{a\sin^2\theta}{\Delta_0}\dd\phi\Bigr)^2 + \varrho^2\Bigl(\frac{\dd r^2}{\mu(r)}+\frac{\dd\theta^2}{\Delta_\theta}\Bigr) + \frac{\Delta_\theta\sin^2\theta}{\varrho^2}\Bigl(\frac{r^2+a^2}{\Delta_0}\dd\phi-a\,\dd t\Bigr)^2, \\
  &\quad \mu:=(r^2+a^2)(1-r^2)-2\bhm r,\ \Delta_0:=1+a^2,\ \Delta_\theta:=1+a^2\cos^2\theta,\ \varrho^2:=r^2+a^2\cos^2\theta.
\end{split}
\end{equation}
(This differs from the expression in Boyer--Lindquist coordinates by a constant rescaling of $t$ by $\Delta_0$; cf.\ \cite[(5.2)--(5.4)]{SchlueCosmological}.) Both metrics satisfy~\eqref{EqIEin} with $\Lambda=3$. To explain the sense in which $g_\sfb$ can be regarded as a black hole in de~Sitter space, tending to a point $\fp$ on the conformal boundary of de~Sitter space~\eqref{Eq0bdS}, say $(\tau,x)=(0,0)$ at $\fp$, we first rewrite the dS metric in two steps. The first step, following \cite[\S{2.1}]{HintzGluedS}, is to introduce polar coordinates
\begin{equation}
\label{Eq0bdSPolar}
  R := |x| \in [0,\infty),\qquad
  \omega := \frac{x}{|x|} \in \Sph^2,
\end{equation}
so $\dd x^2=\dd R^2+R^2\slg$ where $\slg$ is the standard metric on $\Sph^2$, and then defining
\begin{equation}
\label{Eq0bTildeCoord}
  \tilde t:=-\frac12\log(R^2 - \tau^2),\qquad \tilde r:=\frac{R}{\tau}.
\end{equation}
(This change of variables is valid \emph{outside} the cosmological region, not in the whole region where the metric~\eqref{Eq0bdS} is defined.) Write $\tilde\theta,\tilde\phi$ for polar coordinates on $\Sph^2$. In the coordinates $(\tilde t,\tilde r,\tilde\theta,\tilde\phi)$, we then have
\begin{equation}
\label{Eq0bdStilde}
  g_\dS = -(\tilde r^2-1)^{-1}\dd\tilde r^2 + (\tilde r^2-1)\,\dd\tilde t^2 + \tilde r^2(\dd\tilde\theta^2+\sin^2\tilde\theta\,\dd\tilde\phi^2).
\end{equation}
The second step, following \cite[Appendix~B]{SchlueCosmological}, is to define $t,r,\theta,\phi$ in $\tilde r>1$ via\footnote{The equation for $\tilde r^2$ can be solved for $r$ using $\sin^2\theta=1-\frac{\tilde r^2\cos^2\tilde\theta}{r^2}$; explicitly,
\[
  r^2=\frac{\tilde r^2(1+a^2\sin^2\tilde\theta)-a^2}{2}+\sqrt{\left(\frac{\tilde r^2(1+a^2\sin^2\tilde\theta)-a^2}{2}\right)^2+a^2\tilde r^2\cos^2\tilde\theta}.
\]
Moreover, $(1+a^2\cos^2\theta)r^2=(1+a^2)\tilde r^2-a^2\sin^2\theta\geq(1+a^2)\tilde r^2-a^2\sin^2\theta\,\tilde r^2=(1+a^2\cos^2\theta)\tilde r^2$ implies that $r\geq\tilde r$, and thus $\theta\in(0,\pi)$ is well-defined if we require $\theta-\frac{\pi}{2}$ have the same sign as $\tilde\theta-\frac{\pi}{2}$.}
\begin{equation}
\label{Eq0bCorotating}
  \tilde t = t,\qquad
  \tilde r^2 = \frac{1}{\Delta_0}(r^2\Delta_\theta + a^2\sin^2\theta),\qquad
  \tilde r\cos\tilde\theta = r\cos\theta,\qquad
  \tilde\phi = \phi-a t.
\end{equation}
In the `co-rotating coordinates' $t,r,\theta,\phi$, one then finds\footnote{The coefficient of $\dd\theta^2$ is erroneously given as $\frac{\varrho^2}{\Delta_\theta^2}$ in \cite[(B.6)]{SchlueCosmological}, \cite[\S{4.1}]{HintzGluedS}.}
\begin{align*}
  g_\dS &= (r^2+a^2\sin^2\theta-1)\dd t^2 + \frac{\varrho^2}{(r^2+a^2)(1-r^2)}\dd r^2 + \frac{\varrho^2}{\Delta_\theta}\dd\theta^2 + \frac{r^2+a^2}{\Delta_0}\sin^2\theta\,\dd\phi^2 \\
    &\qquad - 2 a\frac{r^2+a^2}{\Delta_0}\sin^2\theta\,\dd t\,\dd\phi.
\end{align*}
Comparing this with~\eqref{Eq0bKdS}, we find that
\begin{equation}
\label{Eq0bKdSDiffdS}
  g_\sfb = g_\dS + h_\sfb,\qquad
  h_\sfb = \frac{2\bhm r}{\varrho^2}\Bigl(\dd t-\frac{a\sin^2\theta}{\Delta_0}\dd\phi\Bigr)^2 + \frac{2\bhm r\varrho^2}{\mu|_{\bhm=0}\mu}\dd r^2.
\end{equation}
Note that the coefficients of $h_\sfb$ are of size $r^{-1}$, and the coefficient of $\dd r^2$ is of size $r^{-5}$.

We interpret this on a structural level as follows. Define first the manifold
\begin{equation}
\label{Eq0bM}
  M := [0,\infty)_\tau \times X,\qquad X=\R^3_x,
\end{equation}
whose boundary $\pa M=\tau^{-1}(0)$ is the future conformal boundary of dS. We recall from \cite{MazzeoMelroseHyp} that the 0-cotangent bundle is the smooth vector bundle ${}^0 T^*M\to M$ with frame $\frac{\dd\tau}{\tau}$, $\frac{\dd x^i}{\tau}$ ($i=1,2,3$). In the present paper, we always work in the splitting
\begin{equation}
\label{Eq0bT0Split}
  {}^0 T^*M = \R\frac{\dd\tau}{\tau} \oplus \tau^{-1}T^*X.
\end{equation}
Correspondingly, we split the second symmetric tensor power of this bundle via
\begin{equation}
\label{Eq0bST0Split}
  S^2\,{}^0 T^*M = \R\frac{\dd\tau^2}{\tau^2} \oplus \Bigl(2\frac{\dd\tau}{\tau}\otimes_s \tau^{-1}T^*X\Bigr) \oplus \tau^{-2}S^2 T^*X.
\end{equation}
In this splitting, the dS metric is thus given by $(-1,0,\dd x^2)$.

Next, blow up the point $\fp\in M$ given by $(\tau,x)=(0,0)$. This produces a manifold with corners \cite{MelroseDiffOnMwc}, in which we will only work in the region where $|x|>\tau$. Concretely, we introduce (consistently with~\eqref{Eq0bdSPolar}) the coordinates
\[
  R := |x| \in [0,\infty),\qquad
  \rho := \frac{\tau}{|x|} \in [0,1),\qquad
  \omega := \frac{x}{|x|} \in \Sph^2.
\]
In terms of $R,\rho$, the expressions in~\eqref{Eq0bTildeCoord} become
\begin{equation}
\label{Eq0bKdStilde}
  \tilde t=-\log R-\frac12\log(1-\rho^2),\qquad \tilde r=\rho^{-1}.
\end{equation}

\begin{definition}[Manifold]
\label{Def0bMfd}
  We define
  \begin{equation}
  \label{Eq0bMfd}
    \breve M := [0,\bar\rho]_\rho \times [0,\infty)_R \times \Sph^2_\omega,
  \end{equation}
  where $\bar\rho=\bar\rho(\sfb)\in(0,1)$ is chosen so that $g_\sfb$ is well-defined and $\dd(\rho^{-1})$ is timelike on $\breve M^\circ:=(0,\bar\rho]\times(0,\infty)\times\Sph^2$. We denote the blow-down map by
  \[
    \upbeta\colon\breve M\to M,\qquad
    (\rho,R,\omega)\mapsto(\tau,x)=(\rho R,\omega R),
  \]
  and the boundary hypersurfaces of $\breve M$ by
  \begin{equation}
  \label{Eq0bMfdBdy}
    \cI^+ := \rho^{-1}(0),\qquad
    \cK := R^{-1}(0).
  \end{equation}
  We moreover define $\cI^+_{R_0}:=\cI^+\cap\{R\leq R_0\}=[0,R_0]\times\Sph^2\subset\cI^+$.
\end{definition}

See Figure~\ref{Fig0bMfd}. The requirements on $\bar\rho>0$ in Definition~\ref{Def0bMfd} are satisfied for all sufficiently small $\bar\rho$ since $\frac{\dd(\rho^{-1})}{\rho^{-1}}=\frac{\dd\tilde r}{\tilde r}$ is timelike, and indeed has squared norm $1-\tilde r^{-2}\geq\frac34$ for $g_\dS$ in $\tilde r>2$ (by inspection of~\eqref{Eq0bdStilde}); in view of the decay of $h_\sfb$ in~\eqref{Eq0bKdSDiffdS} as $r\to\infty$ (equivalently, $\tilde r\to\infty$), this timelike character persists for sufficiently small $\tilde r^{-1}$.

\begin{figure}[!ht]
\centering
\includegraphics{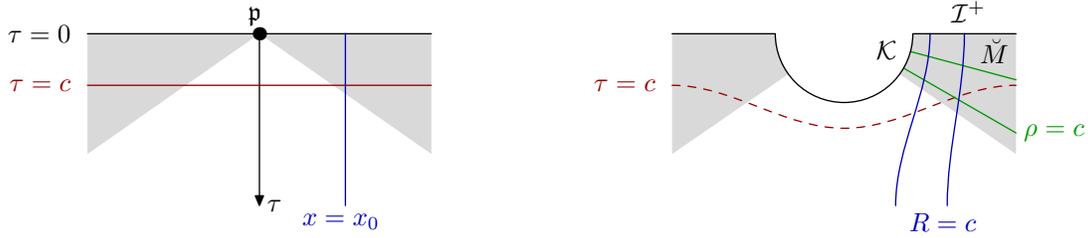}
\caption{\textit{On the left:} the de~Sitter spacetime manifold $M$ and its conformal boundary at $\tau=0$; some level sets of $\tau$ are shown in red, and level sets of $x$ in blue. The image $\upbeta(\breve M)\subset M$ is shaded in gray. \textit{On the right:} the blow-up $[M;\{\fp\}]$, the manifold with corners $\breve M$ (in gray), and level sets of $\tau$ (dashed red), $R=|x|$ (blue), $\rho$ (green).}
\label{Fig0bMfd}
\end{figure}

The benefit of working with $\breve M$ for the purpose of studying the KdS metric is thus that smooth functions of $\tilde r^{-1}$, say, are smooth on $\breve M$, whereas they are singular on $M$ near $\fp$; we return to this below. Since $\tau=\rho R$ and $x=R\omega$, we can write elements of ${}^0 T^*M$ as linear combinations (with smooth coefficients on $\breve M$) of $\frac{\dd\tau}{\tau}=\frac{\dd\rho}{\rho}-\frac{\dd R}{R}$, $\tau^{-1}\dd R$, and $\tau^{-1}R\,\dd\omega$, or equivalently in terms of $\frac{\dd\rho}{\rho}$, $\frac{\dd R}{\rho R}$, and $\frac{\dd\omega}{\rho}$. This motivates the following definition, which is studied further in~\S\ref{Ss0b} below.

\begin{definition}[0-b-cotangent bundle]
\label{Def0bT}
  We define the \emph{0-b-cotangent} bundle over the manifold $\breve M$ defined by~\eqref{Eq0bMfd} to be the direct sum
  \begin{equation}
  \label{Eq0bTSplit}
    {}^{0,\bop}T^*\breve M := \R\frac{\dd\rho}{\rho} \oplus \R\frac{\dd R}{\rho R} \oplus \rho^{-1}T^*\Sph^2.
  \end{equation}
  By this we mean that ${}^{0,\bop}T^*\breve M=\ul\R\oplus\ul\R\oplus T^*\Sph^2$ as a vector bundle where $\ul\R=\breve M\times\R$ is the trivial bundle; and an element $(a,b,\eta)$ with $a,b\in\R$ and $\eta\in T^*\Sph^2$ is identified with the covector $a\frac{\dd\rho}{\rho}+b\frac{\dd R}{\rho R}+\rho^{-1}\eta\in T^*M^\circ$.
\end{definition}

The above discussion shows that the identity map on $T^*M^\circ$ over $\breve M^\circ$ extends to a smooth bundle isomorphism\footnote{As an illustration, $g_\dS$ is a smooth non-degenerate Lorentzian signature section of $S^2\,{}^{0,\bop}T^*\breve M$; explicitly,
\[
  g_\dS = -\Bigl(\frac{\dd\rho}{\rho}\Bigr)^2 - 2\rho\frac{\dd\rho}{\rho}\frac{\dd R}{\rho R} + (1-\rho^2)\Bigl(\frac{\dd R}{\rho R}\Bigr)^2 + \frac{\slg}{\rho^2}.
\]}
\begin{equation}
\label{Eq0bTIso}
  {}^{0,\bop}T^*\breve M\cong\upbeta^*({}^0 T^*M)\ \ \text{over}\ \ \breve M.
\end{equation}

From~\eqref{Eq0bKdStilde} and \eqref{Eq0bCorotating}, it follows that $e^{-t}=e^{-\tilde t}=R\sqrt{1-\rho^2}$ and $r^{-1}$ are smooth positive multiples of $R$ and $\rho$ on $\breve M$. Therefore, the coefficients of $h_\sfb$ in~\eqref{Eq0bKdSDiffdS} in the frame $r\,\dd t$, $\frac{\dd r}{r}$, $r\,\dd\theta$, $r\,\dd\phi$ are smooth multiples of $r^{-3}$ on $\breve M$; and this persists in the frame $\tilde r\,\dd\tilde t$, $\frac{\dd\tilde r}{\tilde r}$, $\tilde r\,\dd\tilde\theta$, $\tilde r\,\dd\tilde\phi$. Since $\tilde r\,\dd\tilde t=-\frac{\dd R}{\rho R}+\frac{\dd\rho}{1-\rho^2}$ and $\frac{\dd\tilde r}{\tilde r}=-\frac{\dd\rho}{\rho}$, we conclude that in the decomposition of $h_\sfb$ according to~\eqref{Eq0bTSplit} (or equivalently according to~\eqref{Eq0bST0Split}, i.e.\ into $\frac{\dd\tau^2}{\tau^2}$, $2\frac{\dd\tau}{\tau}\otimes_s\frac{\dd x^i}{\tau}$, $\frac{\dd x^i}{\tau}\otimes_s\frac{\dd x^j}{\tau}$), each component is $\rho^3$ times a smooth function on $\breve M$. Using~\eqref{Eq0bTIso}, we can summarize our discussion as follows.

\begin{lemma}[Structure of KdS and dS metrics]
\label{Lemma0bStruct}
  Define $g_\dS$ by~\eqref{Eq0bdS}, and define $g_\sfb$ as a metric on $\breve M^\circ$ by~\eqref{Eq0bKdS} via the coordinate transformations~\eqref{Eq0bdSPolar}, \eqref{Eq0bTildeCoord}, and \eqref{Eq0bCorotating}. Then\footnote{This sharpens \cite[\S{4.1}]{HintzGluedS} insofar as we now control $g_\sfb,g_\dS$ and their difference uniformly down to $\cK$.} $g_\sfb,g_\dS$ are smooth Lorentzian signature sections of $S^2\,{}^{0,\bop}T^*\breve M=\upbeta^*(S^2\,{}^0 T^*M)$ over $\breve M$, and
  \[
    g_\sfb - g_\dS \in \rho^3\CI(\breve M;S^2\,{}^{0,\bop}T^*\breve M) = \rho^3\CI\bigl(\breve M;\upbeta^*(S^2\,{}^0 T^*M)\bigr).
  \]
  Furthermore, $\frac{\dd\rho}{\rho}$ is uniformly timelike for $g_\sfb$ in the sense that
  \begin{equation}
  \label{Eq0bStructTime}
    0 > g_\sfb\Bigl(\frac{\dd\rho}{\rho},\frac{\dd\rho}{\rho}\Bigr) \in \CI(\breve M).
  \end{equation}
\end{lemma}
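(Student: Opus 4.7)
The strategy is to verify each of the three assertions by direct computation in the coordinates $(\rho,R,\omega)$ on $\breve M$, building on the expressions for $g_\dS$ and $g_\sfb-g_\dS$ already assembled in the paragraphs preceding the lemma.

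First, I would write $g_\dS$ explicitly in the 0-b-frame. Using $\tau=\rho R$, $x=R\omega$ together with $\dd x^2=\dd R^2+R^2\slg$, a short computation (already carried out in the footnote after \eqref{Eq0bTIso}) shows
\[
  g_\dS=-\Bigl(\frac{\dd\rho}{\rho}\Bigr)^2-2\rho\,\frac{\dd\rho}{\rho}\frac{\dd R}{\rho R}+(1-\rho^2)\Bigl(\frac{\dd R}{\rho R}\Bigr)^2+\frac{\slg}{\rho^2},
\]
and all coefficients are smooth in $(\rho,R,\omega)\in\breve M$. Hence $g_\dS$ is a smooth section of $S^2\,{}^{0,\bop}T^*\breve M$; its matrix in the splitting \eqref{Eq0bTSplit} is invertible, so the section is nondegenerate, and its signature is manifestly Lorentzian.

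Next, I treat $h_\sfb=g_\sfb-g_\dS$ given by \eqref{Eq0bKdSDiffdS}. I would first observe that the co-rotating coordinates $(t,r,\theta,\phi)$ are related to $(\tilde t,\tilde r,\tilde\theta,\tilde\phi)$ by \eqref{Eq0bCorotating}; since the implicit relation for $r^2$ in terms of $\tilde r^2$ and $\tilde\theta$ has $r=\tilde r(1+\cO(\tilde r^{-2}))$ for large $\tilde r$ with smooth dependence, and since $\tilde r=\rho^{-1}$, the function $r^{-1}$ is a smooth positive multiple of $\rho$ on $\breve M$; analogous smooth statements hold for $\theta,\phi$ in terms of $\tilde\theta,\tilde\phi$. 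The factor $e^{-t}=e^{-\tilde t}=R\sqrt{1-\rho^2}$ is a smooth positive multiple of $R$. I would then expand $h_\sfb$ in the frame $\tilde r\,\dd\tilde t,\ \dd\tilde r/\tilde r,\ \tilde r\,\dd\tilde\theta,\ \tilde r\,\dd\tilde\phi$; from \eqref{Eq0bKdSDiffdS}, each coefficient is a smooth multiple of $r^{-3}$ (the $\dd r^2$ term having a prefactor $r^{-5}$ which, after conversion from $\dd r$ to $\dd\tilde r/\tilde r$, becomes $r^{-3}$), hence of $\rho^3$, with a smooth factor in $\rho,\omega$. Finally, $\tilde r\,\dd\tilde t=-\frac{\dd R}{\rho R}+\frac{\dd\rho}{1-\rho^2}$, $\frac{\dd\tilde r}{\tilde r}=-\frac{\dd\rho}{\rho}$, and $\tilde r\,\dd\tilde\theta,\tilde r\,\dd\tilde\phi$ are smooth linear combinations of the frame elements in \eqref{Eq0bTSplit} with smooth coefficients. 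Combining these observations, each component of $h_\sfb$ in the splitting \eqref{Eq0bTSplit} lies in $\rho^3\CI(\breve M)$.

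For the third claim, I compute $g_\dS(\frac{\dd\rho}{\rho},\frac{\dd\rho}{\rho})$ using $\frac{\dd\rho}{\rho}=-\frac{\dd\tilde r}{\tilde r}$ and \eqref{Eq0bdStilde}:
\[
  g_\dS\Bigl(\frac{\dd\rho}{\rho},\frac{\dd\rho}{\rho}\Bigr)=\tilde r^{-2}\,g_\dS^{-1}(\dd\tilde r,\dd\tilde r)=-\tilde r^{-2}(\tilde r^2-1)=-(1-\rho^2).
\]
This is a smooth function on $\breve M$ which is strictly negative on $\bar\rho\leq\rho_*<1$. By the second claim, $g_\sfb^{-1}-g_\dS^{-1}$ is the image of an element of $\rho^3\CI$ under a smooth bundle map, so
\[
  g_\sfb\Bigl(\frac{\dd\rho}{\rho},\frac{\dd\rho}{\rho}\Bigr)=-(1-\rho^2)+\cO(\rho^3)\in\CI(\breve M),
\]
which is strictly negative provided $\bar\rho$ is chosen small enough; this is exactly the condition imposed in Definition~\ref{Def0bMfd}.

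The main technical point---and the step deserving the most care---is the second one: verifying that the $\rho^3$ decay survives the passage from the frame naturally attached to the Boyer--Lindquist type coordinates $(t,r,\theta,\phi)$ to the 0-b-frame on $\breve M$, including near the axis $\sin\theta=0$. Once the implicit relations in \eqref{Eq0bCorotating} are seen to define a smooth diffeomorphism down to $\rho=0$ (with $r/\tilde r\to 1$ smoothly), and the frame changes are recognized as smooth bundle isomorphisms on all of $\breve M$ (including at $R=0$, where $R$ drops out of the computation entirely since $h_\sfb$ does not involve $R$), the remaining arguments are routine.
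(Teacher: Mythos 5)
Your proof is correct and follows essentially the same route as the paper's pre-lemma discussion in \S\ref{Ss0bMet} (the lemma has no separate proof; it summarizes that discussion, including the footnotes). One small inaccuracy worth flagging: the claim $r=\tilde r(1+\cO(\tilde r^{-2}))$ holds only for $a=0$; the identity $(1+a^2\cos^2\theta)r^2=(1+a^2)\tilde r^2-a^2\sin^2\theta$ from the footnote to \eqref{Eq0bCorotating} shows that $r/\tilde r\to\sqrt{\Delta_0/\Delta_\theta}$ as $\tilde r\to\infty$, which is a nontrivial $\tilde\theta$-dependent limit when $a\neq 0$. Since the ratio $\rho r=r/\tilde r$ is nonetheless a smooth positive function down to $\rho=0$, your subsequent (and only actually used) conclusion that $r^{-1}$ is a smooth positive multiple of $\rho$ on $\breve M$ remains correct, so the argument is unaffected.
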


The perturbations of $g_\sfb$ arising in the solution of nonlinear stability problem will similarly be considered as sections of $\upbeta^*(S^2\,{}^0 T^*M)$ with suitable regularity and decay properties on $\breve M$. For concreteness, we fix the types of domains on which we will study perturbations of $g_\sfb$ as follows:

\begin{definition}[Domains with spacelike boundaries]
\label{Def0bDom}
  We use the notation of Definition~\usref{Def0bMfd}. Let $R_0\in(0,\infty)$. For all $\rho_0\in(0,\bar\rho]$ so that $\Sigma_{\rho_0,R_0}^+:=\{\rho\leq\rho_0,\ \rho=R^{-1}(\rho_0 R_0-\half(R_0-R))\}$ is spacelike for $g_\sfb$, we set
  \begin{equation}
  \label{Eq0bDom}
    \Omega_{\rho_0,R_0} := \Bigl\{ (\rho,R,\omega) \in \breve M \colon \rho\leq\rho_0,\ R\leq R_0,\ \rho\geq R^{-1}\bigl(\rho_0 R_0-\half(R_0-R)\bigr) \Bigr\}.
  \end{equation}
  We denote by $\Sigma_{\rho_0,R_0}:=\{(\rho,R,\omega)\colon \rho=\rho_0,\ R\leq R_0\}$ the initial boundary hypersurface of $\Omega_{\rho_0,R_0}$. The final boundary hypersurface is $\Sigma_{\rho_0,R_0}^+$.
\end{definition}

Note that the final inequality in the definition of $\Omega_{\rho_,R_0}$ is equivalent to $\tau=\rho R\geq \rho_0 R_0-\half(R_0-R)$; the hypersurface $\Sigma_{\rho_0,R_0}^+$ (which intersects $\tau=0$ at $R=(1-2\rho_0)R_0$) is a spacelike hypersurface for $g_\dS$, and therefore it is also one for $g_\sfb$ in the region where $\rho\leq\rho_0$, provided $\rho_0$ is sufficiently small. Moreover, the future timelike vector field $\pa_r$ is outward pointing, and thus it is a final boundary hypersurface for purposes of solving wave equations (i.e.\ no data need to be imposed there). We will pose initial data at $\Sigma_{\rho_0,R_0}$. See Figure~\ref{Fig0bDom}.

\begin{figure}[!ht]
\centering
\includegraphics{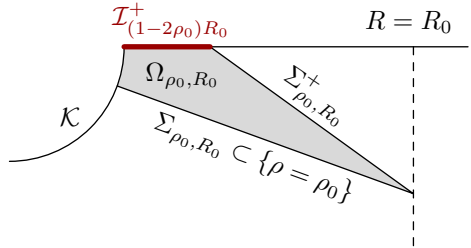}
\caption{Illustration of a domain $\Omega_{\rho_0,R_0}\subset\breve M$, only showing the coordinates $\rho,R$.}
\label{Fig0bDom}
\end{figure}

\subsection{Basics of 0-b-analysis; weighted b-Sobolev spaces}
\label{Ss0b}

We now introduce the operator classes and function spaces used in the analysis of the stability problem. For definiteness, we work on the 4-dimensional manifold $\breve M$ from Definition~\ref{Def0bMfd}, although all notions and results in this section admit straightforward generalizations to the case of general dimensions.

\subsubsection{Vector fields and operators}

The bundle dual to~\eqref{Eq0bTSplit} is the 0-b-tangent bundle
\[
  {}^{0,\bop}T\breve M=\R(\rho\pa_\rho)\oplus\R(\rho R\pa_R)\oplus\rho T\Sph^2.
\]
The space of its smooth sections is denoted $\cV_{0,\bop}(\breve M)$; it is spanned over $\CI(\breve M)$ by $\rho\pa_\rho$, $\rho R\pa_R$, and $\rho\Omega$ where $\Omega\in\cV(\Sph^2)$. Thus, a smooth vector field $V$ on $\breve M$ is a 0-b-vector field (i.e.\ lies in $\cV_{0,\bop}(\breve M)$) if and only if $V$ vanishes at $\cI^+=\rho^{-1}(0)$ (hence the subscript `$0$' \cite{MazzeoMelroseHyp}) and is tangent to $\cK=R^{-1}(0)$ (hence the subscript `$\bop$' \cite{MelroseMendozaB,MelroseAPS}).

\begin{definition}[0-b-operators]
\label{Def0bOp}
  For $m\in\N_0$, we write $\Diff_{0,\bop}^m(\breve M)$ for the space of all differential operators on $\breve M$ which are finite sums of up to $m$-fold compositions of elements of $\cV_{0,\bop}(\breve M)$. If $E\to\breve M$ is a vector bundle, we write $\Diff_{0,\bop}^m(\breve M;E)$ for operators $\CIc(\breve M^\circ;E)\to\CIc(\breve M^\circ;E)$ which in each local trivialization of $E$ are matrices of operators of class $\Diff_{0,\bop}^m$.
\end{definition}

In this paper, we only work with trivial(ized) bundles, and thus with matrices of scalar operators. A scalar operator can conversely be regarded as an operator on a trivialized bundle by acting component-wise. We will henceforth only discuss scalar operators explicitly, leaving the straightforward notational modifications to operators on trivial(ized) bundles (and the simple generalization to non-trivial bundles) to the reader.

By duality from~\eqref{Eq0bTIso}, we can equivalently define $\cV_{0,\bop}(\breve M)$ to be the $\CI(\breve M)$-span of the set $\{V|_{\breve M^\circ}\colon V\in\cV_0(M)\}$ of 0-vector fields on $M$; here $\cV_0(M)$ consists of all smooth vector fields on $M$ which vanish at $\pa M$, and we thus have $\cV_0(M)=\CI(M;{}^0 T M)$ where ${}^0 T M=\R(\tau\pa_\tau)\oplus\tau T X$ is the dual bundle to ${}^0 T^*M$ in~\eqref{Eq0bT0Split}.

A larger class of operators is
\[
  \Diffb^m(\breve M),
\]
which is defined analogously to $\Diff_{0,\bop}^m(\breve M)$ but using \emph{b-vector fields} $\Vb(\breve M)$, which are precisely those smooth vector fields on $\breve M$ which are tangent to all boundary hypersurfaces $\cI^+,\cK$ of $\breve M$. Thus, $\Vb(\breve M)$ is spanned over $\CI(\breve M)$ by $\rho\pa_\rho$, $R\pa_R$, and $\cV(\Sph^2)$.\footnote{The benefit of the definitions of $\cV_{0,\bop}$, $\cV_0$, $\Vb$ solely in terms of the smooth structure of $\breve M$ is that it allows one to determine frames for ${}^{0,\bop}T\breve M$ in local coordinates on $\breve M$ without the need for, say, change of variables computations.} One can also consider spaces of weighted operators
\[
  R^\alpha\rho^\beta\Diff_{0,\bop}^m(\breve M) = \{ R^\alpha\rho^\beta L \colon L\in\Diff_{0,\bop}^m(\breve M) \},
\]
similarly for $R^\alpha\rho^\beta\Diffb^m(\breve M)$. Elements of these spaces define bounded linear maps on $\CIc(\breve M^\circ)$.

Fixing a finite spanning set $\sV=\{V_a\}\subset\cV(\Sph^2)$ over $\CI(\Sph^2)$ (e.g.\ rotation vector fields around coordinate axes), we can express any $L\in\Diff_{0,\bop}^m(\breve M)$ in the form
\begin{equation}
\label{Eq0bL}
  L = \sum_{j+k+|\alpha|\leq m} \ell_{j k\alpha}(\rho,R,\omega) (\rho\pa_\rho)^j (\rho R\pa_R)^k (\rho\sV)^\alpha,\qquad \ell_{j k\alpha}\in\CI(\breve M);
\end{equation}
here $(\rho\sV)^\alpha:=(\rho V_1)^{\alpha_1}(\rho V_2)^{\alpha_2}\cdots$. The \emph{indicial operator} of $L$ is defined by
\[
  I(L) := \sum_{j\leq m} \ell_{j 0 0}(0,R,\omega) (\rho\pa_\rho)^j \in \Diffb^m(\breve M);
\]
it is a family (parameterized by $(R,\omega)\in\cI^+$) of elements of $\Diffb^m([0,\infty)_\rho)$. We shall often regard a function $u=u(R,\omega)$ as a function $(\rho,R,\omega)\mapsto u(R,\omega)$ (without spelling this out explicitly); with this convention, we have $\ell_{j 0 0}-\ell_{j 0 0}|_{\rho=0}\in\rho\CI$. Since $\rho R\pa_R,\rho V_a\in\rho\Vb$, we thus conclude that
\[
  L - I(L) \in \rho\Diffb^m(\breve M),
\]
that is, $I(L)$ captures $L$ to leading order at $\cI^+$ \emph{as a b-differential operator}. The \emph{indicial family} is obtained by formally conjugating $I(L)$ by the Mellin transforming in $\rho$, thus
\[
  I(L,\lambda) := \sum_{j\leq m} \ell_{j 0 0}(0,R,\omega) \lambda^j.
\]
For each $(R,\omega)\in\cI^+$, this is a polynomial in $\lambda$ whose roots are called \emph{indicial roots}. (They can depend on $(R,\omega)$, though the indicial roots of all operators appearing in this paper will be constant along $\cI^+$.)

The wave operators of main interest in this paper will be of 0-b type, but their solutions will be shown to be regular under application of b-vector fields (which are stronger). One underlying structural reason is the following.

\begin{lemma}[Ideal]
\label{Lemma0bIdeal}
  $\cV_{0,\bop}(\breve M)\subset\Vb(\breve M)$ is an ideal. That is, if $V\in\cV_{0,\bop}(\breve M)$ and $W\in\Vb(\breve M)$, then $[W,V]\in\cV_{0,\bop}(\breve M)$.
\end{lemma}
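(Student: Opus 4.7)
The statement is algebraic, so the plan is to reduce to a finite check on generators using the Leibniz property of the Lie bracket, and then verify each generator-level commutator directly.

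\textbf{Reduction step.} Fix the $\CI(\breve M)$-generating sets $\{\rho\pa_\rho,\,R\pa_R,\,\Omega_a\}$ of $\Vb(\breve M)$ (with $\Omega_a\in\cV(\Sph^2)$ a finite frame) and $\{\rho\pa_\rho,\,\rho R\pa_R,\,\rho\Omega_a\}$ of $\cV_{0,\bop}(\breve M)$. Writing $V=\sum_i f_i V_i$ and $W=\sum_j g_j W_j$ with $f_i,g_j\in\CI(\breve M)$, bilinearity of the bracket and the Leibniz identities
\[
  [W,f V_i]=(Wf)V_i+f[W,V_i],\qquad [g_j W_j,V]=g_j[W_j,V]-(V g_j)W_j,
\]
reduce the claim to two statements: (i) $[W_j,V_i]\in\cV_{0,\bop}(\breve M)$ for each pair of generators, and (ii) $(V g_j)W_j\in\cV_{0,\bop}(\breve M)$ for $V\in\cV_{0,\bop}$, $g_j\in\CI$, $W_j\in\Vb$.

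\textbf{Step 1: Statement (ii).} I would observe that each generator $\rho\pa_\rho$, $\rho R\pa_R$, $\rho\Omega_a$ of $\cV_{0,\bop}$ carries an overall factor of $\rho$ when applied to a smooth function, hence $V g\in\rho\,\CI(\breve M)$ for any $V\in\cV_{0,\bop}$ and $g\in\CI(\breve M)$. Consequently $(Vg)W\in\rho\,\Vb(\breve M)$, and inspection of the two frames shows $\rho\cdot(\rho\pa_\rho)$, $\rho\cdot R\pa_R=\rho R\pa_R$, $\rho\cdot\Omega_a=\rho\Omega_a$ all lie in $\cV_{0,\bop}$; thus $\rho\Vb\subset\cV_{0,\bop}$, giving (ii).

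\textbf{Step 2: Statement (i), the nine generator commutators.} This is a direct local coordinate computation. The commutators involving $\Omega_a$ or $R\pa_R$ with $\rho\pa_\rho$ or $\rho R\pa_R$ vanish because the vector fields act on disjoint coordinate factors (e.g.\ $[R\pa_R,\rho R\pa_R]=0$ by a short check using $\pa_R\rho=0$), and $[\Omega_a,\rho\Omega_b]=\rho[\Omega_a,\Omega_b]\in\rho\,\cV(\Sph^2)\subset\cV_{0,\bop}$. The only two nonzero cases that need a line of computation are
\[
  [\rho\pa_\rho,\rho R\pa_R]=\rho R\pa_R,\qquad [\rho\pa_\rho,\rho\Omega_a]=\rho\Omega_a,
\]
both of which land in $\cV_{0,\bop}$. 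All nine results lie in $\cV_{0,\bop}(\breve M)$, completing (i) and hence the ideal property.

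\textbf{Where the mild subtlety lies.} The one point requiring a little attention is that the generator $\rho\pa_\rho$ of $\cV_{0,\bop}$ is \emph{not} of the form $\rho\cdot(\text{b-vector field})$, so the inclusion $\rho\Vb\subset\cV_{0,\bop}$ used in (ii) is strict. This is harmless because the generator-level commutators $[W_j,\rho\pa_\rho]$ for $W_j\in\{\rho\pa_\rho,R\pa_R,\Omega_a\}$ all vanish identically, so $\rho\pa_\rho$ never contributes a problematic term. With this observation the proof reduces to the bookkeeping above.
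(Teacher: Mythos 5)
Your proof is correct and follows essentially the same strategy as the paper: a Leibniz reduction to the generating sets of $\Vb$ and $\cV_{0,\bop}$, using the key facts that $V f\in\rho\CI(\breve M)$ for $V\in\cV_{0,\bop}$ and $\rho\Vb(\breve M)\subset\cV_{0,\bop}(\breve M)$, followed by a finite check of the generator commutators. The concluding ``mild subtlety'' paragraph is harmless but somewhat misplaced: strictness of the inclusion $\rho\Vb\subsetneq\cV_{0,\bop}$ is irrelevant to step (ii), which only uses the inclusion in one direction, and the vanishing of $[W_j,\rho\pa_\rho]$ is already part of your step (i) rather than a separate issue.
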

\begin{proof}
  If the conclusion holds for $V$, then it also holds for $f V$ where $f\in\CI(\breve M)$ since $[W,f V]=(W f)V+f[W,V]$, with both summands lying in $\cV_{0,\bop}$. Similarly, if the conclusion holds for $W$, then for $f\in\CI(\breve M)$ we have $[f W,V]=f[W,V]-(V f)W$; note then that $V f\in\rho\CI(\breve M)$, and therefore $(V f)W\in\rho\Vb(\breve M)\subset\cV_{0,\bop}(\breve M)$. It thus suffices to consider $V=\rho\pa_\rho,\rho R\pa_R,\rho V_a$ and $W=\rho\pa_\rho,R\pa_R,V_a$, in which case the membership $[W,V]\in\cV_{0,\bop}$ is straightforward to check.
\end{proof}

\subsubsection{Function spaces}

We fix on $\breve M$ and $\cI^+$ the (unweighted b-)densities $|\frac{\dd\rho}{\rho}\,\frac{\dd R}{R}\,\dd\slg|$ and $|\frac{\dd R}{R}\,\dd\slg|$, respectively.

\begin{definition}[b- and (0,b)-spaces]
\label{Def0bL}
  Fix a finite spanning set $\sV=\{V_a\}\subset\cV(\Sph^2)$. Let $\Omega\subset\breve M$ be compact and equal to the closure of its interior $\Omega^\circ$. Then for $\alpha,\beta\in\R$ and $k\in\N_0$, we define $R^\alpha\rho^\beta\Hb^k(\Omega)$ to be the space of elements of $L^2_\loc(\Omega^\circ)$ with finite norm\footnote{While this definition uses a concrete set $\rho\pa_\rho,R\pa_R,\sV$ of vector fields to test for b-regularity, we remark that any other finite set which spans the set $\Vb(\Omega)$ of smooth vector fields tangent to $\cI^+$ and $\cK$ over $\CI(\Omega)$ produces equivalent norms. Similarly, the functions $R$ and $\rho$ used as weights can be replaced by any smooth defining functions of $\cK$ and $\cI^+$, respectively, without changing the norms (up to equivalence). In this sense, the function spaces defined here only depend on the structure of $\Omega$ as a smooth manifold with corners.}
  \[
    \|u\|_{R^\alpha\rho^\beta\Hb^k(\Omega)} := \sum_{j+l+|\gamma|\leq k} \| R^{-\alpha}\rho^{-\beta} (\rho\pa_\rho)^j (R\pa_R)^l \sV^\gamma u \|_{L^2(\Omega)}.
  \]
  The space $R^\alpha\rho^\beta H_{0,\bop}^k(\Omega)$ is defined analogously but using $(\rho\pa_\rho)^j(\rho R\pa_R)^l(\rho\sV)^\gamma$; and the space $R^\alpha\rho^\beta\cC_\bop^k(\Omega)$ is defined analogously but using the $\cC^0_\bop(\Omega)$-norm which is defined to be the sup norm on the space $\cC_\bop^0(\Omega)$ of all bounded continuous functions on $\Omega\setminus(\cK\cup\cI^+)$. We similarly define $R^\alpha\Hb^k(\cU)$ for precompact $\cU\subset\cI^+$ with $\cU=\ol{\cU^\circ}$, with norm
  \[
    \|u\|_{R^\alpha\Hb^k(\cU)} := \sum_{l+|\gamma|\leq k} \| R^{-\alpha} (R\pa_R)^l\sV^\gamma u \|_{L^2(\cU)},
  \]
  and analogously $R^\alpha\cC_\bop^k(\cU)$.
\end{definition}

When $\Omega\subset[0,\infty)\times\cU$, operator classes with non-smooth coefficients are denoted
\[
  R^\alpha\rho^\beta\Hb^k(\Omega)\Diff_{0,\bop}^m(\Omega),\qquad
  R^\alpha\Hb^k(\cU)\Diff_{0,\bop}^m(\Omega);
\]
elements of these spaces are finite sums of products $a L$ where $a$ lies in the stated function space and $L\in\Diff_{0,\bop}^m(\Omega)$. (Recall here our convention of regarding a function of $(R,\omega)$ as a $\rho$-independent function of $(\rho,R,\omega)$.) Given $L\in R^\alpha\rho^\beta\Hb^k(\Omega)\Diff_{0,\bop}^m(\Omega)$, we define its norm as follows: there is a unique way of writing
\[
  L = \sum_{j+|\gamma|\leq m} \ell_{j\gamma} (\tau\pa_\tau)^j (\tau\pa_x)^\gamma,\qquad \ell_{j\gamma}\in R^\alpha\rho^\beta\Hb^k(\Omega),
\]
and we then set
\begin{equation}
\label{Eq0bOpNorm}
  \|L\|_{R^\alpha\rho^\beta\Hb^k(\Omega)\Diff_{0,\bop}^m(\Omega)} := \sum_{j+|\gamma|\leq m} \|\ell_{j\gamma}\|_{R^\alpha\rho^\beta\Hb^k(\Omega)};
\end{equation}
similarly for $L\in R^\alpha\Hb^k(\cU)\Diff_{0,\bop}^m(\Omega)$. Since $\Vb(\breve M)$ is spanned over $\CI(\breve M)$ by $\tau\pa_\tau$, $\rho^{-1}\tau\pa_{x^i}=R\pa_{x^i}$, we can similarly define\footnote{We take advantage of the particular geometry of $\breve M$ here. A more systematic approach towards defining a norm, which ends up giving an equivalent norm, is to cover $\Omega$ by coordinate charts and to sum the norms of the coefficients in the local coordinate chart expressions of $L$ as a b-differential operator.} the norm of $L\in R^\alpha\rho^\beta\Hb^k(\Omega)\Diffb^m(\Omega)$ via
\begin{equation}
\label{Eq0bbOp}
  L = \sum_{j+|\gamma|\leq m}\ell_{j\gamma}(\tau\pa_\tau)^j(R\pa_x)^\gamma \implies \|L\|_{R^\alpha\rho^\beta\Hb^k(\Omega)\Diffb^m(\Omega)} := \sum_{j+|\gamma|\leq m} \|\ell_{j\gamma}\|_{R^\alpha\rho^\beta\Hb^k(\Omega)}.
\end{equation}

We next discuss the algebra properties of weighted b-Sobolev spaces. \emph{For concreteness, we henceforth work only on $\cU=\cI^+_{R_0}:=\cI^+\cap\{R\leq R_0\}=[0,R_0]_R\times\Sph^2$ and domains $\Omega=\Omega_{\rho_0,R_0}$ from Definition~\usref{Def0bDom}.} (In particular, $[0,\infty)_\rho\times\cU\supset\Omega$.) As a useful technical tool, we introduce extension operators. We write $\CIdot(\Omega)$ for the space of smooth functions which vanish to infinite order at $\cI^+\cup\cK$; we stress that we do not require its elements to vanish at any other boundary hypersurfaces of $\Omega$.

\begin{lemma}[Extension operators]
\label{Lemma0bExt}
  Let $0<R_0<R_+<R_1$, $0<\rho_0<\rho_+<\rho_1$. Then there exists a continuous linear map $\Xi\colon\CIdot(\Omega_{\rho_0,R_0})\to\CIdot(\Omega_{\rho_1,R_1})$ with the following properties.
  \begin{enumerate}
  \item{\rm (Extension.)} For all $u\in\CIdot(\Omega_{\rho_0,R_0})$, we have $(\Xi u)|_{\Omega_{\rho_0,R_0}}=u$.
  \item{\rm (Support.)} For all $u\in\CIdot(\Omega_{\rho_0,R_0})$, we have $\supp(\Xi u)\subset\Omega_{\rho_+,R_+}$.
  \item{\rm (Boundedness.)} $\Xi$ defines a bounded map $R^\alpha\rho^\beta\Hb^k(\Omega_{\rho_0,R_0})\to R^\alpha\rho^\beta\Hb^k(\Omega_{\rho_1,R_1})$ for all $\alpha,\beta\in\R$ and $k\in\N_0$, similarly for weighted $\cC_\bop^k$-spaces.
  \end{enumerate}
  There exist extension operators $\CIdot(\cI^+_{R_0})\to\CIdot(\cI^+_{R_1})$ with the analogous properties.
\end{lemma}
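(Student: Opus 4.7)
The strategy is to reduce the problem to a classical Seeley-type extension in ordinary Sobolev spaces via a change to logarithmic coordinates. Set
\[
  s := -\log\rho, \qquad t := -\log R,
\]
so that $\rho\pa_\rho=-\pa_s$, $R\pa_R=-\pa_t$, and $\frac{\dd\rho}{\rho}\frac{\dd R}{R}\dd\slg=\dd s\,\dd t\,\dd\slg$. The multiplication map $u\mapsto R^{-\alpha}\rho^{-\beta}u=e^{\alpha t+\beta s}u$ is then an isometric isomorphism from $R^\alpha\rho^\beta\Hb^k(\Omega)$ onto the ordinary $H^k$ of the image $\Omega'$ of $\Omega$ in $(s,t,\omega)$-coordinates, and likewise for $\cC_\bop^k$. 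Elements of $\CIdot(\Omega)$ correspond to smooth functions on $\Omega'$ which, together with all $(s,t,\omega)$-derivatives, decay faster than any polynomial as $s\to\infty$ or $t\to\infty$. It therefore suffices to construct extensions on $\Omega'$ that are bounded on $H^k$ and on $C^k$ for every $k$ and that preserve this rapid decay.

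For the statement on $\cI^+_{R_0}$, the image is $[t_0,\infty)\times\Sph^2$ with $t_0:=-\log R_0$, and the target image is $[t_1,\infty)\times\Sph^2$ with $t_1:=-\log R_1<t_0$; set $t_+:=-\log R_+\in(t_1,t_0)$. Apply Seeley's one-variable extension across $t=t_0$, namely for $t<t_0$ define
\[
  (\tilde E u)(t,\omega):=\sum_{k\in\N}c_k\,u\bigl(t_0+\lambda_k(t_0-t),\omega\bigr),
\]
with $\lambda_k\in(0,\infty)$ tending to infinity, coefficients $c_k$ decaying faster than any $\lambda_k^{-N}$, and the normalization $\sum_k c_k(-\lambda_k)^j=1$ for every $j\in\N_0$; this makes $\tilde E$ continuous $H^k\to H^k$ and $C^k\to C^k$ for every $k$, and matches all one-sided derivatives at $t_0$. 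Multiplication by a smooth cutoff in $t$ equal to $1$ on $t\geq t_0$ and supported in $t\geq t_+$ then gives the desired $\Xi$; rapid decay as $t\to\infty$ is preserved because the only non-trivial extension happens in $t\leq t_0$.

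For the statement on $\Omega_{\rho_0,R_0}$, the image $\Omega_{\rho_0,R_0}'$ has three smooth boundary faces lying inside $\breve M^\circ$: the face $\{s=s_0\}$ with $s_0:=-\log\rho_0$ coming from $\rho=\rho_0$, the face $\{t=t_0\}$ coming from $R=R_0$, and the image $\Sigma'$ of the spacelike face $\Sigma_{\rho_0,R_0}^+$, which is smooth and transversal to the other two. Build $\Xi$ in three successive stages: first extend across $\{s=s_0\}$ by a Seeley extension in the variable $s$; then extend the result across $\{t=t_0\}$ by a Seeley extension in $t$; finally extend across $\Sigma'$ by straightening it via a smooth diffeomorphism of a tubular neighborhood in $\breve M^\circ$ (which can be arranged to equal the identity outside a compact subset of $\breve M^\circ$) and applying a one-variable Seeley extension in the normal direction. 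Each stage is bounded on every weighted $H^k$ and $C^k$, so the composition is as well. Finally, multiply by a product of smooth cutoffs $\chi_s(s)\chi_t(t)\chi_\Sigma$, each equal to $1$ on $\Omega_{\rho_0,R_0}'$ and supported in $\Omega_{\rho_+,R_+}'$, which secures properties~(1) and~(2) and preserves rapid decay as $s,t\to\infty$.

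The only point requiring care is the behavior near the corners where two or three of the three faces meet. This is handled by performing the Seeley steps in the fixed order above with nested cutoffs: each step extends only across a single codimension-one face, because the previous extensions have already turned the earlier faces into interior faces of an enlarged domain, so the standard boundedness of one-variable Seeley extensions applies at every step. That $\Sigma'$ may reach $\cI^+$ at $\{s=\infty\}$ causes no difficulty, since elements of $\CIdot$ vanish to infinite order there and the $s$-extension preserves this decay, so the straightening and normal extension in step~(iii) act on a function that is already rapidly decaying at infinity. I expect this corner bookkeeping to be the only (mild) obstacle.
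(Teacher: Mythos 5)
Your overall strategy---pass to logarithmic coordinates, observe that the weighted b-Sobolev and $\cC_\bop$ norms become ordinary Sobolev/$C^k$ norms, and apply Seeley-type reflection extensions---is essentially the paper's own route (the paper passes to $r=-\log\rho$ and a normal coordinate $y$ locally, you pass to $(s,t)=(-\log\rho,-\log R)$ globally), and the $\cI^+_{R_0}$ case and the exponential-weight bookkeeping are fine. However, your treatment of $\Omega_{\rho_0,R_0}$ contains two genuine errors. First, $\{R=R_0\}$ (your $\{t=t_0\}$) is \emph{not} a boundary hypersurface of $\Omega_{\rho_0,R_0}$: the constraint $R\leq R_0$ is already implied by $\rho\leq\rho_0$ together with $R(1-2\rho)\leq R_0(1-2\rho_0)$ (the inequality defining $\Sigma^+_{\rho_0,R_0}$), since that gives $R\leq R_0\tfrac{1-2\rho_0}{1-2\rho}\leq R_0$. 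So apart from $\cI^+$ and $\cK$ the only boundary hypersurfaces are $\{\rho=\rho_0\}$ and $\Sigma^+_{\rho_0,R_0}$, and your step~(ii) has no content; $\{t=t_0\}$ is merely the corner point of $\Omega'$.

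Second, and more seriously, the \emph{order} of your Seeley extensions is wrong, and this is exactly the corner bookkeeping you flag but do not actually settle. In $(s,t)$ coordinates the domain is $\Omega'=\{s\geq s_0,\ t\geq\phi(s)\}$ with $\phi(s)=t_0+\log\tfrac{1-2e^{-s}}{1-2e^{-s_0}}$, which is strictly increasing with $\phi(s_0)=t_0$. Extending first across $\{s=s_0\}$ requires evaluating $u$ at reflected points $(s',t)$ with $s'>s_0$; but then the requirement $t\geq\phi(s')>\phi(s_0)=t_0$ fails for $t$ near $t_0$, so the reflection formula lands outside the domain in a neighborhood of the corner $(s_0,t_0)$, and the extension is simply not defined there. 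Your appeal to ``the previous extensions having turned the earlier faces into interior faces'' points the wrong way: what blocks step~(i) is not the face $\{s=s_0\}$ but the as-yet-unhandled sloping face $\Sigma^+$. One must first straighten $\Sigma^+_{\rho_0,R_0}$---via $y=t-\phi(s)$, or equivalently via the defining function $y=R(1-2\rho)-R_0(1-2\rho_0)$ the paper uses---turning the domain into the product $\{s\geq s_0\}\times\{y\geq 0\}\times\Sph^2$; only then do the reflections across $y=0$ and $s=s_0$ make sense (in either order). This also fixes the defect in your step~(iii): $\Sigma^+_{\rho_0,R_0}$ reaches $\cI^+$ (at $\rho=0$, $R=(1-2\rho_0)R_0$), so it is non-compact in $\breve M^\circ$ and a tubular-neighborhood diffeomorphism that is ``the identity outside a compact subset of $\breve M^\circ$'' cannot straighten it; the rapid decay of $u$ at $s=\infty$ addresses estimates, not the well-definedness of the straightening. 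The straightening must be done by a globally defined b-compatible change of coordinates such as $y=t-\phi(s)$, or, as the paper does, one localizes near the corner $\cI^+\cap\Sigma^+_{\rho_0,R_0}$ with a partition of unity and uses the defining function $y$ directly.
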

\begin{proof}
  This is a variant of Seeley's theorem \cite{SeeleyExtension}. We only discuss the extension problem near the boundary hypersurface $y:=R(1-2\rho)-R_0(1-2\rho_0)=0$ of $\Omega_{\rho_0,R_0}$, and thus consider $u$ with support in $-\delta\leq y\leq 0$ and $0\leq\rho\leq\delta$. (Local extension operators can be patched together using a partition of unity.) Write $r=-\log\rho$, and denote by $\omega\in\R^2$, $|\omega|<2$, local coordinates on $\Sph^2$. Set $B(0,1)=\{\omega\in\R^2\colon|\omega|<1\}$. Then the norm of a function $u\in\CIc([-\log\delta,\infty)_r\times(-\delta,0]_y\times B(0,1)_\omega)$ on $\Hb$ and $\cC_\bop$ spaces are $L^2$-, resp.\ $\cC^0$-norms of the derivatives of $u$ along $\pa_r,\pa_y,\pa_\omega$. Fix $\chi\in\CIc((-1,0])$ to be equal to $1$ near $0$, and with $\supp\chi\subset(-\min(\delta,\eta),0]$ for some small $\eta>0$ fixed below. Set then
  \[
    \tilde u(r,y,\omega) := \begin{cases} u(r,y,\omega), & y<0, \\ \sum_{l=0}^\infty c_l\chi(-y/\delta_l)u(r,-y/\delta_l,\omega), & y>0, \end{cases}
  \]
  where we take $\delta_l=3^{-l}$ and define $c_l$ via $\sin(\frac{\pi}{2}z)=\sum_{l=0}^\infty c_l z^l$. This ensures that $\sum_{l=0}^\infty c_l\delta_l^{-j}=(-1)^j$ and $\sum_{l=0}^\infty |c_l|\delta_l^{-j}<\infty$ for all $j\in\N_0$, and thus $\tilde u$ is smooth across $y=0$. Moreover, by the support property of $\chi$, we have $\tilde u=0$ for $y\geq\eta$, which implies $\supp\tilde u$ is contained in (the intersection of $\{r\geq-\log\delta,\ \omega\in B(0,1)\}$ with) $\Omega_{\rho_+,R_+}$ when $\eta$ is sufficiently small. The boundedness of $u\mapsto\tilde u$ on $\Hb$- and $\cC_\bop$-spaces follows by direct differentiation.
\end{proof}

\begin{lemma}[Boundedness, restriction, and algebra properties]
\label{Lemma0bSob}
  Write $d_m:=\lceil\frac{m+1}{2}\rceil$ for the smallest integer larger than $\frac{m}{2}$, so $d_4=3$ and $d_3=2$. We write $D_\bop^p$ for any $p$-fold composition ($p\in\N_0$) of elements of $\{\rho\pa_\rho,R\pa_R,V_a\}$, resp.\ $\{R\pa_R,V_a\}$ acting on a function on $\Omega=\Omega_{\rho_0,R_0}$, resp.\ $\cU=\cI^+_{R_0}$.
  \begin{enumerate}
  \item\label{It0bSobEmbed}{\rm (Sobolev embedding.)} For every $k\in\N_0$, there exist constants $C^S_{\Omega,k}$, $C^S_{\cU,k}$ so that
    \begin{equation}
    \label{Eq0bSobEmbed}
      \|u\|_{\cC_\bop^k(\Omega)}\leq C^S_{\Omega,k}\|u\|_{\Hb^{k+d_4}(\Omega)},\qquad
      \|v\|_{\cC_\bop^k(\cU)}\leq C^S_{\cU,k}\|v\|_{\Hb^{k+d_3}(\cU)}.
    \end{equation}
  \item\label{It0bSobTrace}{\rm (Restrictions.)} Let $\rho_1\in(0,\rho_0]$, and write $\Sigma:=\Omega_{\rho_0,R_0}\cap\{\rho=\rho_1\}$. Then the restriction map $u\mapsto u|_\Sigma$ defines a bounded linear map
    \begin{equation}
    \label{Eq0bSobTrace}
      \Hb^k(\Omega) \to \Hb^{k-1}(\Sigma),\qquad k\in\N.
    \end{equation}
  \item\label{It0bSobProd}{\rm (Estimates for products.)} For every $k\in\N_0$, there exist constants $C_{\Omega,k}$, $C_{\cU,k}$, and $C_{\Omega,\cU,k}$ so that for all functions $u_1,u_2$ on $\Omega$ and $v_1,v_2$ on $\cU$, and for $a,b\in\N_0$ with $a+b=k$, we have
    \begin{subequations}
    \begin{align}
    \label{Eq0bSobProd1}
      \|(D_\bop^a u_1)(D_\bop^b u_2)\|_{L^2(\Omega)} &\leq C_{k,\Omega}\bigl( \|u_1\|_{\Hb^{d_4}(\Omega)}\|u_2\|_{\Hb^k(\Omega)} + \|u_1\|_{\Hb^{k+d_4}(\Omega)}\|u_2\|_{L^2(\Omega)} \bigr), \\
    \label{Eq0bSobProd2}
      \|(D_\bop^a v_1)(D_\bop^b v_2)\|_{L^2(\cU)} &\leq C_{k,\cU}\bigl( \|v_1\|_{\Hb^{d_3}(\cU)}\|v_2\|_{\Hb^k(\cU)} + \|v_1\|_{\Hb^{k+d_3}(\cU)}\|v_2\|_{L^2(\cU)} \bigr), \\
    \label{Eq0bSobProd3}
      \|(D_\bop^a v_1)(D_\bop^b u_2)\|_{L^2(\Omega)} &\leq C_{k,\Omega,\cU}\bigl( \|v_1\|_{\Hb^{d_3}(\cU)}\|u_2\|_{\Hb^k(\cU)} + \|v_1\|_{\Hb^{k+d_3}(\Omega)}\|u_2\|_{L^2(\Omega)} \bigr).
    \end{align}
    \end{subequations}
  \item\label{It0bSobFn}{\rm (Estimates for nonlinear expressions.)} Let $\delta>0$, and let $F\colon[-\delta,\delta]\to\R$ be a smooth function which vanishes at $0$. Then for all $u\in\Hb^k(\Omega)$ with\footnote{It suffices to assume $\|u\|_{L^\infty}\leq\delta$ (which is implied by the stated Sobolev bound), though we do not need this (classical) result here.}  $\|u\|_{\Hb^{d_4}}\leq \delta/C_{\Omega,d_4}^S$, we have $F(u)\in\Hb^k(\Omega)$ and
    \[
      \|F(u)\|_{\Hb^k(\Omega)} \leq C_{k,\Omega,F}\|u\|_{\Hb^{k+d_4}(\Omega)}.
    \]
    Similarly, for $v\in\Hb^k(\cU)$ with $\|v\|_{\Hb^{d_3}}\leq \delta/C_{\cU,d_3}^S$ we have $F(v)\in\Hb^k(\cU)$, with norm bounded by a constant times $\|v\|_{\Hb^{k+d_3}(\cU)}$.
  \end{enumerate}
\end{lemma}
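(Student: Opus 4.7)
My plan is to convert the weighted b-Sobolev spaces on $\Omega$ and $\cU$ into classical Sobolev spaces via the logarithmic change of variables $(\rho,R)\mapsto(s,r):=(-\log\rho,-\log R)$. Under this change $\rho\pa_\rho=-\pa_s$ and $R\pa_R=-\pa_r$ become standard derivatives, the b-measure $\frac{\dd\rho}{\rho}\frac{\dd R}{R}\,\dd\slg$ becomes the Lebesgue-type $\dd s\,\dd r\,\dd\slg$, and the weights $R^\alpha\rho^\beta=e^{-\alpha r-\beta s}$ enter as smooth multipliers whose logarithmic derivatives along $\pa_s,\pa_r,\sV$ are constants---so the weighted estimates reduce to the unweighted ones by factoring the weight out. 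The images of $\Omega$ and $\cU$ then sit inside $[s_0,\infty)\times[r_0,\infty)\times\Sph^2$ and $[r_0,\infty)\times\Sph^2$ respectively (with $s_0=-\log\rho_0$, $r_0=-\log R_0$), and the faces $\cI^+,\cK$ are pushed to infinity in $s,r$. Combined with the extension operator from Lemma~\ref{Lemma0bExt}---which, together with standard Seeley-type reflections across the remaining internal spacelike hypersurfaces, extends functions past every boundary other than $\cI^+,\cK$---every estimate reduces to one on the full $\R^2\times\Sph^2$ (resp.\ $\R\times\Sph^2$), where classical Sobolev theory applies directly.

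Granted this reduction, item~\eqref{It0bSobEmbed} is the classical Sobolev embedding $H^{k+\lceil(n+1)/2\rceil}\hookrightarrow\cC^k$ on an $n$-dimensional manifold, applied with $n=4$ for $\Omega$ (giving $d_4=3$) and $n=3$ for $\cU$ (giving $d_3=2$). Item~\eqref{It0bSobTrace} is the standard trace theorem $H^k\to H^{k-1/2}\subset H^{k-1}$ on the coordinate hyperplane $s=-\log\rho_1$. The estimates~\eqref{Eq0bSobProd1} and~\eqref{Eq0bSobProd2} follow from the classical Moser/Kato--Ponce product inequality
\[
  \|u v\|_{H^k}\lesssim \|u\|_{L^\infty}\|v\|_{H^k}+\|u\|_{H^k}\|v\|_{L^\infty}
\]
combined with the Sobolev embedding to replace $L^\infty$ by $H^{d_n}$. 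Item~\eqref{It0bSobFn} is the standard Moser composition estimate: writing $F(u)=u\tilde F(u)$ with $\tilde F(z):=\int_0^1 F'(tz)\,\dd t$ smooth on $[-\delta,\delta]$, induction on $k$ via the chain rule and~\eqref{Eq0bSobProd1} yields $\|F(u)\|_{\Hb^k}\lesssim P(\|u\|_{L^\infty})\|u\|_{\Hb^k}$ for a polynomial $P$; the smallness hypothesis together with~\eqref{It0bSobEmbed} bounds $\|u\|_{L^\infty}$ by $\delta$, so $P(\|u\|_{L^\infty})\leq P(\delta)$ absorbs into the constant.

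The step requiring extra care is the mixed-domain product estimate~\eqref{Eq0bSobProd3}, where $v_1$ lives on the $3$-dimensional $\cU$ while $u_2$ lives on the $4$-dimensional $\Omega$. I would handle this by viewing $v_1$ as its $\rho$-independent extension $(\rho,R,\omega)\mapsto v_1(R,\omega)$ to $\Omega$; then $\rho\pa_\rho v_1=0$, so only the $R\pa_R,\sV$ derivatives of $v_1$ can contribute to $D_\bop^a v_1$. Fubini's theorem factors $\|\cdot\|_{L^2(\Omega)}^2$ into an iterated integral in $\rho$ of $L^2(\cU)$-norms, and applying~\eqref{Eq0bSobProd2} pointwise in $\rho$---together with a one-dimensional Sobolev embedding in $\rho$ that converts the $L^\infty_\rho$ supremum of $u_2$-derivatives into an $\Hb^{d_3}$-norm on $\Omega$---yields~\eqref{Eq0bSobProd3} after integration. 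Aside from this bookkeeping, the argument is routine given the classical Sobolev theory available in the transformed coordinates; there is no genuine analytic obstacle, only the translation between the b-geometry of $\breve M$ and standard Euclidean Sobolev calculus.
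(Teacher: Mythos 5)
Your overall strategy—the logarithmic change of variables $(\rho,R)\mapsto(-\log\rho,-\log R)$, the extension operator from Lemma~\ref{Lemma0bExt}, and then the reduction to classical Sobolev theory on $\R^2\times\Sph^2$ resp.\ $\R\times\Sph^2$—is exactly the paper's. Parts~\eqref{It0bSobEmbed} and~\eqref{It0bSobTrace} are handled correctly, and your sketch of~\eqref{It0bSobFn} via Moser composition is sound (and in fact gives a slightly stronger conclusion than stated).

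The genuine gap is in~\eqref{It0bSobProd}. You claim~\eqref{Eq0bSobProd1} follows from the Kato--Ponce inequality $\|u v\|_{H^k}\lesssim\|u\|_{L^\infty}\|v\|_{H^k}+\|u\|_{H^k}\|v\|_{L^\infty}$ ``combined with Sobolev embedding to replace $L^\infty$ by $H^{d_n}$.'' But that yields the \emph{symmetric} bound $\|u_1\|_{\Hb^{d_4}}\|u_2\|_{\Hb^k}+\|u_1\|_{\Hb^k}\|u_2\|_{\Hb^{d_4}}$, whereas the lemma asserts the \emph{tame} form $\|u_1\|_{\Hb^{d_4}}\|u_2\|_{\Hb^k}+\|u_1\|_{\Hb^{k+d_4}}\|u_2\|_{L^2}$, in which all the extra regularity is put on the first factor. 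These are not the same estimate, and the tame form is what is actually used downstream (in the Nash--Moser argument). To pass from the Kato--Ponce form to the tame one you need a further step: interpolate $\|u_1\|_{\Hb^k}\lesssim\|u_1\|_{\Hb^{d_4}}^{d_4/k}\|u_1\|_{\Hb^{k+d_4}}^{1-d_4/k}$ and $\|u_2\|_{\Hb^{d_4}}\lesssim\|u_2\|_{L^2}^{1-d_4/k}\|u_2\|_{\Hb^k}^{d_4/k}$ and apply Young's inequality. The paper sidesteps Kato--Ponce entirely and proves the tame form directly from the elementary interpolation inequality $\|\pa^p u\|_{L^2}\leq C\|u\|_{L^2}^{(q-p)/q}\|\pa^q u\|_{L^2}^{p/q}$ (their~\eqref{Eq0bSobL2}), deliberately keeping the proof self-contained; you should either reproduce that interpolation or explicitly add the Young's-inequality step after invoking Kato--Ponce. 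Relatedly, your Fubini-plus-pointwise-in-$\rho$ approach to~\eqref{Eq0bSobProd3} is more convoluted than necessary and the parenthetical about an ``$L^\infty_\rho$ supremum'' and a ``one-dimensional Sobolev embedding in $\rho$'' does not correspond to anything required: the clean route, as in the paper, is to Sobolev-embed $D_\bop^a v_1$ into $L^\infty(\cU)$ using the three-dimensional exponent $d_3$, keep $D_\bop^b u_2$ in $L^2(\Omega)$, and then interpolate both factors exactly as in~\eqref{Eq0bSobProd1}.
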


It is straightforward to prove stronger bounds using classical Moser estimates, see e.g.\ \cite[Chapter~13]{TaylorPDE3}; we opt for weaker statements which have simple self-contained proofs.

The estimates in this Lemma admit straightforward generalizations to weighted spaces, for example $\|u\|_{R^\alpha\rho^\beta\cC_\bop^k(\Omega)}\leq C_{\Omega,k,\alpha,\beta}\|u\|_{R^\alpha\rho^\beta\Hb^{k+d_4}(\Omega)}$ for all $\alpha,\beta\in\R$, which we do not spell out here (but use frequently in~\S\ref{SE} below). Regarding~\eqref{Eq0bSobTrace}, the weighted generalization reads $R^\alpha\rho^\beta\Hb^k(\Omega)\to R^\alpha\Hb^{k-1}(\Sigma)$ where $\alpha,\beta\in\R$; note that the $\rho$-weight is irrelevant since we restrict to $\rho=\rho_1>0$.

The main application of part~\eqref{It0bSobFn} is to control $\frac{1}{f+u}$ where $f$ is a smooth function bounded away from $0$ and $u\in\Hb^k(\Omega)$. (This arises when inverting matrices with Sobolev-regular coefficients.) Note that $\frac{1}{f+u}=\frac{1}{f}(1-F(u/f))$ where $F(x)=\frac{x}{1+x}$ is smooth on $[-\delta,\delta]$ for any $\delta<1$; if $\|u\|_{\Hb^{d_4}}\leq\frac{\delta}{C_{\cU,d_4}^S}\|f\|_{L^\infty}$, we thus obtain $\|\frac{1}{f+u}-\frac{1}{f}\|_{\Hb^k}\leq C_{f,k,\delta,\Omega}\|u\|_{\Hb^k}$.

\begin{proof}[Proof of Lemma~\usref{Lemma0bSob}]
  Defining $\tilde u_1:=\Xi u_1$ using the extension operator from Lemma~\ref{Lemma0bSob}, and defining $\tilde u_2,\tilde v_1,\tilde v_2$ similarly, it suffices to prove the claimed estimates with $\Omega,\cU$ replaced by $\Omega_{\rho_1,R_1},\cI^+_{R_1}$ (where $\rho_1>\rho_0$ and $R_1>R_0$ are arbitrary), and with all functions now vanishing near the boundary hypersurfaces of $\Omega$, resp.\ $\cU$ at $\rho=\rho_1$, $R(1-2\rho)=R_1(1-2\rho_1)$, resp.\ $R=R_1$. We now relabel $\rho_1$, $R_1$, $\tilde u_1$, etc.\ as $\rho_0$, $R_0$, $u_1$, etc.

  In the coordinates $t=-\log R$ and $r=-\log\rho$ then, we have $\rho\pa_\rho=-\pa_r$ and $R\pa_R=-\pa_t$. Therefore, the estimates in~\eqref{Eq0bSobEmbed} are simply instances of standard Sobolev embedding on $\R_t\times\R_r\times\Sph^2$, resp.\ $\R_t\times\Sph^2$. Similarly, part~\eqref{It0bSobTrace} is an instance of the standard trace theorem, in the form of the continuity of the restriction map $H^s(\R\times\R\times\Sph^2)\to H^{s-\frac12}(\R\times\{0\}\times\Sph^2)$, $s>\frac12$; we work only with integer orders here and thus a fortiori obtain~\eqref{Eq0bSobTrace}.

  We follow the proof of \cite[Lemma~3.33]{HintzMink4Gauge} for part~\eqref{It0bSobProd}. Introducing local coordinates $\omega\in\R^2$ on $\Sph^2$, we need to prove estimates for functions on $\R^4_{t,r,\omega}$, resp.\ $\R^3_{t,\omega}$. It suffices to consider compactly supported $u_i,v_i$. We write $x=(t,\omega)$, and we write $\pa\in\{\pa_t,\pa_{\omega^1},\pa_{\omega^2},\pa_r\}$ for any coordinate derivative. Then
  \[
    \|\pa u\|_{L^2}^2=\iint \pa(u\ol{\pa u})\,\dd x\,\dd r + \iint u\ol{\pa^2 u}\,\dd x\,\dd r = \iint u\ol{\pa^2 u}\,\dd x\,\dd r\leq\|u\|_{L^2}\|\pa^2 u\|_{L^2},
  \]
  which implies for any $0\leq p<q$ the estimate
  \begin{equation}
  \label{Eq0bSobL2}
    \|\pa^p u\|_{L^2} \leq C_{p q}\|u\|_{L^2}^{\frac{q-p}{q}}\|\pa^q u\|_{L^2}^{\frac{p}{q}}.
  \end{equation}
  Passing back to b-spaces in our notation, we now estimate
  \[
    \|(D_\bop^a u_1)(D_\bop^b u_2)\|_{L^2}\leq \|D_\bop^a u_1\|_{L^\infty}\|D_\bop^b u_2\|_{L^2}\leq C\|D_\bop^a u_1\|_{\Hb^{d_4}}\|D_\bop^b u_2\|_{L^2}
  \]
  (using Sobolev embedding), further $\|D_\bop^a u_1\|_{\Hb^{d_4}}\leq C\sum_{d=0}^{d_4}\|D_\bop^{a+d}u_1\|_{L^2}$, and finally, for $d=0,\ldots,d_4$ and recalling that $a+b=k$,
  \begin{align*}
    \|D_\bop^a(D_\bop^d u_1)\|_{L^2}\|D_\bop^b u_2\|_{L^2} &\lesssim \|D_\bop^d u_1\|_{L^2}^{\frac{b}{k}}\|D_\bop^{k+d}u_1\|_{L^2}^{\frac{a}{k}} \cdot \|u_2\|_{L^2}^{\frac{a}{k}}\|D_\bop^k u_2\|_{L^2}^{\frac{b}{k}} \\
      &\lesssim \bigl(\|u_1\|_{\Hb^{d_4}}\|u_2\|_{\Hb^k} + \|u_1\|_{\Hb^{k+d_4}}\|u_2\|_{L^2}\bigr),
  \end{align*}
  which is~\eqref{Eq0bSobProd1}. The proofs of~\eqref{Eq0bSobProd2}--\eqref{Eq0bSobProd3} are completely analogous, now using Sobolev embedding on $\R^3_x$.

  To prove part~\eqref{It0bSobFn}, write $F(x)=x F_1(x)$; then $F(u)=u F_1(u)$ with $F_1(u)$ bounded (since $\|u\|_{L^\infty}<\delta$), and therefore $\|F(u)\|_{L^2}\lesssim \|u\|_{L^2}$. Next, $D_\bop^k F(u)$ is a sum of terms of the form $\prod_{j=1}^N(D_\bop^{k_j}u)F^{(N)}(u)$ where $k_1,\ldots,k_N\geq 1$ with $\sum_{j=1}^N k_j=k$ (thus $N\leq k$). It thus remains to estimate, via Sobolev embedding on the first $N-1$ factors,
  \begin{align*}
    \|(D_\bop^{k_1}u)\cdots(D_\bop^{k_N}u)\|_{L^2} &\lesssim \sum_{d=0}^{d_4} \|D_\bop^{k_1}(D_\bop^d u)\|_{L^2}\cdots\|D_\bop^{k_{N-1}}(D_\bop^d u)\|_{L^2}\|D_\bop^{k_N}u\|_{L^2} \\
      &\lesssim \sum_{d=0}^{d_4}\|D_\bop^d u\|_{L^2}^{\frac{k-k_1}{k}}\|D_\bop^d u\|_{\Hb^k}^{\frac{k_1}{k}}\cdots \|D_\bop^d u\|_{L^2}^{\frac{k-k_{N-1}}{k}}\|D_\bop^d u\|_{\Hb^k}^{\frac{k_{N-1}}{k}} \|u\|_{L^2}^{\frac{k-k_N}{k}}\|u\|_{\Hb^k}^{\frac{k_N}{k}} \\
      &\lesssim \|u\|_{\Hb^{d_4}}^{\frac{k_N}{k}}\|u\|_{\Hb^{k+d_4}}^{\frac{k-k_N}{k}} \|u\|_{L^2}^{\frac{k-k_N}{k}}\|u\|_{\Hb^k}^{\frac{k_N}{k}},
  \end{align*}
  which is bounded by $C\|u\|_{\Hb^{k+d}}$, as claimed.
\end{proof}

For Nash--Moser purposes, we record:

\begin{lemma}[Smoothing operators]
\label{Lemma0bSmooth}
  Let $\alpha,\beta\in\R$. There exist continuous linear maps
  \[
    S_\theta\colon R^\alpha\rho^\beta L^2(\Omega_{\rho_0,R_0})\to R^\alpha\rho^\beta\Hb^\infty(\Omega_{\rho_0,R_0}),\qquad \theta>1,
  \]
  so that
  \begin{alignat*}{3}
    k&\leq k' &&\implies \|S_\theta u-u\|_{R^\alpha\rho^\beta\Hb^k}\;&\leq C_{k,k'}\theta^{k-k'}\|u\|_{R^\alpha\rho^\beta\Hb^{k'}}, \\
    k&\geq k' &&\implies \|S_\theta u\|_{R^\alpha\rho^\beta\Hb^k} &\leq C_{k,k'}\theta^{k-k'}\|u\|_{R^\alpha\rho^\beta\Hb^{k'}}.
  \end{alignat*}
  There exist continuous linear maps $R^\alpha L^2(\cI^+_{R_0})\to R^\alpha\Hb^\infty(\cI^+_{R_0})$ with the analogous properties.
\end{lemma}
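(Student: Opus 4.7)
The plan is to reduce to a classical Fourier-multiplier smoothing on Euclidean space after straightening the geometry via logarithmic coordinates. First, I would remove the weights: since $R\pa_R(R^{-\alpha}u)=R^{-\alpha}(R\pa_R-\alpha)u$ and analogously for $\rho\pa_\rho$, multiplication by $R^{-\alpha}\rho^{-\beta}$ is a continuous isomorphism $R^\alpha\rho^\beta\Hb^k(\Omega_{\rho_0,R_0})\to\Hb^k(\Omega_{\rho_0,R_0})$ for every $k$. If $S_\theta^0$ denotes a smoothing operator on the unweighted spaces, then $S_\theta:=R^\alpha\rho^\beta\,S_\theta^0\,R^{-\alpha}\rho^{-\beta}$ has all the required properties on the weighted spaces, so it suffices to handle the case $\alpha=\beta=0$.

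Next, pick $\rho_0<\rho_+<\rho_1$ and $R_0<R_+<R_1$ and apply the extension operator $\Xi$ of Lemma~\ref{Lemma0bExt}, which sends $u$ to $\Xi u$ supported in $\Omega_{\rho_+,R_+}\subset\Omega_{\rho_1,R_1}^\circ$ with uniform bounds on every $\Hb^k$-norm. In the logarithmic coordinates $(t,r,\omega):=(-\log R,-\log\rho,\omega)\in\R\times\R\times\Sph^2$ one has $R\pa_R=-\pa_t$, $\rho\pa_\rho=-\pa_r$, and $\frac{\dd R}{R}\frac{\dd\rho}{\rho}\,\dd\slg=\dd t\,\dd r\,\dd\slg$, so extending $\Xi u$ by zero outside its (compact-in-$(t,r)$) support identifies the $\Hb^k(\Omega_{\rho_1,R_1})$-norm with the standard $H^k$-norm on $\R^2\times\Sph^2$.

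Now apply classical smoothing. Fix $\chi\in\CIc(\R^3)$ with $\chi\equiv 1$ near the origin, let $\Delta_{\Sph^2}\geq 0$ be the Laplacian on $\Sph^2$, and set
\[
  S_\theta^0 v:=\chi\bigl(\theta^{-1}\diff_t,\ \theta^{-1}\diff_r,\ \theta^{-1}\sqrt{1+\Delta_{\Sph^2}}\bigr)v
\]
via the joint functional calculus of these three commuting self-adjoint operators. The elementary estimates $(1-\chi)(\xi)=O(|\xi|^{k'-k})$ near $0$ (for $k\leq k'$) and $|\xi|^{k-k'}\chi(\xi)=O(1)$ on $\R^3$ (for $k\geq k'$), combined with Plancherel in $(t,r)$ and the spherical harmonic decomposition on $\Sph^2$, yield both of the required smoothing bounds on $H^k(\R^2\times\Sph^2)$. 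Restriction to $\Omega_{\rho_0,R_0}$ only decreases norms, and $S_\theta^0 v\in H^\infty$ gives $S_\theta u\in\Hb^\infty(\Omega_{\rho_0,R_0})$. The construction on $\cI^+_{R_0}$ is identical with $(t,\omega)\in\R\times\Sph^2$ in place of $(t,r,\omega)$.

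No substantive analytic obstacle arises here; the whole argument is a transcription of the classical Littlewood--Paley smoothing. The only minor bookkeeping subtlety is combining Euclidean Fourier multipliers in $(t,r)$ with a spectral multiplier in $\Delta_{\Sph^2}$, but this is automatic from the joint spectral calculus of commuting self-adjoint operators on $L^2(\R^2\times\Sph^2)$.
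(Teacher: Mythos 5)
Your proof is correct, and while it follows the paper's overall template---strip weights, apply the extension operator of Lemma~\ref{Lemma0bExt}, pass to logarithmic coordinates $(t,r)=(-\log R,-\log\rho)$, smooth, and restrict back---it differs from the paper's construction at the smoothing stage. The paper works in local coordinates $\omega\in\R^2$ on $\Sph^2$ together with a partition of unity, defines $S_\theta^0$ by Euclidean convolution with $\theta^4\phi(\theta\cdot)$ against a fixed $\phi\in\CIc(\R^4)$ with $\hat\phi\equiv 1$ near $0$ (as in Saint-Raymond's appendix), and then has to insert a spatial cutoff $\chi(\omega)$ because the convolution does not preserve the compact $\omega$-support; the price is an extra argument that the cutoff error $(1-\chi(\omega))\int\theta^4\phi(\theta w)u(z-w)\,\dd w$ is $O(\theta^{-N})$ in every norm due to disjoint supports. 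You instead use the joint spectral calculus of the commuting self-adjoint operators $D_t$, $D_r$, $\sqrt{1+\Delta_{\Sph^2}}$ on $L^2(\R^2\times\Sph^2)$, which is global on the sphere and so sidesteps the local-coordinate patching and the spatial-cutoff bookkeeping entirely; since the rotation vector fields $\sV$ span $\cV(\Sph^2)$, the $\Hb^k$-norm is comparable to the joint $H^k$-norm used in your multiplier estimates, and the rest is the standard Littlewood--Paley computation. You also make the weight-stripping $R^{-\alpha}\rho^{-\beta}\colon R^\alpha\rho^\beta\Hb^k\to\Hb^k$ explicit, which the paper leaves implicit. The trade-off is that your construction invokes the spectral theorem rather than elementary convolution, but it is arguably cleaner and requires no supplementary decay argument.
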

\begin{proof}
  Given $u\in R^\alpha\rho^\beta L^2(\Omega_{\rho_0,R_0})$, set $\tilde u:=\Xi u$ where $\Xi$ is an extension operator from Lemma~\ref{Lemma0bExt}. Pass to $t=-\log R$ and $r=-\log\rho$, and work in local coordinates $\omega\in\R^2$, $|\omega|<3$ on $\Sph^2$; using a partition of unity, we only consider the case that $\tilde u$ is supported in the subset $\{|\omega|<1\}$ of $\R^4_z$ where $z=(t,r,\omega)$. Fix $\phi\in\CIc(\R^4)$ so that its Fourier transform $\hat\phi(\zeta)$ is equal to $1$ near $\zeta=0$. Let $\chi\in\CIc(\R^2_\omega)$ be equal to $1$ for $|\omega|<2$ and supported in $\{|\omega|<3\}$. We then define
  \[
    (S_\theta u)(z):=\chi(\omega)\int_{\R^4} \theta^4\phi(\theta w)u(z-w)\,\dd w,\qquad z=(t,r,\omega).
  \]
  Without the cutoff $\chi$, this is the same construction as in \cite[Appendix]{SaintRaymondNashMoser}. With the cutoff $\chi$ present, only the estimate on $S_\theta u-u$ requires a bit of care; but the point is simply that $(1-\chi(\omega))\int_{\R^4}\theta^4\phi(\theta w)u(z-w)\,\dd w$ is bounded in every Sobolev space by $\theta^{-N}$ for all $N$ since the supports of $1-\chi$ and $u$ are disjoint. (See \cite[Lemma~6.12]{HintzGlueLocII} for details in a more precise construction.)
\end{proof}

\section{Initial value problems for the gauge-fixed Einstein equations}
\label{SE}

We study perturbations of the KdS metric $g_\sfb$ from~\eqref{Eq0bKdS} on the domain $\Omega_{\rho_0,R_0}$ from Definition~\ref{Def0bDom}, with initial data posed at $\Sigma_{\rho_0,R_0}$. We fix cutoffs
\begin{equation}
\label{EqECutoffs}
  \chi=\chi(\rho)\in\CIc([0,\tfrac12\rho_0)),\ \ \chi|_{[0,\frac14\rho_0]}=1,\qquad
  \tilde\chi=\tilde\chi(\rho)\in\CIc([0,\rho_0)),\ \ \chi|_{[0,\frac12\rho_0]}=1,
\end{equation}
and regard them as functions on $\breve M$ (which thus equal $1$ near $\cI^+$). We shall work in the generalized harmonic gauge $\Ups(g;g_0)+E_{g_0}(g-g_0)=0$ for a suitable (dynamically chosen) `background metric' $g_0$ where
\begin{equation}
\label{EqEGauge}
\begin{split}
  \Ups(g;g_0) &:= g(g_0)^{-1}\delta_g\sfG_g g_0\quad \bigl(\text{in coordinates: $\Ups(g;g_0)_\mu=g_{\mu\nu}g^{\kappa\lambda}(\Gamma(g)_{\kappa\lambda}^\nu - \Gamma(g_0)_{\kappa\lambda}^\nu)$}\bigr), \\
  E_{g_0} h &:= \chi e^0(-2\tr_{g_0} h - h(e_0,e_0)),\qquad e_0:=\tau\pa_\tau,\ \ e^0:=\frac{\dd\tau}{\tau}.
\end{split}
\end{equation}
For sections $h_0$ of $\upbeta^*(\tau^{-2}S^2 T^*X)$ over $\cI^+_{R_0}$ (which will capture the leading order change to $g_\sfb$ at $\cI^+$) and $\tilde h$ of $\upbeta^*(S^2\,{}^0 T^*M)$ over $\Omega_{\rho_0,R_0}$ (which will capture further decaying corrections to $g_\sfb$), we then define the gauge-fixed Einstein operator
\begin{equation}
\label{EqEEinOp}
\begin{split}
  P(h_0,\tilde h,\theta) &:= 2\Bigl(\Ric(g_\sfb+\chi h_0+\tilde h) - \Lambda(g_\sfb+\chi h_0+\tilde h) \\
    &\quad\hspace{2em} - \tilde\delta_{g_\sfb+\chi h_0+\tilde h}^*\bigl(\Ups(g_\sfb+\chi h_0+\tilde h;g_\sfb+\chi h_0) + E_{g_\sfb+\chi h_0}\tilde h - \tilde\chi\theta\bigr)\Bigr),
\end{split}
\end{equation}
where (for $g=g_\sfb+\chi h_0+\tilde h$) we set
\begin{equation}
\label{EqECD}
  \tilde\delta_g^* = \delta_g^* + \tilde E,\qquad \tilde E\omega:=\chi\bigl(2\omega(e_0)e^0\otimes e^0-4 e^0\otimes_s\omega\bigr).
\end{equation}
The structure of this operator was already motivated in~\S\ref{SssIEdS}. The specific choices for the gauge modification $E_{g_0}$ and the modified symmetric gradient $\tilde\delta_g^*$ will be explained in~\S\ref{SssE1Aux} below. The choice $g_\sfb+\chi h_0$ of background metric ensures that, for $\tilde h=g-g_0$ which decay towards $\cI^+$ (together with their derivatives along 0-vector fields), the gauge condition is always satisfied to leading order at $\cI^+$. The decision to use the modified symmetric gradient with respect to $g_\sfb+\chi h_0+\tilde h$ is due to the fact that then its indicial operator involves the induced boundary metric $\dd x^2+h_{(0)}$ (where $h_{(0)}=\tau^2 h_0$), which is geometrically more natural than just $\dd x^2$ (cf.\ \eqref{EqE1Structdelstar} below) and thus ultimately makes the computation of the indicial roots $L_{h_{(0)},\tilde h}$ in Lemma~\ref{LemmaEIndRoot} more transparent.

The main goal of this section, achieved in~\S\ref{SsEPf}, is the global solution of small data initial value problems for $P$ in the following sense.

\begin{thm}[Solution of the gauge-fixed Einstein equations in the cosmological region]
\label{ThmESol}
  Let $\alpha>0$, $d\in\N$, $\delta_0>0$. Then there exist $D\in\N_0$ and $\eps>0$ so that the following holds. Let
  \[
    \ubar h_0,\ubar h_1\in R^\alpha\Hb^\infty(\Sigma_{\rho_0,R_0};\upbeta^*(S^2\,{}^0 T^*M)),
  \]
  and suppose that $\|h_{\Sigma,j}\|_{R^\alpha\Hb^D}<\eps$, $j=0,1$. Let $\beta\in(0,1)$. Then there exist
  \begin{equation}
  \label{EqESolh}
  \begin{split}
    h_0&\in R^\alpha\Hb^\infty\bigl(\cI^+_{R_0};\upbeta^*(\tau^{-2}S^2 T^*X)\bigr), \\
    \tilde h&\in R^\alpha\rho^\beta\Hb^\infty\bigl(\Omega_{\rho_0,R_0};\upbeta^*(S^2\,{}^0 T^*M)\bigr), \\
    \theta&\in R^\alpha\rho^\beta\Hb^\infty\bigl(\Omega_{\rho_0,R_0};\upbeta^*({}^0 T^*M)\bigr),
  \end{split}
  \end{equation}
  with weighted $\Hb^d$-norms less than $\delta_0$, so that $P(h_0,\tilde h,\theta)=0$, and so that $\tilde h$ (and thus also $\chi h_0+\tilde h$) satisfies the initial conditions $\tilde h|_{\Sigma_{\rho_0,R_0}}=\ubar h_0$, $(\cL_{-\rho\pa_\rho}\tilde h)|_{\Sigma_{\rho_0,R_0}}=\ubar h_1$.
\end{thm}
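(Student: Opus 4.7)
The plan is to prove Theorem~\ref{ThmESol} by a tame Nash--Moser iteration for the unknown $(h_0,\tilde h,\theta)$ in the weighted b-Sobolev spaces appearing in~\eqref{EqESolh}. The initial approximation is $(0,\tilde h_{(0)},0)$, where $\tilde h_{(0)}$ is an extension into $\Omega_{\rho_0,R_0}$ of Cauchy data built from $(\ubar h_0,\ubar h_1)$ using the freedom in lapse and shift so that the constraint equations and the gauge condition $\Ups(g;g_\sfb)+E_{g_\sfb}\tilde h=0$ are simultaneously satisfied on $\Sigma_{\rho_0,R_0}$. As usual, the second Bianchi identity together with the constraint damping choice of $\tilde\delta^*$ in~\eqref{EqECD} then forces the gauge 1-form to satisfy a decoupled wave equation with vanishing Cauchy data, so any solution of $P(h_0,\tilde h,\theta)=0$ is actually a solution of the Einstein vacuum equations in the gauge $\Ups_E(g;g_\sfb+\chi h_0)=\tilde\chi\theta$.

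At iteration step $n$, I would linearize $P$ around $(h_0^{(n)},\tilde h^{(n)},\theta^{(n)})$. The linearization is a principally scalar tensorial $0$-wave operator $L$ in $\tilde h$, coupled to the boundary-supported variable $h_0$ and to the subleading unknown $\theta$. Linear solvability proceeds in three steps modelled on~\S\ref{SssIExSdS}. First, a basic energy estimate in the weighted space $\rho^{-N}R^\alpha H_{0,\bop}^1(\Omega_{\rho_0,R_0})$, for $N=N(\alpha)$ sufficiently large, follows by pairing the equation with the future timelike multiplier $-R^{-2\alpha}\rho^{2 N}\rho\pa_\rho\tilde h$ and using the uniform timelike character of $\tfrac{\dd\rho}{\rho}$ from~\eqref{Eq0bStructTime}. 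Second, commuting with the b-vector fields $\rho\pa_\rho,R\pa_R,\sV$ produces, by the ideal property of Lemma~\ref{Lemma0bIdeal}, a schematically lower triangular system of the form~\eqref{EqIExTriangle} to which the same basic estimate applies \emph{with the same weight $\rho^{-N}$}, yielding
\[
  \|\tilde h\|_{\rho^{-N}R^\alpha H_{0,\bop;\bop}^{1;k}} \lesssim \text{data} + \text{source},\qquad k\in\N_0.
\]
Third, the indicial family $I(L,\lambda)$ has all of its roots located in $\{0\}\cup\{\Re\lambda>0\}$ thanks to the gauge modification $E_{g_0}$ in~\eqref{EqEGauge} and the constraint damping~\eqref{EqECD} (Lemma~\ref{LemmaEIndRoot}); integrating the indicial ODE in $\rho$ then extracts an asymptotic expansion $\tilde h=h_0^{\rm upd}+\rho^\beta\tilde h^{\rm upd}$ with $h_0^{\rm upd}\in R^\alpha\Hb^{k-J}(\cI^+_{R_0})$ and $\tilde h^{\rm upd}\in R^\alpha\rho^\beta\Hb^{k-J}(\Omega_{\rho_0,R_0})$, for a fixed $J$ and any $\beta\in(0,1)$ not coinciding with an indicial root.

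I would then set $h_0^{(n+1)}=h_0^{(n)}+h_0^{\rm upd}$ and $\tilde h^{(n+1)}=\tilde h^{\rm upd}$, while $\theta^{(n+1)}$ is obtained by a Borel-type construction at $\cI^+$ (in the spirit of Lemma~\ref{LemmaPfGaugeMod}) so that the gauge condition continues to be satisfied to leading order at $\cI^+$ after the background has been updated to $g_\sfb+\chi h_0^{(n+1)}$. All of these maps are tame, losing only a finite and $n$-independent number of derivatives; combined with the algebra and composition estimates of Lemma~\ref{Lemma0bSob}\eqref{It0bSobFn} and the smoothing operators of Lemma~\ref{Lemma0bSmooth}, a standard Nash--Moser iteration of Saint-Raymond/H\"ormander type converges to the desired $(h_0,\tilde h,\theta)$ for $\eps$ small in $R^\alpha\Hb^D$ with $D=D(\alpha,d)$ sufficiently large. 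The main obstacle is proving the basic energy estimate, and especially its higher-regularity upgrade, with the same weight $\rho^{-N}$ for the tensorial linearization $L$ whose coefficients have only Sobolev regularity along $\cI^+$ (due to the presence of $h_0^{(n)}$ in the background); this hinges on the lower triangular structure~\eqref{EqIExTriangle} and on the tame product estimate~\eqref{Eq0bSobProd1}, which together allow the $R^\alpha$ weight to be preserved through the commutation argument.
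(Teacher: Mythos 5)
Your overall strategy -- a tame Nash--Moser iteration, with the linearization controlled by (i) a basic weighted energy estimate via the multiplier $-R^{-2\alpha}\rho^{2N}\rho\pa_\rho$, (ii) higher b-regularity via commutation and the lower triangular structure coming from Lemma~\ref{Lemma0bIdeal}, and (iii) a decomposition $v=\chi v_0+\tilde v$ via indicial-root analysis -- does match the paper's route through Propositions~\ref{PropE2Reg} and~\ref{PropE2Decay}. You also correctly identify that the boundary piece $v_0$ of the linear solution must be absorbed into the update of $h_0$ rather than $\tilde h$, and that $\theta$ must be adjusted accordingly.

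There is, however, a genuine gap in how you propose to update $\theta$. You describe $\theta^{(n+1)}$ as arising from a ``Borel-type construction at $\cI^+$ (in the spirit of Lemma~\ref{LemmaPfGaugeMod})''. But Lemma~\ref{LemmaPfGaugeMod} is not a Borel argument and cannot be replaced by one in this iteration: it gives an \emph{explicit closed-form expression}
\[
  \dot\theta=-D_2|_{g_0}\Ups(g;\chi v_0)+E_{g_0}(\chi v_0)-(D_{g_0}E_\cdot)(\chi v_0)\tilde h,
\]
obtained by expanding $D_1|_{h_0}P(v_0,\tilde h,\theta)$ and recognizing that the extra terms relative to $D_2|_{\tilde h}P(h_0,\cdot,\theta)$ can be recast as a $\theta$-variation (equations~\eqref{EqPfLin123}--\eqref{EqPfLin123theta}). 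The critical point -- without which the iteration does not remain in the stated function space for $\theta$ -- is that this $\dot\theta$ actually lies in $R^\alpha\rho^\beta\Hb^\infty$ even though its formula involves the non-decaying tensor $v_0$. The proof of that decay (Lemma~\ref{LemmaPfGaugeMod} and Remark~\ref{RmkPfGaugeModWhy}) rests on two structural facts, neither of which you invoke: first, the solution metric and the background metric are changed in lockstep, so the first term involves $\tilde h$ and hence decays; second, constraint damping forces the indicial solution $v_0$ (at the root $\lambda=0$) to satisfy the linearized gauge condition on the indicial level, i.e.\ $I(D_1|_g\Ups(\cdot;g_0)+E_{g_0},0)v_0=0$, which kills the non-decaying part of the remaining terms. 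A Borel-type summation over powers of $\rho$ cannot substitute for this: it would only produce a $\theta$-adjustment that vanishes to infinite order at $\cI^+$ (as in Proposition~\ref{PropStGauge}, which is a post hoc gauge improvement for an already-constructed Einstein metric), whereas here the needed $\dot\theta$ is merely $\cO(\rho^\beta)$ and its decay must be \emph{derived}, not imposed.

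As a minor point, your opening paragraph imports material that belongs to the proof of Theorem~\ref{ThmEStabEin}, not Theorem~\ref{ThmESol}: the construction of constraint-compatible Cauchy data and the Bianchi-identity argument that $P=0$ implies the Ricci equation. Theorem~\ref{ThmESol} is a pure existence statement for the gauge-fixed system with prescribed Cauchy data $\ubar h_0,\ubar h_1$ (not assumed to satisfy any constraint), and its proof does not use those steps.
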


The nonlinear stability of the expanding region $\Omega_{\rho_0,R_0}$ of KdS is a simple consequence. This is proved in~\S\ref{SSt}, together with sharper asymptotics for the metric in this case.

\begin{rmk}[Domain of definition of $h_0$]
\label{RmkEStabDomh0}
  Since $R\leq R_0$ on $\Omega_{\rho_0,R_0}$, the tensor $g_\sfb+\chi h_0+\tilde h$ is well-defined on $\Omega_{\rho_0,R_0}$. Note that $\Omega_{\rho_0,R_0}\cap\cI^+=\cI^+_{(1-2\rho_0)R_0}\subsetneq\cI^+_{R_0}$; thus, $h_0$ is defined on a larger set than the set where asymptotic data for $h$ should live or be relevant. In the proof, $h_0$ will simply arise via an extension operator from Lemma~\ref{Lemma0bExt} applied to a tensor on $\cI^+_{(1-2\rho_0)R_0}$.
\end{rmk}

\begin{rmk}[Origin of initial data]
\label{RmkEOrigInitial}
  The nonlinear stability result for slowly rotating KdS black holes proved in \cite{HintzVasyKdSStability,FangKdS} produces a solution of the (gauge-fixed) Einstein vacuum equations in a neighborhood $\{r_--\delta_-\leq r\leq r_++\delta_+\}$, $\delta_\pm>0$, of the domain of outer communications of the black hole; here $r_-$ and $r_+$ are the radii of the event and cosmological horizon, respectively. In fact, $\delta_+>0$ can be taken to be arbitrarily large (but fixed), with the smallness requirement on the initial data depending on $\delta_+$. The set $\{\rho=\rho_0,\ R<R_0\}$ is then contained in $r^{-1}((r_+,r_++\delta_+])$, and the decay assumptions on the initial data correspond exactly to exponential decay of the coefficients of $h_{\rm in}$ in the frame $\pa_t,\pa_r,\pa_\omega$ in $\tilde t=t\sim-\log R$ (cf.\ \eqref{Eq0bKdStilde}) together with all coordinate derivatives. --- More generally, without assuming that $g_\sfb$ is slowly rotating, if initial data for~\eqref{EqIEin} of class $R^\alpha\Hb^\infty$ are posed at $\rho=\rho_0$, then these data can be evolved in a standard generalized harmonic gauge (i.e.\ $\Ups(g;g_\sfb)=0$ in the notation of~\eqref{EqEGauge} below) up to any fixed hypersurface $\rho=\rho_1$, provided the data are sufficiently small (depending on $\rho_0,\rho_1$) in the $R^\alpha\Hb^N$-norm for some fixed $N$. This holds more generally for the solution of the corresponding gauge-fixed Einstein equations, without the need to require the validity of the constraint equations at $\rho=\rho_0$. This follows easily from energy estimates with multipliers $R^{-2\alpha}e^{-C r}\pa_r$ (or $e^{2\alpha t}e^{-C r}\pa_r$) for sufficiently large $C>1$, due to the timelike nature of $\dd r$ in the cosmological region; we leave the details to the reader. In a similar vein, one can evolve $R^\alpha\Hb^\infty$-perturbations of KdS data posed at $r=r_0$, with $r_0$ larger than the largest root of $\mu(r)$ in~\eqref{Eq0bKdS} up to $r=r_1$ for any fixed finite $r_1>r_0$, and thus again cover the set $\{\rho=\rho_0,\ R<R_0\}$.
\end{rmk}

\begin{rmk}[Gauge choices here and in~\cite{HintzVasyKdSStability}]
\label{RmkEKdS}
  Theorem~\ref{ThmESol} will yield the nonlinear stability of the expanding region in the gauge
  \begin{equation}
  \label{EqEKdSGaugeHere}
    \Ups(g;g_0)+E_{g_0}(g-g_0)-\tilde\chi\theta=0.
  \end{equation}
  Near the Cauchy hypersurface $\Sigma_{\rho_0,R_0}$, this reduces to $\Ups(g;g_\sfb)=0$. By contrast, the gauge in which the asymptotically KdS metric $g$ is found in \cite{HintzVasyKdSStability} is
  \begin{equation}
  \label{EqEKdSGaugeThere}
    \Ups(g;g_{\sfb_0})-\Ups(g_\sfb;g_{\sfb_0})-\theta'=0,
  \end{equation}
  where $\sfb=(\bhm,\bha)$ denotes the parameters of the final KdS black hole, $\sfb_0:=(\bhm,0)$, and $\theta'\in\CIc(M^\circ;T^*M^\circ)$. To make the output of \cite{HintzVasyKdSStability} directly compatible with the input of Theorem~\ref{ThmESol} (and Theorem~\ref{ThmEStabEin} below), one can simply modify the present gauge condition~\eqref{EqEKdSGaugeHere} so that near $\Sigma_{\rho_0,R_0}$---and thus away from $\rho=0$ which is the main focus of the current work---it becomes~\eqref{EqEKdSGaugeThere}. A concrete such choice is
  \begin{equation}
  \label{EqEKdSMix}
    \Ups(g;\chi g_0+(1-\chi)g_{\sfb_0}) - (1-\chi)\Ups(g_\sfb;g_{\sfb_0}) - (1-\chi)\theta' + E_{g_0}(g-g_0) - \tilde\chi\theta = 0.
  \end{equation}
  The unperturbed KdS case corresponds to $h_0=\tilde h=0$ and $\theta'=0$, in which case the gauge 1-form (i.e.\ the left hand side of~\eqref{EqEKdSMix}, with $\theta=0$) equals $\theta_0:=\Ups(g_\sfb;\chi g_\sfb+(1-\chi)g_{\sfb_0})-(1-\chi)\Ups(g_\sfb;g_{\sfb_0})$ (which satisfies $\theta_0=\tilde\chi\theta_0$). In the proof of Theorem~\ref{ThmESol} with general initial data, one thus constructs $\theta$ in the form $\theta_0+\theta^\flat$, with $\theta^\flat\in R^\alpha\rho^\beta\Hb^\infty$ determined by the iteration scheme.
\end{rmk}

\textit{For the remainder of this section, we fix $\alpha>0$ and $\beta\in(0,1)$.}

\subsection{Linearized gauge-fixed Einstein operator I: structure and indicial family}
\label{SsE1}

Assuming that $g:=g_\sfb+\chi h_0+\tilde h$ and $g_0:=g_\sfb+\chi h_0$ are Lorentzian metrics on $\Omega_{\rho_0,R_0}\cap\breve M^\circ$, we can use \cite[\S{2}]{GrahamLeeConformalEinstein} to compute the linearization of $P(h_0,\cdot)$ at $\tilde h$ to be
\begin{align}
\label{EqE1LExpr}
\begin{split}
  &L_{h_0,\tilde h} := D_2 P|_{\tilde h}(h_0,\cdot) = \Box_g - 2\Lambda + 2\tilde E\delta_g\sfG_g + 2\sR_g + 2\tilde\delta_g^*\circ(\sE_{g;g_0} - E_{g_0}), \\
  &\hspace{13em} + (D_g\tilde\delta^*_\cdot)(\Ups(g;g_0)+E_{g_0}\tilde h)
\end{split} \\
  &\qquad (\sR_g u)_{\mu\nu} = \Riem(g)_{\kappa\mu\nu\lambda}u^{\kappa\lambda} + \frac12\bigl(\Ric(g)_{\mu\lambda}u_\nu{}^\lambda + \Ric(g)_{\nu\lambda}u_\mu{}^\lambda\bigr), \nonumber\\
  &\qquad (\sE_{g;g_0} u)_\mu = \bigl(\Gamma(g)_{\kappa\nu}^\lambda - \Gamma(g_0)_{\kappa\nu}^\lambda\bigr)\bigl(g_{\mu\lambda}u^{\kappa\nu} - u_{\mu\lambda}g^{\kappa\nu}\bigr), \nonumber
\end{align}
where we raise and lower indices using $g$; moreover, $(D_g\tilde\delta^*_\cdot)\eta=(D_g\delta^*_\cdot)\eta$, for a fixed 1-form $\eta$, maps a symmetric 2-tensor $h$ to $\frac{\dd}{\dd s}(\delta^*_{g+s h}\eta)|_{s=0}$. We recall $\Lambda=3$ and record
\begin{equation}
\label{EqE1D1Ups}
  D_1\Ups|_g(\cdot;g_0) = -\delta_g\sfG_g - \sE_{g;g_0}.
\end{equation}

\begin{prop}[Structure of $L_{h_0,\tilde h}$; indicial operator]
\label{PropE1Struct}
  Let $k\geq 2$ and
  \[
    h_0\in R^\alpha\Hb^k(\cI^+_{R_0};\upbeta^*(S^2 T^*X)),\qquad \tilde h\in R^\alpha\rho^\beta\Hb^k(\Omega_{\rho_0,R_0};\upbeta^*(S^2\,{}^0 T^*M))
  \]
  Suppose that $\|h_0\|_{R^\alpha\Hb^{d_3+2}}<\delta_0$ and $\|\tilde h\|_{R^\alpha\rho^\beta\Hb^{d_4+2}}<\delta_0$ for some small $\delta_0>0$ (independently of $k$). Set $g:=g_\sfb+\chi h_0+\tilde h$ and $g_0:=g_\sfb+\chi h_0$.
  \begin{enumerate}
  \item{\rm (Structure.)} As differential operators on $\Omega_{\rho_0,R_0}$ acting on sections of $\upbeta^*(S^2\,{}^0 T^*M)$, we can write
    \begin{equation}
    \label{EqE1StructL}
    \begin{split}
      &L_{h_0,\tilde h} = L_{0,0} + L_{(0),h_0} + \tilde L_{h_0,\tilde h}, \\
      &\qquad L_{0,0}\in\Diff_{0,\bop}^2, \quad
      L_{(0),h_0} \in R^\alpha\Hb^{k-2}(\cI^+_{R_0})\Diff_{0,\bop}^2, \quad
      \tilde L_{h_0,\tilde h} \in R^\alpha\rho^\beta\Hb^{k-2}(\Omega_{\rho_0,R_0})\Diff_{0,\bop}^2,
    \end{split}
    \end{equation}
    where $L_{(0),h_0}$ and $\tilde L_{h_0,\tilde h}$ satisfy the tame estimates
    \begin{equation}
    \label{EqE1StructTame}
    \begin{split}
      \|L_{(0),h_0}\|_{R^\alpha\Hb^{k-2}\Diff_{0,\bop}^2} &\leq C_k\|h_0\|_{R^\alpha\Hb^{k+d_3}}, \\
      \|\tilde L_{h_0,\tilde h}\|_{R^\alpha\rho^\beta\Hb^{k-2}\Diff_{0,\bop}^2} &\leq C_k\bigl( \|h_0\|_{R^\alpha\Hb^{k+d_3}} + \|\tilde h\|_{R^\alpha\rho^\beta\Hb^{k+d_4}}\bigr)
    \end{split}
    \end{equation}
    in the notation of Lemma~\usref{Lemma0bSob} and equation~\eqref{Eq0bOpNorm}.
  \item\label{ItE1StructInd}{\rm (Indicial operator.)} Write
    \begin{equation}
    \label{EqE1Structg0h0}
      g_{(0)}=g_{(0)}(x,\dd x):=\dd x^2+h_{(0)},\qquad h_{(0)}:=\tau^2 h_0=h_0(\tau\pa_{x^i},\tau\pa_{x^j})\,\dd x^i\,\dd x^j,
    \end{equation}
    for the (rescaled) $\upbeta^*(\tau^{-2}S^2 T^*X)$ part of $g|_{\cI^+_{R_0}}$ in the splitting~\eqref{Eq0bST0Split}. Then the indicial operator of $L_{h_0,\tilde h}$ is given by
    \begin{subequations}
    \begin{equation}
    \label{EqE1StructIndOp}
      I_{g_{(0)}}(\rho\pa_\rho) := (\rho\pa_\rho)^2 - 3\rho\pa_\rho + 2\cdot\begin{pmatrix} -2\rho\pa_\rho+5 & 0 & (\rho\pa_\rho-2)\tr_{g_{(0)}} \\ 0 & -2\rho\pa_\rho+6 & 0 \\ -g_{(0)} & 0 & g_{(0)}\tr_{g_{(0)}} \end{pmatrix},
    \end{equation}
    in the sense that we can write
    \begin{equation}
    \label{EqE1StructInd}
    \begin{split}
      &L_{h_0,\tilde h} - I_{g_{(0)}}(\rho\pa_\rho) = R_0 + \tilde R_{h_0,\tilde h}, \\
      &\qquad R_0\in\rho\Diffb^2,\qquad \tilde R_{h_0,\tilde h}\in R^\alpha\rho^\beta\Hb^{k-2}\Diffb^2,
    \end{split}
    \end{equation}
    and so that
    \begin{equation}
    \label{EqE1StructIndTame}
      \|\tilde R_{h_0,\tilde h}\|_{R^\alpha\rho^\beta\Hb^{k-2}\Diffb^2} \leq C_k\bigl(\|h_0\|_{R^\alpha\Hb^{k+d_3}}+\|\tilde h\|_{R^\alpha\rho^\beta\Hb^{k+d_4}}\bigr).
    \end{equation}
    \end{subequations}
  \end{enumerate}
\end{prop}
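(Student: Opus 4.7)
The plan is to decompose each constituent operator appearing in the explicit formula~\eqref{EqE1LExpr} according to its dependence on the perturbation $(h_0,\tilde h)$, using two structural inputs: Lemma~\ref{Lemma0bStruct} (which gives $g_\sfb-g_\dS\in\rho^3\CI(\breve M;S^2\,{}^{0,\bop}T^*\breve M)$), and the algebra/Moser estimates of Lemma~\ref{Lemma0bSob}. Each of the pieces $\Box_g$, $\sR_g$, $\sE_{g;g_0}$, $\tilde\delta_g^*$, and the Fr\'echet derivative $(D_g\tilde\delta^*_\cdot)(\,\cdot\,)$ is a 0-b-differential operator of order $\leq 2$ whose coefficients are polynomial in $g$, $g^{-1}$, $g_0^{-1}$ and at most one 0-b-derivative of $g$ or $g_0$. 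Inverting $g=g_\sfb+(\chi h_0+\tilde h)$ is a smooth function of the small perturbation, hence controlled by Lemma~\ref{Lemma0bSob}\eqref{It0bSobFn}; this is why the smallness hypotheses $\|h_0\|_{R^\alpha\Hb^{d_3+2}}<\delta_0$, $\|\tilde h\|_{R^\alpha\rho^\beta\Hb^{d_4+2}}<\delta_0$ are needed.

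For the structural claim, I would Taylor-expand every coefficient in the perturbation $g-g_\sfb=\chi h_0+\tilde h$. The piece obtained by setting $h_0=\tilde h=0$ lies in $\Diff_{0,\bop}^2$ by Lemma~\ref{Lemma0bStruct} and defines $L_{0,0}$. Since $h_0$ depends only on $(R,\omega)$, all 0-b-derivatives acting on it are linear combinations of $R\pa_R$ and $V_a$ with \emph{$\rho$-valued prefactors} (as $\rho R\pa_R,\rho V_a\in\cV_{0,\bop}$); the resulting contributions therefore lie in $\rho R^\alpha\Hb^{k-1}\Diff_{0,\bop}^1\subset R^\alpha\rho^\beta\Hb^{k-2}\Diff_{0,\bop}^2$ and are absorbed into $\tilde L_{h_0,\tilde h}$. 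The remaining $h_0$-dependent contributions (arising purely algebraically, without differentiating $h_0$) live in $R^\alpha\Hb^{k-2}\Diff_{0,\bop}^2$ and give $L_{(0),h_0}$, while all terms containing at least one factor of $\tilde h$, a derivative of $\tilde h$, or a product of two or more perturbation factors go into $\tilde L_{h_0,\tilde h}$. The tame estimates~\eqref{EqE1StructTame} then follow directly from Lemma~\ref{Lemma0bSob}\eqref{It0bSobProd}--\eqref{It0bSobFn}, with the $+d_3,+d_4$ losses reflecting the Sobolev embeddings needed to bound the lowest-order factors in $L^\infty$.

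For the indicial operator, I reduce to $\rho=0$ and drop every term containing a factor $\rho R\pa_R$ or $\rho V_a$; these lie in $\rho\Vb$ and so contribute to $R_0+\tilde R_{h_0,\tilde h}$. Since $g_\sfb|_{\cI^+}=g_\dS|_{\cI^+}$ and $\tilde h|_{\cI^+}=0$ (because $\tilde h\in R^\alpha\rho^\beta\Hb^k$ with $\beta>0$), the remaining coefficients are determined by the metric $g_\dS+h_0$, which in the splitting~\eqref{Eq0bST0Split} is characterized by the boundary metric $g_{(0)}=\dd x^2+h_{(0)}$ on the tangential-tangential component. I then compute four contributions separately in this splitting: (i) $I(\Box_g-2\Lambda)$ produces the diagonal scalar part $(\rho\pa_\rho)^2-3\rho\pa_\rho$ from the conformal 0-structure of $g_\dS$, plus an algebraic curvature correction; (ii) $I(2\sR_g)$ is algebraic and couples the scalar and tangential-tangential blocks through $g_{(0)}$ and $\tr_{g_{(0)}}$; (iii) $I(2\tilde E\delta_g\sfG_g)$ and (iv) $I(2\tilde\delta_g^*(\sE_{g;g_0}-E_{g_0}))$ contribute only algebraic terms in the frame, again expressible through $g_{(0)}$ and $\tr_{g_{(0)}}$. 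The final summand $(D_g\tilde\delta^*_\cdot)(\Ups(g;g_0)+E_{g_0}\tilde h)$ has no indicial contribution since $\Ups(g;g_0)$ vanishes at $\cI^+$ (as $g-g_0=\tilde h\to 0$), and $E_{g_0}\tilde h|_{\cI^+}=0$ for the same reason. Summing the four contributions yields the matrix~\eqref{EqE1StructIndOp}; the tame bound~\eqref{EqE1StructIndTame} is a byproduct of the same tame estimates used in the structural step.

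The main obstacle I expect is bookkeeping: verifying that the specific algebraic entries $-2\rho\pa_\rho+5$, $-2\rho\pa_\rho+6$, $(\rho\pa_\rho-2)\tr_{g_{(0)}}$, and $-g_{(0)}$ emerge in exactly these forms requires careful tracking of the interaction between (a) the gauge modification $E_{g_0}$ (chosen with the coefficients $-2,-1$ on the $\tr$ and $e^0\otimes e^0$ parts in~\eqref{EqEGauge}), (b) the constraint damping $\tilde E$ (with its coefficients $2,-4$ in~\eqref{EqECD}), and (c) the Christoffel symbols of $g_\dS$ in the 0-b-frame, which mix the three blocks of~\eqref{Eq0bST0Split}. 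A clean way to organize this is to record each of the four indicial contributions as an explicit $3\times 3$ block acting on $(h_{00},h_{0i},h_{i j})$ in the frame $\tau\pa_\tau,\tau\pa_{x^i}$, then add. By contrast, the identification of the three-piece decomposition $L_{0,0}+L_{(0),h_0}+\tilde L_{h_0,\tilde h}$ together with its tame bounds is essentially mechanical once Lemma~\ref{Lemma0bSob} is in hand.
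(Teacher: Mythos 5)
Your proposal is essentially correct and takes the same route as the paper: express everything in the frame $e_\mu=\{\tau\pa_\tau,\tau\pa_{x^i}\}$, track the memberships of $g^{-1}$, the Christoffel symbols and the Riemann tensor in $\CI+R^\alpha\Hb^{k-j}(\cI^+_{R_0})+R^\alpha\rho^\beta\Hb^{k-j}(\Omega_{\rho_0,R_0})$, derive~\eqref{EqE1StructL}--\eqref{EqE1StructTame} from Lemma~\ref{Lemma0bSob}, and for the indicial operator drop the $\rho\Vb$ pieces and reduce to $g=g_0$, using $g_\sfb|_{\cI^+}=g_\dS|_{\cI^+}$ and the vanishing of $\sE_{g;g_0}$ and of the last summand of~\eqref{EqE1LExpr} at $\cI^+$, then assemble the pieces in the splitting~\eqref{Eq0bST0Split}.

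One cautionary imprecision in your outline: you describe the indicial contributions of $2\tilde E\delta_g\sfG_g$ and $2\tilde\delta_g^*(\sE_{g;g_0}-E_{g_0})$ as "algebraic", in parallel to the curvature correction in $\Box_g-2\Lambda$ and to $2\sR_g$. But $\delta_g$ and $\tilde\delta_g^*$ are first-order, and $\tilde E,E_{g_0},\sfG_g$ are bundle maps, so these compositions are genuine \emph{first-order} operators and their indicial operators contain $e_0=\rho\pa_\rho$ terms. These $\rho\pa_\rho$ terms are essential: they produce the $-4\rho\pa_\rho$ on the diagonal and the $2\rho\pa_\rho\tr_{g_{(0)}}$ in the $(1,3)$ entry of the matrix in~\eqref{EqE1StructIndOp}, beyond the scalar $(\rho\pa_\rho)^2-3\rho\pa_\rho$ from $\Box_g$. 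If one treated (iii) and (iv) as zeroth-order, the resulting matrix would be wrong. Your planned explicit $3\times 3$ block computation would of course catch this, but the preliminary description should say "first-order" rather than "algebraic" for those two pieces.
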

\begin{proof}
  We work with the (dual) frames
  \begin{equation}
  \label{EqE1StructFrame}
    e_0=\frac{\dd\tau}{\tau},\ e_i=\frac{\dd x^i}{\tau},\qquad
    e^0=\tau\pa_\tau,\ e^i=\tau\pa_{x^i}
  \end{equation}
  of ${}^0 T^*M$ and ${}^0 T M$, respectively. We use Greek letters for spacetime indices $0,1,2,3$ and Latin letters for spatial indices $1,2,3$. The metric coefficients are
  \[
    g_{\mu\nu}=g(e_\mu,e_\nu)=(g_\sfb)_{\mu\nu}+\chi(h_0)_{\mu\nu}+\tilde h_{\mu\nu} = (g_0)_{\mu\nu} + \tilde h_{\mu\nu},
  \]
  where the coefficients of $h_0$ and $\tilde h$ are real-valued functions of class $R^\alpha\Hb^k(\cI^+_{R_0})$ and $R^\alpha\rho^\beta\Hb^k(\Omega_{\rho_0,R_0})$, respectively, and $(h_0)_{\mu\nu}=0$ if at least one of $\mu,\nu$ equals $0$. In the splitting~\eqref{Eq0bST0Split}, this means
  \[
    g = (g_\dS + h_0) + (g_\sfb-g_\dS) + \tilde h \in (-1,0,\dd x^2 + h_{(0)}) + \rho^3\CI + R^\alpha\rho^\beta\Hb^k.
  \]
  The components $g^{\mu\nu}=g^{-1}(e^\mu,e^\nu)$ of the inverse metric (which is well-defined when $h_0,\tilde h$ are small in $L^\infty$) are given by Cramer's rule; and we have $g^{-1}=g_0^{-1}-g_0^{-1}\tilde h g^{-1}$, which implies that
  \begin{equation}
  \label{EqE1StructgInv}
    (g^{-1})^{\mu\nu} - (g_0^{-1})^{\mu\nu} \in R^\alpha\rho^\beta\Hb^k(\Omega_{\rho_0,R_0}),
  \end{equation}
  with norm bounded by $C_k\|\tilde h\|_{R^\alpha\rho^\beta\Hb^{k+d_4}}$ by Lemma~\ref{Lemma0bSob}. Similarly, the norm of $g_0^{-1}-g_\sfb=-\chi g_\sfb^{-1}h_0 g_0^{-1}\in R^\alpha\Hb^k(\cI^+_{R_0})$ is bounded by $C_k\|h_0\|_{R^\alpha\Hb^{k+d_3}}$. In the bundle splitting~\eqref{Eq0bST0Split}, we therefore have
  \begin{equation}
  \label{EqE1StructGg}
    \sfG_g \equiv \begin{pmatrix} \half & 0 & \half\tr_{g_{(0)}} \\ 0 & I & 0 \\ \half g_{(0)} & 0 & I-\half g_{(0)}\tr_{g_{(0)}}  \end{pmatrix} \bmod \rho^3\CI+R^\alpha\Hb^k(\cI_{R_0}^+)+R^\alpha\rho^\beta\Hb^k(\Omega_{\rho_0,R_0})
  \end{equation}
  as an endomorphism of $\upbeta^*(S^2\,{}^0 T^*M)$.

  \pfstep{Structure of $L_{h_0,\tilde h}$.} We note that $[e_0,e_0]=0=[e_i,e_j]$ and $[e_0,e_i]=e_i=-[e_i,e_0]$. Moreover, $e_0(h_0)_{\mu\nu}=e_0(g_\dS)_{\mu\nu}=0$; by Lemma~\ref{Lemma0bStruct}, we thus have $e_0(g_\sfb)_{\mu\nu}\in\rho^3\CI$ and $e_0 g_{\mu\nu}\equiv e_0(g_\sfb)_{\mu\nu}\bmod R^\alpha\rho^\beta\Hb^{k-1}$. We can now compute the connection coefficients
  \begin{align*}
    \Gamma(g)_{\lambda\mu\nu} &= g(\nabla^g_{e_\mu}e_\nu,e_\lambda) \\
      &= \tfrac12\bigl( e_\mu g_{\nu\lambda} + e_\nu g_{\mu\lambda} - e_\lambda g_{\mu\nu} - g(e_\mu,[e_\nu,e_\lambda]) - g(e_\nu,[e_\mu,e_\lambda]) + g(e_\lambda,[e_\mu,e_\nu]) \bigr).
  \end{align*}
  Recall that $e_0,e_i\in\cV_0(M)$ and thus $e_0,e_i\in\cV_{0,\bop}(\breve M)\subset\Vb(\breve M)$. Therefore,
  \begin{equation}
  \label{EqE1StructChristoffel}
  \begin{alignedat}{2}
    \Gamma(g)_{0 0 0} &= \half e_0 g_{0 0}, \qquad&
    \Gamma(g)_{\ell 0 0} &= (e_0-1)g_{0\ell} - \half e_\ell g_{0 0}, \\
    \Gamma(g)_{0 i 0} &= \half e_i g_{0 0}, \qquad&
    \Gamma(g)_{\ell i 0} &= \half( e_i g_{0\ell} + e_0 g_{i\ell} - e_\ell g_{0 i}) - g_{i\ell}, \\
    \Gamma(g)_{0 0 j} &= \half e_j g_{0 0} + g_{0 j}, \qquad&
    \Gamma(g)_{\ell 0 j} &= \half( e_0 g_{j\ell} + e_j g_{0\ell} - e_\ell g_{0 j}), \\
    \Gamma(g)_{0 i j} &= \half( e_i g_{0 j} + e_j g_{0 i} - e_0 g_{i j} ) + g_{i j}, \qquad&
    \Gamma(g)_{\ell i j} &= \half( e_i g_{j\ell} + e_j g_{i\ell} - e_\ell g_{i j} );
  \end{alignedat}
  \end{equation}
  and we have
  \begin{align*}
    \Gamma(g)_{\lambda\mu\nu}-\Gamma(g_0)_{\lambda\mu\nu}&\in R^\alpha\rho^\beta\Hb^{k-1}(\Omega_{\rho_0,R_0}), \\
    \Gamma(g)_{\lambda\mu\nu}-\Gamma(g_\sfb)_{\lambda\mu\nu}&\in R^\alpha\Hb^{k-1}(\cI^+_{R_0}) + R^\alpha\rho^\beta\Hb^{k-1}(\Omega_{\rho_0,R_0}).
  \end{align*}
  In particular, $\Gamma(g)_{\lambda\mu\nu}\in\rho^3\CI+R^\alpha\Hb^{k-1}+R^\alpha\rho^\beta\Hb^{k-1}$ for all $\mu,\nu,\lambda$. In view of~\eqref{EqE1StructgInv}, we also obtain
  \[
    \Gamma(g)_{\mu\nu}^\kappa-\Gamma(g_0)_{\mu\nu}^\kappa\in R^\alpha\rho^\beta\Hb^{k-1},\qquad
    \Gamma(g)_{\mu\nu}^\kappa-\Gamma(g_\sfb)_{\mu\nu}^\kappa\in R^\alpha\Hb^{k-1}+R^\alpha\rho^\beta\Hb^{k-1},
  \]
  and $\Gamma(g)_{\mu\nu}^\kappa\in\rho^3\CI+R^\alpha\Hb^{k-1}+R^\alpha\rho^\beta\Hb^{k-1}$. This implies the same memberships but with $k-2$ in place of $k-1$ for the components of $\Riem(g)-\Riem(g_0)$ (here $\Riem(g)^\lambda{}_{\kappa\mu\nu}=e_\mu\Gamma_{\nu\kappa}^\lambda-e_\nu\Gamma_{\mu\kappa}^\lambda+\Gamma_{\mu\rho}^\lambda\Gamma_{\nu\kappa}^\rho-\Gamma_{\nu\rho}^\lambda\Gamma_{\mu\kappa}^\rho$), $\Riem(g)-\Riem(g_\sfb)$, and $\Riem(g)$ itself. By expressing $\Box_g$, $\delta_g$, $\sfG_g$, $\tilde\delta_g^*$, $\sR_g$, $\sE_{g;g_0}$ in the frame $e^\mu$, and noting that for $\eta:=\Ups(g;g_0)+E_{g_0}\tilde h\in R^\alpha\rho^\beta\Hb^{k-1}(\Omega_{\rho_0,R_0})$ we have
  \[
    ((D_g\tilde\delta^*_\cdot)\eta)_{\mu\nu} \colon h \mapsto -\half ( h_\mu{}^\kappa{}_{;\nu} + h_\nu{}^\kappa{}_{;\mu} - h_{\mu \nu}{}^{;\kappa} )\eta_\kappa,
  \]
  we thus obtain~\eqref{EqE1StructL}. The tame bounds~\eqref{EqE1StructTame} follow easily from Lemma~\ref{Lemma0bSob}.

  \pfstep{Indicial family.} Since $e_i\in\tau\Vb(\breve M)$, we only need to keep track of the $e_0$-derivatives (including those of order $0$) acting on the argument of $L_{h_0,\tilde h}$, whereas all $e_i$-derivatives can be dropped. Moreover, all contributions to $L_{h_0,\tilde h}$ arising from $\tilde h$ are (a fortiori) of class $R^\alpha\rho^\beta\Hb^{k-2}\Diffb^2$ and thus do not contribute to the indicial operator either; thus, we only need to compute the indicial operator of $L_{h_0,0}$, which amounts to working with $g=g_0$; the terms involving $\sE_{g;g_0}$ and $D_g\tilde\delta^*_\cdot$ in~\eqref{EqE1LExpr} thus vanish. Now, $\Gamma(g)_{\lambda\mu\nu}\equiv 0\bmod\rho^3\CI+R^\alpha\rho^\beta\Hb^{k-1}$ for all $\lambda,\mu,\nu$ except for
  \[
    \Gamma(g)_{\ell i 0} \equiv -g_{i\ell},\qquad
    \Gamma(g)_{0 i j} \equiv g_{i j},
  \]
  and therefore also $\Gamma(g)^\lambda_{\mu\nu}\equiv 0$ except for
  \begin{equation}
  \label{EqE1StructGamma}
    \Gamma(g)^\ell_{i 0} \equiv -\delta_i^\ell,\qquad
    \Gamma(g)^0_{i j} \equiv -g_{i j}.
  \end{equation}
  (Carefully note that $\Gamma(g)_{0 i}^\ell\equiv 0$. The connection coefficients in the frame $e_\mu$ are not symmetric.) This gives $R_{\kappa\mu\nu\lambda}=g(e_\kappa,([\nabla_{e_\nu},\nabla_{e_\lambda}]-\nabla_{[e_\nu,e_\lambda]})e_\mu)\equiv 0\bmod\rho^3\CI+R^\alpha\rho^\beta\Hb^{k-2}$ except for $R_{0 m 0 \ell} \equiv -g_{m\ell}$, $R_{k m n \ell} \equiv g_{k n}g_{m\ell} - g_{k\ell}g_{m n}$, and those coefficients obtained from these via the symmetries $R_{\kappa\mu\nu\lambda}=-R_{\mu\kappa\nu\lambda}=-R_{\kappa\mu\lambda\nu}$. Therefore, $\Ric(g)_{0 0}\equiv -3$, $\Ric(g)_{m\ell}\equiv g_{m\ell}$, $\Ric(g)_{0\ell}\equiv 0$; that is, $\Ric(g)\equiv 3 g$. From this, one easily computes that, in the bundle splitting~\eqref{Eq0bST0Split},
  \begin{equation}
  \label{EqE1StructR}
    \sR_g \equiv \begin{pmatrix} 3 I & 0 & \tr_{g_{(0)}} \\ 0 & 4 I & 0 \\ g_{(0)} & 0 & 4 I - g_{(0)}\tr_{g_{(0)}} \end{pmatrix}.
  \end{equation}
  (Cf.\ \cite[Lemma~2.4]{HintzGluedS}.)

  Next, we compute the indicial operator of $\Box_g$. If $u$ is a section of $\upbeta^*(S^2\,{}^0 T^*M)$, then modulo operators acting on $u$ which do not contribute to the indicial operator, we have $\Box_g u\equiv u_{\mu\nu;0 0}-g^{k\ell}u_{\mu\nu;k\ell}$. Using~\eqref{EqE1StructGamma} and $u_{\mu\nu;\kappa}=e_k u_{\mu\nu}-\Gamma_{\kappa\mu}^\rho u_{\rho\nu}-\Gamma_{\kappa\nu}^\rho u_{\mu\rho}$, we then compute
  \begin{alignat*}{2}
    u_{0 0;0}&\equiv e_0 u_{0 0}, &\qquad
    u_{0 0;k}&\equiv 2 u_{0 k}, \\
    u_{0 j;0}&\equiv e_0 u_{0 j}, &\qquad
    u_{0 j;k}&\equiv u_{j k}+g_{j k}u_{0 0}, \\
    u_{i j;0}&\equiv e_0 u_{i j}, &\qquad
    u_{i j;k}&\equiv u_{0 j}g_{i k}+u_{i 0}g_{j k},
  \end{alignat*}
  and then
  \begin{alignat*}{2}
    u_{0 0;0 0}&\equiv e_0 e_0 u_{0 0}, &\qquad
    u_{0 0;k\ell}&\equiv 2 u_{0\ell;k}+g_{k\ell}u_{0 0;0}, \\
    u_{0 j;0 0}&\equiv e_0 e_0 u_{0 j}, &\qquad
    u_{0 j;k\ell}&\equiv u_{\ell j;k}+u_{0 0;k}g_{j\ell}+u_{0 j;0}g_{k\ell}, \\
    u_{i j;0 0}&\equiv e_0 e_0 u_{i j}, &\qquad
    u_{i j;k\ell}&\equiv g_{i\ell}u_{0 j;k} + g_{j\ell}u_{i 0;k} + g_{k\ell}u_{i j;0}.
  \end{alignat*}
  In the splittings~\eqref{Eq0bT0Split} and \eqref{Eq0bST0Split}, this gives
  \begin{equation}
  \label{EqE1StructBoxg}
    \Box_g \equiv
      e_0 e_0-3 e_0 + \begin{pmatrix}
        -6 & 0 & -2\tr_{g_{(0)}} \\ 0 & -6 & 0 \\ -2 g_{(0)} & 0 & -2
      \end{pmatrix}, \qquad
    \delta_g \equiv
      \begin{pmatrix}
        e_0-3 & 0 & -\tr_{g_{(0)}} \\ 0 & e_0-4 & 0
      \end{pmatrix}.
  \end{equation}
  The modification terms $E_{g_0}$ and $\tilde E$ in~\eqref{EqEGauge} and \eqref{EqECD} are given by
  \begin{equation}
  \label{EqE1StructEE}
    E_{g_0} \equiv \begin{pmatrix} 1 & 0 & -2\tr_{g_{(0)}} \\ 0 & 0 & 0 \end{pmatrix},\qquad
    \tilde E \equiv \begin{pmatrix} -2 & 0 \\ 0 & -2 \\ 0 & 0 \end{pmatrix}.
  \end{equation}
  Similarly, if $\omega$ is a 1-form, we compute the covariant derivatives $\omega_{\mu;\nu}=e_\nu\omega_\mu-\Gamma_{\nu\mu}^\kappa\omega_\kappa$ to be $\omega_{0;0}\equiv e_0\omega_0$, $\omega_{0;j}\equiv \omega_j$, $\omega_{i;0}\equiv e_0\omega_i$, $\omega_{i;j}\equiv g_{i j}\omega_0$, and therefore
  \begin{equation}
  \label{EqE1Structdelstar}
    \delta_g^* \equiv \begin{pmatrix} e_0 & 0 \\ 0 & \half(e_0+1) \\ g_{(0)} & 0 \end{pmatrix},\qquad
    \tilde\delta_g^* \equiv \begin{pmatrix} e_0-2 & 0 \\ 0 & \half(e_0-3) \\ g_{(0)} & 0 \end{pmatrix}.
  \end{equation}
  With the indicial operator of $L_{h_0,\tilde h}$ being equal to that of $\Box_g-2\Lambda+2\tilde E\delta_g\sfG_g+2\sR_g-2\tilde\delta_g^*\circ E$, we thus obtain~\eqref{EqE1StructInd}.
\end{proof}

As a by-product of the computations in the above proof, we record:

\begin{lemma}[Mapping properties of $P$]
\label{LemmaE1Map}
  For $h_0,\tilde h,\theta$ as in~\eqref{EqESolh}, with $\|h_0\|_{R^\alpha\Hb^{d_3+2}}<\delta_0$ and $\|\tilde h\|_{R^\alpha\rho^\beta\Hb^{d_4+2}}<\delta_0$ for some small $\delta_0>0$, we have
  \[
    P(h_0,\tilde h,\theta)\in R^\alpha\rho^\beta\Hb^\infty(\Omega_{\rho_0,R_0};\upbeta^*(S^2\,{}^0 T^*M)).
  \]
  Moreover, for all $k\in\N_0$, we have the tame estimate
  \[
    \|P(h_0,\tilde h,\theta)\|_{R^\alpha\rho^\beta\Hb^k} \leq C_k\bigl( \|h_0\|_{R^\alpha\Hb^{k+2}} + \|\tilde h\|_{R^\alpha\rho^\beta\Hb^{k+2}} + \|\theta\|_{R^\alpha\rho^\beta\Hb^{k+1}} \bigr).
  \]
\end{lemma}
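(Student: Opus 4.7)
The proof proceeds by direct inspection of~\eqref{EqEEinOp} in the frame~\eqref{EqE1StructFrame}, reusing the explicit computations of the components of $g$, $g^{-1}$, the Christoffel symbols~\eqref{EqE1StructChristoffel}, and the indicial family already carried out in the proof of Proposition~\ref{PropE1Struct}. The starting observation is that $P(0,0,0)=0$: since $g_\sfb$ is a KdS metric, $\Ric(g_\sfb)-\Lambda g_\sfb=0$, and by the very definition of $\Ups$ we have $\Ups(g_\sfb;g_\sfb)=0$ (the two arguments coincide, so the difference of Christoffel symbols vanishes), while $E_{g_\sfb}\cdot 0=0$ and $\tilde\chi\cdot 0=0$. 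A Taylor expansion
\[
  P(h_0,\tilde h,\theta)=\int_0^1\bigl(D_1 P\cdot h_0+D_2 P\cdot\tilde h+D_3 P\cdot\theta\bigr)\Big|_{(sh_0,s\tilde h,s\theta)}\,ds
\]
then reduces matters to tame bounds on the three partial derivatives.

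The $\theta$-derivative is $D_3 P\cdot\theta=2\tilde\delta_g^*\tilde\chi\theta$, a first-order $0$-b operator whose coefficients lie in $\CI+R^\alpha\Hb+R^\alpha\rho^\beta\Hb$; Lemma~\ref{Lemma0bSob}\eqref{It0bSobProd} yields the $\|\theta\|_{R^\alpha\rho^\beta\Hb^{k+1}}$ contribution. The $\tilde h$-derivative equals $L_{h_0,s\tilde h}\tilde h$, so the decomposition~\eqref{EqE1StructL} together with the tame estimates~\eqref{EqE1StructTame} and the product estimates of Lemma~\ref{Lemma0bSob} produce the $\|\tilde h\|_{R^\alpha\rho^\beta\Hb^{k+2}}$ bound (corrections proportional to $\|h_0\|_{R^\alpha\Hb^{k+d_3}}\|\tilde h\|_{R^\alpha\rho^\beta\Hb^{d_4+2}}$ are absorbed by the smallness of $\tilde h$ and the $\|h_0\|_{R^\alpha\Hb^{k+2}}$ term in the concluding bound).

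The main subtlety is the $h_0$-derivative, which is the only contribution demanding that $\rho^\beta$-decay emerge a posteriori from a hypothesis carrying none. Two mechanisms combine to supply it. First, differentiation in $h_0$ at the base point kills the gauge part of $P$ entirely: since $\Ups(g';g')\equiv 0$ identically in $g'$, and at the base point $g=g_0=g_\sfb+\chi h_0$, the derivative of $\Ups(g;g_0)$ in $h_0$ vanishes, as do the derivatives of $E_{g_0}\tilde h$ and of $\tilde\chi\theta$ (both evaluated at $\tilde h=0$, $\theta=0$); moreover $(D_g\tilde\delta^*_\cdot)\cdot$ is applied to the zero 1-form $\Ups(g_\sfb;g_\sfb)+E_{g_\sfb}\cdot 0-\tilde\chi\cdot 0=0$ and so contributes nothing. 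One is left with $D_1P|_{(0,0,0)}\cdot h_0=2D_{g_\sfb}(\Ric-\Lambda g)\cdot(\chi h_0)$. Second, because $\chi h_0$ is tangential-tangential in~\eqref{Eq0bST0Split} and its components in the frame~\eqref{EqE1StructFrame} are independent of $\rho$, any $0$-b derivative hitting $\chi h_0$ gains an explicit factor of $\rho$: $e_0(\chi h_0)=(\rho\partial_\rho\chi)h_0$ is supported in $\rho\sim 1$ (where the $\rho^\beta$ weight is free), while $e_i(\chi h_0)=\chi\rho\bigl(R\omega^i\partial_R+(\delta^{ij}-\omega^i\omega^j)\partial_{\omega^j}\bigr)h_0\in\rho R^\alpha\Hb^{k-1}$. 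The only potentially non-decaying contribution is thus the undifferentiated indicial action at $\lambda=0$ of $2D_{g_\sfb}(\Ric-\Lambda g)=\Box_{g_\sfb}+2\sR_{g_\sfb}-2\delta^*_{g_\sfb}\delta_{g_\sfb}\sfG_{g_\sfb}-2\Lambda$; but a direct check from~\eqref{EqE1StructBoxg}, \eqref{EqE1StructR}, \eqref{EqE1StructGg}, and~\eqref{EqE1Structdelstar} shows that this action vanishes identically on tangential-tangential tensors---this is nothing other than the classical Fefferman--Graham fact that $g_{(0)}$ is a free parameter at leading order in $\rho$. Consequently $D_1P\cdot h_0\in\rho R^\alpha\Hb^{k-2}+R^\alpha\rho^\beta\Hb^{k-2}\subset R^\alpha\rho^\beta\Hb^k$ with tame bound by $\|h_0\|_{R^\alpha\Hb^{k+2}}$.

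The higher-order-in-$s$ contributions in the Taylor expansion are polynomial expressions in the components of $(\chi h_0,\tilde h,\theta,g^{-1})$ and their $0$-b derivatives, with the derivative-of-$h_0$ gain of $\rho$ persisting throughout; the composition and product estimates of Lemma~\ref{Lemma0bSob}\eqref{It0bSobProd}--\eqref{It0bSobFn} close the argument, and the $\Hb^\infty$ conclusion follows by letting $k\to\infty$. The main conceptual obstacle is exactly the zero-order vanishing just identified for tangential-tangential perturbations, which is already embedded in the indicial calculation of Proposition~\ref{PropE1Struct}.
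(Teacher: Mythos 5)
Your proposal takes a genuinely different route from the paper's. The paper records Lemma~\ref{LemmaE1Map} as an immediate by-product of the term-by-term membership computations carried out in the proof of Proposition~\ref{PropE1Struct} (the frame coefficients of $g$, $g^{-1}$, the Christoffel symbols, the Riemann and Ricci tensors, and the gauge one-form all lie in $\CI+R^\alpha\Hb^{k}+R^\alpha\rho^\beta\Hb^{k}$, with the explicit cancellations $\Ric(g_0)\equiv\Lambda g_0$ and $\Gamma(g)-\Gamma(g_0)\in R^\alpha\rho^\beta\Hb^{k-1}$ providing the $\rho^\beta$ decay). You instead Taylor-expand $P$ from the origin and reduce to bounds on the partial derivatives $D_1 P$, $D_2 P$, $D_3 P$. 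This isolates the essential structural mechanism much more explicitly than the paper does: the decay of the $h_0$-direction is precisely the vanishing of $I(D\Ric-\Lambda,\lambda)$ at $\lambda=0$ on tangential-tangential tensors (columns $3$ and $4$ of~\eqref{EqStCIIndRic} at $\lambda=0$ vanish), which you correctly identify with the Fefferman--Graham freedom in $g_{(0)}$; the remaining directions gain decay for trivial reasons. That is a nice, clarifying reorganization of the argument, and the key computation checks out.

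There is, however, an imprecision worth flagging. Your Taylor path is $(s h_0,s\tilde h,s\theta)$, $s\in[0,1]$, so at $s=0$ the base metric is $g=g_0=g_\sfb$, not $g_\sfb+\chi h_0$; yet your ``at the base point $g=g_0=g_\sfb+\chi h_0$'' analysis treats the partial derivative $D_1 P$ at $(h_0,0,0)$, which does not lie on that path. Moreover, for $s\in(0,1]$ one has $\tilde h,\theta\neq0$, so the gauge terms in $D_1 P$ do not literally vanish; one must instead observe that
\[
D_1|_{g_s}\Ups(\chi h_0;g_{0,s})+D_2|_{g_{0,s}}\Ups(g_s;\chi h_0)
\]
differs from the identically zero $D_1|_{g_{0,s}}\Ups(\chi h_0;g_{0,s})+D_2|_{g_{0,s}}\Ups(g_{0,s};\chi h_0)$ (which vanishes since $\Ups(g';g')\equiv 0$) only through commutators involving $g_s-g_{0,s}=s\tilde h\in R^\alpha\rho^\beta\Hb^\infty$, hence is decaying. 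Your last paragraph gestures at this (``gain of $\rho$ persisting throughout'') but does not justify it. The gap closes immediately if you either supply the above identity or, more cleanly, decompose $P(h_0,\tilde h,\theta)=P(h_0,0,0)+\bigl[P(h_0,\tilde h,\theta)-P(h_0,0,0)\bigr]$, treating $P(h_0,0,0)=2(\Ric(g_0)-\Lambda g_0)$ by the indicial-vanishing argument and the bracket by Taylor expansion in $(\tilde h,\theta)$ alone, for which $h_0$ is frozen and the $\Ups(g';g')\equiv0$ identity is not needed. (The derivative count in the tame estimate is also slightly optimistic given that the paper only proves the weakened Moser bound in Lemma~\ref{Lemma0bSob}\eqref{It0bSobFn}, but since any fixed loss suffices for the Nash--Moser scheme this is not a substantive issue.)
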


\subsubsection{Indicial roots of the constraint propagation and gauge potential wave operators}
\label{SssE1Aux}

Before continuing the study of $L_{h_0,\tilde h}$, we make the following observations regarding the linearization $L_\dS$ of $g\mapsto 2(\Ric(g)-\Lambda g-\tilde\delta^*_g(\Ups(g;g_0)+E_{g_0}(g-g_0))$ around $g=g_0=g_\dS$. (This is the linear operator one would need to invert when using a Newton type iteration to study the stability of de~Sitter space.) These observations are only made to motivate the choices of $E_{g_0},\tilde E$ in~\eqref{EqEGauge} and \eqref{EqECD}. To wit, if $L_\dS h=0$, then by the linearized second Bianchi identity, we have $2\delta_g\sfG_g\tilde\delta^*_g\eta=0$ where $\eta=D_1|_g\Ups(h;g_0)+E_{g_0}h$. Refining the splitting~\eqref{Eq0bST0Split} using
\begin{equation}
\label{EqE1AuxSplitRef}
  S^2 T^*X = \R g_{(0)} \oplus \ker\tr_{g_{(0)}},
\end{equation}
we can use~\eqref{EqE1StructGg} and~\eqref{EqE1StructBoxg}--\eqref{EqE1Structdelstar} to compute the indicial family of $2\delta_g\sfG_g\tilde\delta^*_g$ as
\begin{equation}
\label{EqE1AuxCD}
\begin{split}
  &\begin{pmatrix} \lambda-3 & 0 & -3 & 0 \\ 0 & \lambda-4 & 0 & 0 \end{pmatrix}\begin{pmatrix} \tfrac12 & 0 & \tfrac32 & 0 \\ 0 & 1 & 0 & 0 \\ \half & 0 & -\half & 0 \\ 0 & 0 & 0 & 1 \end{pmatrix} \left( \begin{pmatrix} \lambda & 0 \\ 0 & \half(\lambda+1) \\ 1 & 0 \\ 0 & 0 \end{pmatrix} + \begin{pmatrix} -2 & 0 \\ 0 & -2 \\ 0 & 0 \\ 0 & 0 \end{pmatrix} \right) \\
  &\qquad\qquad = \begin{pmatrix} (\lambda-2)(\lambda-3) & 0 \\ 0 & (\lambda-3)(\lambda-4) \end{pmatrix}.
\end{split}
\end{equation}
Its indicial roots are thus $\lambda=2,3,4$ and in particular all positive. Therefore, for any indicial solution of $I(L_\dS,\lambda)h=0$ with $\Re\lambda\leq 0$ (or more generally $\Re\lambda<2$), i.e.\ $L_\dS(\tau^\lambda h)=\cO(\tau^{\lambda+1})$, the gauge 1-form $\eta$ defined above necessarily vanishes modulo $\cO(\tau^{\lambda+1})$. Therefore, $h$ is an indicial solution also for the linearization of the ungauged operator $\Ric(g)-\Lambda g$ around $g=g_\dS$, and in particular satisfies an indicial operator version of the linearized constraints. (Moreover, it satisfies the linearized gauge condition on the indicial operator level.) Thus, the particular choice of $\tilde E$ in~\eqref{EqECD} leads to a damping of violations of the constraints, and in this sense acts as \emph{constraint damping}.

\begin{rmk}[Origin of the choice of $\tilde E$]
\label{RmkE1ChoicetildeE}
  The choice~\eqref{EqECD} corresponds exactly to the damping terms $M_{\mu\nu}$ in \cite[(51)--(53)]{RingstromEinsteinScalarStability}. Another choice is the one made in \cite[Appendix~C.3--C.4]{HintzVasyKdSStability}, which amounts to $\tilde E\omega=-2 e^0\otimes_s\omega+\omega(e_0)g_0$ (which now depends on $g_0$), the indicial operator of which is
  \[
    \begin{pmatrix} -1 & 0 \\ 0 & -1 \\ -1 & 0 \\ 0 & 0 \end{pmatrix}.
  \]
  The indicial roots of $2\delta_g\sfG_g\tilde\delta^*_g$ for this choice would be $1,4,6$. --- There is of course an open set of choices with the same damping effect, and for all such choices our arguments below go through (except for possibly having to reduce $\beta>0$ if an indicial root $\lambda$ in the right half plane gets close to $\{\Re\lambda=0\}$).
\end{rmk}

Next, we consider the linearization $-(\delta_g\sfG_g-E_{g_0})$ of the gauge 1-form $g\mapsto\Ups(g;g_0)+E_{g_0}(g-g_0)$ around $g=g_0=g_\dS$ (cf.\ \eqref{EqE1D1Ups}), and specifically ask about the indicial solutions which are pure gauge. By this we mean indicial solutions of the form $I(\delta_g^*,\lambda)\omega$ where $\omega$ is an indicial solution of the \emph{gauge potential wave operator} $2(\delta_g\sfG_g-E_{g_0})\delta_g^*$---whose indicial family, using~\eqref{EqE1StructEE}, is
\begin{equation}
\label{EqE1ChoiceBoxGauge}
\begin{split}
  &\left(\begin{pmatrix} \lambda-3 & 0 & -3 & 0 \\ 0 & \lambda-4 & 0 & 0 \end{pmatrix}\begin{pmatrix} \tfrac12 & 0 & \tfrac32 & 0 \\ 0 & 1 & 0 & 0 \\ \half & 0 & -\half & 0 \\ 0 & 0 & 0 & 1 \end{pmatrix} - \begin{pmatrix} 1 & 0 & -6 & 0 \\ 0 & 0 & 0 & 0 \end{pmatrix} \right) \begin{pmatrix} \lambda & 0 \\ 0 & \half(\lambda+1) \\ 1 & 0 \\ 0 & 0 \end{pmatrix} \\
  &\qquad\qquad = \begin{pmatrix} (\lambda-2)(\lambda-3) & 0 \\ 0 & (\lambda-4)(\lambda+1) \end{pmatrix}.
\end{split}
\end{equation}
The indicial roots are thus $-1,2,3,4$, and the indicial root $-1$ corresponds, on exact de~Sitter space, to the fact that $\pa_{x^i}=\tau^{-1}e_i$ is a Killing vector field. (More generally $\delta_g^*(\tau^{-1}e_i)=o(\tau^{-1})$ for general $g$ of the form studied in Proposition~\ref{PropE1Struct}, as follows from~\eqref{EqE1Structdelstar}.)

\begin{rmk}[Origin of the choice of $E_{g_0}$]
\label{RmkE1GaugeModOrigin}
  We found the modification $E_{g_0}$ of the (generalized) harmonic gauge by trial and error. The gauge modification used in \cite[(50)]{RingstromEinsteinScalarStability} corresponds to
  \begin{equation}
  \label{EqE1GaugeModRingstrom}
    E_{g_0}^{\text{\cite{RingstromEinsteinScalarStability}}} = \begin{pmatrix} 0 & 0 & -3 & 0 \\ 0 & -2 & 0 & 0 \end{pmatrix}.
  \end{equation}
  This leads to the gauge potential wave operator having an indicial root at $0$ (see Remark~\ref{RmkEIndRingstrom} for the consequence of this for the gauge-fixed linearized Einstein equation); our choice avoids this. If one took $E_{g_0}=0$, then there would, for example, be an indicial root at $\frac12(3-\sqrt{33})\in(-2,-1)$, corresponding to an exponentially growing pure gauge solution which would need to be removed from the asymptotics of the linearized metric perturbation by a gauge modification, as done in static patches in \cite[Appendix~C]{HintzVasyKdSStability}, as discussed in Remark~\ref{RmkIdSHarmGauge}.
\end{rmk}

\subsubsection{Indicial family of the linearized gauge-fixed Einstein operator}
\label{SssE1Ind}

We now compute the indicial roots of the linearized gauge-fixed Einstein operator~\eqref{EqE1LExpr}.

\begin{lemma}[Indicial roots of $L_{h_0,\tilde h}$]
\label{LemmaEIndRoot}
  Let $h_0,\tilde h$ be as in Proposition~\usref{PropE1Struct}, define $g_{(0)},h_{(0)}$ by~\eqref{EqE1Structg0h0}, and write $I_{g_{(0)}}(\lambda)=I(L_{h_0,\tilde h},\lambda)$. Then the indicial roots of $L_{h_0,\tilde h}$ are $0,2,3,4$. The space of indicial solutions corresponding to the root $0$ is $\ker I_{g_{(0)}}(0)=\tau^{-2}\ker\tr_{g_{(0)}}$ (as a subbundle of $\upbeta^*(S^2\,{}^0 T^*M)$ over $\cI^+_{R_0}$); and the root $0$ is simple in that $I_{g_{(0)}}(\lambda)^{-1}$ has a simple pole at $\lambda=0$.
\end{lemma}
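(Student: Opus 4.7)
The strategy is a direct linear-algebra calculation on the explicit formula~\eqref{EqE1StructIndOp}, taking advantage of the natural block structure. First, in the splitting~\eqref{Eq0bST0Split}, the middle slot $2\tfrac{\dd\tau}{\tau}\otimes_s\tau^{-1}T^*X$ is already decoupled from the other two, because the matrix~\eqref{EqE1StructIndOp} has zeros in the first and third columns of the middle row. On this slot, $I_{g_{(0)}}(\lambda)$ reduces to the scalar $(\lambda^2-3\lambda)+2(-2\lambda+6)=(\lambda-3)(\lambda-4)$, giving the indicial roots $3$ and $4$.

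Next, I would refine the outer slots by orthogonally decomposing $S^2 T^*X = \R g_{(0)} \oplus \ker\tr_{g_{(0)}}$ as in~\eqref{EqE1AuxSplitRef}; this is legitimate for small $h_0$ since $g_{(0)}$ is then Riemannian, and on a $3$-manifold $\tr_{g_{(0)}}g_{(0)}=3$. Writing $h=c\,g_{(0)}+h^\circ$, the lower-right entry $(\lambda^2-3\lambda)\mathrm{Id}+2g_{(0)}\tr_{g_{(0)}}$ acts as $(\lambda^2-3\lambda+6)\mathrm{Id}$ on $\R g_{(0)}$ and as $\lambda(\lambda-3)\mathrm{Id}$ on $\ker\tr_{g_{(0)}}$, while the off-diagonal $2(\lambda-2)\tr_{g_{(0)}}$ annihilates $\ker\tr_{g_{(0)}}$. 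Hence $\ker\tr_{g_{(0)}}$ decouples entirely, contributing the roots $0$ and $3$. The residual $2\times 2$ block on $(a,c)$ is
\[
  \begin{pmatrix} (\lambda-2)(\lambda-5) & 6(\lambda-2) \\ -2 & \lambda^2-3\lambda+6 \end{pmatrix},
\]
whose determinant factors as $(\lambda-2)^2(\lambda-3)^2$, giving the roots $2$ and $3$. Collecting from the three blocks, the full set of indicial roots is $\{0,2,3,4\}$.

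For the kernel and pole structure at $\lambda=0$, I would observe that only the traceless block is singular there: the middle block equals $12\neq 0$, and the $2\times 2$ trace block has determinant $36\neq 0$. Therefore $\ker I_{g_{(0)}}(0) = \tau^{-2}\ker\tr_{g_{(0)}}$, exactly as claimed. The simple pole of $I_{g_{(0)}}(\lambda)^{-1}$ at $\lambda=0$ then follows because on the traceless block the restricted indicial family is the scalar $\lambda(\lambda-3)$ times the identity on a $5$-dimensional fiber, so the inverse of that block behaves like $\lambda^{-1}$ near $0$, while the other blocks contribute holomorphic terms. The argument is purely mechanical, and I do not anticipate any obstacle beyond correctly identifying the trace/traceless decoupling that reduces the problem to three independent scalar or $2\times 2$ pieces.
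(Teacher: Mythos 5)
Your proposal is correct and takes essentially the same approach as the paper: pass to the refined splitting $S^2 T^*X=\R g_{(0)}\oplus\ker\tr_{g_{(0)}}$, observe the block structure of~\eqref{EqEIndRootPf}, and read off the roots and the pole order block-by-block. The paper's proof is just terser, stating the conclusions without spelling out the block-diagonalization and the factorization of the $2\times 2$ trace block that you carry out explicitly.
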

\begin{proof}
  We split
  \begin{equation}
  \label{EqEIndRootSplit}
    S^2 T^*X=\R g_{(0)}\oplus\ker\tr_{g_{(0)}}
  \end{equation}
  (so $\tau^{-2}S^2 T^*X=\R\tau^{-2}g_{(0)}\oplus\tau^{-2}\ker\tr_{g_{(0)}}\subset\upbeta^*(S^2\,{}^0 T^*M)$). From~\eqref{EqE1StructIndOp}, we then have
  \begin{equation}
  \label{EqEIndRootPf}
    I_{g_{(0)}}(\lambda) = \lambda^2-3\lambda+2\cdot\begin{pmatrix} -2\lambda+5 & 0 & 3(\lambda-2) & 0 \\ 0 & -2\lambda+6 & 0 & 0 \\ -1 & 0 & 3 & 0 \\ 0 & 0 & 0 & 0 \end{pmatrix}.
  \end{equation}
  The solutions of $\det I_{g_{(0)}}(\lambda)=0$ are $\lambda=0,2,3,4$. Moreover, $\ker I_{g_{(0)}}(0)$ is spanned by $\ker\tr_{g_{(0)}}$. The final statement is a consequence of the fact that the determinant of the $4\times 4$ matrix given by the right hand side of~\eqref{EqEIndRootPf} has a simple zero at $\lambda=0$.
\end{proof}

\begin{rmk}[Comparison with Ringstr\"om's operator]
\label{RmkEIndRingstrom}
  Using the gauge modification~\eqref{EqE1GaugeModRingstrom} from \cite{RingstromEinsteinScalarStability}, one finds that $I_{g_{(0)}}(\lambda)$ is diagonal and given by $\diag((\lambda-2)(\lambda-3),(\lambda-2)(\lambda-3),\lambda(\lambda-3),\lambda(\lambda-3))$. Thus, the space of indicial solutions at $\lambda=0$ now consists of all tangential-tangential tensors.
\end{rmk}

\subsection{Linearized gauge-fixed Einstein operator II: estimates for solutions}
\label{SsE2}

The control of solutions of initial value problems for $L_{h_0,\tilde h}$ lies at the heart of our stability proof. We proceed in two steps.
\begin{enumerate}
\item First, we obtain an estimate on a space allowing for growth at $\cI^+$ but with arbitrary regularity (Proposition~\ref{PropE2Reg}).
\item We then use the information about the indicial operator of $L_{h_0,\tilde h}$ from Lemma~\ref{LemmaEIndRoot} to improve decay while giving up regularity (Proposition~\ref{PropE2Decay}).
\end{enumerate}

For definiteness, we fix $\sV\subset\cV(\Sph^2)$ to be the set $\sV=\{V_1,V_2,V_3\}$ where $V_a$ is the rotation vector field around the $a$-th coordinate axis in $\R^3\supset\Sph^2$. \emph{Throughout this section, we assume that $\alpha>0$, $\beta\in(0,1)$, and
\begin{equation}
\label{EqE2h}
  \|h_0\|_{R^\alpha\Hb^{2 d_3+4}(\cI^+_{R_0};\upbeta^*(\tau^{-2}S^2 T^*X))} < \delta_0,\qquad
  \|\tilde h\|_{R^\alpha\rho^\beta\Hb^{2 d_4+4}(\Omega_{\rho_0,R_0};\upbeta^*(S^2\,{}^0 T^*M))} < \delta_0,
\end{equation}
and $\delta_0>0$ is small.} Under these assumptions, we have $\Hb^{d_4}$- and thus $\cC^0$-bounds on the coefficients of $L_{h_0,\tilde h}$ by~\eqref{EqE1StructTame}.

\subsubsection{High regularity estimate on growing spaces}
\label{SssE2Reg}

Following the strategy outlined in~\S\ref{SsIEx}, we prove:

\begin{prop}[Tame bounds on growing spaces]
\label{PropE2Reg}
  There exists $N>0$ so that the following holds. Let
  \[
    v_0,v_1\in R^\alpha\Hb^\infty(\Sigma_{\rho_0,R_0};\upbeta^*(S^2\,{}^0 T^*M)),\qquad f\in R^\alpha\rho^{-N}\Hb^\infty(\Omega_{\rho_0,R_0};\upbeta^*(S^2\,{}^0 T^*M)),
  \]
  and define the norm
  \[
    \|(f,v_0,v_1)\|_{D^{k,\alpha,\beta}} := \|f\|_{R^\alpha\rho^\beta\Hb^k} + \|v_0\|_{R^\alpha\Hb^{k+1}} + \|v_1\|_{R^\alpha\Hb^k}.
  \]
  Then the initial value problem
  \begin{equation}
  \label{EqE2RegPDE}
    L_{h_0,\tilde h}v=f,\qquad (v,\cL_{-\rho\pa_\rho}v)|_{\Sigma_{\rho_0,R_0}} = (v_0,v_1),
  \end{equation}
  has a unique solution $v\in R^\alpha\rho^{-N}\Hb^\infty(\Omega_{\rho_0,R_0};\upbeta^*(S^2\,{}^0 T^*M))$ which, moreover, for all $k\in\N_0$ satisfies the tame estimate
  \begin{equation}
  \label{EqE2RegEst}
  \begin{split}
    \|v\|_{R^\alpha\rho^{-N}\Hb^k} &\leq C_k\Bigl( \|(f,v_0,v_1)\|_{D^{k,\alpha,-N}} \\
      &\quad \hspace{3em} + \bigl(\|h_0\|_{R^\alpha\Hb^{k+2 d_3+2}} + \|\tilde h\|_{R^\alpha\rho^\beta\Hb^{k+2 d_4+2}}\bigr) \|(f,v_0,v_1)\|_{D^{0,\alpha,-N}} \Bigr).
  \end{split}
  \end{equation}
\end{prop}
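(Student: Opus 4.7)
The plan is to follow the two-step strategy of~\S\ref{SssIExSdS}: first a basic low-regularity energy estimate in a weighted 0-b space with a large weight $\rho^{-N}$, then a commutator argument with b-vector fields to upgrade to $\Hb^k$-control \emph{without} increasing the weight, then finally extract the tame structure.

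For the basic estimate, I would exploit the timelike nature of $\dd\rho/\rho$ for $g_\sfb$ from Lemma~\ref{Lemma0bStruct} (which persists for $g := g_\sfb + \chi h_0 + \tilde h$ under the smallness assumption~\eqref{EqE2h}) and use the vector multiplier $\mc M := (R^{-2\alpha}\rho^{2N}\,\dd\rho/\rho)^\sharp$, which is a positive smooth multiple of $-R^{-2\alpha}\rho^{2N}\rho\pa_\rho$. Pairing the equation $L_{h_0,\tilde h}v = f$ against $\mc M v$ and integrating by parts on $\Omega_{\rho_0,R_0}$, the dominant contribution to the bulk term comes from differentiating the weight $\rho^{2N}$ along $\rho\pa_\rho$, producing $2N R^{-2\alpha}\rho^{2N}$ times a quantity coercive in the 0-b first derivatives of $v$; contributions from the lower-order terms of $L_{h_0,\tilde h}$ (including the perturbative pieces $L_{(0),h_0}, \tilde L_{h_0,\tilde h}$ from Proposition~\ref{PropE1Struct}) are bounded uniformly in $N$ using pointwise control on the coefficients supplied by~\eqref{EqE2h} and the Sobolev embedding~\eqref{Eq0bSobEmbed}. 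Choosing $N$ large enough, depending on $\alpha$ and $\delta_0$ but crucially \emph{not} on $k$, makes the coercive term dominate. Boundary terms at $\Sigma_{\rho_0,R_0}^+$ have the correct sign (it is spacelike by Definition~\ref{Def0bDom}) and those at $\Sigma_{\rho_0,R_0}$ are controlled by $\|v_0\|_{R^\alpha\Hb^1} + \|v_1\|_{R^\alpha L^2}$, while the $f$-term is absorbed by Cauchy--Schwarz. This yields the $k=0$ case of~\eqref{EqE2RegEst} with $H_{0,\bop}^1$ on the left.

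For higher b-regularity, I would commute with each b-vector field $W \in \{\rho\pa_\rho, R\pa_R, V_1, V_2, V_3\}$. The key structural input is Lemma~\ref{Lemma0bIdeal}: $[W, L_{0,0}] \in \Diff_{0,\bop}^2$, so the leading-order 0-b principal part of the equation for $W^\gamma v$ is preserved under b-commutation. The variable-coefficient pieces $L_{(0),h_0} + \tilde L_{h_0,\tilde h}$ contribute coefficients that are b-derivatives of the originals, so that $W^\gamma v$ satisfies $L_{h_0,\tilde h}(W^\gamma v) = f_\gamma$, where $f_\gamma$ combines $W^{\gamma'}\!f$ (with $|\gamma'|\leq|\gamma|$) and lower-triangular coupling to $W^{\gamma''}v$ for $|\gamma''| < |\gamma|$, in the schematic form of Remark~\ref{RmkIExTriangle}. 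Applying the basic estimate with the \emph{same} $N$ to each commuted equation and closing an induction on $|\gamma|$ via a weighted sum which absorbs the lower-triangular couplings yields the full estimate~\eqref{EqE2RegEst}.

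The tame structure is extracted by expanding $f_\gamma$ as a sum of products between up to $k$-fold b-derivatives of the coefficients of $L_{h_0,\tilde h}$ (controlled by $\|h_0\|_{R^\alpha\Hb^{k+2d_3+2}} + \|\tilde h\|_{R^\alpha\rho^\beta\Hb^{k+2d_4+2}}$ via~\eqref{EqE1StructTame}) and up to $k$-fold b-derivatives of $v$, then applying the product estimates~\eqref{Eq0bSobProd1}--\eqref{Eq0bSobProd3} to distribute derivatives so that the high-regularity factor lands on either the coefficients or on $v$; the low-regularity factor on $v$ is then reduced to $\|(f,v_0,v_1)\|_{D^{0,\alpha,-N}}$ via the basic estimate. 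Existence of $v \in R^\alpha\rho^{-N}\Hb^\infty$ then follows from standard local hyperbolic theory for $L_{h_0,\tilde h}$ (principally $\Box_g$, with level sets of $\rho$ spacelike) combined with the uniform a priori estimate and a continuation argument, and uniqueness is immediate from the basic estimate applied to the difference of two solutions. The main obstacle, already anticipated in~\S\ref{SsIEx}, is guaranteeing that $N$ is chosen \emph{once and for all}, independent of $k$; this is ensured precisely by the ideal property of Lemma~\ref{Lemma0bIdeal}, which prevents b-commutation from altering the 0-b principal part that the coercive multiplier argument is tuned to.
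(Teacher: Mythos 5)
Your overall strategy matches the paper's: a $\rho^{-N}$-weighted energy estimate at the $H_{0,\bop}^1$ level using a timelike multiplier built from $R^{-2\alpha}\rho\pa_\rho$, followed by b-commutation closed via Lemma~\ref{Lemma0bIdeal} and a triangular structure. However, there is a genuine gap in your treatment of the commuted system.

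You claim the equation for $W^\gamma v$ has a right-hand side involving only $W^{\gamma'}f$ and couplings to $W^{\gamma''}v$ with $|\gamma''|<|\gamma|$. This is not correct: among the commutator terms at level $|\gamma|=k$ there are couplings to derivatives $W^{\gamma''}v$ with $|\gamma''|=k$, whose coefficients decay only like $\rho$ (or $\rho^\beta$), rather than vanish. For instance, commuting $R\pa_R$ past a coefficient and then redistributing the derivatives in $\ell\,(\rho R\pa_R)(\rho\pa_\rho)^i$ produces a term of the schematic form $\rho\,\ell\,D_{0,\bop}\bigl(R\pa_R W^{\gamma'}v\bigr)$ with $|\gamma'|=k-1$, i.e.\ acting on a same-level (indeed, in the paper's weak ordering by number of $\rho\pa_\rho$'s, potentially non-lower-triangular) derivative, with an $\cO(\rho)$ coefficient. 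Such diagonal or upper-triangular couplings \emph{cannot} be absorbed by the weighted-sum trick of Remark~\ref{RmkIExTriangle}, which requires strict lower-triangularity; nor can they be absorbed by enlarging $N$, since their size grows with $k$ whereas $N$ must be $k$-independent for the tame estimate. The paper resolves this by using the multiplier weight $\omega(\rho)=e^{\digamma\rho^{2\beta}/2\beta}\rho^N$: the identity $\rho\omega'=(N+\digamma\rho^{2\beta})\omega$ produces an extra coercive bulk term proportional to $\digamma\rho^{2\beta}$, and choosing $\digamma$ large \emph{after} fixing $N$ absorbs precisely the same-level decaying couplings without changing the function space. Your proposal lacks this device (or a substitute), so the induction on $|\gamma|$ does not close as stated.
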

\begin{proof}
  We first prove an energy estimate on the level of $H^1$ (or more precisely weighted $H_{0,\bop}^1$). Higher regularity follows by commuting b-vector fields through the equation~\eqref{EqE2RegPDE} and using an $H^1$-level energy estimate for a \emph{system} of wave equations (each of which involves $L_{h_0,\tilde h}$). To facilitate this second step, we immediately phrase the basic energy estimate for such systems.

  \pfstep{Step 1. Basic energy estimate.} Let $K\in\N$, and suppose that $A_{I J}$, $1\leq I,J\leq K$, is a first order differential operator acting on sections of $\upbeta^*(S^2\,{}^0 T^*M)$ over $\Omega_{\rho_0,R_0}$ which is of the form $A_{I J} = A_{0,I J} + A_{(0),I J} + \tilde A_{I J}$ where
  \[
    A_{0,I J} \in \Diff_{0,\bop}^1,\quad
    A_{(0),I J} \in R^\alpha\cC_\bop^0(\cI^+_{R_0})\Diff_{0,\bop}^1,\quad
    \tilde A_{I J} \in R^\alpha\rho^\beta\cC_\bop^0(\Omega_{\rho_0,R_0})\Diff_{0,\bop}^1.
  \]
  Assume moreover that $A=(A_{I J})_{1\leq I,J\leq K}$ has an lower triangular structure at $\cI^+$ in that\footnote{In the actual setting arising in Step~2 below, the indexing is by b-derivatives applied to $v$ solving~\eqref{EqE2RegPDE}. The lower triangular structure will arise from the inclusion of $\cV_{0,\bop}\hra\Vb$, with the treatment of elements of $\Vb$ which are in the range of this inclusion (i.e.\ $\rho\pa_\rho$) different from those which are not $(R\pa_R$, spherical derivatives). Cf.\ also the discussion of higher regularity in~\S\ref{SsIEx}.}
  \[
    I\leq J \implies A_{0,I J} \in \rho\Diff_{0,\bop}^1,\quad A_{(0),I J}=0.
  \]
  For the system
  \begin{equation}
  \label{EqE2LOp}
    \cL := (\delta_{I J}L_{h_0,\tilde h} + A_{I J})_{1\leq I,J\leq K},
  \end{equation}
  we then consider the initial value problem
  \begin{equation}
  \label{EqE2IVP}
    \cL v = f,\qquad (v,\cL_{-\rho\pa_\rho}v)|_{\Sigma_{\rho_0,R_0}}=(v_0,v_1),
  \end{equation}
  where now $f$ and $v_0,v_1$ are $K$-tuples of elements of $R^\alpha\rho^{-N}\Hb^\infty$ and $R^\alpha\Hb^\infty$, respectively. We claim that there exists a constant $N$ \emph{which is independent of $h_0$, $\tilde h$, $K$, $A_{I J}$} so that
  \begin{equation}
  \label{EqE2RegBasicEst}
    \|v\|_{R^\alpha\rho^{-N}H_{0,\bop}^1} \leq C\|(f,v_0,v_1)\|_{D^{0,\alpha,-N}}.
  \end{equation}
  (The constant $C$ is allowed to depend on $K,A_{I J}$.)

  \pfsubstep{(1.1)}{Energy estimate for the scalar wave equation.} Note that $\frac{\dd\rho}{\rho}$ is timelike with respect to $g=g_\sfb+\chi h_0+\tilde h$ on $\Omega_{\rho_0,R_0}$; this follows from the corresponding property for $g_\sfb$ in~\eqref{Eq0bStructTime} since, by Sobolev embedding, $h_0,\tilde h$ are small in $L^\infty$ as sections of $\upbeta^*(S^2\,{}^0 T^*M)$ (which implies that $(\chi h_0+\tilde h)(\frac{\dd\rho}{\rho},\frac{\dd\rho}{\dd\rho})$ is small in $L^\infty$). This timelike nature implies that a weight $\rho^N$ can be used to give a positive bulk term in an energy estimate. In order to ensure that the value of $N$ required to get this positivity does not depend on the terms $A_{0,I J}$ and $\tilde A_{I J}$ of $\cL$ (which are lower order not just in the differential order, but importantly also in the sense of decay at $\rho=0$), we employ an additional weight $e^{\digamma\rho^{2\beta}/2\beta}$. For any fixed value of $\digamma$, this additional weight is smooth and bounded away from $0$ and $\infty$, but choosing $\digamma\gg 1$ allows us to give less weight to energy densities close to $\rho=0$ than near $\rho=\rho_0$, and thus gain some additional positivity for the bulk term away from $\cI^+$ (see~\eqref{EqE2rhodomega} below); moreover, this weight allows us to absorb lower order terms (in the sense of decay) to wave operators (see e.g.\ the discussion of \eqref{EqE2Lot} below). Let thus
  \[
    V_0 := -R^{-2\alpha}\rho\pa_\rho,\qquad V := \omega^2 V_0,\ \ \omega=\omega(\rho):=e^{\digamma\rho^{2\beta}/2\beta}\rho^N,
  \]
  acting component-wise both in the index $I$ and in the trivialization of $\upbeta^*(S^2\,{}^0 T^*M)$ induced by $e^0=\frac{\dd\tau}{\tau}$, $e^i=\frac{\dd x^i}{\tau}$ ($i=1,2,3$).

  We first prove an estimate for the scalar wave operator $\Box_g$; so consider $\Box_g v=f$, with initial data $(v_0,v_1)$ for $v$. The stress-energy-momentum tensor $T=T[v]$ of $v$ is $T_{\mu\nu}=(e_\mu v)(e_\nu v)-\frac12 g_{\mu\nu}g^{\kappa\lambda}(e_\kappa v)(e_\lambda v)$ where we recall $e_0=\tau\pa_\tau$, $e_i=\tau\pa_{x^i}$. The $J$-current associated with $v$ and $V$ is
  \[
    {}^{(V)}J = T(V,\cdot),\qquad {\rm div}_g({}^{(V)}J) = -(\Box_g v)V v + {}^{(V)}K,\quad {}^{(V)}K = T\cdot\cL_V g,
  \]
  with `$\cdot$' denoting tensor contraction (using $g$). For $\rho_1\in(0,\rho_0)$, define the domain
  \[
    \Omega_{\rho_0,R_0}^{\rho_1} := \Omega_{\rho_0,R_0} \cap \{\rho\geq\rho_1\},
  \]
  with boundary hypersurfaces $\Sigma_{\rho_0,R_0}=\{\rho=\rho_0\}$ and $\Sigma_{\rho_0,R_0}^{\rho_1}=\{\rho=\rho_1\}\cup\{(2-R)\rho=(2-R_0)\rho_0\}$. Then
  \begin{align*}
    &\int_{\Sigma_{\rho_0,R_0}^{\rho_1}} \la{}^{(V)}J,\nu\ra\,\dd\sigma + \int_{\Omega_{\rho_0,R_0}^{\rho_1}} {}^{(V)}K\,\dd g \\
    &\hspace{8em} = \int_{\Sigma_{\rho_0,R_0}} \la {}^{(V)}J,\nu\ra\,\dd\sigma + \int_{\Omega_{\rho_0,R_0}^{\rho_1}} (\Box_g v)V v\,\dd g
  \end{align*}
  where $\nu$ denotes the future unit normal at the respective boundary hypersurface. Since $R^{2\alpha}V_0$ is future timelike uniformly down to $R=0$, the integral over $\Sigma_{\rho_0,R_0}$ is bounded in absolute value by $\|v_0\|_{R^\alpha\Hb^1}^2+\|v_1\|_{R^\alpha L^2}^2$. The integral over $\Sigma_{\rho_0,R_0}^{\rho_1}$ is non-negative, and will be dropped in the estimate below. Now,
  \[
    {}^{(V)}K = {}^{(\omega^2 V_0)}K = \omega^2\cdot{}^{(V_0)}K + 2\omega T(\nabla\omega,V_0),
  \]
  where we further compute
  \[
    T(\nabla\omega,V_0) = R^{-2\alpha}\rho\omega'(\rho) T\Bigl(\Bigl(\frac{\dd\rho}{\rho}\Bigr)^\sharp, -\rho\pa_\rho\Bigr).
  \]
  Since $\frac{\dd\rho}{\rho}$ and $-\rho\pa_\rho$ are are uniformly (future) timelike in $\Omega_{\rho_0,R_0}$, there exists a constant $c_0>0$ so that $T((\frac{\dd\rho}{\rho})^\sharp,-\rho\pa_\rho)\geq c_0\sum_{\mu=0}^3 (e_\mu v)^2$. Using the simple upper bound $|{}^{(V_0)}K|\leq C_0 R^{-2\alpha}\sum_{\mu=0}^3 (e_\mu v)^2$, we thus obtain (upon taking $\rho_1\to 0$) the energy estimate
  \begin{align*}
    &\int_{\Omega_{\rho_0,R_0}} R^{-2\alpha} ( c_0\rho\omega' - C_0\omega ) \omega \sum_{\mu=0}^3 (e_\mu v)^2\,\dd g \\
    &\hspace{8em} \leq C\bigl( \|v_0\|_{R^\alpha\Hb^1}^2 + \|v_1\|_{R^\alpha L^2}^2 \bigr) + \int_{\Omega_{\rho_0,R_0}} R^{-2\alpha}\omega^2 |\Box_g v| |\rho\pa_\rho v|\,\dd g
  \end{align*}
  (The choice $\omega=\rho^N$ where $N>\frac{2 C_0}{c_0}$ would ensure that the term in parentheses on the left is bounded from below by $C_0\omega$, and thus one immediately gets an $\dot H^1$ type estimate.) We compute
  \begin{equation}
  \label{EqE2rhodomega}
    \rho\omega' = (N + \digamma\rho^{2\beta})\omega.
  \end{equation}

  We control $v$ itself by integrating (i.e.\ via a version of the Hardy inequality which is uniform in $\digamma$); to wit, for $W=\omega^2 W_0$ where $W_0=R^{-2\alpha}\rho\pa_\rho$, we have
  \begin{align*}
    {\rm div}_g W&=\omega^2\,{\rm div}_g W_0+2\omega g(\nabla\omega,W_0) \\
      &=\omega^2\,{\rm div}_g W_0 + 2\omega \rho\omega'(\rho) R^{-2\alpha} = \omega^2\bigl({\rm div}_g W_0 + 2(N+\digamma\rho^{2\beta}) R^{-2\alpha}\bigr).
  \end{align*}
  In the identity
  \[
    \int_{\Sigma_{\rho_0,R_0}^{\rho_1}} \la v^2 W,\nu\ra\,\dd\sigma + \int_{\Omega_{\rho_0,R_0}^{\rho_1}} {\rm div}_g(v^2 W)\,\dd g = \int_{\Sigma_{\rho_0,R_0}} \la v^2 W,\nu\ra\,\dd\sigma,
  \]
  we then note that $\la W,\nu\ra$ is bounded from above and below by a positive multiple of $R^{-2\alpha}\rho^{2 N}$ (and it is positive on $\Sigma_{\rho_0,R_0}^{\rho_1}$), while ${\rm div}_g W\geq (\frac32 N+\digamma\rho^{2\beta})R^{-2\alpha}\omega^2$ for sufficiently large $N$ (independently of $\digamma$); this uses that $|{\rm div}_g W_0|\lesssim R^{-2\alpha}$. Since ${\rm div}_g(v^2 W)=v^2\,{\rm div}_g W+2 v W v$, this gives (for a larger constant $C_0$)
  \begin{equation}
  \label{EqE2ScalarEst}
  \begin{split}
    &\int_{\Omega_{\rho_0,R_0}} R^{-2\alpha}\biggl( (N+\digamma\rho^{2\beta})\omega^2 v^2 + (c_0\rho\omega'-C_0\omega)\omega \sum_{\mu=0}^3 |e_\mu v|^2\biggr)\,\dd g \\
    &\hspace{8em} \leq C_{N,\digamma}\bigl( \|v_0\|_{R^\alpha\Hb^1}^2 + \|v_1\|_{R^\alpha L^2}^2 \bigr) + \int_{\Omega_{\rho_0,R_0}} R^{-2\alpha}\omega^2 |\Box_g v| |\rho\pa_\rho v|\,\dd g.
  \end{split}
  \end{equation}
  We leave this estimate as it is, but point out that choosing $N$ large enough and applying Cauchy--Schwartz to the final term, one would obtain an estimate of the form $\|v\|_{R^\alpha\rho^{-N}H_{0,\bop}^1(\Omega_{\rho_0,R_0})}\leq C\|\Box_g v\|_{R^\alpha\rho^{-N}L^2(\Omega_{\rho_0,R_0})}$.

  \pfsubstep{(1.2)}{The case that $A_{0,I J}\in\rho\Diff_{0,\bop}^1$, $A_{(0),I J}=0$ for all $I,J$.} Write
  \begin{align*}
    &L_{h_0,\tilde h} = \Box_g \otimes \Id_{\R^{10}} + Q,\qquad Q = Q_0 + Q_{(0)} + \tilde Q, \\
    &\qquad Q_0 \in \Diff_{0,\bop}^1,\quad Q_{(0)} \in R^\alpha\cC^0(\cI^+_{R_0})\Diff_{0,\bop}^1,\quad \tilde Q\in R^\alpha\rho^\beta\cC^0(\Omega_{\rho_0,R_0})\Diff_{0,\bop}^1,
  \end{align*}
  with $10$ being the rank of $\upbeta^*(S^2\,{}^0 T^*M)$; here $Q_0$ is independent of $h_0,\tilde h$, while $Q_{(0)}$ and $\tilde Q$ have small coefficients in view of~\eqref{EqE2h}. Applying~\eqref{EqE2ScalarEst} to a $10$-component vector $v$, the replacement of $\Box_g v$ on the right by $L_{h_0,\tilde h}$ creates error terms: the term arising from $Q_0$ is bounded by $C_{Q_0}\int_{\Omega_{\rho_0,R_0}} R^{-2\alpha}\omega^2(v^2+\sum_{\mu=0}(e_\mu v)^2)\,\dd g$, while the terms arising from $Q_{(0)}$ and $\tilde Q$ are bounded by the same expression (in fact with a small constant in view of~\eqref{EqE2h}). We put these terms on the left hand side and estimate $\int_{\Omega_{\rho_0,R_0}} R^{-2\alpha}\omega^2|L_{h_0,\tilde h}v||e_0 v|\,\dd g$ using Cauchy--Schwartz; this way we obtain the estimate~\eqref{EqE2ScalarEst} for $L_{h_0,\tilde h}$ in place of $\Box_g$, with a new constant $C_0$ which can be taken to be independent of $h_0,\tilde h$. Using~\eqref{EqE2rhodomega}, we have
  \[
    c_0\rho\omega'-C_0\omega=(N c_0-C_0 + \digamma\rho^{2\beta})\omega,
  \]
  so fixing $N$ with, say,
  \[
    N c_0-C_0\geq 1,
  \]
  we obtain, with $|\pa^{\leq 1}v|^2 := v^2+\sum_{\mu=0}^3 (e_\mu v)^2$,
  \begin{equation}
  \label{EqE2LEst}
  \begin{split}
    &\int_{\Omega_{\rho_0,R_0}} R^{-2\alpha}\omega^2(1 + \digamma\rho^{2\beta}) |\pa^{\leq 1}v|^2 \,\dd g \\
    &\qquad \leq C\|R^{-\alpha}\omega L_{h_0,\tilde h}v\|_{L^2}^2 + C_{N,\digamma}\bigl(\|v_0\|_{R^\alpha\Hb^1}^2 + \|v_1\|_{R^\alpha L^2}^2\bigr).
  \end{split}
  \end{equation}

  The estimate~\eqref{EqE2ScalarEst} holds also for $\cL=\Box_g\otimes\Id_{\R^{10 K}}$ (with $10 K$ being the rank of the direct sum of $K$ copies of $\upbeta^*(S^2\,{}^0 T^*M)$). Under the present assumptions on the $A_{I J}$, we can write the operator~\eqref{EqE2LOp} in the form
  \begin{align*}
    &\cL = \Box_g\otimes\Id_{\R^{10 K}} + Q\otimes\Id_{\R^K} + \cA, \\
    &\qquad \cA=(A_{I J})_{1\leq I,J\leq K}\in \rho\Diff_{0,\bop}^1 + R^\alpha\rho^\beta\cC^0(\Omega_{\rho_0,R_0})\Diff_{0,\bop}^1.
  \end{align*}
  Now, the estimate~\eqref{EqE2LEst} holds, with the same constants, also for the operator $\cL$ when $\cA=0$. The contribution of $\cA$ can be estimated by
  \begin{equation}
  \label{EqE2Lot}
    C\|R^{-\alpha}\omega\cA v\|_{L^2}^2 \leq C_\cA \int_{\Omega_{\rho_0,R_0}} R^{-2\alpha}\rho^{2\beta}\omega^2 |\pa^{\leq 1}v|^2\,\dd g.
  \end{equation}
  For sufficiently large $\digamma$, this can be absorbed into the left hand side of~\eqref{EqE2LEst}.

  \pfsubstep{(1.3)}{The general lower triangular case.} Writing $v=(v^1,\ldots,v^K)$, the estimate~\eqref{EqE2LEst} can be applied to each $v^I$ separately. The equation for $v^I$ reads $L_{h_0,\tilde h}v^I=f^I-\sum_{J=1}^K A_{I J}v^J$. Splitting the sum into $\sum_{J<I} A_{I J}+\sum_{J\geq I} A_{I J}$, we thus get
  \begin{align*}
    &\int_{\Omega_{\rho_0,R_0}} R^{-2\alpha}\omega^2(1+\digamma\rho^{2\beta}) |\pa^{\leq 1}v^I|^2\,\dd g \\
    &\quad\qquad \leq C_K\biggl( \|R^{-\alpha}\omega f^I\|_{L^2}^2 + C_\cA\sum_{J<I} \int_{\Omega_{\rho_0,R_0}} R^{-2\alpha}\omega^2 |\pa^{\leq 1}v^J|^2\,\dd g \\
    &\quad\qquad \hspace{9.21em} + C_\cA\sum_{J\geq I} \int_{\Omega_{\rho_0,R_0}} R^{-2\alpha}\rho^{2\beta}\omega^2|\pa^{\leq 1}v^J|^2\,\dd g\biggr).
  \end{align*}
  Calling this estimate ($*_I$), we then consider the sum of estimates $\sum_{I=1}^K \eps^I(*_I)$ with $\eps>0$ to be determined. The left hand side of the resulting estimate controls a weighted norm of $\sum_{I=1}^K \eps^I|\pa^{\leq 1}v^I|^2$, while the second term on the right hand side is bounded by $C_K C_\cA \sum_{I=2}^K\sum_{J=1}^{I-1} \eps^I|\pa^{\leq 1}v^J|^2\leq K C_K C_\cA \eps \sum_{J=1}^K \eps^J|\pa^{\leq 1}v^J|^2$; this can be absorbed into the left hand side when $K C_K C_\cA\eps<\frac12$. Having thus fixed $\eps$, we can then argue as in Step~(1.2) in order to absorb also the last term on the right (arising from the sums over $J\geq I$) into the left hand side upon choosing $\digamma$ sufficiently large. This completes the proof of the estimate~\eqref{EqE2RegBasicEst}.

  \pfstep{Step 2. Higher regularity.} Trivializing $\upbeta^*(S^2\,{}^0 T^*M)$ using the frame $e^\mu$, we have
  \begin{align}
    &L_{h_0,\tilde h} = \sum_{j+|\gamma|\leq 2} \ell_{j\gamma} (\tau\pa_\tau)^j (\tau\pa_x)^\gamma,\qquad \ell_{j\gamma}=\ell_{0,j\gamma}+\ell_{(0),j\gamma}+\tilde\ell_{j\gamma}, \nonumber\\
  \label{EqE2LExpr}
    &\qquad \hspace{4em} \ell_{0,j\gamma}\in\CI(\Omega_{\rho_0,R_0}),\quad \ell_{(0),j\gamma}\in R^\alpha\Hb^\infty(\cI^+_{R_0}),\quad\tilde\ell_{j\gamma}\in R^\alpha\rho^\beta\Hb^\infty(\Omega_{\rho_0,R_0}),
  \end{align}
  where the coefficients are $10\times 10$ matrices, with $\ell_{(0),j\gamma}$ and $\tilde\ell_{j\gamma}$ satisfying tame estimates in terms of $h_0,\tilde h$; this is the content of~\eqref{EqE1StructTame}. It is more transparent to pass to the coordinates $R=|x|$, $\rho=\frac{\tau}{|x|}$, and $\omega=\frac{x}{|x|}$. We thus write
  \[
    L_{h_0,\tilde h} = \sum_{j+|\gamma|+i\leq 2} \ell_{j\gamma i}(\rho,R,\omega)(\rho R\pa_R)^j(\rho\sV)^\gamma(\rho\pa_\rho)^i,
  \]
  where
  \begin{equation}
  \label{EqE2ellSplit}
    \ell_{j\gamma i}=\ell_{j\gamma i}(\rho,R,\omega)=\ell_{0,j\gamma i}(\rho,R,\omega)+\ell_{(0),j\gamma i}(R,\omega)+\tilde\ell_{j\gamma i}(\rho,R,\omega)
  \end{equation}
  analogously to~\eqref{EqE2LExpr}. We study the initial value problem~\eqref{EqE2RegPDE}.

  \pfsubstep{(2.1)}{Warm-up: gaining 1 b-derivative.} Step~1 gives $\|v\|_{R^\alpha\rho^{-N}H_{0,\bop}^1}\leq C\|(f,v_0,v_1)\|_{D^{0,\alpha,-N}}$. We claim that for all $k\in\N$,
  \begin{equation}
  \label{EqE2TameLo}
  \begin{split}
    &\|v\|_{R^\alpha\rho^{-N}H_{0,\bop;\bop}^{1;k}} := \|v\|_{R^\alpha\rho^{-N}\Hb^k} + \|\tau\pa_\tau v\|_{R^\alpha\rho^{-N}\Hb^k} + \|\tau\pa_x v\|_{R^\alpha\rho^{-N}\Hb^k} \\
    &\qquad \leq C_k\bigl(\|h_0\|_{R^\alpha\Hb^{k+2+2 d_3}},\|\tilde h\|_{R^\alpha\rho^\beta\Hb^{k+2+2 d_4}}\bigr) \bigl(\|(f,v_0,v_1)\|_{D^{k,\alpha,-N}} + \|v\|_{R^\alpha\rho^{-N}H_{0,\bop;\bop}^{1;k-1}}\bigr),
  \end{split}
  \end{equation}
  which gives
  \begin{equation}
  \label{EqE2TameLoFin}
    \|v\|_{R^\alpha\rho^{-N}H_{0,\bop;\bop}^{1;k}} \leq C_k\bigl(\|h_0\|_{R^\alpha\Hb^{k+2+2 d_3}},\|\tilde h\|_{R^\alpha\rho^\beta\Hb^{k+2+2 d_4}}\bigr) \|(f,v_0,v_1)\|_{D^{k,\alpha,-N}}.
  \end{equation}
  We shall prove this for $k=1$ to illustrate the structural properties of (0,b)-differential operators and their interaction with b-regularity. (We do not use the estimate~\eqref{EqE2TameLo} later on, and thus leave the discussion of $k\geq 2$ to the interested reader. Only the tame estimate, proved in Step~2.2 below, will be used.)

  To wit, we consider the equations satisfied by b-derivatives of $v$. The derivatives along $\rho\pa_\rho$ and $R\pa_R$, $V_a$ ($a=1,2,3$) play different roles (cf.\ Remark~\ref{RmkIExTriangle}, with $\tau\pa_\tau$ and $\pa_x$ playing the roles of $\rho\pa_\rho$ and $R\pa_R,V_a$). Thus,
  \begin{subequations}
  \begin{align}
  \label{EqE2CommRho}
  \begin{split}
    L_{h_0,\tilde h}(\rho\pa_\rho v) &= \rho\pa_\rho f + [ L_{h_0,\tilde h}, \rho\pa_\rho ]v \\
      &= \rho\pa_\rho f - \sum_{j+|\gamma|+i\leq 2} (\rho\pa_\rho\ell_{j\gamma i}) (\rho R\pa_R)^j (\rho\sV)^\gamma (\rho\pa_\rho)^i v \\
      &\quad\hspace{2.4em} - \sum_{j+|\gamma|+i\leq 2} \ell_{j\gamma i} (j+|\gamma|)(\rho R\pa_R)^j(\rho\sV)^\gamma(\rho\pa_\rho)^i  v,
  \end{split} \\
  \label{EqE2CommR}
    L_{h_0,\tilde h}(R\pa_R v) &= R\pa_R f - \sum_{j+|\gamma|+i\leq 2} (R\pa_R\ell_{j\gamma i})(\rho R\pa_R)^j(\rho\sV)^\gamma(\rho\pa_\rho)^i v, \\
  \label{EqE2CommV}
  \begin{split}
    L_{h_0,\tilde h}(V_a v) &= V_a f - \sum_{j+|\gamma|+i\leq 2} (V_a \ell_{j\gamma i})(\rho R\pa_R)^j(\rho\sV)^\gamma(\rho\pa_\rho)^i v \\
      &\quad\hspace{2em} - \sum_{j+|\gamma|+i\leq 2} \ell_{j\gamma i} (\rho R\pa_R)^j[V_a,(\rho\sV)^\gamma](\rho\pa_\rho)^i v.
  \end{split}
  \end{align}
  \end{subequations}
  Those terms on the right hand sides in which $j+|\gamma|+i\leq 1$ can be estimated in the space $R^{-\alpha}\rho^{-N}L^2$ by $\sum_{j,\gamma,i}\|\ell_{j\gamma i}\|_{\cC_\bop^1}\|v\|_{R^{-\alpha}\rho^{-N}H_{0,\bop;\bop}^{1;0}}$. By Sobolev embedding (Lemma~\ref{Lemma0bSob}) and using~\eqref{EqE1StructTame} (with the values $k-2=1+d_3$, resp.\ $k-2=1+d_4$), this is bounded by $C(1+\|h_0\|_{R^\alpha\Hb^{3+2 d_3}}+\|\tilde h\|_{R^\alpha\rho^\beta\Hb^{3+2 d_4}})\|v\|_{R^{-\alpha}\rho^{-N}H_{0,\bop;\bop}^{1;0}}$.

  Consider thus the terms with $j+|\gamma|+i=2$. In the first sum of~\eqref{EqE2CommRho}, note that $\rho\pa_\rho$ annihilates the leading order terms of $\ell_{j\gamma i}$ at $\rho=0$, so $\rho\pa_\rho\ell_{j\gamma i}\in\rho\CI+R^\alpha\rho^\beta\Hb^\infty$, with the $R^\alpha\rho^\beta\cC_0^\bop$-norm of the non-smooth contribution $\rho\pa_\rho\tilde\ell_{j\gamma i}$ (cf.\ \eqref{EqE2ellSplit}) bounded by $\|\tilde\ell_{j\gamma i}\|_{R^\alpha\rho^\beta\cC_\bop^1}\leq C\|\tilde\ell_{j\gamma i}\|_{R^\alpha\rho^\beta\Hb^{1+d_4}}\leq C'(\|h_0\|_{R^\alpha\Hb^{3+2 d_3}}+\|\tilde h\|_{R^\alpha\rho^\beta\Hb^{3+2 d_4}})$. When $i=1$ or $2$, we can thus regard $(\rho\pa_\rho\ell_{j\gamma i})(\rho R\pa_R)^j(\rho\sV)^\gamma(\rho\pa_\rho)^{i-1}$ as a contribution to $A_{1 1}$ (i.e.\ this is applied to $\rho\pa_\rho v$) in the notation of~\eqref{EqE2LOp} (where $K=1+1+3$). When $i=0$ on the other hand, and $j\geq 1$, say, then the term $(\rho\pa_\rho\ell_{j\gamma i}) (\rho R\pa_R)^{j-1}(\rho\sV)^\gamma\circ\rho R\pa_R$ gives rise to a contribution $\rho(\rho\pa_\rho\ell_{j\gamma i})(\rho R\pa_R)^{j-1}(\rho\sV)^\gamma$ to $A_{1 2}$ (i.e.\ this is applied to $R\pa_R v$), and more precisely to $\tilde A_{1 2}$ in view of the vanishing factor of $\rho$; when $j=0$ and $|\gamma|=2$, we instead get an analogous contribution to $A_{1 a}$ (in fact, to $\tilde A_{1 a}$) for $a=3,4,5$.

  Turning to the second sum of~\eqref{EqE2CommRho}, note that now at least one of $j,|\gamma|$ is nonzero. Say $j\geq 1$ (the case $|\gamma|\geq 1$ being completely analogous); then we can write
  \begin{equation}
  \label{EqE2CommjTerm}
    (\rho R\pa_R)^j(\rho\sV)^\gamma(\rho\pa_\rho)^i = \rho(\rho R\pa_R)^{j-1}(\rho\sV)^\gamma(\rho\pa_\rho)^i\circ R\pa_R,
  \end{equation}
  and thus regard $\rho\ell_{j\gamma i}(j+|\gamma|)(\rho R\pa_R)^{j-1}(\rho\sV)^\gamma(\rho\pa_\rho)^i$ as a further contribution to $A_{1 2}$ (i.e.\ this acts on $R\pa_R v$), more precisely to $\tilde A_{1 2}$---note again the presence of the factor $\rho$ here.

  Turning to~\eqref{EqE2CommR}, we can regard those terms for which $i\geq 1$ as contributions to $A_{2 1}$ (i.e.\ acting on $\rho\pa_\rho v$) by writing them as $(R\pa_R\ell_{j\gamma i})(\rho R\pa_R)^j(\rho\sV)^\gamma(\rho\pa_\rho)^{i-1}\circ\rho\pa_\rho$. (The coefficient $R\pa_R\ell_{j\gamma i}$ is of class $\CI+R^\alpha\Hb^\infty+R^\alpha\rho^\beta\Hb^\infty$; it does not need to vanish at $\rho=0$, which is why we allowed for such lower-triangular terms in Step~1.) When $i=0$ and thus one of $j,|\gamma|$ is nonzero, say $j\geq 1$, we again write~\eqref{EqE2CommjTerm} to obtain a contribution to $A_{2 2}$ (with coefficient vanishing at $\rho=0$). The equation~\eqref{EqE2CommV} is treated completely analogously (using that\footnote{It is not important that the $V_a$ are rotation vector fields here; it suffices that, by virtue of them spanning $\cV(\Sph^2)$ over $\CI(\Sph^2)$, we can write $[V_a,V_b]=\sum_c f_{a b}^c V_c$ for some $f_{a b}^c\in\CI(\Sph^2)$.} $[V_a,V_b]=\eps_{a b c}V_c$).

  Altogether, we thus obtain a $5\times 5$ system of the form~\eqref{EqE2LOp}--\eqref{EqE2IVP}, with $v$ and $f$ replaced by $v':=(\rho\pa_\rho v,R\pa_R v,V_1 v,V_2 v,V_3 v)$ and $f'=(f_I)_{1\leq I\leq 5}$, respectively, where $f_1,f_2,f_{a+2}$ ($a=1,2,3$) is given by the right hand sides of~\eqref{EqE2CommRho}--\eqref{EqE2CommV} without the terms with $j+|\gamma|+i=2$. The initial datum $v'|_{\Sigma_{\rho_0,R_0}}$ can be computed in terms of $v_0,R\pa_R v_0,V_a v_0,v_1$. For the initial datum $(\rho\pa_\rho v')|_{\Sigma_{\rho_0,R_0}}$, we only need to determine $(\rho\pa_\rho)^2 v|_{\rho=\rho_0}$; but
  \begin{equation}
  \label{EqE2CauchyData}
    (\rho\pa_\rho)^2 v = \frac{1}{\ell_{2 0 0}}\biggl( L_{h_0,\tilde h}v - \sum_{\genfrac{}{}{0pt}{}{j+|\gamma|+i\leq 2}{i\leq 1}} \ell_{j\gamma i} (\rho R\pa_R)^j(\rho\sV)^\gamma(\rho\pa_\rho)^i v \biggr).
  \end{equation}
  Note that $\ell_{2 0 0}=-g^{-1}(\frac{\dd\rho}{\rho},\frac{\dd\rho}{\rho})\equiv -g_\sfb^{-1}(\frac{\dd\rho}{\rho},\frac{\dd\rho}{\rho})\bmod R^\alpha\Hb^\infty(\cI^+_{R_0})+R^\alpha\rho^\beta\Hb^\infty(\Omega_{\rho_0,R_0})$ is equal to $-g_\sfb^{-1}(\frac{\dd\rho}{\rho},\frac{\dd\rho}{\rho})>0$ (cf.\ Definition~\ref{Def0bMfd}) plus a small correction (by~\eqref{EqE2h}), and thus bounded away from $0$. Furthermore, the restriction of $L_{h_0,\tilde h}v=f\in R^\alpha\rho^{-N}\Hb^1$ to $\rho=\rho_0$ lies in $R^\alpha L^2(\Sigma_{\rho_0,R_0})$ by Lemma~\ref{Lemma0bSob}\eqref{It0bSobTrace}. The estimate~\eqref{EqE2RegBasicEst}, with $C$ depending on the $A_{I J}$, now gives~\eqref{EqE2TameLo} for $k=1$.

  \pfsubstep{(2.2)}{Higher b-regularity with tame estimates.} We claim that
  \begin{equation}
  \label{EqE2TameHi}
  \begin{split}
    &\|v\|_{R^\alpha\rho^{-N}H_{0,\bop;\bop}^{1;k}} \\
    &\qquad \leq C_k\Bigl( \|(f,v_0,v_1)\|_{D^{k,\alpha,-N}} \\
    &\qquad \hspace{4em} + \bigl(\|h_0\|_{R^\alpha\Hb^{k+2+2 d_3}} + \|\tilde h\|_{R^\alpha\rho^\beta\Hb^{k+2+2 d_4}}\bigr) \|v\|_{R^\alpha\rho^{-N}H_{0,\bop}^1} + \|v\|_{R^\alpha\rho^{-N}H_{0,\bop;\bop}^{1;k-1}} \Bigr).
  \end{split}
  \end{equation}
  For the proof, we commute the equation $L_{h_0,\tilde h}v=f$ with
  \[
    \sW^\zeta := (\rho\pa_\rho, R\pa_R, (V_1, V_2, V_3))^\zeta,\qquad \zeta=(m,n,\sigma),\ \ |\zeta|:=m+n+|\sigma|=k.
  \]
  On the set $K:=\{\zeta\in\N_0^5 \colon |\zeta|=k\}$, we introduce a weak ordering by declaring $\zeta=(m,n,\sigma)\leq\zeta'=(m',n',\sigma')$ if and only if $m\geq m'$. For $\zeta=(m,n,\sigma)$ with $|\zeta|=k$, consider then
  \begin{align}
    L_{h_0,\tilde h}(\sW^\zeta v) &= \sW^\zeta f - [\sW^\zeta,L_{h_0,\tilde h}]v = \sW^\zeta f - \sum_{j+|\gamma|+i\leq 2} [\sW^\zeta,\ell_{j\gamma i}(\rho R\pa_R)^j(\rho\sV)^\gamma(\rho\pa_\rho)^i]v \nonumber\\
  \label{EqE2CommEq}
  \begin{split}
      &= \sW^\zeta f - \sum_{j+|\gamma|+i\leq 2}[\sW^\zeta,\ell_{j\gamma i}] (\rho R\pa_R)^j(\rho\sV)^\gamma(\rho\pa_\rho)^i v \\
      &\quad \hspace{2.5em} - \sum_{j+|\gamma|+i\leq 2} \ell_{j\gamma i} [ \sW^\zeta, (\rho R\pa_R)^j(\rho\sV)^\gamma(\rho\pa_\rho)^i ]v.
  \end{split}
  \end{align}

  \pfsubstep{(2.2.1)}{Lower order terms I.} On the right hand side of~\eqref{EqE2CommEq}, consider first the terms with $i+|\gamma|+j\leq 1$. We shall estimate these in $R^\alpha\rho^{-N}L^2(\Omega_{\rho_0,R_0})$. By Lemma~\ref{Lemma0bIdeal}, we can write
  \[
    [\sW^\zeta,(\rho R\pa_R)^j(\rho\sV)^\gamma(\rho\pa_\rho)^i]=\sum_{\genfrac{}{}{0pt}{}{j'+|\gamma'|+i'\leq 1}{|\zeta'|\leq k-1}} c^{\zeta,j\gamma i}_{\zeta',j'\gamma'i'}\sW^{\zeta'}(\rho R\pa_R)^{j'}(\rho\sV)^{\gamma'}(\rho\pa_\rho)^{i'}
  \]
  for suitable $c^{\zeta,j\gamma i}_{\zeta',j'\gamma'i'}\in\CI(\breve M)$. Therefore,
  \[
    \|\ell_{j\gamma i}[\sW^\zeta,(\rho R\pa_R)^j(\rho\sV)^\gamma(\rho\pa_\rho)^i]v\|_{R^\alpha\rho^{-N}L^2} \leq C\|\ell_{j\gamma i}\|_{L^\infty} \|v\|_{R^\alpha\rho^{-N}H_{0,\bop;\bop}^{1;k-1}};
  \]
  and $\|\ell_{j\gamma i}\|_{L^\infty}\leq C(1+\|h_0\|_{R^\alpha\Hb^{d_3+2}}+\|\tilde h\|_{R^\alpha\rho^\beta\Hb^{d_4+2}})$ by Sobolev embedding and~\eqref{EqE1StructTame}. Turning to $[\sW^\zeta,\ell_{j\gamma i}]$ and writing $\sW^\zeta=W_1\cdots W_k$ where $W_i\in\{\rho\pa_\rho,R\pa_R,V_1,V_2,V_3\}$, we note that, for any function $\ell$,
  \begin{equation}
  \label{EqE2CommExp}
    [\sW^\zeta,\ell] = \sum_{p=1}^k \sum_{\genfrac{}{}{0pt}{}{I\sqcup J=\{1,\ldots,k\}}{|I|=p}} (W_I\ell)W_J,
  \end{equation}
  where $I=\{i_1,\ldots,i_p\}$ with $1\leq i_1<i_2<\cdots<i_p\leq k$ and $J=\{j_1,\ldots,j_{k-p}\}$ with $1\leq j_1<j_2<\cdots<j_{k-p}\leq k$, and $W_I:=W_{i_1}W_{i_2}\cdots W_{i_p}$. Therefore, schematically writing $D_\bop^p$ for a $p$-fold composition of b-vector fields,
  \begin{align*}
    &\|[\sW^\zeta,\ell_{j\gamma i}](\rho R\pa_R)^j(\rho\sV)^\gamma(\rho\pa_\rho)^i v\|_{R^\alpha\rho^{-N}L^2} \\
    &\qquad \leq C\sum_{p=1}^k \| (D_\bop^p\ell_{j\gamma i}) (D_\bop^{k-p}(\rho R\pa_R)^j(\rho\sV)^\gamma(\rho\pa_\rho)^i v) \|_{L^2}.
  \end{align*}
  We split $\ell_{j\gamma i}$ as in~\eqref{EqE2ellSplit}. The contribution from $\ell_{0,j\gamma i}$ is bounded by $C\|v\|_{R^\alpha\rho^{-N}H_{0,\bop;\bop}^{1;k-1}}$. The contributions from $\ell_{(0),j\gamma i}$ and $\tilde\ell_{j\gamma i}$ can be bounded using Lemma~\ref{Lemma0bSob}\eqref{It0bSobProd} (applied with $a=p-1$, $v_1=D_\bop\ell_{(0),j\gamma i}$, resp.\ $u_1=D_\bop\tilde\ell_{j\gamma i}$ and $b=k-p$, $u_2=(\rho R\pa_R)^j(\rho\sV)^\gamma(\rho\pa_\rho)^i v$) by a constant times
  \begin{align*}
    &\bigl(\|D_\bop\ell_{(0),j\gamma i}\|_{\Hb^{d_3}} + \|D_\bop\tilde\ell_{j\gamma i}\|_{\Hb^{d_4}}\bigr) \|v\|_{R^\alpha\rho^{-N}H_{0,\bop;\bop}^{1;k-1}} \\
    &\qquad + \bigl(\|D_\bop\ell_{(0),j\gamma i}\|_{\Hb^{k-1+d_3}} + \|D_\bop\tilde\ell_{j\gamma i}\|_{\Hb^{k-1+d_4}}\bigr) \|v\|_{R^\alpha\rho^{-N}H_{0,\bop}^1},
  \end{align*}
  which in view of~\eqref{EqE1StructTame} and \eqref{EqE2h} is bounded by a constant times
  \[
    \|v\|_{R^\alpha\rho^{-N}H_{0,\bop;\bop}^{1;k-1}} + \bigl( \|h_0\|_{R^\alpha\Hb^{k+2+2 d_3}}+\|\tilde h\|_{R^\alpha\rho^\beta\Hb^{k+2+2 d_4}}\bigr) \|v\|_{R^\alpha\rho^{-N}H_{0,\bop}^1}.
  \]

  \pfsubstep{(2.2.2)}{Lower order terms II.} We now turn to the terms in the first sum on the right in~\eqref{EqE2CommEq} with $i+|\gamma|+j=2$; we expand $[\sW^\zeta,\ell_{j\gamma i}]$ using~\eqref{EqE2CommExp}. Those terms with $p=|I|\geq 2$ and thus $|J|\leq k-2$ can be estimated, using $\cV_{0,\bop}\subset\Vb$ and writing $D_{0,\bop}$ for a derivative along an element of $\cV_{0,\bop}$, by
  \[
    \|(W_I\ell_{j\gamma i}) W_J (\rho R\pa_R)^j(\rho\sV)^\gamma(\rho\pa_\rho)^i v\|_{R^\alpha\rho^{-N}L^2} \leq C\| (D_\bop^p\ell_{j\gamma i}) (D_\bop^{k-p+1} D_{0,\bop} v)\|_{R^\alpha\rho^{-N}L^2}.
  \]
  Lemma~\ref{Lemma0bSob}\eqref{It0bSobProd} (now with $a=p-2,v_1=D_\bop^2\ell_{(0),j\gamma i}$, resp.\ $u_1=D_\bop^2\tilde\ell_{j\gamma i}$) allows us to estimate this further by a constant times
  \begin{align*}
    &\bigl(\|D_\bop^2\ell_{(0),j\gamma i}\|_{\Hb^{d_3}} + \|D_\bop^2\tilde\ell_{j\gamma i}\|_{\Hb^{d_4}}\bigr) \|v\|_{R^\alpha\rho^{-N}H_{0,\bop;\bop}^{1;k-1}} \\
    &\qquad + \bigl(\|D_\bop^2\ell_{(0),j\gamma i}\|_{\Hb^{k-2+d_3}} + \|D_\bop^2\tilde\ell_{j\gamma i}\|_{\Hb^{k-2+d_4}}\bigr) \|v\|_{R^\alpha\rho^{-N}H_{0,\bop}^1} \\
    &\lesssim \|v\|_{R^\alpha\rho^{-N}H_{0,\bop;\bop}^{1;k-1}} + \bigl( \|h_0\|_{R^\alpha\Hb^{k+2+2 d_3}}+\|\tilde h\|_{R^\alpha\rho^\beta\Hb^{k+2+2 d_4}}\bigr) \|v\|_{R^\alpha\rho^{-N}H_{0,\bop}^1}.
  \end{align*}
  (We use here that $\|D_\bop^2\ell_{(0),j\gamma i}\|_{\Hb^{d_3}}\leq C\|h_0\|_{\Hb^{(d_3+2)+(d_3+2)}}$ by~\eqref{EqE1StructTame}, which is the origin for the assumption~\eqref{EqE2h}; similarly for $\tilde h$.)

  \pfsubstep{(2.2.3)}{Remaining terms; lower triangular structure.} Continuing the study of those terms in the first sum in~\eqref{EqE2CommEq} with $i+|\gamma|+j=2$, and using the notation introduced for~\eqref{EqE2CommExp}, it remains to deal with
  \[
    (W_q\ell_{j\gamma i}) W_1\cdots\wh{W_q}\cdots W_k (\rho R\pa_R)^j(\rho\sV)^\gamma(\rho\pa_\rho)^i v
  \]
  where the hat indicates the omission of a term. By Lemma~\ref{Lemma0bIdeal}, the commutator of $W_1\cdots\wh{W_q}\cdots W_k$ with $(\rho R\pa_R)^j(\rho\sV)^\gamma(\rho\pa_\rho)^i$ is schematically of the form $D_\bop^{k-2}D_{0,\bop}^2$ and thus a fortiori of the form $D_\bop^{k-1}D_{0,\bop}^1$. Therefore, its contribution is bounded by $\|D_\bop\ell_{j\gamma i}\|_{L^\infty}\|v\|_{R^\alpha\rho^{-N}H_{0,\bop;\bop}^{1;k-1}}$. Up to terms with these bounds, we can thus freely rearrange all vector fields. Suppose first that $W_q=\rho\pa_\rho$; then $W_q\ell_{j\gamma i}\in\rho\CI+R^\alpha\rho^\beta\Hb^\infty$, so we can write $(W_q\ell_{j\gamma i})(\rho R\pa_R)^j(\rho\sV)^\gamma(\rho\pa_\rho)^i W_1\cdots\wh{W_q}\cdots W_k v$ as the action of
  \[
    (\rho\pa_\rho\ell_{j\gamma i})D_{0,\bop}\in\rho\Diff_{0,\bop}^1+R^\alpha\rho^\beta\cC_\bop^0\Diff_{0,\bop}^1
  \]
  (which contributes to the appropriate $A_{\zeta\zeta'}$ term, $\zeta'\in K$, in~\eqref{EqE2LOp}) on $D_\bop^k v$. On the other hand, when $W_q=R\pa_R$ (and similarly when $W_q=V_1,V_2,V_3$), we need to distinguish two cases: the first case is that $i=2$, in which case we have the term
  \[
    (W_q\ell_{j\gamma i})\rho\pa_\rho( \rho\pa_\rho W_1\cdots\wh{W_q}\cdots W_k v),
  \]
  which contributes $(W_q\ell_{j\gamma i})\rho\pa_\rho$ to $A_{\zeta\zeta'}$ where $\zeta'=(m+1,n-1,\sigma)<\zeta$ (by which we man that $\zeta\leq\zeta'$ does not hold). (Thus $A_{\zeta\zeta'}$ is a strictly lower triangular term, with coefficients that need not vanish at $\rho=0$.) When $i\leq 1$, then among the two factors in $(\rho R\pa_R)^j(\rho\sV)^\gamma(\rho\pa_\rho)^i$ there is at least one (namely, one of $\rho R\pa_R$ and $\rho\sV$) of the form $\rho D_\bop$, and thus we get a term $(W_q\ell_{j\gamma i})D_{0,\bop}\rho (D_\bop^k v)$, which is again a trivial contribution to $\cA=(A_{\zeta\zeta'})$ due to the factor of $\rho$.

  Finally, consider the terms in the second sum in~\eqref{EqE2CommEq} with $i+|\gamma|+j=2$. Upon expanding the commutator, i.e.\ applying~\eqref{EqE2CommExp} with $W_I\ell$ for $\ell=(\rho R\pa_R)^j(\rho\sV)^\gamma(\rho\pa_\rho)^i$ now meaning $[W_{i_1},[W_{i_2}\cdots[W_{i_p},\ell]\cdots]]$, all terms with $|I|=p\geq 2$ are of the schematic form $D_{0,\bop}^2 D_\bop^{k-p}v$, so a fortiori $D_{0,\bop}D_\bop^{k-1}v$, and can thus be estimated by $\|\ell_{j\gamma i}D_{0,\bop}^1 D_\bop^{k-1}v\|_{R^\alpha\rho^{-N}L^2}\lesssim\|\ell_{j\gamma i}\|_{L^\infty}\|v\|_{H_{0,\bop;\bop}^{1;k-1}}$. It thus suffices to analyze the terms
  \begin{equation}
  \label{EqE2CommFinal}
    [W_q,(\rho R\pa_R)^j(\rho\sV)^\gamma(\rho\pa_\rho)^i] W_1\cdots\wh{W_q}\cdots W_k v.
  \end{equation}
  When $|\gamma|\geq 1$ and $W_q=V_1,V_2,V_3$, the commutator is a sum of terms which are of the form $(\rho R\pa_R)^j(\rho\sV)^{\gamma'}(\rho\pa_\rho)^i$ where $|\gamma'|=|\gamma|$. We then shift one factor of $\sV$ to the right and obtain a term of the form
  \[
    \rho\ell_{j\gamma i}(\rho R\pa_R)^j(\rho\sV)^{\gamma''}(\rho\pa_\rho)^i(V_a W_1\cdots\wh{W_q}\cdots W_k v),\qquad |\gamma''|=|\gamma|-1;
  \]
  thus $\rho\ell_{j\gamma i}(\rho R\pa_R)^j(\rho\sV)^{\gamma''}(\rho\pa_\rho)^i$ is a trivial contribution to the appropriate coefficient $A_{\zeta\zeta'}$. If $W_q=R\pa_R$, the commutator in~\eqref{EqE2CommFinal} vanishes. If $W_q=\rho\pa_\rho$, the commutator is equal to $(j+|\gamma|)(\rho R\pa_R)^j(\rho\sV)^\gamma(\rho\pa_\rho)^i$; for it to be nonzero, we must have $j+|\gamma|\geq 1$. Say $j\geq 1$; then we can shift one factor of $R\pa_R$ to the right and remain with $(j+|\gamma|)\rho\ell_{j\gamma i}(\rho R\pa_R)^{j-1}(\rho\sV)^\gamma(\rho\pa_\rho)^i\circ(R\pa_R W_1\cdots\wh{W_q}\cdots W_k v)$, with the operator on the left again giving a trivial contribution (due to the factor of $\rho$) to the appropriate $A_{\zeta\zeta'}$ (here $\zeta'=(m-1,n+1,\sigma)$)

  Altogether, we have shown that there exist $A_{\zeta\zeta'}=A_{0,\zeta\zeta'}+A_{(0),\zeta\zeta'}+\tilde A_{\zeta\zeta'}$ for $\zeta,\zeta'\in K$ with $A_{0,\zeta\zeta'}\in\Diff_{0,\bop}^1$, $A_{(0),\zeta\zeta'}\in R^\alpha\cC_\bop^0(\cI^+_{R_0})\Diff_{0,\bop}^1$, and $\tilde A_{\zeta\zeta'}\in R^\alpha\rho^\beta\cC_\bop^0(\Omega_{\rho_0,R_0})\Diff_{0,\bop}^1$, with the following properties:
  \begin{itemize}
  \item for $\zeta\leq\zeta'$, we have $A_{0,\zeta\zeta'}\in\rho\Diff_{0,\bop}^1$, $A_{(0),\zeta\zeta'}=0$;
  \item by Sobolev embedding to control $W_q\ell_{j\gamma i}$ in $L^\infty$ spaces, and again using~\eqref{EqE1StructTame},
    \[
      \|A_{(0),\zeta\zeta'}\|_{R^\alpha\cC_\bop^0\Diff_{0,\bop}^1},\ \|\tilde A_{\zeta\zeta'}\|_{R^\alpha\rho^\beta\cC_\bop^0\Diff_{0,\bop}^1} \leq C_k\bigl( \|h_0\|_{R^\alpha\Hb^{2 d_3+4}} + \|\tilde h\|_{R^\alpha\rho^\beta\Hb^{2 d_4+4}} \bigr),
    \]
    which are in turn bounded by~\eqref{EqE2h};
  \item set
    \begin{equation}
    \label{EqE2CommOp}
      \cL:=(\delta_{\zeta\zeta'}L_{h_0,\tilde h}-A_{\zeta\zeta'})_{\zeta,\zeta'\in K},\quad v':=(\sW^\zeta v)_{\zeta\in K},
    \end{equation}
    then $\cL v'=f'$ where $f'$ satisfies the bound
    \begin{align*}
      &\|f'\|_{R^\alpha\rho^{-N}L^2(\Omega_{\rho_0,R_0})} \\
      &\qquad \leq C_k\Bigl( \|f\|_{R^\alpha\rho^{-N}\Hb^k} + \bigl(\|h_0\|_{R^\alpha\Hb^{k+2+2 d_3}} + \|\tilde h\|_{R^\alpha\rho^\beta\Hb^{k+2+2 d_4}}\bigr)\|v\|_{R^\alpha\rho^{-N}H_{0,\bop}^1} \\
      &\qquad \hspace{25em} + \|v\|_{R^\alpha\rho^{-N}H_{0,\bop;\bop}^{1;k-1}}\Bigr).
    \end{align*}
  \end{itemize}

  \pfsubstep{(2.2.4)}{Initial data for the commuted equation.} It remains to control the Cauchy data of $v'$ in~\eqref{EqE2CommOp} at $\Sigma_{\rho_0,R_0}\subset\{\rho=\rho_0\}$. Since $R\pa_R,V_a$ are tangent to $\Sigma_{\rho_0,R_0}$, we only need to prove a tame estimate for $v_{0,p}:=(\rho\pa_\rho)^p v|_{\rho=\rho_0}$ in $R^\alpha\Hb^{k+1-p}(\Sigma_{\rho_0,R_0})$, $p=0,\ldots,k+1$. For $p=0,1$, we simply have $v_{0,p}=v_p$. For $p\geq 2$, we use the spacetime identity~\eqref{EqE2CauchyData}, written as
  \[
    (\rho\pa_\rho)^2 v=\frac{1}{\ell_{2 0 0}} \bigl( f - L_0 v - L_1(\rho\pa_\rho v) \bigr),\qquad L_q := \sum_{j+|\gamma|\leq 2-q} \ell_{j\gamma q}(\rho R\pa_R)^j(\rho\sV)^\gamma,\ \ q=0,1,
  \]
  to deduce that
  \begin{equation}
  \label{EqE2CommData}
    v_{0,p} = \sum_{a=0}^{p-2} \binom{p-2}{a} \Bigl((\rho\pa_\rho)^{p-2-a}\frac{1}{\ell_{2 0 0}}\Bigr) \bigl( (\rho\pa_\rho)^a f - (\rho\pa_\rho)^a L_0 v - (\rho\pa_\rho)^a L_1(\rho\pa_\rho v) \bigr) \Big|_{\Sigma_{\rho_0,R_0}}.
  \end{equation}
  We claim that
  \begin{equation}
  \label{EqE2DataEst}
  \begin{split}
    &\| v_{0,p} \|_{R^\alpha\Hb^{k+1-p}(\Sigma_{\rho_0,R_0})} \\
    &\qquad \leq C_k\Bigl( \|(f,v_0,v_1)\|_{D^{k,\alpha,-N}} \\
    &\qquad \hspace{4em} + \bigl( \|h_0\|_{R^\alpha\Hb^{k+2+2 d_3}}+\|\tilde h\|_{R^\alpha\rho^\beta\Hb^{k+2+2 d_4}}\bigr)\|(f,v_0,v_1)\|_{D^{0,\alpha,-N}} \Bigr).
  \end{split}
  \end{equation}
  We shall only prove this estimate for the term in~\eqref{EqE2CommData} with $a=p-2$, and indeed for $\ell_{2 0 0}:=1$; we leave the simple modifications required to treat the full expression (based on further applications of the tame product estimates of Lemma~\ref{Lemma0bSob}) to the reader. 

  The term in~\eqref{EqE2CommData} involving $f$ is bounded using Lemma~\ref{Lemma0bSob}\eqref{It0bSobTrace} by
  \begin{align*}
    \|(\rho\pa_\rho)^{p-2}f\|_{R^\alpha\Hb^{k+1-p}(\Sigma_{\rho_0,R_0})} &\leq C\|(\rho\pa_\rho)^{p-2}f\|_{R^\alpha\rho^{-N}\Hb^{k+2-p}(\Omega_{\rho_0,R_0})} \\
      &\leq C'\|f\|_{R^\alpha\rho^{-N}\Hb^k(\Omega_{\rho_0,R_0})}.
  \end{align*}
  For the estimate of the $R^\alpha\Hb^{k+1-p}$-norm of the term involving $L_1$ (the term $L_0$ is treated similarly and left to the reader), we only consider derivatives along $(R\pa_R)^{k+1-p}$; derivatives along $(R\pa_R)^q\sV^\gamma$ for $q+|\gamma|\leq k+1-p$ can then be handled in the same fashion with purely notational modifications. We thus need to prove a bound in $R^\alpha L^2(\Sigma_{\rho_0,R_0})$ for $(R\pa_R)^{k+1-p}(\rho\pa_\rho)^{p-2} L_1(\rho\pa_\rho v)$, which is a sum of terms of the form
  \[
    \bigl((R\pa_R)^{q'}(\rho\pa_\rho)^{p'}\ell_{j\gamma 1}\bigr) \cdot (R\pa_R)^{q''}(\rho R\pa_R)^j(\rho\sV)^\gamma (\rho\pa_\rho)^{p''+1}v,
  \]
  where $q'+q''=k+1-p$ and $p'+p''\leq p-2$, and $j+|\gamma|\leq 1$; schematically, this is thus
  \[
    (D_\bop^{q'+p'}\ell_{j\gamma 1}|_{\Sigma_{\rho_0,R_0}})(D_\bop^{q''+1} v_{0,p''+1}).
  \]
  Since $p''+1\leq p-1$, we can iteratively express $v_{0,p''+1}$ using the formula~\eqref{EqE2CommData}, and proceed in this fashion until we obtain an expression involving only $f$, $v_{0,0}=v_0$, $v_{0,1}=v_1$, and the coefficients of $L_{h_0,\tilde h}$. We discuss here only the case $p''=0$ and $p'=p-2$, in which case we can use Lemma~\ref{Lemma0bSob} (specifically, the estimate~\eqref{Eq0bSobProd2}, which also applies on $\cU=\Sigma_{\rho_0,R_0}$) and Lemma~\ref{Lemma0bSob}\eqref{It0bSobTrace} (which gives an estimate $\|\tilde\ell_{j\gamma 1}\|_{R^\alpha\Hb^m(\Sigma_{\rho_0,R_0})}\leq C\|\tilde\ell_{j\gamma 1}\|_{R^\alpha\rho^\beta\Hb^{m+1}(\Omega_{\rho_0,R_0})}$) to bound
  \begin{align*}
    &\|(D_\bop^{q'+p-2}\ell_{j\gamma 1}|_{\Sigma_{\rho_0,R_0}}) (D_\bop^{q''+1}v_{0,1}) \|_{R^\alpha L^2} \\
    &\qquad \lesssim \bigl(1+\|\ell_{(0),j\gamma 1}\|_{R^\alpha\Hb^{d_3}}+\|\tilde\ell_{j\gamma 1}\|_{R^\alpha\rho^\beta\Hb^{d_4}}\bigr) \|v_{0,1}\|_{R^\alpha\Hb^k} \\
    &\qquad \qquad + \bigl(1+\|\ell_{(0),j\gamma 1}\|_{R^\alpha\Hb^{k+d_3}}+\|\tilde\ell_{j\gamma 1}\|_{R^\alpha\rho^\beta\Hb^{k+d_4}}\bigr) \|v_{0,1}\|_{R^\alpha L^2} \\
    &\qquad \lesssim \|(0,0,v_1)\|_{D^{k,\alpha,-N}} + \bigl(\|h_0\|_{R^\alpha\Hb^{k+2+2 d_3}} + \|\tilde h\|_{R^\alpha\rho^\beta\Hb^{k+2+2 d_4}}\bigr) \|(0,0,v_1)\|_{D^{0,\alpha,-N}};
  \end{align*}
  here we use $(q'+p-2)+(q''+1)=k$. In this fashion one proves~\eqref{EqE2DataEst}.

  We can finally apply the estimate~\eqref{EqE2RegBasicEst} to the initial value problem for $\cL'v'=f'$ to finish the proof of~\eqref{EqE2TameHi} and thus of the Proposition.
\end{proof}

\subsubsection{Asymptotics and decay}
\label{SssE2Decay}

We continue assuming~\eqref{EqE2h}, and drop the bundle $\upbeta^*(S^2\,{}^0 T^*M)$ from the notation. We recall the cutoff $\chi$ from~\eqref{EqECutoffs}. Starting from the estimate~\eqref{EqE2RegEst} (for large $k$) for the solution of an initial value problem for $L_{h_0,\tilde h}v=f$, we now extract stronger information about the asymptotic behavior of $v$ near $\rho=0$ (assuming appropriate decay for $f$) using an indicial operator argument.

It is convenient to straighten out the domain $\Omega_{\rho_0,R_0}$ from Definition~\ref{Def0bDom}: introduce
\begin{equation}
\label{EqE2DecayCoordp}
  \rho'=\rho,\qquad R'=(1-2\rho)R,
\end{equation}
and set $R'_0:=(1-2\rho_0)R_0$, then
\[
  \Omega_{\rho_0,R_0}=\{\rho'\leq\rho_0,\ R'\leq R'_0\}=[0,\rho_0]_{\rho'}\times[0,R'_0]_{R'}\times\Sph^2.
\]
The product nature of $\Omega_{\rho_0,R_0}$ in these coordinates is closely related to the fact that the vector field
\begin{equation}
\label{EqE2DecayRhop}
  \rho'\pa_{\rho'} = \rho\pa_\rho + c(\rho)\rho R\pa_R,\qquad c(\rho):=\frac{2}{1-2\rho},
\end{equation}
is tangent to the (final spacelike) boundary hypersurface $\Sigma_{\rho_0,R_0}^+=\{(1-2\rho)R=(1-2\rho_0)R_0\}$ of $\Omega_{\rho_0,R_0}$. It is in these adapted coordinates that we now discuss the inversion of the indicial operator.

\begin{lemma}[Inversion of the indicial operator]
\label{LemmaE2Inv}
  Let $\alpha\in\R$ and $\rho_1\in(0,\rho_0)$, further $\eta_1<\eta_2<1$ with $\eta_1,\eta_2\neq 0$, and $k\in\N_0$. Recall the operator $I_{g_{(0)}}$ from Proposition~\usref{PropE1Struct}\eqref{ItE1StructInd}, where $g_{(0)}=\dd x^2+h_{(0)}$ is defined as in~\eqref{EqE1Structg0h0}. Suppose $v\in R^\alpha\rho^{\eta_1}\Hb^k(\Omega_{\rho_0,R_0})$ vanishes for $\rho\geq\rho_1>0$, and $I_{g_{(0)}}(\rho'\pa_{\rho'})v\in R^\alpha\rho^{\eta_2}\Hb^k(\Omega_{\rho_0,R_0})$.
  \begin{enumerate}
  \item\label{ItE2InvGrow}{\rm (Improving the weight.)} If $\eta_1<\eta_2<0$ or $0<\eta_1<\eta_2<1$, then $v\in R^\alpha\rho^{\eta_2}\Hb^k(\Omega_{\rho_0,R_0})$ and
    \begin{equation}
    \label{EqE2InvGrowEst}
      \|v\|_{R^\alpha\rho^{\eta_2}\Hb^k} \leq C_k\Bigl(\|I_{g_{(0)}}v\|_{R^\alpha\rho^{\eta_2}\Hb^k} + \|h_0\|_{R^\alpha\Hb^{k+d_3}}\|I_{g_{(0)}}v\|_{R^\alpha\rho^{\eta_2}L^2} \Bigr).
    \end{equation}
  \item\label{ItE2InvDecay}{\rm (Extracting asymptotics.)} If $\eta_1<0<\eta_2$, then there exist $v_0\in R^\alpha\Hb^k(\cI^+_{R_0};\tau^{-2}\ker\tr_{g_{(0)}})$, $\tilde v'\in R^\alpha\rho^{\eta_2}\Hb^k(\Omega_{\rho_0,R_0})$ so that
    \begin{equation}
    \label{EqE2InvDecay}
      v(\rho',R',\omega)=\chi(\rho')v_0(R',\omega)+\tilde v'(\rho',R',\omega),
    \end{equation}
    and $\|v_0\|_{R^\alpha\Hb^k}+\|\tilde v'\|_{R^\alpha\rho^{\eta_2}\Hb^k}$ is bounded by the right hand side of~\eqref{EqE2InvGrowEst}.
  \end{enumerate}
\end{lemma}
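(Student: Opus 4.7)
The proof is via Mellin transform in $\rho'$ parametrized by $(R', \omega) \in \cI^+_{R'_0}$. In the product coordinates from~\eqref{EqE2DecayCoordp} the vector field $\rho'\pa_{\rho'}$ acts only in the first factor, so $I_{g_{(0)}}(\rho'\pa_{\rho'})$ is a matrix-valued second-order ODE in $\rho'$ whose coefficients depend only on $(R', \omega)$ via $g_{(0)}$. Writing $f := I_{g_{(0)}}(\rho'\pa_{\rho'}) v$, the plan is to take
\[
  \hat v(\lambda, R', \omega) := \int_0^{\rho_1} (\rho')^{-\lambda} v(\rho', R', \omega)\,\frac{d\rho'}{\rho'},
\]
and similarly $\hat f$. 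The support hypothesis and Plancherel's theorem (for the Mellin transform taken fiberwise over $(R', \omega)$, the $R^\alpha$-weight being orthogonal to it) yield that $\hat v$ is holomorphic in $\{\Re\lambda < \eta_1\}$ with $L^2$-boundary values on $\Re\lambda = \eta_1$ realizing the $R^\alpha L^2$-norm of $\rho^{-\eta_1}v$; analogously $\hat f$ on $\Re\lambda = \eta_2$. The equation becomes $I_{g_{(0)}}(\lambda, R', \omega)\hat v = \hat f$.

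By Lemma~\ref{LemmaEIndRoot}, $I_{g_{(0)}}(\lambda)^{-1}$ is meromorphic in $\lambda$ with a simple pole at $\lambda = 0$ (residue projecting onto $\tau^{-2}\ker\tr_{g_{(0)}}$) and further poles only at $\{2,3,4\}$; moreover, since $I_{g_{(0)}}(\lambda) = \lambda^2\Id + O(\lambda)$, the inverse decays like $|\lambda|^{-2}$ at infinity on any fixed vertical contour. For Part~\eqref{ItE2InvGrow}, the hypothesis places no indicial roots in $(\eta_1, \eta_2]$, so $\hat v = I_{g_{(0)}}(\lambda)^{-1}\hat f$ extends holomorphically to $\{\Re\lambda < \eta_2\}$; Mellin inversion on $\Re\lambda = \eta_2$ gives $v \in R^\alpha\rho^{\eta_2}L^2$ with the $L^2$-version of~\eqref{EqE2InvGrowEst}. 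For Part~\eqref{ItE2InvDecay}, shifting the contour from $\Re\lambda = \eta_1$ to $\Re\lambda = \eta_2$ picks up (by the residue theorem, justified by contour decay) the residue at the simple pole $\lambda = 0$, producing
\[
  v(\rho', R', \omega) = v_0(R', \omega) + \tilde v(\rho', R', \omega),\qquad v_0 := \mathrm{Res}_{\lambda=0}\bigl[I_{g_{(0)}}(\lambda)^{-1}\hat f(\lambda, R', \omega)\bigr] \in \tau^{-2}\ker\tr_{g_{(0)}},
\]
with $\tilde v \in R^\alpha\rho^{\eta_2}L^2$; the smooth remainder $(1-\chi(\rho'))v_0$---supported in $\rho' \geq \rho_0/4$ and hence in $R^\alpha\rho^{N}L^2$ for every $N$---is absorbed into $\tilde v$ to give~\eqref{EqE2InvDecay}. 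The extension of $v_0$ from $\cI^+_{R'_0}$ to $\cI^+_{R_0}$ is done via Lemma~\ref{Lemma0bExt}.

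For $k \geq 1$, I would commute the b-vector fields $R'\pa_{R'}$ (which equals $R\pa_R$) and $V_a$ through the equation. Since these act only in $(R', \omega)$ and hence commute with $\rho'\pa_{\rho'}$, the commutators $[W, I_{g_{(0)}}(\rho'\pa_{\rho'})]$ for $W \in \{R'\pa_{R'}, V_a\}$ only differentiate the coefficients of $I_{g_{(0)}}$, yielding a first-order matrix ODE in $\rho'\pa_{\rho'}$ whose coefficients involve $Wg_{(0)} \in R^\alpha\Hb^{k-1}(\cI^+_{R_0})$ (the $(\rho'\pa_{\rho'})^2$ coefficient of $I_{g_{(0)}}$ being constant). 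Applying the preceding $L^2$-inversion to the commuted equation $I_{g_{(0)}}(Wv) = Wf - [W, I_{g_{(0)}}]v$ and using the tame product estimates of Lemma~\ref{Lemma0bSob}\eqref{It0bSobProd} to separate the $\|h_0\|$-dependence (of the commutator coefficients) from the low-regularity $L^2$-norm of $v$, induction on $k$ yields~\eqref{EqE2InvGrowEst}; the regularity of $v_0 \in R^\alpha\Hb^k(\cI^+_{R_0})$ follows from the explicit residue formula $v_0 = A_0(R', \omega)\hat f(0, R', \omega)$, where $A_0 := \mathrm{Res}_{\lambda=0}I_{g_{(0)}}(\lambda)^{-1}$ depends smoothly on $g_{(0)}$ and $\hat f(0, \cdot) = \int_0^{\rho_1} f(\rho', \cdot)\frac{d\rho'}{\rho'}$ is controlled in $R^\alpha\Hb^k$ by Lemma~\ref{Lemma0bSob} (since $\eta_2 > 0$).

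The main technical obstacle is to control $I_{g_{(0)}}(\lambda)^{-1}$ uniformly on contours $\Re\lambda = \eta \in \R\setminus\{0,2,3,4\}$ with tame Sobolev dependence on $g_{(0)}$. I would handle this by writing $I_{g_{(0)}}(\lambda) = I_{dx^2}(\lambda) + E(\lambda, h_{(0)})$, with $E$ affine in $\lambda$ and depending linearly on $h_{(0)} = \tau^2 h_0$ (which is small in $L^\infty$ by~\eqref{EqE2h} and Sobolev embedding). A Neumann series $I_{g_{(0)}}(\lambda)^{-1} = (\Id + I_{dx^2}(\lambda)^{-1}E)^{-1}I_{dx^2}(\lambda)^{-1}$ converges on each fixed contour, and the algebra property of Lemma~\ref{Lemma0bSob}\eqref{It0bSobFn} then furnishes the tame Sobolev control of $I_{g_{(0)}}(\lambda)^{-1}$ (and its residue $A_0$) in terms of $\|h_0\|_{R^\alpha\Hb^{k+d_3}}$ that produces the stated constant in~\eqref{EqE2InvGrowEst}.
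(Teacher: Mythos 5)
Your proof is correct and rests on the same core Mellin-transform strategy as the paper's, but you take a genuinely different route in two technical places, and one of them needs a bit more care.

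For the higher-order estimate ($k\geq 1$), you propose commuting $R'\pa_{R'}=R\pa_R$ and $V_a$ through the equation and inducting. The paper instead works entirely on the Mellin side: it introduces the $\lambda$-dependent norms $H_{\bop,\lambda}^{k,\alpha}$ and the corresponding Plancherel isomorphism~\eqref{EqE2Plancherel}, under which a power $\lambda^i$ is the same as $i$ derivatives along $\rho'\pa_{\rho'}$. This absorbs all derivatives (tangential and $\rho'$) simultaneously and avoids the commutation bookkeeping altogether. Your route can be made to work, but note that commuting only $R\pa_R$ and $V_a$ does not by itself control $(\rho'\pa_{\rho'})^j v$ for $j\geq 3$; you would have to differentiate the ODE iteratively to recover those (the $|\lambda|^{-2}$ decay of the inverse only gives you two $\rho'$-derivatives), and this is glossed over.

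For the tame dependence on $g_{(0)}$, you propose a Neumann series around $I_{\dd x^2}(\lambda)$. The paper instead exploits the fact that in the $g_{(0)}$-dependent splitting $\R g_{(0)}\oplus\ker\tr_{g_{(0)}}$, the matrix $I_{g_{(0)}}(\lambda)$ from~\eqref{EqEIndRootPf} has \emph{constant} entries; inverting once and for all and re-expressing in the fixed splitting~\eqref{Eq0bST0Split} shows that $I_{g_{(0)}}(\lambda)^{-1}-\lambda^{-1}(\lambda-3)^{-1}A_{g_{(0)}}$ has coefficients that are linear in $g_{(0)}$, $\tr_{g_{(0)}}$, and $g_{(0)}\tr_{g_{(0)}}$, uniformly bounded in $\lambda$ on the relevant strips. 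This feeds directly into the product estimate~\eqref{Eq0bSobProd2}. Your Neumann series is valid for bounding $I_{g_{(0)}}(\lambda)^{-1}$ on contours $\Re\lambda=\eta$ bounded away from $\{0,2,3,4\}$ (since $h_{(0)}$ is small), but it does \emph{not} directly produce the residue $A_0$ at $\lambda=0$: the reference inverse $I_{\dd x^2}(\lambda)^{-1}$ itself blows up there, so the series is not available near the pole. To extract the residue and its tame $g_{(0)}$-dependence you would need to supplement with either the explicit formula for $I_{g_{(0)}}(\lambda)^{-1}$ (as the paper does) or the algebraic formula for the residue of a meromorphic matrix inverse at a simple zero of the determinant (a projection onto $\ker I_{g_{(0)}}(0)$ composed with the inverse of $\pa_\lambda I_{g_{(0)}}(0)\colon\ker\to\coker$), fed through Lemma~\ref{Lemma0bSob}\eqref{It0bSobFn}. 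As written, this step is a genuine gap in your proposal, though easily repaired.
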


The proof, given below, relies on a contour shifting argument on the Mellin transform side. Our convention for the Mellin transform is
\[
  (\cM v)(\lambda,R',\omega) := \int_0^\infty \rho'{}^{-\lambda}v(\rho',R',\omega)\,\frac{\dd\rho'}{\rho'}.
\]
This intertwines $I_{g_{(0)}}(\rho'\pa_{\rho'})$ with $I_{g_{(0)}}(\lambda)$. The Plancherel theorem gives an isomorphism
\begin{equation}
\label{EqE2InvPlancherel}
  \cM \colon \rho'{}^\eta L^2\bigl((0,\infty)_{\rho'}\times\cI^+_{R'_0}\bigr) \to L^2\bigl( \{\Re\lambda=\eta\}; L^2(\cI^+_{R'_0})\bigr),
\end{equation}
where on $\cI^+_{R'_0}$ we use the density $\mu:=|\frac{\dd R'}{R'}\dd\slg|$, and on the left the density $|\frac{\dd\rho'}{\rho'}|\otimes\mu$. The inverse Mellin transform is
\[
  (\cM^{-1}_\eta w)(\rho',R',\omega) := \frac{1}{2\pi i} \int_{\eta-i\infty}^{\eta+i\infty} \rho'{}^\lambda w(\lambda,R',\omega)\,\dd\lambda.
\]
For $\lambda\in\C$, let us write $H_{\bop,\lambda}^{k,\alpha}(\cI^+_{R'_0})$ for the space $\Hb^{k,\alpha}(\cI^+_{R'_0})=R^\alpha\Hb^k(\cI^+_{R'_0})$ with norm
\[
  \| w \|_{H_{\bop,\lambda}^{k,\alpha}(\cI^+_{R'_0})}^2 := \sum_{j+|\gamma|+i\leq k} \| R^{-\alpha} (R\pa_R)^j\sV^\gamma\lambda^i w \|_{L^2(\cI^+_{R'_0})}^2.
\]
Then~\eqref{EqE2InvPlancherel} generalizes to the isomorphism
\begin{equation}
\label{EqE2Plancherel}
  \cM \colon R^\alpha\rho'{}^\eta\Hb^k([0,\infty)_{\rho'}\times\cI^+_{R'_0}) \to L^2\bigl(\{\Re\lambda=\eta\}; H_{\bop,\lambda}^{k,\alpha}(\cI^+_{R'_0})\bigr);
\end{equation}
the b-Sobolev space on the left is defined via testing with $\rho'\pa_{\rho'}$, $R'\pa_{R'}$, $V_a$ ($a=1,2,3$).

\begin{proof}[Proof of Lemma~\usref{LemmaE2Inv}]
  Write $f:=I_{g_{(0)}}(\rho'\pa_{\rho'})v$. For clarity, we write $I_{g_{(0)}}(\lambda,R',\omega)$ for the indicial family; this is, for fixed $R',\omega$, a linear map on the fiber of $\upbeta^*(S^2\,{}^0 T^*M)$ over $(R',\omega)\in\cI^+\subset\breve M$. Since by Lemma~\ref{LemmaEIndRoot} $I_{g_{(0)}}(\lambda,R',\omega)$ is invertible when $\Re\lambda<1$, $\lambda\neq 0$, and thus in particular for $\Re\lambda=\eta_1$, we can then express
  \begin{equation}
  \label{EqE2InvMellin}
    v(\rho',R',\omega) = \frac{1}{2\pi i}\int_{\eta_1-i\infty}^{\eta_1+\infty} \rho'{}^\lambda I_{g_{(0)}}(\lambda,R',\omega)^{-1} (\cM f)(\lambda,R',\omega)\,\dd\lambda.
  \end{equation}
  Since $f$ vanishes for large $\rho'$, its Mellin transform $\cM f(\lambda,\cdot)$ is holomorphic in $\Re\lambda<\eta_2$ with values in $R^\alpha\Hb^k(\cI^+_{R'_0})$.

  We aim to exploit the meromorphicity of $I_{g_{(0)}}(\lambda,R',\omega)^{-1}$ in $\Re\lambda<1$, with only a simple pole at $\lambda=0$. It is convenient to use the expression~\eqref{EqEIndRootPf} in the $g_{(0)}$-dependent splitting~\eqref{EqEIndRootSplit} of $S^2 T^*X$. Now, \eqref{EqEIndRootPf} has a $(3+1)\times(3+1)$ block structure, with a $3\times 3$ minor without poles in $\Re\lambda<1$, while the $(4,4)$ entry is $\lambda^{-1}(\lambda-3)^{-1}$. The map $S^2 T^*X\to\R g_{(0)}\oplus\ker\tr_{g_{(0)}}$ is given by
  \[
    S^2 T^*X \ni h \mapsto \bigl( \tfrac13 g_{(0)}\tr_{g_{(0)}}h,\ h-\tfrac13 g_{(0)}\tr_{g_{(0)}}h \bigr).
  \]
  Therefore, we can write
  \[
    I_{g_{(0)}}(\lambda,R',\omega)^{-1} = \lambda^{-1}(\lambda-3)^{-1}A_{g_{(0)}}(R',\omega) + B_{g_{(0)}}(\lambda,R',\omega)
  \]
  where $A_{g_{(0)}}=\diag(0,0,I-\frac13 g_{(0)}\tr_{g_{(0)}})$ in the splitting~\eqref{Eq0bST0Split}, while the matrix coefficients of $B_{g_{(0)}}$ are rational functions of $\lambda$ without poles in $\Re\lambda<1$ whose coefficients are linear combinations of constants, $g_{(0)}$ (third row), $\tr_{g_{(0)}}$ (third column), and $g_{(0)}\tr_{g_{(0)}}$ ($(3,3)$ entry). Fixing any fixed positive definite fiber inner product on $\upbeta^*(S^2\,{}^0 T^*M)$, we moreover have\footnote{The bound can be sharpened to $C(1+|\lambda|)^{-2}$, though this will not be of use in what follows.}
  \[
    \|B_{g_{(0)}}(\lambda,R',\omega)\|\leq C,\qquad \eta_1\leq\Re\lambda\leq\eta_2.
  \]
  Using Lemma~\ref{Lemma0bSob}\eqref{It0bSobProd}, specifically the estimate~\eqref{Eq0bSobProd2}, we can now estimate
  \begin{align*}
    &\bigl\|(R\pa_R)^j\sV^\gamma\bigl( B_{g_{(0)}}(\lambda,\cdot)(\cM f)(\lambda,\cdot) \bigr) \bigr\|_{R^\alpha L^2(\cI^+_{R'_0})} \\
    &\qquad\qquad \leq C\Bigl( (1+\|h_0\|_{R^\alpha\Hb^{d_3}}) \|(\cM f)(\lambda,\cdot)\|_{R^\alpha\Hb^{j+|\gamma|}} \\
    &\qquad\qquad \hspace{4em} + (1+\|h_0\|_{R^\alpha\Hb^{j+|\gamma|+d_3}}) \|(\cM f)(\lambda,\cdot)\|_{R^\alpha L^2} \Bigr).
  \end{align*}
  Multiplying this with $\lambda^i$ and summing over all $j,\gamma,i$ with $j+|\gamma|+i\leq k$, we obtain (using~\eqref{EqE2h})
  \begin{align*}
    &\| B_{g_{(0)}}(\lambda,\cdot)(\cM f)(\lambda,\cdot) \|_{H_{\bop,\lambda}^{k,\alpha}} \\
    &\qquad \leq C\Bigl( \|(\cM f)(\lambda,\cdot)\|_{H_{\bop,\lambda}^{k,\alpha}} + \sum_{p=0}^k \|h_0\|_{R^\alpha\Hb^{p+d_3}}\|(\cM f)(\lambda,\cdot)\|_{H_{\bop,\lambda}^{k-p,\alpha}} \Bigr).
  \end{align*}
  In the sum, the estimate~\eqref{Eq0bSobL2} shows that it suffices to keep the terms with $p=0,k$, and thus
  \[
    \| B_{g_{(0)}}(\lambda,\cdot)(\cM f)(\lambda,\cdot) \|_{H_{\bop,\lambda}^{k,\alpha}} \leq C\Bigl( \|(\cM f)(\lambda,\cdot)\|_{H_{\bop,\lambda}^{k,\alpha}} + \|h_0\|_{R^\alpha\Hb^{k+d_3}}\|(\cM f)(\lambda,\cdot)\|_{R^\alpha L^2} \Bigr).
  \]
  The same estimate applies for $A_{g_{(0)}}(\cdot)$ in place of $B_{g_{(0)}}(\lambda,\cdot)$.

  In part~\eqref{ItE2InvGrow} then, we shift the contour in the integral~\eqref{EqE2InvMellin} to $\eta_2+i(-\infty,\infty)$ and use these estimates together with~\eqref{EqE2Plancherel} to conclude. The proof of part~\eqref{ItE2InvDecay} is completely analogous, except now the pole of $I_{g_{(0)}}(\lambda,\cdot)^{-1}$ at $\lambda=0$ causes a contribution due to the residue theorem given by $-\frac13 A_{g_{(0)}}(R',\omega)(\cM f)(0,R',\omega)=:v_0(R',\omega)$, while the integral over the final contour $\eta_2+i(-\infty,\infty)$ gives rise to $\tilde w\in R^\alpha\rho'{}^{\eta_2}\Hb^k([0,\infty)_{\rho'}\times\cI^+_{R'_0})$; we then set $\tilde v':=\chi\tilde w+(1-\chi)v_0$ to conclude.
\end{proof}

In order to switch back to the original $\rho,R$ coordinates in~\eqref{EqE2InvDecay}, we first use Lemma~\ref{Lemma0bExt} to extend $v_0$ to an element of $R^\alpha\Hb^k(\cI^+_{R_0})$. We then have
\[
  v = \chi(\rho)v_0( (1-2\rho)R,\omega ) + \tilde v' = \chi v_0(R,\omega) + \tilde v,
\]
where\footnote{If one worked from the outset with the coordinates $\rho',R'$, the loss of one derivative here would be avoided.}
\begin{align*}
  \tilde v&:=\tilde v'+\chi\bigl(v_0((1-2\rho)R,\omega)-v_0(R,\omega)\bigr) = \tilde v' - 2\rho\chi \int_0^1 R\pa_R v_0((1-2\rho s)R,\omega)\,\dd s \\
   &\in R^\alpha\rho^{\eta_2}\Hb^k(\Omega_{\rho_0,R_0}) + \rho R^\alpha\Hb^{k-1}(\cI^+_{R_0}) \subset R^\alpha\rho^{\eta_2}\Hb^{k-1}(\Omega_{\rho_0,R_0});
\end{align*}
and we have the tame estimate
\begin{equation}
\label{EqE2InvDecay2}
\begin{split}
  &\|v_0\|_{R^\alpha\Hb^k(\cI^+_{R_0})} + \|\tilde v\|_{R^\alpha\rho^{\eta_2}\Hb^{k-1}(\Omega_{\rho_0,R_0})} \\
  &\qquad \leq C_k\Bigl(\|I_{g_{(0)}}v\|_{R^\alpha\rho^{\eta_2}\Hb^k} + \|h_0\|_{R^\alpha\Hb^{k+d_3}}\|I_{g_{(0)}}v\|_{R^\alpha\rho^{\eta_2}L^2} \Bigr).
\end{split}
\end{equation}

\begin{prop}[Tame bounds on decaying spaces]
\label{PropE2Decay}
  There exists $d\in\N$ so that the following holds whenever $\|h_0\|_{R^\alpha\Hb^d},\|\tilde h\|_{R^\alpha\rho^\beta\Hb^d}<1$. The unique solution $v$ of the initial value problem
  \begin{alignat*}{2}
    L_{h_0,\tilde h}v&=f &&\in R^\alpha\rho^\beta\Hb^\infty(\Omega_{\rho_0,R_0}), \\
    (v,\cL_{-\rho\pa_\rho}v)|_{\Sigma_{\rho_0,R_0}} &= (v_0,v_1) && \in R^\alpha\Hb^\infty(\Sigma_{\rho_0,R_0})\oplus R^\alpha\Hb^\infty(\Sigma_{\rho_0,R_0})
  \end{alignat*}
  can be written as
  \[
    v = \chi v_0 + \tilde v,
  \]
  where $v_0\in R^\alpha\Hb^\infty(\cI^+_{R_0})$ and $\tilde v\in R^\alpha\rho^\beta\Hb^\infty(\Omega_{\rho_0,R_0})$ satisfy for all $k\in\N_0$ a tame estimate
  \begin{equation}
  \label{EqE2DecayEst}
  \begin{split}
    \|v_0\|_{R^\alpha\Hb^k} + \|\tilde v\|_{R^\alpha\rho^\beta\Hb^k} &\leq C_k\Bigl( \|(f,v_0,v_1)\|_{D^{k+d,\alpha,\beta}} \\
      &\quad \hspace{3em} + \bigl( \|h_0\|_{R^\alpha\Hb^{k+d}} + \|\tilde h\|_{R^\alpha\rho^\beta\Hb^{k+d}} \bigr) \|(f,v_0,v_1)\|_{D^{0,\alpha,\beta}} \Bigr).
  \end{split}
  \end{equation}
\end{prop}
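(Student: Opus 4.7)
The plan is to take the solution $v\in R^\alpha\rho^{-N}\Hb^\infty(\Omega_{\rho_0,R_0})$ produced by Proposition~\ref{PropE2Reg} (with its tame estimate), and bootstrap the weight at $\cI^+$ from $\rho^{-N}$ to $\rho^\beta$ (modulo the explicit asymptotic term $v_0$) by iteratively inverting the indicial operator. The two main ingredients are: the decomposition
\[
    L_{h_0,\tilde h} = I_{g_{(0)}}(\rho\pa_\rho) + R_0 + \tilde R_{h_0,\tilde h},\qquad R_0\in\rho\Diffb^2,\quad \tilde R_{h_0,\tilde h}\in R^\alpha\rho^\beta\Hb^{k-2}\Diffb^2,
\]
from Proposition~\ref{PropE1Struct}\eqref{ItE1StructInd} (with the stated tame bound~\eqref{EqE1StructIndTame}); and Lemma~\ref{LemmaE2Inv}, which inverts $I_{g_{(0)}}(\rho'\pa_{\rho'})$ on weighted b-Sobolev spaces and, upon crossing the indicial root $\lambda=0$, extracts the corresponding asymptotic term, with the tame estimate~\eqref{EqE2InvDecay2}.

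The first step is to cut off: set $v_\chi:=\chi v$, which vanishes for $\rho\geq\tfrac12\rho_0$, as needed for Lemma~\ref{LemmaE2Inv}. The commutator $[L_{h_0,\tilde h},\chi]v$ is supported in $\supp\chi'$, away from $\cI^+$, so it contributes a residual term in $R^\alpha\rho^M\Hb^\infty$ for every $M$, tamely controlled by finitely many b-derivatives of $v$. Since $\rho'\pa_{\rho'}-\rho\pa_\rho\in\rho\Vb$ by~\eqref{EqE2DecayRhop}, we have $I_{g_{(0)}}(\rho'\pa_{\rho'})-I_{g_{(0)}}(\rho\pa_\rho)\in\rho\Diffb^2$, which may be absorbed into $R_0$. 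Assuming $v_\chi\in R^\alpha\rho^\mu\Hb^{k_0}$ with $\mu<0$, we thus obtain
\[
    I_{g_{(0)}}(\rho'\pa_{\rho'})v_\chi = \chi f - \chi(R_0+\tilde R_{h_0,\tilde h})v + E,
\]
where $E$ is residual. Since $R_0 v\in R^\alpha\rho^{\mu+1}\Hb^{k_0-2}$ does not involve $h_0,\tilde h$, and the tame product estimate of Lemma~\ref{Lemma0bSob}\eqref{It0bSobProd} bounds
\[
    \|\tilde R_{h_0,\tilde h}v\|_{R^\alpha\rho^{\mu+\beta}\Hb^{k_0-2}} \lesssim \bigl(\|h_0\|_{R^\alpha\Hb^{d}}+\|\tilde h\|_{R^\alpha\rho^\beta\Hb^{d}}\bigr)\|v\|_{R^\alpha\rho^\mu\Hb^{k_0}} + \bigl(\|h_0\|_{R^\alpha\Hb^{k_0+d}}+\|\tilde h\|_{R^\alpha\rho^\beta\Hb^{k_0+d}}\bigr)\|v\|_{R^\alpha\rho^\mu\Hb^{d}},
\]
for a suitable fixed $d$, the right-hand side lies in $R^\alpha\rho^{\mu+\beta}\Hb^{k_0-2}$ with the tame structure preserved.

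Now iterate: as long as $\mu+\beta<0$, apply Lemma~\ref{LemmaE2Inv}\eqref{ItE2InvGrow} with $(\eta_1,\eta_2)=(\mu,\mu+\beta)$ to obtain $v_\chi\in R^\alpha\rho^{\mu+\beta}\Hb^{k_0-3}$ (with the minor Sobolev loss from~\eqref{EqE2InvDecay2} coming from reverting to the original $(\rho,R)$ coordinates). After $J\sim N/\beta$ such steps the running weight reaches $[{-\beta},0)$, at which point Lemma~\ref{LemmaE2Inv}\eqref{ItE2InvDecay} applied with $\eta_2=\beta$ produces the asymptotic term $v_0\in R^\alpha\Hb^{k_0-O(J)}(\cI^+_{R_0};\tau^{-2}\ker\tr_{g_{(0)}})$ and a remainder $\tilde v\in R^\alpha\rho^\beta\Hb^{k_0-O(J)}$, giving the decomposition $v=\chi v_0+\tilde v$. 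The total Sobolev loss is $O(J)=O(N/\beta)$, which determines $d$; note also that the data norm on the right-hand side of~\eqref{EqE2DecayEst} may be taken as $\|\cdot\|_{D^{k+d,\alpha,\beta}}$ rather than $\|\cdot\|_{D^{k+d,\alpha,-N}}$ since $\rho\leq\rho_0$ on $\Omega_{\rho_0,R_0}$ ensures $\|f\|_{R^\alpha\rho^{-N}\Hb^k}\lesssim\|f\|_{R^\alpha\rho^\beta\Hb^k}$.

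The main obstacle is bookkeeping the tame structure through the iteration: at each inversion step, the bound on $v_\chi$ at the new weight combines the high-low split above with the tame bound from Lemma~\ref{LemmaE2Inv}, producing a term where $h_0,\tilde h$ are evaluated in a high-regularity norm multiplied by the previous-step bound on $v$ at lower regularity. By induction on the iteration count and the telescoping structure of these bounds, the high-regularity $h_0,\tilde h$ norms only ever multiply the $D^{0,\alpha,\beta}$-norm of the data (via the base estimate from Proposition~\ref{PropE2Reg}), yielding the bound~\eqref{EqE2DecayEst} with $d$ determined by the number $J$ of iterations and the fixed Sobolev losses per step.
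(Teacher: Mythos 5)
Your proposal follows the same route as the paper's proof: start from Proposition~\ref{PropE2Reg}, cut off with $\chi$, replace $L_{h_0,\tilde h}$ by $I_{g_{(0)}}(\rho'\pa_{\rho'})$ plus remainders via Proposition~\ref{PropE1Struct}\eqref{ItE1StructInd}, and iterate Lemma~\ref{LemmaE2Inv} to improve the $\rho$-weight by $\beta$ per step (with a fixed derivative loss) until the indicial root $0$ is crossed. One imprecision in the final step: after $J$ iterations the forcing term $-\chi(R_0+\tilde R_{h_0,\tilde h})v+\chi f$ only lies in $R^\alpha\rho^{-N+(J+1)\beta}\Hb^{\,\cdot}$ with $-N+(J+1)\beta\in(0,\beta)$, not yet in $R^\alpha\rho^\beta$; so Lemma~\ref{LemmaE2Inv}\eqref{ItE2InvDecay} first gives the decomposition $v=\chi v_0+\tilde v$ with $\tilde v$ at this intermediate weight, and one must substitute that decomposition back into the indicial equation and apply the extraction once more to reach the stated $\tilde v\in R^\alpha\rho^\beta\Hb^\infty$—this is exactly what the paper does in its Step~2.
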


\begin{rmk}[Value of $d$]
\label{RmkE2Value}
  An inspection of the proof produces a concrete value for $d\geq 2 d_4+2$. For example, any number $d\geq 2\frac{N}{\beta}+16$ works. Thus, if in Proposition~\ref{PropE2Reg} one obtained a specific value for $N$ (by applying more care in the basic energy estimate for $\Box_g$), one could specify $d$ also here. (The value of $\beta$ can be fixed arbitrarily close to $1$, cf.\ the statement of Theorem~\ref{ThmESol}.)
\end{rmk}

\begin{proof}[Proof of Proposition~\usref{PropE2Decay}]
  Let $d'\in\N_0$ (chosen at various places in the argument below to ensure the positivity of all differentiability orders); and let $k\in\N_0$ be arbitrary. Given $(f,v_0,v_1)\in D^{k+d',\alpha,\beta}$, we have $v\in R^\alpha\rho^{-N}\Hb^{k+d'}(\Omega_{\rho_0,R_0})$, with a tame estimate~\eqref{EqE2RegEst}, so
  \begin{equation}
  \label{EqE2Decay0}
  \begin{split}
    &\|v\|_{R^\alpha\rho^{-N}\Hb^{k+d'}} \\
    &\quad \leq C_k\Bigl( \|(f,v_0,v_1)\|_{D^{k+d',\alpha,\beta}} \\
    &\quad \hspace{4em} + \bigl(\|h_0\|_{R^\alpha\Hb^{k+d'+2 d_3+2}}+\|\tilde h\|_{R^\alpha\rho^\beta\Hb^{k+d'+2 d_4+2}}\bigr)\|(f,v_0,v_1)\|_{D^{0,\alpha,\beta}} \Bigr).
  \end{split}
  \end{equation}

  In the notation used in Proposition~\ref{PropE1Struct}\eqref{ItE1StructInd}, we now rewrite the equation $L_{h_0,\tilde h}v=f$ as
  \begin{equation}
  \label{EqE2DecayInd}
    I_{g_{(0)}}(\rho\pa_\rho)(\chi v) = -\bigl(L_{h_0,\tilde h}-I_{g_{(0)}}(\rho\pa_\rho)\bigr)(\chi v) + [L_{h_0,\tilde h},\chi]v + \chi f.
  \end{equation}
  Replacing $\rho\pa_\rho$ on the left by $\rho'\pa_{\rho'}=\rho\pa_\rho+\rho\cR$, $\cR:=c(\rho)R\pa_R\in\Vb(\breve M)$ (cf.\ \eqref{EqE2DecayRhop}), creates a further error term given by the action on $\chi v$ of the operator
  \[
    I_{g_{(0)}}(\rho\pa_\rho) - I_{g_{(0)}}(\rho'\pa_{\rho'}) = I_{g_{(0)}}(\rho\pa_\rho)-I_{g_{(0)}}(\rho\pa_\rho+\rho\cR);
  \]
  this operator is of class $\rho\Diffb^2+\rho R^\alpha\Hb^k\Diffb^2$ (by inspection of~\eqref{EqE1StructIndOp}) and can be written and estimated in the same fashion as~\eqref{EqE1StructInd}--\eqref{EqE1StructIndTame}. In the estimates below, we continue writing $R^\alpha\rho^\beta$ for weights (for notational simplicity---the weight $R'{}^\alpha\rho'{}^{\beta}$ is a positive smooth multiple), but we write $I_{g_{(0)}}:=I_{g_{(0)}}(\rho'\pa_{\rho'})$.

  By slightly increasing $N$, we can ensure that for
  \[
    J := \left\lfloor\frac{N}{\beta}\right\rfloor,
  \]
  we have $-N+J\beta\in(-\beta,0)$. For easier bookkeeping, we moreover require $J\geq 1$.

  \pfstep{Step~1. Almost boundedness.} We shall prove that for all $j\in\N_0$, $j\leq J$ (so with $-N+j\beta<0$), we have $v\in R^\alpha\rho^{-N+j\beta}\Hb^{k+d'-2 j}$, with a tame estimate. For $j=0$, this is the content of~\eqref{EqE2Decay0}. For the inductive step, we assume that, for some $j\geq 1$, we have $v\in R^\alpha\rho^{-N+(j-1)\beta}\Hb^{k+d'-2(j-1)}$. We require $d'\geq 2 j$. We can then estimate the right hand side of~\eqref{EqE2DecayInd}, with $\rho'\pa_{\rho'}$ in place of $\rho\pa_\rho$, using~\eqref{EqE1StructInd}--\eqref{EqE1StructIndTame} by
  \begin{align*}
    &\|I_{g_{(0)}}(\chi v)\|_{R^\alpha\rho^{-N+j\beta}\Hb^{k+d'-2 j}} \\
    &\qquad \leq \|R_0(\chi v)\|_{R^\alpha\rho^{-N+j\beta}\Hb^{k+d'-2 j}} + \|\tilde R_{h_0,\tilde h}(\chi v)\|_{R^\alpha\rho^{-N+j\beta}\Hb^{k+d'-2 j}} \\
    &\qquad \hspace{4em} + \| [L_{h_0,\tilde h},\chi]v \|_{R^\alpha\rho^{-N+j\beta}\Hb^{k+d'-2 j}} + \|\chi f\|_{R^\alpha\rho^{-N+j\beta}\Hb^{k+d'-2 j}}.
  \end{align*}
  The first term is bounded by $C\|v\|_{R^\alpha\rho^{-N+j\beta-1}\Hb^{k+d'-2(j-1)}}$. Using Lemma~\ref{Lemma0bSob}, we can estimate the second term by
  \begin{align*}
    \|\tilde R_{h_0,\tilde h}(\chi v)\|_{R^\alpha\rho^{-N+j\beta}\Hb^{k+d'-2 j}} &\leq C_j\Bigl( \|\chi v\|_{R^\alpha\rho^{-N+(j-1)\beta}\Hb^{k+d'-2(j-1)}} \\
    &\qquad + \bigl( \|h_0\|_{R^\alpha\Hb^{k+d'+2+d_3}}+\|\tilde h\|_{R^\alpha\rho^\beta\Hb^{k+d'+2+d_4}} \bigr) \|\chi v\|_{R^\alpha\rho^{-N+(j-1)\beta}L^2}\Bigr).
  \end{align*}
  For the third term, we note that $[L_{h_0,\tilde h},\chi]\in\rho\Diffb^2+R^\alpha\rho^\beta\Hb^\infty\Diffb^2$, with the second summand obeying tame estimates by~\eqref{EqE1StructTame}. Altogether, we therefore obtain
  \begin{equation}
  \label{EqE2Decay1}
  \begin{split}
    &\|I_{g_{(0)}}(\chi v)\|_{R^\alpha\rho^{-N+j\beta}\Hb^{k+d'-2 j}} \\
    &\qquad \leq C_j\Bigl( \|f\|_{R^\alpha\rho^\beta\Hb^{k+d'-2 j}} + \|v\|_{R^\alpha\rho^{-N+(j-1)\beta}\Hb^{k+d'-2(j-1)}} \\
    &\qquad\hspace{4em} + \bigl( \|h_0\|_{R^\alpha\Hb^{k+d'+2+d_3}}+\|\tilde h\|_{R^\alpha\rho^\beta\Hb^{k+d'+2+d_4}} \bigr) \|v\|_{R^\alpha\rho^{-N+(j-1)\beta}L^2}\Bigr).
  \end{split}
  \end{equation}

  We can now apply Lemma~\ref{LemmaE2Inv}\eqref{ItE2InvGrow} and deduce that $\chi v\in R^\alpha\rho^{-N+j\beta}\Hb^{k+d'-2 j}$, with norm bounded by the right hand side of~\eqref{EqE2Decay1} but with $L^2$ replaced by $\Hb^2$ (arising from the low regularity term in~\eqref{EqE2InvGrowEst} which we estimate using~\eqref{EqE2Decay1} with $k=0$, $d'=2 j$). Since $(1-\chi)v\in R^\alpha\rho^{N'}\Hb^{k+d'-2(j-1)}$ for all $N'$, we conclude that
  \begin{equation}
  \label{EqE2Decay2}
  \begin{split}
    &\|v\|_{R^\alpha\rho^{-N+j\beta}\Hb^{k+d'-2 j}} \\
    &\qquad \leq C_j\Bigl( \|f\|_{R^\alpha\rho^\beta\Hb^{k+d'-2 j}} + \|v\|_{R^\alpha\rho^{-N+(j-1)\beta}\Hb^{k+d'-2(j-1)}} \\
    &\qquad\hspace{4em} + \bigl( \|h_0\|_{R^\alpha\Hb^{k+d'+2+d_3}}+\|\tilde h\|_{R^\alpha\rho^\beta\Hb^{k+d'+2+d_4}} \bigr) \|v\|_{R^\alpha\rho^{-N+(j-1)\beta}\Hb^2}\Bigr).
  \end{split}
  \end{equation}
  As a special case, for $j\geq 1$ we take $k=0$ and $d'=2 j+q$ to get the low regularity estimate
  \begin{equation}
  \label{EqE2Decay2Lo}
    \|v\|_{R^\alpha\rho^{-N+j\beta}\Hb^q} \leq C\Bigl( \|f\|_{R^\alpha\rho^\beta\Hb^q} + \bigl(1+\|h_0\|_{R^\alpha\Hb^{2 j+d_3+q}}+\|\tilde h\|_{R^\alpha\Hb^{2 j+d_4+q}}\bigr)\|v\|_{R^\alpha\rho^{-N+(j-1)\beta}\Hb^{q+2}} \Bigr).
  \end{equation}

  Consider now~\eqref{EqE2Decay2} for $j=J$ (which requires taking $d'\geq 2 J$). The high regularity norm (the second term on the right) will be bounded using the estimate for $j=J-1$. The low regularity norm (the norm on $v$ in the third term on the right) on the other hand can be bounded using~\eqref{EqE2Decay2Lo} with $j=J-1$, $q=2$ in terms of $\|v\|_{R^\alpha\rho^{-N+(J-1)\beta}\Hb^4}$, which again using~\eqref{EqE2Decay2Lo} with $j=J-2$, $q=4$ is bounded by $\|v\|_{R^\alpha\rho^{-N+(J-2)\beta}\Hb^6}$, and so on, until after $J$ such applications of~\eqref{EqE2Decay2Lo} we obtain a bound by $\|v\|_{R^\alpha\rho^{-N}\Hb^{2 J+2}}$, which in turn we bound using~\eqref{EqE2Decay0} with $d'=0$, $k=2 J+2$; in this last step, we use $2 J+2 d_4+4$ many derivatives on $\tilde h$. Altogether then, we have thus proved
  \begin{align*}
    \|v\|_{R^\alpha\rho^{-N+J\beta}\Hb^{k+d'-2 J}} &\leq C_k\Bigl( \|(f,v_0,v_1)\|_{D^{k+d',\alpha,\beta}} \\
    &\qquad\quad + \bigl(\|h_0\|_{R^\alpha\Hb^{k+d'+2+d_3}} + \|\tilde h\|_{R^\alpha\rho^\beta\Hb^{k+d'+2+d_4}}\bigr)\|(f,v_0,v_1)\|_{D^{0,\alpha,\beta}}\Bigr),
  \end{align*}
  where $d'\geq 2 J$ is any fixed integer; here $C_k$ is a constant which is allowed to depend on the low regularity norms $\|h_0\|_{R^\alpha\Hb^{2 J+4+2 d_3}}$, $\|\tilde h\|_{R^\alpha\rho^\beta\Hb^{2 J+4+2 d_4}}$. (In the statement of the Proposition, we shall thus in particular take $d\geq 2 J+4+2 d_4=2 J+10$.)

  \pfstep{Step~2. Leading order term and decaying remainder.} The estimate~\eqref{EqE2Decay1} remains valid for $j=J+1$, in which case we get an estimate for $I_{g_{(0)}}(\chi v)$ in a decaying space since $-N+(J+1)\beta\in(0,\beta)$. We can thus apply Lemma~\ref{LemmaE2Inv}\eqref{ItE2InvDecay} in the form~\eqref{EqE2InvDecay2} to show that
  \[
    v = \chi v_0 + \tilde v,
  \]
  where $v_0\in R^\alpha\Hb^{k+d'-2(J+1)}(\cI^+_{R_0};\tau^{-2}\ker\tr_{g_{(0)}})$ and $\tilde v\in R^\alpha\rho^{-N+(J+1)\beta}\Hb^{k+d'-2(J+1)-1}$, with norms obeying tame estimates. (Here we require $d'\geq 2(J+1)+1$.) Plugging this expression for $v$ into~\eqref{EqE2DecayInd}, the right hand side lies in $R^\alpha\rho^\beta\Hb^{k+d'-2(J+2)-1}$ (with tame estimates), where for the control of the first term we use~\eqref{Eq0bSobProd3}. Applying Lemma~\ref{LemmaE2Inv}\eqref{ItE2InvDecay} yet again thus shows that $\tilde v\in R^\alpha\rho^\beta\Hb^{k+d'-2(J+2)-2}$. Setting $d'=2(J+2)+2$ and $d=(2(J+2)+2)+4+2 d_4=2 J+16$, the proof is now complete.
\end{proof}

\subsection{Solution of the gauge-fixed Einstein equations: proof of Theorem~\ref{ThmESol}}
\label{SsEPf}

We begin by explaining how Proposition~\ref{PropE2Decay} fits into a solution scheme for the gauge-fixed Einstein equation. Recalling~\eqref{EqEEinOp}, we thus consider
\begin{align*}
  &P(h_0,\tilde h,\theta) = 2\Bigl(\Ric(g)-\Lambda g - \tilde\delta^*_g\bigl(\Ups(g;g_0)+E_{g_0}(g-g_0)-\tilde\chi\theta\bigr) \Bigr), \\
  &\qquad g := g_\sfb+\chi h_0+\tilde h,\quad g_0:=g_\sfb+\chi h_0,
\end{align*}
where $\chi,\tilde\chi$ are as in~\eqref{EqECutoffs}, and $h_0,\tilde h,\theta$ are as in~\eqref{EqESolh}. Let us write $D_2|_{\tilde h}P(h_0,v,\theta):=\frac{\dd}{\dd s}P(h_0,\tilde h+s v,\theta)$, similarly $D_{1,2}|_{\tilde h_0,\tilde h}P(v_0,\tilde v,\theta):=\frac{\dd}{\dd s}P(h_0+s v_0,\tilde h+s\tilde v,\theta)$ (which is the sum of $D_1|_{h_0}P(v_0,\tilde h,\theta)$ and $D_2|_{\tilde h}P(h_0,\tilde v,\theta)$), and so on. Then Proposition~\ref{PropE2Decay} shows that the solution of the initial value problem
\begin{equation}
\label{EqEPfLin2}
  D_2|_{\tilde h}P(h_0,v,\theta) = f
\end{equation}
with $f\in R^\alpha\rho^\beta\Hb^\infty(\Omega_{\rho_0,R_0})$ and initial data in $R^\alpha\Hb^\infty(\Sigma_{\rho_0,R_0})$ can be written as $v=\chi v_0+\tilde v$ where $v_0\in R^\alpha\Hb^\infty(\cI^+_{R_0};\upbeta^*(\tau^{-2}\ker\tr_{g_{(0)}}))$ and $\tilde v\in R^\alpha\rho^\beta\Hb^\infty(\Omega_{\rho_0,R_0})$; such an initial value problem will arise in a nonlinear iteration scheme. Since we require $\tilde h\in R^\alpha\rho^\beta\Hb^\infty$, the metric perturbation $v$ is not an acceptable correction to $\tilde h$; we thus need to rewrite~\eqref{EqEPfLin2} so that the only arguments of the linearization of $P$ are elements of the same spaces as $h_0,\tilde h,\theta$. To this end, we note that
\begin{align*}
  \frac12 D_1|_{h_0}P(v_0,\tilde h,\theta) &= D_g\Ric(\chi v_0) - \Lambda \chi v_0 - (D_g\tilde\delta^*_\cdot)(\chi v_0)\bigl(\Ups(g;g_0)+E_{g_0}(g-g_0)-\tilde\chi\theta\bigr) \\
    &\quad\hspace{1em} - \tilde\delta^*_g\Bigl( D_1|_g\Ups(\chi v_0;g_0) + D_2|_{g_0}\Ups(g;\chi v_0) + (D_{g_0}E_\cdot)(\chi v_0)(g-g_0) \Bigr), \\
  \frac12 D_2|_{\tilde h}P(h_0,\tilde v,\theta) &= D_g\Ric(\tilde v) - \Lambda\tilde v - (D_g\tilde\delta^*_\cdot)(\tilde v)\bigl(\Ups(g;g_0)+E_{g_0}(g-g_0)-\tilde\chi\theta\bigr) \\
    &\quad\hspace{1em} - \tilde\delta^*_g\bigl( D_1|_g\Ups(\tilde v;g_0) + E_{g_0}\tilde v\bigr), \\
  \frac12 D_3|_\theta P(h_0,\tilde h,\dot\theta) &= -\tilde\delta^*_g(\tilde\chi\dot\theta).
\end{align*}
Therefore,~\eqref{EqEPfLin2} is equivalent to
\begin{align}
\label{EqPfLin123}
  f &= D_{1,2}|_{\tilde h,h_0}P(v_0,\tilde v,\theta) + 2\tilde\delta_g^*\bigl(D_2|_{g_0}\Ups(g;\chi v_0) - E_{g_0}(\chi v_0) + (D_{g_0}E_\cdot)(\chi v_0)\tilde h \bigr) \\
\label{EqPfLin123theta}
    &= D_{\tilde h,h_0,\theta}P(v_0,\tilde v,\dot\theta), \qquad \dot\theta = -D_2|_{g_0}\Ups(g;\chi v_0) + E_{g_0}(\chi v_0) - (D_{g_0}E_\cdot)(\chi v_0)\tilde h.
\end{align}
Note here that for this definition of $\dot\theta$, we have $\supp\dot\theta\subset\supp\chi$ and thus $\tilde\chi\dot\theta=\dot\theta$. We make the following important observation regarding the size of $\dot\theta$.

\begin{lemma}[Bounds on the gauge modification]
\label{LemmaPfGaugeMod}
  Let $h_0\in R^\alpha\Hb^\infty(\cI^+_{R_0};\upbeta^*(S^2 T^*X))$ and $\tilde h\in R^\alpha\rho^\beta\Hb^\infty(\Omega_{\rho_0,R_0};\upbeta^*(S^2\,{}^0 T^*M))$; suppose that $\|h_0\|_{R^\alpha\Hb^{d_3+2}}<\delta_0$ and $\|\tilde h\|_{R^\alpha\rho^\beta\Hb^{d_4+2}}<\delta_0$ for some small $\delta_0>0$. Define $g_{(0)},h_{(0)}$ by~\eqref{EqE1Structg0h0}, and suppose that
  \[
    v_0\in R^\alpha\Hb^\infty\bigl(\cI^+_{R_0};\upbeta^*(\tau^{-2}\ker\tr_{g_{(0)}})\bigr).
  \]
  Define $\dot\theta$ by~\eqref{EqPfLin123theta}. Then $\dot\theta\in R^\alpha\rho^\beta\Hb^\infty(\Omega_{\rho_0,R_0};\upbeta^*({}^0 T^*M))$, and we have a tame estimate
  \begin{equation}
  \label{EqPfGaugeModTame}
    \|\dot\theta\|_{R^\alpha\rho^\beta\Hb^k} \leq C_k\Bigl( \|v_0\|_{R^\alpha\Hb^{k+1}} + \bigl(\|h_0\|_{R^\alpha\Hb^{k+1+d_3}}+\|\tilde h\|_{R^\alpha\rho^\beta\Hb^{k+1+d_4}}\bigr) \|v_0\|_{R^\alpha L^2}\Bigr).
  \end{equation}
\end{lemma}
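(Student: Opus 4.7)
The plan is to decompose $\dot\theta = \dot\theta_1 + \dot\theta_2 + \dot\theta_3$ corresponding to the three terms in definition~\eqref{EqPfLin123theta}, and to establish the desired regularity and tame bound for each. The routine piece is $\dot\theta_3 = -(D_{g_0}E_\cdot)(\chi v_0)\tilde h$: since $E_{g_0}h = \chi e^0(-2\tr_{g_0}h - h(e^0,e^0))$ depends on $g_0$ only through $\tr_{g_0}$, we can write $\dot\theta_3$ as a fiberwise bilinear expression $C(g_0,g_0^{-1})(\chi v_0,\tilde h)$ with coefficients smooth in the entries of $g_0$ and $g_0^{-1}$, which are controlled in $\CI+R^\alpha\Hb^\infty$ by the smallness hypothesis on $h_0$ and the standard nonlinear estimate for inverses (Lemma~\usref{Lemma0bSob}\eqref{It0bSobFn}). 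The product estimate~\eqref{Eq0bSobProd3} together with the inclusion $R^{2\alpha}\rho^\beta\Hb^k\subset R^\alpha\rho^\beta\Hb^k$ (on the bounded set $\{R\leq R_0\}$) then yields $\dot\theta_3\in R^\alpha\rho^\beta\Hb^\infty$ with the tame estimate~\eqref{EqPfGaugeModTame}.

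The heart of the proof is to establish that $\dot\theta_1+\dot\theta_2\in R^\alpha\rho\Hb^\infty\subset R^\alpha\rho^\beta\Hb^\infty$. As a first reduction I would linearize the first-slot dependence in $\Ups$ around $g_0$: since $\Ups(g_0;g_0)=0$, differentiating the identity $\Ups(g_0+sh;g_0+sh)=0$ at $s=0$ combined with~\eqref{EqE1D1Ups} and $\sE_{g_0;g_0}=0$ yields $D_2|_{g_0}\Ups(g_0;h)=\delta_{g_0}\sfG_{g_0}h$. Integrating in $s$ along the segment $g_0\mapsto g=g_0+\tilde h$ then gives
\[
  \dot\theta_1 = -\delta_{g_0}\sfG_{g_0}(\chi v_0) + \dot\theta'_1,
\]
where $\dot\theta'_1=-\int_0^1 (D_1 D_2|_{(g_0+s\tilde h;g_0)}\Ups)(\tilde h;\chi v_0)\,\dd s$ is bilinear in $\tilde h$ and $\chi v_0$ (together with a first derivative of $\chi v_0$) with coefficients smooth in $g_0+s\tilde h$; hence $\dot\theta'_1\in R^\alpha\rho^\beta\Hb^\infty$ with a tame estimate by the argument used for $\dot\theta_3$.

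It remains to prove $-\delta_{g_0}\sfG_{g_0}(\chi v_0)+E_{g_0}(\chi v_0)\in R^\alpha\rho\Hb^\infty$, and this is where the trace-free assumption $v_0\in\tau^{-2}\ker\tr_{g_{(0)}}$ is essential. Using the indicial computations from the proof of Proposition~\usref{PropE1Struct} (equations~\eqref{EqE1StructGg}, \eqref{EqE1StructBoxg}, \eqref{EqE1StructEE}), one checks directly in the bundle splitting~\eqref{Eq0bST0Split} that both indicial operators send a tangential-tangential input $(0,0,w)$ to $1$-forms whose $e^i$-components vanish and whose $e^0$-component is a multiple of $\tr_{g_{(0)}}w$ (namely $\tfrac{1}{2}(e^0-2)\tr_{g_{(0)}}w$ for $\delta_g\sfG_g$ and $-2\tr_{g_{(0)}}w$ for $E_{g_0}$); thus both annihilate $\ker\tr_{g_{(0)}}$. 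Since $\chi(0)=1$ and $(\rho\pa_\rho)^j\chi|_{\rho=0}=0$ for $j\geq 1$, the indicial action on $\chi v_0$ vanishes at $\rho=0$ and therefore lies in $\rho\cdot(\CI+R^\alpha\Hb^\infty)\subset R^\alpha\rho\Hb^\infty$. The deviation of $\delta_{g_0}\sfG_{g_0}$ (and of $E_{g_0}$) from its indicial operator is of class $\rho\Diffb^1+R^\alpha\rho\Hb^\infty\Diffb^1$: the first summand collects the intrinsically $\rho$-small factors $\tau\pa_x=\rho R\pa_R,\rho\sV$ present in every non-$\rho\pa_\rho$ 0-derivative, while the second records the $h_0$-dependent correction, which comes from
\[
  g_0 - (g_\dS+h_0) = (g_\sfb-g_\dS) + (\chi-1)h_0 \in \rho^3\CI + \rho\CI\cdot R^\alpha\Hb^\infty
\]
(together with the analogous expansion of $g_0^{-1}$). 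Applied to $\chi v_0\in R^\alpha\Hb^\infty$, this also lands in $R^\alpha\rho\Hb^\infty$ with tame estimates via Lemma~\usref{Lemma0bSob}.

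The main technical obstacle is the bookkeeping required in the last step: one must propagate the tame structure (high-regularity norms of $h_0,\tilde h$ paired only with low-regularity norms of $v_0$) through the Taylor-type expansion of $g_0$ and $g_0^{-1}$ at $\rho=0$, the resulting expansion of the indicial operator of $\delta_{g_0}\sfG_{g_0}$ in terms of $h_{(0)}$, and the subsequent product estimates. This is a direct but somewhat tedious computation in the spirit of the proof of Proposition~\usref{PropE1Struct}, with no conceptual obstruction beyond the care needed to avoid spurious derivative losses on $v_0$.
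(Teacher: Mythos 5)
Your proof is correct and follows essentially the same strategy as the paper, but the way you reach the key reduction is slightly different, so a brief comparison is warranted. You treat the fourth summand $-(D_{g_0}E_\cdot)(\chi v_0)\tilde h$ routinely, as does the paper; and both proofs rest on the same two structural facts: (i) the variation of $\Ups$ with both arguments changed in lockstep is $\cO(\tilde h)$, and (ii) the indicial operator of the ``first-slot-only'' piece plus $E_{g_0}$ annihilates trace-free tangential-tangential tensors. The difference is in how (i) and (ii) are isolated. The paper inserts the total derivative $D_{(g;g_0)}\Ups(\chi v_0;\chi v_0)$ to split $-D_2|_{g_0}\Ups(g;\chi v_0)$ into the lockstep piece plus $D_1|_g\Ups(\chi v_0;g_0)$, and then computes the indicial operator of $D_1|_g\Ups(\cdot;g_0)+E_{g_0}$ directly (display~\eqref{EqPfGaugeModPf}). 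You instead differentiate the identity $\Ups(g_0+sh;g_0+sh)=0$ to obtain $D_2|_{g_0}\Ups(g_0;h)=\delta_{g_0}\sfG_{g_0}h$, Taylor-expand the first argument from $g_0$ to $g$ (absorbing the remainder integral $\dot\theta'_1$ into the $\cO(\tilde h)$ bucket), and then verify the annihilation for $\delta_{g_0}\sfG_{g_0}$ and $E_{g_0}$ separately. These are algebraically equivalent reorganizations---indeed your chain-rule identity is precisely the paper's total-derivative identity specialized to $g=g_0$---and both yield the same intermediate form $-\delta_{g_0}\sfG_{g_0}(\chi v_0)+E_{g_0}(\chi v_0)$ modulo tame $\cO(\tilde h)$ errors. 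Your explicit indicial computations (the first column of $\delta_g\sfG_g$ applied to $(0,0,w)$ giving $e^0$-component $\tfrac12(e_0-2)\tr_{g_{(0)}}w$, and $E_{g_0}$ giving $-2\tr_{g_{(0)}}w$) agree with~\eqref{EqE1StructGg}--\eqref{EqE1StructEE}, and your observation that the deviation from the indicial operator consists of inherently $\rho$-small $0$-derivatives plus a correction coming from $(g_\sfb-g_\dS)+(\chi-1)h_0\in\rho^3\CI+\rho\CI\cdot R^\alpha\Hb^\infty$ is exactly the right reason the residual acts with an $\rho$-gain. You defer the tame bookkeeping, as does the paper, and the asserted estimates do close via Lemma~\usref{Lemma0bSob} in the way you indicate.
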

\begin{proof}
  We write
  \begin{equation}
  \label{EqPfGaugeModUps}
    \dot\theta = -D_{(g;g_0)}\Ups(\chi v_0;\chi v_0) + D_1|_g\Ups(\chi v_0;g_0) + E_{g_0}(\chi v_0) - (D_{g_0}E_\cdot)(\chi v_0)\tilde h.
  \end{equation}
  The first summand is equal to minus
  \[
    \frac{\dd}{\dd s}\Ups\bigl(g_0+\chi(h_0+s v_0)+\tilde h; g_0+\chi(h_0+s v_0) \bigr)\big|_{s=0}.
  \]
  The metrics in both arguments of $\Ups(\cdot;\cdot)$ agree up to the term $\tilde h\in R^\alpha\rho^\beta\Hb^\infty$, and thus the same holds true for the inverse metrics. Evaluating the Christoffel symbols of the two metrics in the frame $e_\mu$ as in the proof of Proposition~\ref{PropE1Struct} and using the expression for $\Ups$ given in~\eqref{EqEGauge} implies that $-D_{(g;g_0)}\Ups(\chi v_0;\chi v_0)\in R^\alpha\rho^\beta\Hb^\infty$, and it satisfies the tame bound~\eqref{EqPfGaugeModTame}.

  We evaluate the second and third summand of~\eqref{EqPfGaugeModUps} using the formula~\eqref{EqE1D1Ups}. Write $v_{(0)}=\tau^2 v_0=v_0(\tau\pa_\mu,\tau\pa_\nu)\,\dd x^\mu\,\dd x^\nu$. The indicial operator of $D_1|_g\Ups(\cdot;g_0)+E_{g_0}$ can be computed using~\eqref{EqE1StructGg} and \eqref{EqE1StructBoxg} to be
  \begin{equation}
  \label{EqPfGaugeModPf}
    -\begin{pmatrix} e_0-3 & 0 & -\tr_{g_{(0)}} \\ 0 & e_0-4 & 0 \end{pmatrix}\begin{pmatrix} \frac12 & 0 & \frac12\tr_{g_{(0)}} \\ 0 & I & 0 \\ \frac12 g_{(0)} & 0 & I-\frac12 g_{(0)}\tr_{g_{(0)}} \end{pmatrix} + \begin{pmatrix} 1 & 0 & -2\tr_{g_{(0)}} \\ 0 & 0 & 0 \end{pmatrix};
  \end{equation}
  the key point is that this annihilates $(0,0,v_{(0)})$ since $\tr_{g_{(0)}}v_{(0)}=0$. The difference between $D_1|_g\Ups(\cdot;g_0)+E_{g_0}$ and its indicial operator is of class $(\rho\CI+R^\alpha\rho^\beta\Hb^\infty)\Diffb^1$ (with tame estimates for its coefficients), and thus the sum of the second and third summands of~\eqref{EqPfGaugeModUps} lies in $R^\alpha\rho^\beta\Hb^\infty$ indeed.

  Finally, the fourth summand of~\eqref{EqPfGaugeModUps} is of class $R^\alpha\rho^\beta\Hb^\infty$ as a consequence of $\tilde h\in R^\alpha\rho^\beta\Hb^\infty$. This proves~\eqref{EqPfGaugeModTame}.
\end{proof}

\begin{rmk}[Origin of the decay of $\dot\theta$]
\label{RmkPfGaugeModWhy}
  In view of the formula~\eqref{EqPfGaugeModUps}, the fact that $\dot\theta$ decays (even though the input $\chi v_0$ in~\eqref{EqPfLin123} does not) has an entirely conceptual explanation. First of all, the vanishing of the first and fourth terms in~\eqref{EqPfGaugeModUps} is \emph{automatic} since the solution metric and the background metric (i.e.\ the first and second argument of $\Ups$, and the argument of $E_{g_0}$) are changed in lockstep. The second ingredient is the fact that the indicial operator of $D_1|_g\Ups+E_{g_0}$ annihilates $v_0$. \emph{This is also automatic} by the following reasoning: the term $v_0$ arises as an indicial solution of $L_{h_0,\tilde h}$, corresponding to the indicial root $0$. By virtue of constraint damping, as discussed after~\eqref{EqE1AuxCD}, it must therefore necessarily satisfy the linearized gauge condition on the indicial operator level, i.e.\ $I(D_1|_g\Ups+E_{g_0},0)v_0=0$. (The computation~\eqref{EqPfGaugeModPf} merely verifies this through an explicit computation.)
\end{rmk}

We can now prove the main result of this section.

\begin{proof}[Proof of Theorem~\usref{ThmESol}]
  Let $d\in\N$ be as in Proposition~\ref{PropE2Decay}. For $k\in\N_0$, we define the spaces
  \begin{align*}
    \bfB^k&:=R^\alpha\Hb^k(\cI^+_{R_0};\upbeta^*(\tau^{-2}S^2 T^*X)) \oplus R^\alpha\rho^\beta\Hb^k\bigl(\Omega_{\rho_0,R_0};\upbeta^*(S^2\,{}^0 T^*M)\bigr) \\
      &\quad \hspace{16em} \oplus R^\alpha\rho^\beta\Hb^k\bigl(\Omega_{\rho_0,R_0};\upbeta^*({}^0 T^*X)\bigr), \\
    B^k&:=R^\alpha\rho^\beta\Hb^k\bigl(\Omega_{\rho_0,R_0};\upbeta^*(S^2\,{}^0 T^*M)\bigr) \\
      &\quad\hspace{6em} \oplus R^\alpha\Hb^k\bigl(\Sigma_{\rho_0,R_0};\upbeta^*(S^2\,{}^0 T^*M)\bigr) \oplus R^\alpha\Hb^k\bigl(\Sigma_{\rho_0,R_0};\upbeta^*(S^2\,{}^0 T^*M)\bigr).
  \end{align*}
  For $k\geq d$ and $(h_0,\tilde h,\theta)\in\bfB^\infty$ with $\|(h_0,\tilde h,\theta)\|_{\bfB^{3 d}}<\delta_0$ for $\delta_0>0$ sufficiently small, set
  \[
    \Phi(h_0,\tilde h,\theta) := \bigl( P(h_0,\tilde h,\theta),\ \tilde h|_{\Sigma_{\rho_0,R_0}},\ \cL_{-\rho\pa_\rho}\tilde h|_{\Sigma_{\rho_0,R_0}}\bigr) - (0,\ubar h_0,\ubar h_1),
  \]
  which is a map from a subset of $\bfB^\infty$ to $B^\infty$. The task is then to solve
  \begin{equation}
  \label{EqEStabEq}
    \Phi(h_0,\tilde h,\theta)=0.
  \end{equation}
  We accomplish this by applying the main result of \cite{SaintRaymondNashMoser} (see also \cite[Remark on p.~220]{SaintRaymondNashMoser}). The required tame estimates for $\Phi$ follow from Lemma~\ref{LemmaE1Map} and, for the initial data part of $\Phi$, from Lemma~\ref{Lemma0bSob}\eqref{It0bSobTrace}. The required low regularity estimates for the linearization and second derivative of $\Phi$ are straightforward consequences of the algebra properties of $\Hb^d$. The right inverse of $D_{h_0,\tilde h,\theta}\Phi$ is constructed using Proposition~\ref{PropE2Decay} via~\eqref{EqPfLin123}--\eqref{EqPfLin123theta} (using Lemma~\ref{LemmaPfGaugeMod}): this produces, for $f\in R^\alpha\rho^\beta\Hb^\infty(\Omega_{\rho_0,R_0})$ and $v_0,v_1\in R^\alpha\Hb^\infty(\Sigma_{\rho_0,R_0})$, tensors $v_0\in R^\alpha\Hb^\infty$, $\tilde v\in R^\alpha\rho^\beta\Hb^\infty$, and a 1-form $\dot\theta\in R^\alpha\rho^\beta\Hb^\infty$ so that
  \[
    D_{h_0,\tilde h,\theta}\Phi(v_0,\tilde v,\dot\theta) = \bigl(D_{h_0,\tilde h,\theta}P(v_0,\tilde v,\dot\theta), \tilde v|_{\Sigma_{\rho_0,R_0}}, \cL_{-\rho\pa_\rho}\tilde v|_{\Sigma_{\rho_0,R_0}}\bigr) = (f,v_0,v_1);
  \]
  and $v_0,\tilde v,\dot\theta$ satisfy tame estimates. Using the smoothing operators from Lemma~\ref{Lemma0bSmooth}, applied component-wise to $v_0,\tilde v,\dot\theta$ in the bundle splitting induced by the frame $e_\mu$, we can thus apply the Nash--Moser iteration of \cite{SaintRaymondNashMoser} to solve the desired equation~\eqref{EqEStabEq}.
\end{proof}

\section{Nonlinear stability; smoothness at the conformal boundary}
\label{SSt}

We continue working on the domains $\Omega_{\rho_0,R_0}$ from Definition~\ref{Def0bDom}. In this section, we will work with the gauge-fixed Einstein operator $P=P(h_0,\tilde h,\theta)$ from~\eqref{EqEEinOp} to solve initial value problems for the Einstein vacuum equations~\eqref{EqIEin}; the metric solving the Einstein equations will be $g=g_\sfb+\chi h_0+\tilde h$, with the KdS metric $g_\sfb$ defined by Lemma~\ref{Lemma0bStruct}, and with $\chi=\chi(\rho)\in\CIc([0,\frac12\rho_0))$ equal to $1$ on $[0,\frac14\rho_0]$ as in~\eqref{EqECutoffs}. We remark that in this section, we use Theorem~\ref{ThmESol} as a black box; our goal is to get sharper control on the solution $g$ given by this Theorem via careful modifications of the gauge.

Note that given a solution $h_0,\tilde h,\theta$ of the initial value problem for $P(h_0,\tilde h,\theta)=0$ as in Theorem~\ref{ThmESol}, the fact that $\theta$ lies in $R^\alpha\rho^\beta\Hb^\infty$ and in particular does not necessarily have an asymptotic expansion at $\cI^+$ means that also $\tilde h$ does not necessarily have an asymptotic expansion. When, however, $g=g_\sfb+\chi h_0+\tilde h$ solves the Einstein vacuum equations $\Ric(g)-\Lambda g=0$ (in the gauge $\Ups(g;g_0)+E_{g_0}(g-g_0)-\tilde\chi\theta=0$), we will begin by demonstrating how to exploit the diffeomorphism invariance of these equations and pull back $g$ by suitable diffeomorphisms to put it into the same type of gauge but now with $\theta$ vanishing to infinite order at $\rho=0$ (see Proposition~\ref{PropStGauge} below). In this new gauge then, we can prove that $g$ \emph{is} log-smooth at $\cI^+$ (Lemma~\ref{LemmaStLog}). We can then eliminate all logarithms via further pullbacks (Proposition~\ref{PropStCI}). For a comparison with the method of proof of \cite{ChruscielDelayLeeSkinnerConfCompReg} in the Riemannian setting, see Remarks~\ref{RmkStRiem1} and \ref{RmkStRiem2} below.

The initial data of $g$ at the Cauchy hypersurface $\Sigma_{\rho_0,R_0}$ of $\Omega_{\rho_0,R_0}$ are its first and second fundamental form, denoted $\gamma$ and $k$, respectively. To capture the behavior of $\gamma,k$ uniformly down to the boundary at infinity $\cK\cap\Sigma_{\rho_0,R_0}$ of $\Sigma_{\rho_0,R_0}$, we work with the b-cotangent bundle of $\Sigma_{\rho_0,R_0}$, which is
\[
  \Tb^*\Sigma_{\rho_0,R_0} = \R\frac{\dd R}{R}\oplus T^*\Sph^2.
\]

\begin{lemma}[Initial data of $g_\sfb$]
\label{LemmaStData}
  Denote by $\gamma_\sfb$ and $k_\sfb$ the first and second fundamental form of $g_\sfb$ at $\Sigma_{\rho_0,R_0}$, respectively. Then
  \[
    \gamma_\sfb,\ k_\sfb \in \CI(\Sigma_{\rho_0,R_0};S^2\,\Tb^*\Sigma_{\rho_0,R_0}),
  \]
  with $\gamma_\sfb$ positive definite.
\end{lemma}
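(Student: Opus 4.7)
The plan is to reduce the lemma to Lemma~\ref{Lemma0bStruct} by observing that on the collar $\{\rho\geq\rho_0/2\}\subset\breve M$ of $\Sigma_{\rho_0,R_0}$, the 0-b- and b-structures are freely interchangeable: since $\rho$ is bounded away from $0$ here, multiplication by appropriate smooth positive powers of $\rho$ in the second and third factors of the decomposition~\eqref{Eq0bTSplit} provides a smooth bundle isomorphism ${}^{0,\bop}T^*\breve M\cong\Tb^*\breve M$ on this collar. By Lemma~\ref{Lemma0bStruct}, $g_\sfb$ is then a smooth Lorentzian section of $S^2\Tb^*\breve M$ in a neighborhood of $\Sigma_{\rho_0,R_0}$, with $\tfrac{\dd\rho}{\rho}$ uniformly timelike.

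For $\gamma_\sfb$, I note that the inclusion $\iota\colon\Sigma_{\rho_0,R_0}\hookrightarrow\breve M$ is a smooth b-map: $R$ serves as a common boundary defining function for $\{R=0\}\subset\Sigma_{\rho_0,R_0}$ and (a subset of) $\cK\subset\breve M$, while $\rho=\rho_0$ lies in the interior of the $\rho$-coordinate of $\breve M$. Hence $\iota^*$ sends smooth sections of $S^2\Tb^*\breve M$ to smooth sections of $S^2\Tb^*\Sigma_{\rho_0,R_0}$, and the induced bundle map $\iota^*\Tb^*\breve M\to\Tb^*\Sigma_{\rho_0,R_0}$ simply kills the $\tfrac{\dd\rho}{\rho}$-component. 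Setting $\gamma_\sfb:=\iota^*g_\sfb$ therefore yields $\gamma_\sfb\in\CI(\Sigma_{\rho_0,R_0};S^2\Tb^*\Sigma_{\rho_0,R_0})$. Positive-definiteness is immediate, since $\Sigma_{\rho_0,R_0}$ is a level set of $\rho$ and $\dd\rho$ is $g_\sfb$-timelike uniformly down to $R=0$, so $\gamma_\sfb$ is the restriction of a Lorentzian metric to a uniformly spacelike subspace.

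For $k_\sfb$, I will use the standard identity $k_\sfb=-\tfrac{1}{2}\iota^*\cL_\nu g_\sfb$, with $\nu$ the future-directed $g_\sfb$-unit normal to $\Sigma_{\rho_0,R_0}$. The musical dual $(\tfrac{\dd\rho}{\rho})^\sharp$ is a smooth section of ${}^{0,\bop}T\breve M$ (via the dual of the isomorphism~\eqref{Eq0bTIso}), and $N:=(-g_\sfb^{-1}(\tfrac{\dd\rho}{\rho},\tfrac{\dd\rho}{\rho}))^{1/2}$ is a smooth positive function on $\breve M$ by~\eqref{Eq0bStructTime}. Hence $\nu=\pm N^{-1}(\tfrac{\dd\rho}{\rho})^\sharp$ is a smooth section of ${}^{0,\bop}T\breve M$, and thus of $\Tb\breve M$ on the collar. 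Since $\Vb(\breve M)$ is a Lie algebra under commutators (with b-one-forms being closed under $\dd$ and contraction with b-vector fields), Cartan's formula gives $\cL_\nu g_\sfb\in\CI(\text{collar};S^2\Tb^*\breve M)$, and pullback by $\iota$ yields $k_\sfb\in\CI(\Sigma_{\rho_0,R_0};S^2\Tb^*\Sigma_{\rho_0,R_0})$.

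The only substantive input is Lemma~\ref{Lemma0bStruct}; beyond that, the proof is structural bookkeeping with the 0-b and b bundles, and I do not anticipate any real obstacle.
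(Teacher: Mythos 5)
Your proof is correct and takes essentially the same route as the paper: both arguments hinge on Lemma~\ref{Lemma0bStruct} together with the observation that on the collar $\{\rho\geq\rho_0/2\}$, where $\rho$ is bounded away from $0$, the 0-b and b structures are interchangeable, so $g_\sfb$ and its unit normal $\nu$ are smooth b-objects and the restriction to $\Sigma_{\rho_0,R_0}$ behaves well. The only cosmetic difference is that you obtain the smoothness of $k_\sfb$ via the Lie-derivative formula $k_\sfb=-\tfrac12\iota^*\cL_\nu g_\sfb$ and the closedness of $\Vb$ under brackets, whereas the paper instead appeals to the smoothness of the Christoffel symbols of $g_\sfb$ in the 0-b frame~\eqref{EqE1StructFrame}; both are equivalent structural bookkeeping.
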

\begin{proof}
  Recall that $g_\sfb\in\CI(\breve M;S^2\,{}^{0,\bop}T^*\breve M)$ is a Lorentzian signature section over $\Omega_{\rho_0,R_0}$. Near $\Sigma_{\rho_0,R_0}$, where $\rho$ is bounded away from $0$, local frames of ${}^{0,\bop}T^*\breve M$ are given by $\dd\rho$, $\frac{\dd R}{R}$, and a frame of $T^*\Sph^2$. Since $\dd\rho=\dd(\tilde r^{-1})$ is timelike for $g_\sfb$, with $0>g_\sfb^{-1}(\dd\rho,\dd\rho)\in\CI(\Sigma_{\rho_0,R_0})$, the future unit normal vector to $\Sigma_{\rho_0,R_0}$ is an element of $\CI(\Sigma_{\rho_0,R_0};{}^{0,\bop}T\breve M)$ with negative squared length. This implies the statement about $\gamma_\sfb$. The membership of $k_\sfb$ now follows, for example, from the smoothness of the Christoffel symbols in the (smooth) frame~\eqref{EqE1StructFrame} of ${}^{0,\bop}T\breve M$.
\end{proof}

Proposition~\ref{PropStCI}\eqref{ItStCICI}, proved in~\S\ref{SsStCI} (which is self-contained, i.e.\ does not rely on any other results proved here), produces a smooth diffeomorphism $\phi$ of $\breve M$ which preserves $\cI^+$ pointwise and maps $\cK$ to itself so that
\begin{equation}
\label{EqEStabFG}
  g_\sfb^{\rm FG} := \phi^*g_\sfb
\end{equation}
is in Fefferman--Graham form (see the explanation after Theorem~\ref{ThmIMain}) and still satisfies $g_\sfb^{\rm FG}-g_\dS\in\rho^3\CI(\Omega_{\rho_0,R_0};\upbeta^*(S^2\,{}^0 T^*M))$. We shall then prove:

\begin{thm}[Nonlinear stability of the cosmological region]
\label{ThmEStabEin}
  Let $R_0>0$ and $\rho_0\in(0,\bar\rho]$ in the notation of Definitions~\usref{Def0bMfd} and \usref{Def0bDom}. Let $\alpha>0$ and $d=d_4+2$ (or more generally $d\in\N$). Then there exists $D\in\N$ so that the following holds. For all $\beta\in(0,1)$ and $\delta_0>0$, there exists an $\eps>0$ so that if
  \[
    \gamma = \gamma_\sfb + \tilde\gamma,\quad
    k = k_\sfb + \tilde k,\qquad
    \tilde\gamma,\ \tilde k \in R^\alpha\Hb^\infty(\Sigma_{\rho_0,R_0};S^2\,\Tb^*\Sigma_{\rho_0,R_0}),
  \]
  with $\|\tilde\gamma\|_{R^\alpha\Hb^D}<\eps$ and $\|\tilde k\|_{R^\alpha\Hb^D}<\eps$, and with $(\gamma,k)$ satisfying the constraint equations, the maximal globally hyperbolic development of the initial data $\gamma,k$ contains a region isometric to
  \[
    (\Omega_{\rho_0,R_0},g),\qquad g=g_\sfb^{\rm FG}+h,
  \]
  where $h$ is as follows. There exist $h_{0,\mu\nu},\tilde h_{i,\mu\nu}\in R^\alpha\Hb^\infty(\cI^+_{R_0})$, $i=2,3,\ldots$, so that, in the frame $\{e_\mu\}=\{\tau\pa_\tau,\tau\pa_{x^1},\tau\pa_{x^2},\tau\pa_{x^3}\}$:
  \begin{enumerate}
  \item $h_0$ and $\tilde h_i$ are tangential-tangential tensors, i.e.\ $h_{0,\mu\nu}=\tilde h_{i,\mu\nu}=0$ unless $\mu,\nu\geq 1$;
  \item for all $N\in\N$, we have
    \[
      h_{\mu\nu}(\rho,R,\omega) - h_{0,\mu\nu}(R,\omega) - \sum_{i=2}^N \rho^i\tilde h_{i,\mu\nu}(R,\omega) \in R^\alpha\rho^N\Hb^\infty\bigl(\Omega_{\rho_0,R_0};\upbeta^*(S^2\,{}^0 T^*M)\bigr);
    \]
  \item write $h_{(0)}:=h_{0,i j}\dd x^i\otimes_s\dd x^j$. Then $\tilde h_2\neq 0$ unless $g_{(0)}:=\dd x^2+h_{(0)}$ is flat;
  \item $h_{(3)}:=\tilde h_{3,i j}\dd x^i\otimes_s\dd x^j$ is weighted transverse traceless, i.e.
    \begin{equation}
    \label{EqEStabEinTT}
      \tr_{g_{(0)}}h_{(3)}=\delta_{g_{(0)}}(|x|^{-3}h_{(3)})=0;
    \end{equation}
  \item\label{ItEStabEinSmall} $\|h_{0,\mu\nu}\|_{R^\alpha\Hb^d}<\delta_0$ and $\|h_{\mu\nu}-\chi h_{0,\mu\nu}\|_{R^\alpha\rho^\beta\Hb^d}<\delta_0$.
  \end{enumerate}
\end{thm}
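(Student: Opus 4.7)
The plan is to combine the nonlinear existence result of Theorem~\ref{ThmESol} with a constraint propagation argument and the two-step conformal regularization outlined in~\S\ref{SssIEFG}. First, from $(\gamma,k)$ I would construct gauged Cauchy data $(\ubar h_0,\ubar h_1)$ for $\tilde h$ at $\Sigma_{\rho_0,R_0}$ of class $R^\alpha\Hb^\infty$ and of small $R^\alpha\Hb^D$-norm: since $\chi$ and $\tilde\chi$ vanish near $\rho=\rho_0$, the requirements that $(g,\cL_{-\rho\pa_\rho}g)|_\Sigma$ induce $(\gamma,k)$ and that the gauge condition $\Ups(g;g_\sfb)+E_{g_\sfb}(g-g_\sfb)=0$ hold at $\Sigma_{\rho_0,R_0}$ reduce to prescribing the lapse, shift, and their first transversal derivatives, which admits a tame solution. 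Feeding $(\ubar h_0,\ubar h_1)$ into Theorem~\ref{ThmESol} yields $(h_0,\tilde h,\theta)$ with $P(h_0,\tilde h,\theta)=0$; set $g:=g_\sfb+\chi h_0+\tilde h$.

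Next, I would establish that $g$ solves the Einstein vacuum equations via constraint propagation. Setting $\eta:=\Ups(g;g_\sfb+\chi h_0)+E_{g_\sfb+\chi h_0}\tilde h-\tilde\chi\theta$, the equation $P=0$ reads $\Ric(g)-\Lambda g=\tilde\delta_g^*\eta$, and the twice-contracted second Bianchi identity yields $2\delta_g\sfG_g\tilde\delta_g^*\eta=0$. The constraint damping computation~\eqref{EqE1AuxCD} places all indicial roots of this 1-form wave equation in $\{\Re\lambda>0\}$. The initial data of $\eta$ at $\Sigma_{\rho_0,R_0}$ vanish: $\eta|_\Sigma=0$ by construction of $(\ubar h_0,\ubar h_1)$, while $(\cL_{-\rho\pa_\rho}\eta)|_\Sigma=0$ follows from the constraints on $(\gamma,k)$ by contracting $\Ric(g)-\Lambda g=\tilde\delta_g^*\eta$ with the future unit normal at $\Sigma_{\rho_0,R_0}$. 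A uniqueness argument parallel to Proposition~\ref{PropE2Decay}, applied to this wave equation whose indicial roots are strictly positive, then yields $\eta\equiv 0$, so $\Ric(g)=\Lambda g$ on $\Omega_{\rho_0,R_0}$.

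With $g$ a genuine solution of the Einstein vacuum equations inducing $(\gamma,k)$, the remaining task is to pull $g$ back along diffeomorphisms of $\Omega_{\rho_0,R_0}$ to put it in Fefferman--Graham form relative to $g_\sfb^{\rm FG}$ of~\eqref{EqEStabFG}. Applying Proposition~\ref{PropStGauge}, I would first construct a diffeomorphism $\phi_1=\exp(V_1)$ with $V_1\in R^\alpha\rho^\beta\Hb^\infty\otimes\cV_{0,\bop}$, so that the gauge 1-form of $\phi_1^*g$ vanishes to infinite order at $\cI^+$; Lemma~\ref{LemmaStLog} then shows $\phi_1^*g$ is log-smooth at $\cI^+$. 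A further diffeomorphism from Proposition~\ref{PropStCI}, which eliminates log terms and normalizes the $\rho^3$ coefficient, produces $g_{\rm final}=g_\sfb^{\rm FG}+h$ with the weighted TT property~\eqref{EqEStabEinTT} built in. The tangential--tangential nature of $h_{0,\mu\nu}$ and $\tilde h_{i,\mu\nu}$ is then a direct consequence of Lemma~\ref{LemmaEIndRoot} (the indicial kernel of $L_{h_0,\tilde h}$ at $\lambda=0$ lies in $\tau^{-2}\ker\tr_{g_{(0)}}$) together with the analogous triviality statements at $\lambda=2,3,4,\ldots$ that underlie Proposition~\ref{PropStCI}; the nonvanishing of $\tilde h_2$ outside the flat case follows from the identification of $\tilde h_2$ with (a multiple of) the Schouten tensor of $g_{(0)}$. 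Smallness in $R^\alpha\Hb^d$ propagates through each step via tame estimates, giving~\eqref{ItEStabEinSmall}.

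The main obstacle I anticipate is the constraint propagation step: showing $\eta\equiv 0$ rather than merely smallness requires running the energy estimate of Proposition~\ref{PropE2Reg} together with the indicial-operator decay argument of Lemma~\ref{LemmaE2Inv} on $2\delta_g\sfG_g\tilde\delta_g^*\eta=0$, exploiting that \emph{all} its indicial roots lie in the right half plane. The framework is identical to~\S\ref{SsE2}, but one has to rephrase the estimates for 1-form-valued rather than symmetric 2-tensor-valued equations, and the implicit appearance of $h_0,\tilde h,\theta$ in the coefficients of this wave equation must be handled with the same tame care as in Propositions~\ref{PropE2Reg}--\ref{PropE2Decay}. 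A secondary technical point is verifying that the map $(\gamma,k)\mapsto(\ubar h_0,\ubar h_1)$ is tame in the relevant weighted b-Sobolev norms, so that the smallness hypothesis of Theorem~\ref{ThmESol} is satisfied; this is routine but requires checking the elliptic equations on $\Sigma_{\rho_0,R_0}$ arising from the gauge condition in b-spaces uniformly down to $\cK\cap\Sigma_{\rho_0,R_0}$.
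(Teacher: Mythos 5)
Your proposal follows the paper's proof of Theorem~\ref{ThmEStabEin} essentially step for step: construct gauged Cauchy data $(\ubar h_0,\ubar h_1)$ inducing $(\gamma,k)$ and satisfying $\Ups(g;g_\sfb)|_{\Sigma_{\rho_0,R_0}}=0$; invoke Theorem~\ref{ThmESol} and propagate the constraints to obtain $\Ric(g)-\Lambda g=0$; apply Proposition~\ref{PropStGauge}, Lemma~\ref{LemmaStLog}, and Proposition~\ref{PropStCI} to upgrade the asymptotics. This is the same route as the paper.

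One clarification on the step you flag as the main obstacle: concluding $\eta\equiv 0$ does not require the decay/indicial-root machinery of Propositions~\ref{PropE2Reg}--\ref{PropE2Decay}, nor does it use the positivity of the indicial roots of $\delta_g\sfG_g\tilde\delta_g^*$ --- that positivity matters only for the quantitative asymptotics of iterates in~\S\ref{SsE2}. Since $\tilde E$ is zeroth order, $\delta_g\sfG_g\tilde\delta_g^*$ is a principally scalar hyperbolic operator on $1$-forms, and the Cauchy data of $\eta$ vanish at $\Sigma_{\rho_0,R_0}$ (the first by construction of the data, the second by the constraint equations via the second Bianchi identity). Standard uniqueness for hyperbolic initial value problems --- finite speed of propagation together with the fact that $\Sigma_{\rho_0,R_0}^+$ is spacelike --- then gives $\eta\equiv 0$ on $\Omega_{\rho_0,R_0}$, with no reference to the behavior at $\cI^+$. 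A secondary bookkeeping point the paper addresses explicitly: to invoke Proposition~\ref{PropStCI}\eqref{ItStCIHb} for $g=g_\sfb^{\rm FG}+\chi h_0+\tilde h$, one should run Steps 1--2 with $g_\sfb^{\rm FG}$ (rather than $g_\sfb$) as the background metric in the definition of $P$; keeping $g_\sfb$ as background is possible but requires absorbing $g_\sfb-g_\sfb^{\rm FG}\in\rho^3\CI$ into the expansion, which your writeup does not mention.
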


The smallness of the low regularity norms in part~\eqref{ItEStabEinSmall} implies, by Sobolev embedding, that the geometry of $(\Omega_{\rho_0,R_0},g)$ is qualitatively the same as that of $(\Omega_{\rho_0,R_0},g_\sfb)$, so in particular the boundary hypersurfaces $\Sigma_{\rho_0,R_0}$ and $\Sigma^+_{\rho_0,R_0}$ are spacelike. As already remarked in~\S\ref{SI}, the weight $|x|^{-3}$ in~\eqref{EqEStabEinTT} is due to the fact that we Taylor expand not in the defining function $\tau$ of the conformal boundary of de~Sitter space, but in the defining function $\rho=\tau|x|^{-1}$ of the conformal boundary $\cI^+\subset\breve M$ of Kerr--de~Sitter.

\begin{proof}[Proof of Theorem~\usref{ThmEStabEin}]
  Write $\Sigma:=\Sigma_{\rho_0,R_0}$.

  \pfstep{Step~1. Construction of Cauchy data.} We need to prove the existence of
  \[
    \ubar h_0,\ \ubar h_1 \in R^\alpha\Hb^\infty\bigl(\Sigma;\upbeta^*(S^2\,{}^0 T^*M)\bigr),
  \]
  with $R^\alpha\Hb^{D-1}$-norms bounded by a constant times $\eps$, so that for a metric $g$ with $(g-g_\sfb,\cL_{-\rho\pa_\rho}(g-g_\sfb))|_\Sigma=(\ubar h_0,\ubar h_1)$, the first and second fundamental form of $g$ at $\Sigma$ are $\gamma$ and $k$, respectively, and furthermore the gauge condition $\Ups(g;g_\sfb)=0$ is satisfied at $\Sigma$.

  We proceed as in the proof of \cite[Proposition~3.10]{HintzVasyKdSStability}. We work in a product neighborhood $(-\frac12\rho_0,\frac12\rho_0)_\ft\times\Sigma$ of $\Sigma$ where $\ft:=\rho_0-\rho$; we can thus regard $S^2\Tb^*\Sigma$ as a subbundle of $S^2\,{}^{0,\bop}T^*_\Sigma\breve M$. Writing $g_\sfb=a\,\dd\ft^2+2\,\dd\ft\otimes_s b+\gamma_\sfb$ where $a\in\CI(\Sigma)$ and $b\in\CI(\Sigma;T^*\Sigma)$, we then set
  \[
    \ubar h_0 := \tilde\gamma \in R^\alpha\Hb^\infty(\Sigma;S^2\,{}^{0,\bop}T^*_\Sigma\breve M).
  \]
  Thus $g_0:=g_\sfb+\ubar h_0=a\,\dd\ft^2+2\,\dd\ft\otimes_s b+\gamma$ has first fundamental form $\gamma$ at $\ft=0$ indeed. If $\nu_\sfb$ and $\nu$ denote the future unit normals at $\Sigma$ for $g_\sfb$ and $g_0$, respectively, we then have $\nu=\nu_\sfb+\tilde\nu$ where $\tilde\nu\in R^\alpha\Hb^\infty(\Sigma;{}^{0,\bop}T_\Sigma\breve M)$. To match the desired second fundamental form, we require for all $X,Y\in\Tb\Sigma$ the equality
  \begin{align*}
    k(X,Y) &\stackrel{!}{=} g(\nabla^g_X Y,\nu) \\
      &= g_\sfb(\nabla_X^{g_\sfb}Y,\nu_\sfb) + \ubar h_0(\nabla^{g_\sfb}_X Y,\nu_\sfb) + g(\nabla_X^{g_\sfb}Y,\tilde\nu) + g\bigl((\nabla^g_X-\nabla^{g_\sfb}_X)Y,\nu\bigr),
  \end{align*}
  where we write $g:=g_\sfb+\ubar h_0+\ft \ubar h_1=g_0+\ft \ubar h_1$; equivalently,
  \begin{equation}
  \label{EqEStabEink}
    g_0\bigl( (\nabla_X^{g_0+\ft \ubar h_1}-\nabla_X^{g_0})Y, \nu\bigr) = \tilde k(X,Y) - \ubar h_0(\nabla^{g_\sfb}_X Y,\nu_\sfb) - g_0(\nabla^{g_\sfb}_X Y,\tilde\nu) - g_0\bigl((\nabla_X^{g_0}-\nabla_X^{g_\sfb})Y,\nu\bigr).
  \end{equation}
  Since $\tilde k,\ubar h_0\in R^\alpha\Hb^\infty$, the right hand side of this equation is the evaluation of an element of $R^\alpha\Hb^\infty(\Sigma;S^2\,\Tb^*\Sigma)$ on $(X,Y)$, which is moreover small in $R^\alpha\Hb^{D-1}$ due to the smallness assumption on $\tilde\gamma\in R^\alpha\Hb^D$. To ensure the desired gauge condition, we need
  \begin{equation}
  \label{EqEStabEinUps}
    \tr_{g_0} ( \nabla^{g_0+\ft \ubar h_1} - \nabla^{g_0} ) = 0.
  \end{equation}
  In any frame of ${}^{0,\bop}T\breve M$, we compute at $\ft=0$
  \[
    2\bigl(\Gamma(g_0+\ft \ubar h_1)_{\mu\nu}^\lambda - \Gamma(g_0)_{\mu\nu}^\lambda\bigr) = \ft_{;\mu}(\ubar h_1)_\nu{}^\lambda + \ft_{;\nu}(\ubar h_1)_\mu{}^\lambda - \ft^{;\lambda}(\ubar h_1)_{\mu\nu},
  \]
  where the indices are raised using $g_0$.

  Let us take a frame with $e_0=\nu$, while the $e_i$, $i=1,2,3$, span $e_0^\perp=\Tb\Sigma$; then $\ft_{;\mu}=0$ unless $\mu=0$, in which case $\ft_{;0}=e_0\ft>0$ is of class $\CI+R^\alpha\Hb^\infty$ and bounded away from zero. The left hand side of~\eqref{EqEStabEink}, for $X=e_i$ and $Y=e_j$, equates to $\frac12(e_0\ft)(\ubar h_1)_{i j}$; this thus uniquely determines $(\ubar h_1)_{i j}\in R^\alpha\Hb^\infty$. The $e_j$-component of the gauge condition~\eqref{EqEStabEinUps} reads
  \[
    0 = 2\ft^{;\mu}(\ubar h_1)_{\mu j} = 2\ft^{;0}(\ubar h_1)_{0 j} + 2\ft^{;i}(\ubar h_1)_{i j}.
  \]
  This uniquely determines $(\ubar h_1)_{0 j}\in R^\alpha\Hb^\infty$. The $e_0$-component of~\eqref{EqEStabEinUps} finally reads
  \[
    0 = 2\ft^{;\mu}(\ubar h_1)_{\mu 0} - \ft_{;0}(\ubar h_1)_\mu{}^\mu = -2\ft_{;0}(\ubar h_1)_{0 0} + 2\ft^{;i}(\ubar h_1)_{i 0} + \bigl(\ft_{;0}(\ubar h_1)_{0 0} - \ft_{;0}(\ubar h_1)_i{}^i\bigr),
  \]
  and this uniquely determines $(\ubar h_1)_{0 0}\in R^\alpha\Hb^\infty$.

  \pfstep{Step~2. Solution of the gauge-fixed equation.} We now use Theorem~\ref{ThmESol} to find $h_0,\tilde h,\theta$ so that $P(h_0,\tilde h,\theta)=0$, $(\tilde h,\cL_{-\rho\pa_\rho}\tilde h)|_\Sigma=(\ubar h_0,\ubar h_1)$. Writing
  \begin{equation}
  \label{EqEStabEingg0}
    g=g_\sfb+\chi h_0+\tilde h,\qquad g_0=g_\sfb+\chi h_0,
  \end{equation}
  we thus have
  \[
    \Ric(g)-\Lambda g - \tilde\delta_g^*\eta = 0,\qquad \eta := \Ups(g;g_0) + E_g(g-g_0) - \tilde\chi\theta.
  \]
  By the support conditions on $E_g,\tilde\chi$, we have $\eta|_\Sigma=0$. Since the initial data $\gamma,k$ satisfy the constraint equations, a standard argument (see e.g.\ \cite[Chapter~6, Lemma~8.2]{ChoquetBruhatGerochMGHD}) implies that also $\cL_{-\rho\pa_\rho}\eta=0$ at $\Sigma$. But since $\eta$ satisfies the homogeneous wave type equation $\delta_g\sfG_g\tilde\delta_g^*\eta=0$, we conclude that $\eta=0$ throughout $\Omega_{\rho_0,R_0}$, and therefore $\Ric(g)=\Lambda g$ on $\Omega_{\rho_0,R_0}$ as well.

  \pfstep{Step~3. Improved asymptotics in a new gauge.} As a starting point to improve the asymptotics of the spacetime metric, it is convenient to perform Steps~1 and 2 above for $g_\sfb^{\rm FG}$ in place of $g_\sfb$ (and in particular using $g_\sfb^{\rm FG}$ in the definition of the operator $P$ in~\eqref{EqEEinOp}). Applying Proposition~\ref{PropStGauge}, Lemma~\ref{LemmaStLog}, and Proposition~\ref{PropStCI} below then produces the desired solution $h$.
\end{proof}

\subsection{Improving the gauge condition; log-smoothness}

In order to facilitate the construction of suitable pullbacks of $g$ required to complete Step~3 of the above proof, we define a class of maps between (subsets of) $\breve M$ in the coordinates $\tau\geq 0$, $x\in\R^3$ on $M$ by\footnote{In the coordinates $\rho=\frac{\tau}{|x|}$, $R=|x|$, $\omega=\frac{x}{|x|}$, the map $\phi_{a,b}$ roughly maps $(\rho,R,\omega)$ into $(\rho',R',\omega')=(\rho+\rho\cO(a),R+\rho R\cO(b),\omega+\rho\cO(b))$.}
\begin{equation}
\label{EqStPhiab}
  \phi_{a,b} \colon (\tau,x) \mapsto \bigl( \tau(1+a(\tau,x)), x+\tau b(\tau,x) \bigr).
\end{equation}
We require $a$ (real-valued) and $b$ ($\R^3$-valued) to be conormal on $\breve M$ and to decay at $\cI^+$. Concretely, we will have $a,b\in\rho^3\CI+R^\alpha\rho^\beta\Hb^\infty(\Omega_{\rho_0,R_0})$, \emph{and we always tacitly require that $a,b$ vanish near $\rho=\rho_0$}. If $a,b$ are small in $\rho^3\cC^1+R^\alpha\rho^\beta\cC_\bop^1$, then an inverse function argument shows that $\phi_{a,b}$ restricted to $\Omega_{\rho_0,R_0}\setminus(\cI^+\cup\cK)$ is a diffeomorphism onto its image (whose inverse is then again of the form $\phi_{a',b'}$ for some $a',b'$. Note that the vector field
\[
  V_{a,b} := \frac{\dd}{\dd s}\phi_{s a,s b}\Big|_{s=0} = a(\tau,x)\tau\pa_\tau + b(\tau,x)\cdot\tau\pa_x
\]
is then a section of the bundle $\upbeta^*({}^0 T^*M)={}^{0,\bop}T^*\breve M$ of class $\rho^3\CI+R^\alpha\rho^\beta\Hb^\infty$; this is the natural bundle of which generators of diffeomorphisms of the metric $g$ (which is a nondegenerate section of $\upbeta^*(S^2\,{}^0 T^*M)$) are sections of. In an iterative construction of $a,b$, we need the following result.

\begin{lemma}[Pullbacks along $\phi_{a,b}$]
\label{LemmaStPullback}
  Let $R_1\in(0,R_0)$, $\alpha,\beta>0$,and $\dot\alpha\geq\alpha$, $\dot\beta\geq\beta$. Suppose that $a,b\in R^\alpha\rho^\beta\Hb^\infty(\Omega_{\rho_0,R_1})$ and $\dot a,\dot b\in R^{\dot\alpha}\rho^{\dot\beta}\Hb^\infty(\Omega_{\rho_0,R_1})$ are small in $R^\alpha\rho^\beta\cC_\bop^1$ so that $\phi_{a,b}$ and $\phi_{a+\dot a,b+\dot b}$ map $\Omega_{\rho_0,R_1}$ into $\Omega_{\rho_0,R_0}$. Let $u\in R^{\tilde\alpha}\rho^{\tilde\beta}\Hb^\infty(\Omega_{\rho_0,R_0})$ where $\tilde\alpha,\tilde\beta\in\R$. Then
  \[
    \phi_{a+\dot a,b+\dot b}^*u - \phi_{a,b}^*u \equiv V_{\dot a,\dot b}u \bmod R^{\alpha+\dot\alpha+\tilde\alpha}\rho^{\beta+\dot\beta+\tilde\beta}\Hb^\infty(\Omega_{\rho_0,R_1}).
  \]
\end{lemma}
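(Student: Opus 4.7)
The plan is to apply the fundamental theorem of calculus twice: once in a deformation parameter $s \in [0, 1]$ interpolating between $\phi_{a,b}$ and $\phi_{a+\dot a, b+\dot b}$, and once along the straight line from $(\tau, x)$ to its image under the intermediate map. Writing $a_s := a + s\dot a$, $b_s := b + s\dot b$, and $\phi_s := \phi_{a_s, b_s}$, so that $\phi_s(\tau, x) = (\tau(1 + a_s), x + \tau b_s)$, a direct calculation gives
\[
  \tfrac{d}{ds}[u \circ \phi_s](\tau, x) = \tau\dot a(\tau, x)\,(\partial_\tau u)(\phi_s(\tau, x)) + \tau\dot b(\tau, x) \cdot (\partial_x u)(\phi_s(\tau, x)).
\]
Parametrizing the segment from $(\tau, x)$ to $\phi_s(\tau, x)$ by $\gamma_{s, r} := \phi_{r a_s, r b_s}$, $r \in [0, 1]$, another application of the fundamental theorem of calculus along this segment expands each of $(\partial_\tau u)(\phi_s(\tau,x))$ and $(\partial_x u)(\phi_s(\tau, x))$ as the value at $(\tau, x)$ plus an integral in $r$. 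The resulting $r$- and $s$-independent leading terms assemble into $V_{\dot a, \dot b}u(\tau, x)$, yielding
\[
  \phi_{a+\dot a, b+\dot b}^* u - \phi_{a, b}^* u = V_{\dot a, \dot b}u + \int_0^1\!\!\int_0^1 \gamma_{s, r}^*\bigl(\cE_s[\dot a, a_s, \dot b, b_s; u]\bigr)\, dr\, ds,
\]
where $\cE_s$ is a finite sum of terms of the form $\dot c_1 \cdot c_{2, s} \cdot Q u$ with $\dot c_1 \in \{\dot a, \dot b\}$, $c_{2, s} \in \{a_s, b_s\}$, and $Q$ one of the second-order operators $\tau^2 \partial_\tau^2$, $\tau^2 \partial_\tau \partial_{x^i}$, $\tau^2 \partial_{x^i}\partial_{x^j}$, modified by bounded smooth factors arising from the coordinate conversion $\tau'|_{\gamma_{s,r}} = \tau(1 + r a_s)$.

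Each $Q$ is a polynomial of degree $2$ in the $0$-vector fields $\tau\partial_\tau, \tau\partial_{x^i} \in \cV_0(M) \subset \cV_{0,\bop}(\breve M)$, so $Qu \in R^{\tilde\alpha}\rho^{\tilde\beta}\Hb^\infty(\Omega_{\rho_0, R_0})$. Since $\alpha, \dot\alpha, \beta, \dot\beta > 0$, the algebra property of weighted b-Sobolev spaces gives $\dot c_1 \cdot c_{2, s} \in R^{\alpha + \dot\alpha}\rho^{\beta + \dot\beta}\Hb^\infty(\Omega_{\rho_0, R_0})$ with norms uniform in $s \in [0, 1]$, and hence $\cE_s[\dot a, a_s, \dot b, b_s; u] \in R^{\alpha + \dot\alpha + \tilde\alpha}\rho^{\beta + \dot\beta + \tilde\beta}\Hb^\infty(\Omega_{\rho_0, R_0})$ uniformly in $s$.

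The main obstacle will be to verify that $\gamma_{s, r}^*$ preserves this weighted Sobolev class on $\Omega_{\rho_0, R_1}$ with seminorms uniform in $(s, r) \in [0, 1]^2$; once this is in hand, Fubini completes the argument. The inputs $r a_s, r b_s$ are bounded in $R^\alpha \rho^\beta \cC_\bop^1$ uniformly in $(s, r)$, so each $\gamma_{s, r}$ is a $\cC_\bop^1$-small perturbation of the identity on $\breve M$, maps $\Omega_{\rho_0, R_1}$ into $\Omega_{\rho_0, R_0}$, and preserves the boundary hypersurfaces $\cI^+$ and $\cK$. It follows that $\rho \circ \gamma_{s, r}$ and $R \circ \gamma_{s, r}$ are smooth positive multiples of $\rho$ and $R$ with multiplicative factors bounded above and below uniformly in $(s, r)$, and each element of $\Vb(\breve M)$ pulls back to a b-vector field whose coefficients in a fixed frame are bounded in $\cC_\bop^N$ uniformly in $(s, r)$ for every $N$; together these imply the required uniform boundedness of $\gamma_{s, r}^*$ on weighted b-Sobolev spaces.
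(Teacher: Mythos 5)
Your proof is correct and follows essentially the same approach as the paper's: a Taylor-type expansion of the composition in the perturbation parameters $\dot a,\dot b$, with the leading term giving $V_{\dot a,\dot b}u$ and the remainder a double integral of products of two small coefficients times second 0-derivatives of $u$ precomposed with nearby maps $\gamma_{s,r}$. You organize this as two nested first-order FTC applications (in $s$ then $r$) rather than the paper's second-order Taylor expansion followed by a first-order correction, and you make explicit the point the paper leaves implicit—that $\gamma_{s,r}^*$ preserves weighted $\Hb^\infty$ regularity on $\Omega_{\rho_0,R_1}$ uniformly in $(s,r)$; both your decomposition and that verification are sound.
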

\begin{proof}
  This follows from a second order Taylor expansion. To wit,
  \begin{align*}
    &(\phi_{a+\dot a,b+\dot b}^*u)(\tau,x) = u(\tau(1+a+\dot a),x+\tau(b+\dot b)) \\
    &\quad= u(\tau(1+a),x+\tau b) + \dot a(\tau,x)(\tau\pa_\tau u)(\tau(1+a),x+\tau b) \\
    &\qquad \hspace{8.6em} + \dot b(\tau,x)\cdot(\tau\pa_x u)(\tau(1+a),x+\tau b) \\
    &\qquad + \sum_{|\gamma|=2} \frac{2}{\gamma!}\int_0^1 (1-s) (\tau\dot a,\tau\dot b)^\gamma ((\pa_\tau,\pa_x)^\gamma u)(\tau(1+a+s\dot a),x+\tau(b+s\dot b))\,\dd s.
  \end{align*}
  The first term on the right is $\phi_{a,b}^*u$; the last term involves quadratic expressions in $\dot a$ and $\dot b$ and thus is of class $R^{\tilde\alpha+2\dot\alpha}\rho^{\tilde\beta+2\dot\beta}\Hb^\infty$. We rewrite the second term as
  \[
    (\tau\pa_\tau u)(\tau(1+a),x+\tau b) = (\tau\pa_\tau u)(\tau,x) + \int_0^1 ((a\tau\pa_\tau+b\cdot\tau\pa_x)\tau\pa_\tau u)(\tau(1+s a),x+\tau s b)\,\dd s.
  \]
  Multiplied by $\dot a$, the integral contributes a term of class $R^{\alpha+\dot\alpha+\tilde\alpha}\rho^{\beta+\dot\beta+\tilde\beta}\Hb^\infty$. We rewrite the third term in an analogous fashion.
\end{proof}

Analogous results, with the same proof, can be obtained for other classes of $u$. Of particular relevance for us is the following instance, with $\dot\alpha=\alpha$, in the notation of~\eqref{EqEStabEingg0}:
\begin{equation}
\label{EqStPullback}
\begin{split}
  &\phi_{a+\dot a,b+\dot b}^*g - \phi_{a,b}^*g \equiv \delta_{\omega_{\dot a,\dot b}}^* g_0 \bmod R^{2\alpha}\rho^{\beta+\dot\beta}\Hb^\infty, \\
  &\hspace{10em} \omega_{\dot a,\dot b}:=2 g_0(V_{\dot a,\dot b},\cdot) \in R^\alpha\rho^{\dot\beta}\Hb^\infty\bigl(\Omega_{\rho_0,R_1};\upbeta^*({}^0 T^*M)\bigr).
\end{split}
\end{equation}
(The $\rho$-weight arises from the fact that $g$ is equal to a \emph{non-decaying} tensor $g_0$ plus a decaying correction term.)

\begin{prop}[Improving the gauge]
\label{PropStGauge}
  Suppose $h_0,\tilde h,\theta$ are as in~\eqref{EqESolh} with small weighted $\Hb^d$-norms, and let $R_1\in(0,R_0)$. Let $g=g_\sfb+\chi h_0+\tilde h$ and $g_0=g_\sfb+\chi h_0$, and suppose that $\Ric(g)-\Lambda g=0$ and $\Ups(g;g_0)+E_g(g-g_0)-\tilde\chi\theta=0$. Let $\beta^-\in(0,\beta)$ and $\eps>0$. Then there exist $a\in R^\alpha\rho^\beta\Hb^\infty(\Omega_{\rho_0,R_0})$ and $b\in R^\alpha\rho^\beta\Hb^\infty(\Omega_{\rho_0,R_0};\R^3)$ with $R^\alpha\rho^{\beta^-}\cC_\bop^1$-norms less than $\eps$ (which implies that $\phi_{a,b}$ maps $\Omega_{\rho_0,R_1}$ diffeomorphically into a subset of $\Omega_{\rho_0,R_0}$) so that, for $g':=\phi_{a,b}^*g$, we have\footnote{Of course, we have $\Ric(g')-\Lambda g'=0$ as well.}
  \begin{equation}
  \label{EqStGauge}
    \theta' := \Ups(g';g_0) + E_{g'}(g'-g_0) \in R^\alpha\rho^\infty\Hb^\infty\bigl(\Omega_{\rho_0,R_1};\upbeta^*({}^0 T^*M)\bigr).
  \end{equation}
\end{prop}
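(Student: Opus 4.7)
The plan is to iteratively cancel the gauge violation order-by-order at $\cI^+$ via pullback by diffeomorphisms $\phi_{a,b}$, and then Borel-sum the resulting corrections. Write $\Theta(g') := \Ups(g';g_0) + E_{g'}(g'-g_0)$, so the hypothesis reads $\Theta(g) = \tilde\chi\theta$, which equals $\theta \in R^\alpha\rho^\beta\Hb^\infty$ near $\cI^+$. Combining the pullback formula~\eqref{EqStPullback} with the linearization~\eqref{EqE1D1Ups} of $\Ups$ and the variation of $E_\cdot$, the function $\theta_{a,b} := \Theta(\phi_{a,b}^* g)$ will satisfy the variational identity
\begin{equation*}
\theta_{a+\dot a,b+\dot b} - \theta_{a,b} \equiv -\tfrac12\,\cP\,\omega_{\dot a,\dot b} \pmod{R^{2\alpha}\rho^{\beta+\dot\beta}\Hb^\infty + \text{quadratic in }\omega},
\end{equation*}
where $\omega_{\dot a,\dot b} := 2 g_0(V_{\dot a,\dot b},\cdot) \in R^\alpha\rho^{\dot\beta}\Hb^\infty$ and $\cP := 2(\delta_{g_0}\sfG_{g_0} - E_{g_0})\delta^*_{g_0}$ is the gauge potential wave operator. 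Its indicial family, computed for $g_0 = g_\dS$ in~\eqref{EqE1ChoiceBoxGauge}, has indicial roots $\{-1,2,3,4\}$; for general $g_0 = g_\sfb + \chi h_0$ the indicial family acquires lower-order dependence on $g_{(0)}$, but the indicial roots remain $\{-1,2,3,4\}$ pointwise on $\cI^+$, since the relevant matrix algebra depends only on $\tr_{g_{(0)}}g_{(0)} = 3$.

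Starting from $a^{(0)} = b^{(0)} = 0$ and $\theta^{(0)} := \theta$, I will inductively construct $\dot a_n, \dot b_n \in R^\alpha\rho^{\beta+n}\Hb^\infty$ such that the partial sums $a^{(n+1)} := a^{(n)} + \dot a_n$ and $b^{(n+1)} := b^{(n)} + \dot b_n$ yield $\theta^{(n+1)} \in R^\alpha\rho^{\beta+n+1}\Hb^\infty$. At step $n$, I solve
\begin{equation*}
\tfrac12\,\cP\,\omega_n \equiv \theta^{(n)} \pmod{R^\alpha\rho^{\beta+n+1}\Hb^\infty}
\end{equation*}
for $\omega_n \in R^\alpha\rho^{\beta+n}\Hb^\infty$ by Mellin inversion in $\rho$ along the contour $\Re\lambda = \beta + n$: since $\beta + n$ is non-integer and hence not an indicial root, the inverse $I(\cP,\lambda)^{-1}$ is holomorphic in a neighborhood of this contour, and a contour-shifting argument analogous to the proof of Lemma~\ref{LemmaE2Inv} produces $\omega_n$ in the required space. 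The pointwise non-degenerate bundle map $(\dot a,\dot b) \mapsto \omega_{\dot a,\dot b}$ then yields $\dot a_n,\dot b_n$. A standard Borel summation (inserting cutoffs $\chi(\rho/\eps_n)$ with $\eps_n \searrow 0$ sufficiently fast) produces $a, b \in R^\alpha\rho^\beta\Hb^\infty$ with $a - a^{(n)}, b - b^{(n)} \in R^\alpha\rho^{\beta+n}\Hb^\infty$ for every $n$. Controlling step sizes ensures $\|a\|_{R^\alpha\rho^{\beta^-}\cC_\bop^1}, \|b\|_{R^\alpha\rho^{\beta^-}\cC_\bop^1} < \eps$, so that $\phi_{a,b}$ is a diffeomorphism from $\Omega_{\rho_0,R_1}$ into $\Omega_{\rho_0,R_0}$. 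Since $\theta' := \theta_{a,b}$ agrees with $\theta^{(n)}$ modulo $R^\alpha\rho^{\beta+n}\Hb^\infty$ for every $n$, it lies in $\bigcap_n R^\alpha\rho^{\beta+n}\Hb^\infty = R^\alpha\rho^\infty\Hb^\infty$.

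The principal technical hurdle will be the indicial inversion step in weighted b-Sobolev spaces: elements of $R^\alpha\rho^{\beta+n}\Hb^\infty$ do not a priori admit a clean ``leading coefficient'' at $\cI^+$, so the Mellin-transform argument must be set up carefully (along the lines of Lemma~\ref{LemmaE2Inv}) to extract $\omega_n$ in the appropriate weighted space and to guarantee that one gains exactly one unit of $\rho$-decay per step. Secondary bookkeeping concerns are tracking the quadratic error terms in the variational identity (which are of higher order in $\rho$ by construction and thus absorbed automatically) and verifying that the variable-coefficient contributions to $\cP$ from $\chi h_0$ and $\tilde h$ do not disturb the indicial root structure; both follow from the fact that these perturbations decay at $\cI^+$ and hence contribute only to the subleading terms in the indicial equation.
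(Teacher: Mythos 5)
Your overall strategy — iteratively cancel the gauge 1-form order by order via pullback, invert the indicial operator of the gauge potential wave operator by a Mellin-transform argument, and Borel-sum — is indeed the paper's approach. However, your iteration scheme has a step size that is too large, and this causes the error analysis to fail.

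You propose to gain one full power of $\rho$ per step, constructing $\dot a_n,\dot b_n\in R^\alpha\rho^{\beta+n}\Hb^\infty$ and aiming for $\theta^{(n+1)}\in R^\alpha\rho^{\beta+n+1}\Hb^\infty$. But the pullback formula~\eqref{EqStPullback} shows that
\[
\phi_{a+\dot a,b+\dot b}^*g - \phi_{a,b}^*g \equiv \delta_{\omega_{\dot a,\dot b}}^*g_0 \bmod R^{2\alpha}\rho^{\beta+\dot\beta}\Hb^\infty,
\]
so with $\dot\beta=\beta+n$ the error term sits at weight $\rho^{2\beta+n}$. Since $\beta\in(0,1)$, we have $2\beta+n<\beta+n+1$, so this term is \emph{not} absorbed into your target space $R^\alpha\rho^{\beta+n+1}\Hb^\infty$. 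The same obstruction recurs in the linearization of $\Theta$: the difference of linearization points $\phi_{a_n,b_n}^*g-g_0\in R^\alpha\rho^\beta\Hb^\infty$ contributes, when applied to $\delta^*\omega_n\in R^\alpha\rho^{\beta+n}\Hb^\infty$, a term at weight $\rho^{2\beta+n}$, again short of the claimed $\rho^{\beta+n+1}$. Your remark that these quadratic errors ``are of higher order in $\rho$ by construction and thus absorbed automatically'' is simply false when $\beta<1$. The root cause is that the deviation $g-g_0$ is only $O(\rho^\beta)$, so quadratic-in-deviation terms only gain a factor of $\rho^\beta$ per step, not $\rho^1$; the gain per step cannot exceed $\beta$.

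The paper fixes this by taking steps of size $\beta$: it constructs $\dot a_k,\dot b_k\in R^\alpha\rho^{(k+1)\beta}\Hb^\infty$ and shows $\theta'_k\in R^\alpha\rho^{(k+1)\beta}\Hb^\infty$, so that the pullback error $R^{2\alpha}\rho^{\beta+(k+1)\beta}\Hb^\infty=R^{2\alpha}\rho^{(k+2)\beta}\Hb^\infty$ lands precisely in the target. Once the step size is $\beta$, the Mellin contours sit at $\Re\lambda=(k+1)\beta$, which can hit the integer indicial roots $\{2,3,4\}$ of the gauge potential wave operator; the paper therefore reduces $\beta$ to be irrational before starting the iteration. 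Your scheme, with contours at $\Re\lambda=\beta+n$, avoids the integers automatically since $\beta\in(0,1)$ — but only because the step size is wrong. If you correct the iteration to gain $\beta$ per step, you will need the irrationality hypothesis (or some other device) to avoid the indicial roots. The remaining ingredients — Mellin inversion as in Lemma~\ref{LemmaE2Inv}, extraction of $\dot a,\dot b$ from the 1-form via the nondegenerate bundle map, and the cutoff/Borel summation — match the paper.
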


\begin{rmk}[Eliminating $\theta'$ altogether]
\label{RmkStGaugeTheta}
  While the infinite order vanishing of $\theta'$ at $\rho=0$ suffices for our purposes, one may ask whether one can choose $a,b$ so that, in fact, $\theta'=0$. We expect this to be possible by solving a suitable wave map equation backwards from $\cI^+$ using extensions of the methods of \cite{HintzAsymptoticallydS,BernhardtLinearExpanding}, but do not pursue this question further here.
\end{rmk}

\begin{rmk}[Presentation of the KdS metric]
\label{RmkStGaugeKdS}
  The same conclusions hold, by the same proof, for $g_\sfb^{\rm FG}$ in place of $g_\sfb$. Since we construct $g_\sfb^{\rm FG}$ only later, we formulate Proposition~\ref{PropStGauge} with $g_\sfb$.
\end{rmk}

\begin{proof}[Proof of Proposition~\usref{PropStGauge}]
  We may assume that $\beta$ is irrational by reducing it by an arbitrarily small amount. (This ensures that $k\beta\notin\N$ for all $k\in\N$, and thus avoids integer coincidences with indicial roots.) We will iteratively construct
  \[
    \dot a_k \in R^\alpha\rho^{(k+1)\beta}\Hb^\infty(\Omega_{\rho_0,R_0}),\qquad
    \dot b_k \in R^\alpha\rho^{(k+1)\beta}\Hb^\infty(\Omega_{\rho_0,R_0};\R^3)
  \]
  with the following properties for $\delta:=\beta-\beta^->0$ and for all $k\in\N_0$:
  \begin{enumerate}
  \item $\dot a_k$ and $\dot b_k$ are supported in $\rho<\frac12\rho_0$;
  \item $\|\dot a_k\|_{R^\alpha\rho^{(k+1)\beta-\delta}\Hb^{d_4+1+k}}<\eps 2^{-k}$ and $\|\dot b_k\|_{R^\alpha\rho^{(k+1)\beta-\delta}\Hb^{d_4+1+k}}<\eps 2^{-k}$;
  \item setting $a_k:=\sum_{i=0}^{k-1}\dot a_i$ and $b_k:=\sum_{i=0}^{k-1}\dot b_i$, as well as $g'_k:=\phi_{a_k,b_k}^*g$, we have
    \begin{equation}
    \label{EqStGaugeImpr}
      \theta'_k := \Ups_E(g'_k;g_0) := \Ups(g'_k;g_0) + E_{g'_k}(g'_k-g_0) \in R^\alpha\rho^{(k+1)\beta}\Hb^\infty(\Omega_{\rho_0,R'_k})
    \end{equation}
    where $R'_0:=R_0>R'_1>R'_2>\cdots>R_1$.
  \end{enumerate}
  The radii $R'_k$ are chosen so that $\phi_{a_k,b_k}$ maps $\Omega_{\rho_0,R'_k}$ into $\Omega_{\rho_0,R_0}$; in view of the smallness requirement on the $\dot a_k,\dot b_k$ in $\cC_\bop^1$, a sequence $R'_k$ with the required properties will indeed exist. We also note that $a_k,b_k\in R^\alpha\rho^\beta\Hb^\infty$. The domains of definition of the $\dot a_k,\dot b_k$ can be fixed to be $\Omega_{\rho_0,R_0}$ since even if they are initially defined on $\Omega_{\rho_0,R'_k}$ they can be extended, with controlled norms, using Lemma~\ref{Lemma0bExt}. We shall thus omit the specification of domains in what follows.

  Now, if for some $k\in\N_0$ the functions $\dot a_i,\dot b_i$, $i=0,\ldots,k-1$, have already been constructed, we need to find $\dot a_k,\dot b_k$ with
  \[
    \Ups_E(\phi_{a_k+\dot a_k,b_k+\dot b_k}^*g;g_0) - \Ups_E(\phi_{a_k,b_k}^*g;g_0) - \theta'_k \in R^\alpha\rho^{(k+2)\beta}\Hb^\infty.
  \]
  Using~\eqref{EqStPullback} with $\dot\beta=(k+1)\beta$, this is equal to
  \[
    \Ups_E(\phi_{a_k,b_k}^*g + \delta_{g_0}^*\omega_{\dot a_k,\dot b_k} + h_k; g_0) - \Ups_E(\phi_{a_k,b_k}^*g;g_0) - \theta'_k
  \]
  where $\delta_{\omega_{\dot a_k,\dot b_k}}^*g_0\in R^\alpha\rho^{(k+1)\beta}\Hb^\infty$ and $h_k\in R^\alpha\rho^{(k+2)\beta}\Hb^\infty$. Expanding $\Ups_E$ in the first argument, this is further equal to
  \[
    D_1|_{\phi_{a_k,b_k}^*g}\Ups_E(\delta_{g_0}^*\omega_{\dot a_k,\dot b_k};g_0) - \theta'_k
  \]
  modulo $R^\alpha\rho^{(k+2)\beta}\Hb^\infty + R^{2\alpha}\rho^{2(k+1)\beta}\Hb^\infty=R^\alpha\rho^{(k+2)\beta}\Hb^\infty$; and finally we can replace the point of linearization $\phi_{a_k,b_k}^*g\equiv g_0\bmod R^\alpha\rho^\beta\Hb^\infty$ by $g_0$ upon committing another error in $R^\alpha\rho^{(k+2)\beta}\Hb^\infty$. We must therefore construct $\omega\in R^\alpha\rho^{(k+1)\beta}\Hb^\infty$ so that
  \begin{equation}
  \label{EqStGaugeEq}
    D_1|_{g_0}\Ups_E(\delta_{g_0}^*\omega;g_0)\equiv\theta'_k\bmod R^\alpha\rho^{(k+2)\beta}\Hb^\infty.
  \end{equation}
  Once we have such an $\omega$, the cut-off 1-form $\chi(\rho/\eps_k)\omega$ satisfies the same equation and, when $\eps_k$ is chosen sufficiently small, it moreover has small norm in $R^\alpha\rho^{(k+1)\beta-\delta}\Hb^{d_4+1+k}$.\footnote{Such an argument is familiar from proofs of Borel's lemma.} One can then read off $\dot a_k,\dot b_k$ with the analogous properties from the coefficients of $\chi(\rho/\eps_k)\omega$.

  In order to solve~\eqref{EqStGaugeEq}, we can further replace $D_1|_{g_0}\Ups_E(\delta_{g_0}^*\cdot;g_0)$ (the gauge potential wave operator) by its indicial operator
  \[
    I(D_1|_{g_0}\Ups_E(\delta_{g_0}^*\cdot;g_0))=I(-\delta_{g_0}\sfG_{g_0}+E_{g_0})\circ I(\delta_{g_0}^*)
  \]
  whose indicial family, which we denote here by $I(\lambda)$, was computed in~\eqref{EqE1ChoiceBoxGauge}; in particular, $I(\lambda)$ is invertible for $\lambda\notin\{-1,2,3,4\}$. Passing to the Mellin transform side, we are thus led to set
  \[
    \hat\omega(\lambda,R,\omega) := I(\lambda)^{-1} (\cM\theta'_k)(\lambda,R,\omega),\qquad \Re\lambda=(k+1)\beta.
  \]
  In view of the isomorphism~\eqref{EqE2Plancherel}, we then have $\omega:=\cM_{(k+1)\beta}^{-1}\hat\omega\in R^\alpha\rho^{(k+1)\beta}\Hb^\infty$, as desired.

  To complete the proof, it remains to set $a:=\sum_{i=0}^\infty\dot a_i$ and $b:=\sum_{i=0}^\infty\dot b_i$. By construction, both sums converge in every $R^\alpha\rho^\beta\Hb^N$-norm and define elements of $R^\alpha\rho^{\beta-\delta}\cC_\bop^1$ with small norm. Since $\Ups_E(\phi_{a,b}^*g;g_0)\equiv\Ups_E(\phi_{a_k,b_k}^*g;g_0)\bmod R^\alpha\rho^{(k+1)\beta}\Hb^\infty$ lies in $R^\alpha\rho^{(k+1)\beta}\Hb^\infty$ for all $k$, we obtain~\eqref{EqStGauge}.
\end{proof}

Note that $g'$ can be written as $g'=g_\sfb+\chi h_0+\tilde h$ where $h_0,\tilde h$ are of the same class~\eqref{EqESolh} as before (and in fact only $\tilde h$ is changed). To avoid cumbersome notation, we now relabel $g',R_1$ as $g,R_0$. 

\begin{lemma}[Log-smoothness in the improved gauge]
\label{LemmaStLog}
  Suppose $h_0,\tilde h$ are as in~\eqref{EqESolh}, with small weighted $\Hb^d$-norms. Suppose that
  \begin{equation}
  \label{EqStLogAssm}
    P(h_0,\tilde h,\theta) = 0,\qquad  \theta\in R^\alpha\rho^\infty\Hb^\infty(\Omega_{\rho_0,R_0};\upbeta^*({}^0 T^*M)).
  \end{equation}
  Then $\tilde h$ is log-smooth down to $\cI^+$; that is, for each $i\in\N$ there exist $m_i\in\N_0$ and $\tilde h_{i,m}\in R^\alpha\Hb^\infty(\cI^+_{R_0};\upbeta^*(S^2\,{}^0 T^*M))$ so that, for all $N\in\N$,
  \begin{equation}
  \label{EqStLogExp}
    \tilde h(\rho,R,\omega) - \sum_{i=1}^N\sum_{m=0}^{m_i} \rho^i(\log\rho)^m\tilde h_{i,m}(R,\omega) \in R^\alpha\rho^N\Hb^\infty\bigl(\Omega_{\rho_0,R_0};\upbeta^*(S^2\,{}^0 T^*M)\bigr).
  \end{equation}
\end{lemma}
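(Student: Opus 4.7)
The plan is to extract the log-polyhomogeneous expansion of $\tilde h$ at $\cI^+$ by an inductive indicial operator inversion, essentially a logarithmic refinement of Lemma \ref{LemmaE2Inv}. First I would rewrite $P(h_0,\tilde h,\theta)=0$ using Proposition \ref{PropE1Struct}\eqref{ItE1StructInd}: separating the indicial part from the remainder, this becomes
\begin{equation*}
  I_{g_{(0)}}(\rho'\pa_{\rho'})\tilde h = F,\qquad F = F_\text{lin} + F_\text{nl} + F_\theta,
\end{equation*}
where $F_\text{lin}$ collects the action of $P-I_{g_{(0)}}(\rho'\pa_{\rho'})$ on $\tilde h$ (which lies in $\rho\Diffb^2 + R^\alpha\rho^\beta\Hb^\infty\Diffb^2$ by Proposition \ref{PropE1Struct}\eqref{ItE1StructInd} and thus gains a factor of $\rho^{\min(1,\beta)}$ or $\rho^\beta$), $F_\text{nl}$ collects the terms of $P$ that are at least quadratic in $\tilde h$ (which gain a factor of $\rho^\beta$ as well), and $F_\theta := \tilde\delta_g^*(\tilde\chi\theta) \in R^\alpha\rho^\infty\Hb^\infty$ by assumption \eqref{EqStLogAssm}.

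Next I would set up the induction: assume that for some $N\in\N$ we have found $\tilde h_{i,m} \in R^\alpha\Hb^\infty(\cI^+_{R_0})$ for $1\leq i\leq N-1$, $0\leq m\leq m_i$, with finite $m_i$, such that
\begin{equation*}
  \tilde h^{\geq N} := \tilde h - \sum_{i=1}^{N-1}\sum_{m=0}^{m_i} \rho^i(\log\rho)^m\tilde h_{i,m} \in R^\alpha\rho^N(\log\rho)^{M_N}\Hb^\infty
\end{equation*}
for some finite integer $M_N$. The base case follows from $\tilde h\in R^\alpha\rho^\beta\Hb^\infty$ together with a first inversion (since $\beta\in(0,1)$ is not an indicial root). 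For the inductive step, substituting this decomposition into the equation and using that every contribution in $F$ raises the $\rho$-order of the partial sum by at least $1$ (from the factor of $\rho$ in $\rho\Diffb^2$, the factor of $\rho^\beta$ in the nonlinear and variable-coefficient terms combined with a subsequent inversion to clear the non-integer exponent, or the infinite order vanishing of $F_\theta$), one obtains an equation
\begin{equation*}
  I_{g_{(0)}}(\rho'\pa_{\rho'})\tilde h^{\geq N} = F^{\geq N},
\end{equation*}
where $F^{\geq N}$ is a finite sum $\sum_m \rho^{N+1}(\log\rho)^m f_m$ with $f_m\in R^\alpha\Hb^\infty(\cI^+_{R_0})$ modulo $R^\alpha\rho^{N+2}(\log\rho)^{M'}\Hb^\infty$ for some finite $M'$.

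To extract the next term, I would then extend the Mellin inversion of Lemma \ref{LemmaE2Inv} to sources with logarithmic factors: a source $\rho^{N+1}(\log\rho)^m f$ corresponds on the Mellin side to an expression with an order $m+1$ pole at $\lambda=N+1$, and applying $I_{g_{(0)}}(\lambda)^{-1}$---which by Lemma \ref{LemmaEIndRoot} is meromorphic with a simple pole at $\lambda=0$ and (by inspection of \eqref{EqEIndRootPf}) at most simple poles at $\lambda = 2,3,4$---produces a residue contribution of the form $\rho^{N+1}(\log\rho)^{m+k}\tilde h_{N+1,\cdot}$ with $k \in \{0,1\}$ and $\tilde h_{N+1,\cdot} \in R^\alpha\Hb^\infty(\cI^+_{R_0})$, leaving a remainder in $R^\alpha\rho^{N+2}(\log\rho)^{M_{N+1}}\Hb^\infty$. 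This closes the induction and, after a Borel-type summation using Lemma \ref{Lemma0bExt} to produce a genuine (not merely formal) element realizing the prescribed Taylor series, yields \eqref{EqStLogExp}.

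The main obstacle is the careful bookkeeping of how the log exponents $m_i$ grow: each pass through a non-indicial integer adds no logs; each pass through one of the four indicial roots adds at most one; and the nonlinearity at order $\rho^N$ feeds only on already-constructed lower-order terms, so the log powers $m_i$ are finite at every order (though they do grow with $i$). The infinite-order vanishing of $\theta$ is what prevents $F_\theta$ from polluting the expansion at any finite order. Everything else---the tame product/inversion estimates on weighted b-Sobolev spaces---is by now routine via Lemma \ref{Lemma0bSob} and the Plancherel isomorphism \eqref{EqE2Plancherel}.
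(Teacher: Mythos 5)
Your overall strategy---rewriting the gauge-fixed Einstein equation as $I_{g_{(0)}}(\rho'\pa_{\rho'})\tilde h=F$, then iterating Mellin inversion to peel off $\rho^i(\log\rho)^m$ terms---matches the paper's. But your description of the source $F$ has a genuine gap. You decompose $F=F_\text{lin}+F_\text{nl}+F_\theta$, where $F_\text{lin}$ is the sub-leading part of the linearization acting on $\tilde h$, $F_\text{nl}$ collects terms at least quadratic in $\tilde h$, and $F_\theta$ comes from $\theta$. All three of those terms either gain a positive power of $\rho$ over the inductive input or vanish to infinite order. What is missing is the $\tilde h$-\emph{independent} piece of $P$: writing $P(h_0,\tilde h,\theta)=P(h_0,0,\theta)+\int_0^1 L_{h_0,s\tilde h}\tilde h\,\dd s$ as the paper does, the term $P(h_0,0,\theta)\equiv 2(\Ric(g_0)-\Lambda g_0)\bmod R^\alpha\rho^\infty\Hb^\infty$ (with $g_0=g_\sfb+\chi h_0$) is a nontrivial source that lies in $\rho\,\CI([0,\rho_0];R^\alpha\Hb^\infty)$ (Lemma~\ref{LemmaStCIg0}). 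It does not gain a factor of $\rho^\beta$ relative to anything, and its Taylor coefficients are precisely what make the coefficients $\tilde h_{i,m}$ nontrivial. If one iterates your decomposition literally, every contribution to $F$ decays by $\rho^\beta$ or is $\cO(\rho^\infty)$, which would incorrectly show that $\tilde h$ itself decays to infinite order. The fix is simple---add a fourth piece $F_0:=-2(\Ric(g_0)-\Lambda g_0)\in\rho\,\CI$ to the source and observe that $\cM F_0$ has simple poles at $\lambda=1,2,3,\ldots$---but as written the forcing mechanism is absent.

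A secondary issue: you assert, citing~\eqref{EqEIndRootPf}, that $I_{g_{(0)}}(\lambda)^{-1}$ has at most simple poles at $\lambda=2,3,4$. In fact the determinant factors as $\lambda(\lambda-2)^2(\lambda-3)^4(\lambda-4)$, and by direct computation (restricting to the $2\times2$ block coupling the $\frac{\dd\tau^2}{\tau^2}$ and $\R g_{(0)}$ components) the inverse has \emph{double} poles at $\lambda=2$ and $\lambda=3$. The paper's Lemma~\ref{LemmaEIndRoot} asserts simplicity only at $\lambda=0$. This does not endanger the conclusion---the point is merely that each pole has finite order, so only finitely many logs appear at each power of $\rho$---but your bound $k\in\{0,1\}$ on the increment of the log power is wrong and should be replaced by a bound depending on the pole order.
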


The assumption~\eqref{EqStLogAssm} is satisfied by the metric produced by Proposition~\ref{PropStGauge}; in fact, both lines of~\eqref{EqEEinOp} vanish separately. The conclusion holds (with the same proof) assuming only that $P(h_0,\tilde h,\theta)\in R^\alpha\rho^\infty\Hb^\infty$.

\begin{rmk}[Integer indicial roots]
\label{RmkStIndInt}
  The fact that the indicial roots of $L_{h_0,\tilde h}$ are integers is the reason for the log-smoothness of $\tilde h$. This fact should, however, be regarded as coincidental. If we used a different gauge for which, say, the indicial root $0$ and the corresponding space of indicial solutions was the same, but the remaining indicial roots in $\Re\lambda>0$ were different (non-integers, and possibly even in complex conjugate pairs), then $\tilde h$ would be polyhomogeneous. This would still suffice for the proof of Proposition~\ref{PropStCI} below to go through, as follows from part~\eqref{ItScCINothing} of Lemma~\ref{LemmaStCISeq} below.
\end{rmk}

\begin{proof}[Proof of Lemma~\usref{LemmaStLog}]
  We again arrange for $\beta\in(0,1)$ to be irrational by reducing it slightly if necessary. Write $g_0=g_\sfb+\chi h_0$. In the computations below, we write `$\equiv$' for equality modulo $R^\alpha\rho^\infty\Hb^\infty$. We thus have
  \begin{equation}
  \label{EqStLogEqn}
    0 = P(h_0,\tilde h,\theta) = P(h_0,0,\theta) + \int_0^1 L_{h_0,s\tilde h}\tilde h\,\dd s \equiv 2\bigl( \Ric(g_0)-\Lambda g_0 \bigr) + \int_0^1 L_{h_0,s\tilde h}\tilde h\,\dd s.
  \end{equation}
  The computations in the proof of Proposition~\ref{PropE1Struct} imply that $\Ric(g_0)-\Lambda g_0$ vanishes at $\rho=0$. Working in the splitting $[0,\rho_0]_\rho\times\cI_{R_0}^+$ and noting that $g_0\in\CI([0,\rho_0];R^\alpha\Hb^\infty(\cI^+_{R_0}))$, we thus have
  \[
    f := 2(\Ric(g_0)-\Lambda g_0) \in \rho\CI([0,\rho_0];R^\alpha\Hb^\infty).
  \]
  (See Lemma~\ref{LemmaStCIg0} below for a more precise statement.) Using~\eqref{EqE1StructInd} with $g_{(0)}=\dd x^2+\tau^2 h_0$ (cf.\ \eqref{EqE1Structg0h0}), we thus have
  \[
    I_{g_{(0)}}(\rho\pa_\rho)(\chi\tilde h) \equiv -f + \tilde f,
  \]
  where $\tilde f$ arises from the action of $R_0,\int_0^1 \tilde R_{h_0,s\tilde h}\,\dd s$ on $\tilde h$ and thus lies in $R^\alpha\rho^{\beta+1}\Hb^\infty+R^{2\alpha}\rho^{2\beta}\Hb^\infty\subset R^\alpha\rho^{2\beta}\Hb^\infty$. (The insertion of the cutoff $\chi$ produces an error $[I_{g_{(0)}}(\rho\pa_\rho),\chi]\tilde h$ which vanishes near $\rho=0$ and thus lies in $R^\alpha\rho^\infty\Hb^\infty$.) We can replace $\rho\pa_\rho$ by $\rho'\pa_{\rho'}$ (see~\eqref{EqE2DecayCoordp}--\eqref{EqE2DecayRhop}) upon committing a further error of the schematic form $\rho\Diffb^2([0,\infty)\times\cI^+_{R_0})\tilde h$, which thus lies in $R^\alpha\rho^{\beta+1}\Hb^\infty$.

  At this point, we pass to the Mellin transform in $\rho'$. The Mellin transform of $f$ is meromorphic with poles at the positive integers $1,2,3,\ldots$. Using the indicial root computation of Lemma~\ref{LemmaEIndRoot} and a contour shifting argument in the inverse Mellin transform similarly to the proof of Lemma~\ref{LemmaE2Inv} (but no longer keeping track of tame estimates), we can now draw the follow conclusions. If $\beta\in(0,\frac12)$, then $\chi\tilde h\in R^\alpha\rho^{2\beta}$. If $\beta\in(\frac12,1)$, then
  \begin{equation}
  \label{EqStLogLot}
    \chi\tilde h=\chi\sum_{m=0}^{m_1}\rho(\log\rho)^m\tilde h_{1,m}+\tilde h^\flat
  \end{equation}
  where
  \begin{equation}
  \label{EqStLogRemainder}
    \tilde h_{1,m}\in R^\alpha\Hb^\infty(\cI^+_{R_0}),\qquad \tilde h^\flat\in R^\alpha\rho^{2\beta}\Hb^\infty.
  \end{equation}
  (The logarithmic terms arise from the residue theorem when on the Mellin transform side there is a pole of order $\geq 2$ at $\lambda=1$.) In the former case, we repeat the same argument with $2\beta$ in place of $\beta$ until, after finitely many steps, we are in the latter case and thus extract the first term in the expansion of $\tilde h$.

  With $2\beta\in(1,2)$, we now proceed inductively to extract an expansion for $\tilde h^\flat$. The key point is that the coefficients of $L_{h_0,s\tilde h}$ in~\eqref{EqStLogEqn} inherit the partial log-smoothness of $\tilde h$, as follows by inspection of the explicit computations in the proof of Proposition~\ref{PropE1Struct}. To wit, armed with~\eqref{EqStLogLot}, we first use that the coefficients of $L_{h_0,s\tilde h}$ are log-smooth modulo $R^\alpha\rho^{2\beta}\Hb^\infty$, and therefore the action of this operator on $\chi\rho(\log\rho)^m\tilde h_{1,m}$ is log-smooth modulo $R^\alpha\rho^{2\beta+1-\delta}\Hb^\infty$ for any $\delta>0$ (or $\delta=0$ when $m=0$). Since $(L_{h_0,s\tilde h}-I_{g_{(0)}})\tilde h^\flat\in R^\alpha\rho^{2\beta+1}\Hb^\infty$, we thus conclude that
  \[
    I_{g_{(0)}}(\rho'\pa_\rho')\tilde h^\flat \equiv \tilde f^\flat \bmod R^\alpha\rho^{2\beta+1-\delta}\Hb^\infty,\quad \delta>0,
  \]
  where $\tilde f^\flat$ is log-smooth in $\rho$ and of class $R^\alpha\Hb^\infty$ in $(R,\omega)$. Using the (inverse) Mellin transform as before, this equation allows us to extract $\rho^2(\log\rho)^m$ leading order terms of $\tilde h^\flat$, with a remainder of class $R^\alpha\rho^{2\beta+1-\delta}\Hb^\infty$ which thus vanishes to almost a full order more at $\rho=0$ compared to~\eqref{EqStLogRemainder}. Proceeding iteratively in this fashion produces the expansion~\eqref{EqStLogExp} and finishes the proof.
\end{proof}

\begin{rmk}[Comparison with the Riemannian setting, I]
\label{RmkStRiem1}
  The idea to put $g$ into a convenient gauge condition in order to improve its asymptotic behavior is used in \cite[\S4]{ChruscielDelayLeeSkinnerConfCompReg} (where a harmonic map gauge is used), with polyhomogeneity being deduced in \cite[\S5]{ChruscielDelayLeeSkinnerConfCompReg} using \cite{AnderssonChruscielHyp}.
\end{rmk}

\subsection{Smoothness and precise Taylor expansion at the conformal boundary}
\label{SsStCI}

The gauge condition serves no further purpose now: we only used it as a means to ensure that $\tilde h$ has simple asymptotics (namely, log-smoothness) at $\rho=0$. We shall now reduce the task of further sharpening the asymptotic behavior of $\tilde h$ to the level of indicial operators of the (ungauged) linearized Einstein vacuum equations and the symmetric gradient (or Lie derivative), somewhat analogously to (but simpler than) the analysis in~\cite[\S{7}]{HintzGlueLocI}. We work in the splitting~\eqref{Eq0bT0Split} of ${}^0 T^*M$ and (via combining the splittings~\eqref{Eq0bST0Split} with~\eqref{EqE1AuxSplitRef})
\[
  S^2\,{}^0 T^*M = \R\frac{\dd\tau^2}{\tau^2} \oplus \Bigl(2\frac{\dd\tau}{\tau}\otimes_s \tau^{-1}T^*X\Bigr) \oplus \R\tau^{-2}g_{(0)} \oplus \tau^{-2}\ker\tr_{g_{(0)}},
\]
and recall $\Lambda=3$. We thus proceed to analyze the kernel of
\begin{equation}
\label{EqStCIIndRic}
\begin{split}
  2 I(D\Ric-\Lambda,\lambda) &:= I\bigl(\Box_{g_0} - 2\delta_{g_0}^*\delta_{g_0}\sfG_{g_0} + 2\sR_{g_0} - 2\Lambda,\lambda\bigr) \\
    &= \begin{pmatrix}
         3\lambda-6 & 0 & -3\lambda^2+6\lambda & 0 \\
         0 & 0 & 0 & 0 \\
         -\lambda+6 & 0 & \lambda^2-6\lambda & 0 \\
         0 & 0 & 0 & \lambda^2-3\lambda
       \end{pmatrix}
\end{split}
\end{equation}
and its relationship with the range of
\[
  I(\delta^*,\lambda) := I(\delta_{g_0}^*,\lambda) = \begin{pmatrix} \lambda & 0 \\ 0 & \frac12(\lambda+1) \\ 1 & 0 \\ 0 & 0 \end{pmatrix}.
\]
(We use~\eqref{EqE1StructGg}, \eqref{EqE1StructR}, \eqref{EqE1StructBoxg}, and \eqref{EqE1Structdelstar} to derive these expressions.) Dually, we study the range of $I(D\Ric-\Lambda,\lambda)$ and its relationship with the kernel of
\begin{equation}
\label{EqStCIdelG}
  2 I(\delta\sfG,\lambda) := 2 I(\delta_{g_0}\sfG_{g_0},\lambda) = \begin{pmatrix} \lambda-6 & 0 & 3\lambda-6 & 0 \\ 0 & 2\lambda-8 & 0 & 0 \end{pmatrix}.
\end{equation}
The dependence of these operators on $R,\omega$ is only through $g_{(0)}$, i.e.\ through the bundle splitting. Note that $I(D\Ric-\Lambda,\lambda)\circ I(\delta^*,\lambda)=0$, which is due to the diffeomorphism covariance of the Einstein operator, and $I(\delta\sfG,\lambda)\circ I(D\Ric-\Lambda,\lambda)=0$, which arises from the linearized second Bianchi identity.

We proceed to analyze the above $\lambda$-dependent $4\times 4$ and $4\times 2$ matrices, which define linear maps $\C^4\to\C^4$ and $\C^2\to\C^4$ (denoted by the same symbols). Acting on functions $h=h(\rho)$, note that $I(D\Ric-\Lambda,\rho\pa_\rho)(\rho^\lambda h)=\rho^\lambda I(D\Ric-\Lambda,\rho\pa_\rho)h$, and thus
\begin{align*}
  I(D\Ric-\Lambda,\rho\pa_\rho) (\rho^\lambda(\log\rho)^m h) &= \pa_\lambda^m\bigl( \rho^\lambda I(D\Ric-\Lambda,\lambda)h \bigr) \\
    &= \sum_{m'=0}^m \binom{m}{m'} \rho^\lambda(\log\rho)^{m'} \pa_\lambda^{m-m'}I(D\Ric-\Lambda,\lambda)h
\end{align*}
In particular, when $k\geq 1$, we have the implication
\begin{equation}
\label{EqStCILog}
\begin{split}
  &I(D\Ric-\Lambda,\rho\pa_\rho) \sum_{m=0}^k \rho^\lambda(\log\rho)^m h_m = 0 \\
  &\qquad \implies h_k\in\ker I(D\Ric-\Lambda,\lambda),\quad \pa_\lambda I(D\Ric-\Lambda,\lambda)h_k \in \ran I(D\Ric-\Lambda,\lambda).
\end{split}
\end{equation}

\begin{lemma}[Kernel of linearized Einstein modulo pure gauge]
\label{LemmaStCISeq}
  Let $\lambda\in\C$ and $h\in\C^4$; suppose that $I(D\Ric-\Lambda,\lambda)h=0$.
  \begin{enumerate}
  \item\label{ItScCINothing} If $\lambda\neq -1,0,3$, then $h\in\ran I(\delta^*,\lambda)$.
  \item\label{ItScCINoLog} If $\lambda=0,3$ and $\pa_\lambda I(D\Ric-\Lambda,\lambda)h\in\ran I(D\Ric-\Lambda,\lambda)$, then $h\in\ran I(\delta^*,\lambda)$. Moreover,
  \begin{equation}
  \label{EqStCISeq3}
    \ker I(D\Ric-\Lambda,\lambda)/\ran I(\delta^*,\lambda) = \mathspan\{(0,0,0,1)\},\qquad \lambda=0,3.
  \end{equation}
  \item\label{ItScCIMinus1} If $\lambda=-1$, then $h=\pa_\lambda I(\delta^*,-1)\omega_0+I(\delta^*,0)\omega_1$ for some $\omega_0,\omega_1\in\C^2$ with $\omega_0\in\ker I(\delta^*,-1)$.
  \end{enumerate}
  When $\lambda$ is real, all statements hold also for real vectors.
\end{lemma}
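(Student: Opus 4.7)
My plan is to prove the lemma by direct linear algebra, solving the homogeneous system $I(D\Ric-\Lambda,\lambda)h=0$ in closed form as a function of $\lambda$ and then comparing with the explicit range of $I(\delta^*,\lambda)$. Writing $h = (a,b,c,d)$ in the refined splitting (with block dimensions $1,3,1,5$), I read off from \eqref{EqStCIIndRic} that row~2 is identically zero, row~4 gives $\lambda(\lambda-3)d=0$, and rows~1 and 3 admit the common factorization
\begin{equation*}
    3(\lambda-2)(a-\lambda c)=0, \qquad (6-\lambda)(a-\lambda c)=0,
\end{equation*}
so $a=\lambda c$ for every $\lambda\in\C$. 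Hence $\ker I(D\Ric-\Lambda,\lambda)=\{(\lambda c,b,c,0)\}$ when $\lambda\notin\{0,3\}$, with the additional direction $(0,0,0,d)$ added when $\lambda\in\{0,3\}$. A direct inspection of $I(\delta^*,\lambda)$ then gives $\ran I(\delta^*,\lambda)=\{(\lambda u,\tfrac{\lambda+1}{2}v,u,0)\}$ and $\ker I(\delta^*,\lambda)=0$ for $\lambda\neq-1$, while at $\lambda=-1$ the kernel is $\{(0,v):v\in\C^3\}$.

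Part~\eqref{ItScCINothing} is immediate: when $\lambda\notin\{-1,0,3\}$ the factor $\tfrac{\lambda+1}{2}$ is nonzero, so the middle slot of the range is unrestricted and $\ran I(\delta^*,\lambda)=\ker I(D\Ric-\Lambda,\lambda)$. For part~\eqref{ItScCINoLog} I would compute $\pa_\lambda I(D\Ric-\Lambda,\lambda)$ explicitly at the two values. At $\lambda=0$, $\ran I(D\Ric-\Lambda,0)$ is spanned by $(-1,0,1,0)$, and applying $\pa_\lambda I(D\Ric-\Lambda,0)$ to a kernel element $(0,b,c,d)$ gives (up to a factor $\tfrac12$) the vector $(6c,0,-6c,-3d)$, which lies in this span iff $d=0$. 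The analogous computation at $\lambda=3$ yields $(-3c,0,-3c,3d)$, again forcing $d=0$. Once $d=0$ the element lies in $\ran I(\delta^*,\lambda)$ by the formulas for the range, and the residual $d$-slot furnishes the claimed quotient span by $(0,0,0,1)$.

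For part~\eqref{ItScCIMinus1}, the kernel of $I(D\Ric-\Lambda,-1)=\{(-c,b,c,0)\}$ has dimension $1+3=4$, while $\ran I(\delta^*,-1)=\{(-u,0,u,0)\}$ is only one-dimensional since the middle factor $\tfrac{\lambda+1}{2}$ vanishes at $\lambda=-1$. The missing middle direction is supplied by the constant derivative $\pa_\lambda I(\delta^*,-1)=\bigl(\begin{smallmatrix}1&0\\0&1/2\\0&0\\0&0\end{smallmatrix}\bigr)$ applied to $\ker I(\delta^*,-1)=\{(0,v)\}$, which produces $(0,v/2,0,0)$. This is precisely the subleading coefficient one obtains upon differentiating $I(\delta^*,\mu)(\rho^\mu\omega)$ at $\mu=-1$, consistent with the appearance of logarithmic pure-gauge solutions at the nontrivial indicial root. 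Writing $(-c,b,c,0)=I(\delta^*,-1)(c,0)+\pa_\lambda I(\delta^*,-1)(0,2b)$ with $(0,2b)\in\ker I(\delta^*,-1)$ gives the desired decomposition.

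The main ``difficulty'' is really just careful bookkeeping: one must track the block dimensions (slot~2 is 3-dimensional and slot~4 is 5-dimensional in the splitting~\eqref{EqEIndRootSplit}, so the quotient generator $(0,0,0,1)$ of~\eqref{EqStCISeq3} stands for the entire trace-free 5-dimensional direction), and handle the borderline values $\lambda=2,6$ at which one of the two row factors vanishes (the other row still enforces $a=\lambda c$, so no exception arises). The final assertion about real vectors follows because all matrices and factorizations involved have real entries when $\lambda\in\R$.
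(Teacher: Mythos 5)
Your proposal is correct and follows essentially the same route as the paper's proof: both reduce the problem to linear algebra on the explicit matrices \eqref{EqStCIIndRic} and $I(\delta^*,\lambda)$, identify the kernel as $\{(\lambda c,b,c,0)\}$ (with the extra $d$-direction at $\lambda=0,3$), compute $\pa_\lambda I(D\Ric-\Lambda,\lambda)$ on kernel elements, and note that the resulting $d$-component obstructs membership in $\ran I(D\Ric-\Lambda,\lambda)$. Your parametrization $a=\lambda c$ with the borderline cases $\lambda=2,6$ spelled out is merely a notational variant of the paper's basis-vector presentation.

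One observation worth recording: in part~\eqref{ItScCIMinus1}, both your decomposition and the paper's own proof produce $h = \pa_\lambda I(\delta^*,-1)\omega_0 + I(\delta^*,-1)\omega_1$ (with $\omega_1=(c,0)$), \emph{not} the $I(\delta^*,0)\omega_1$ appearing in the statement. The latter cannot be right, since $\ran I(\delta^*,0)$ and $\ran\bigl(\pa_\lambda I(\delta^*,-1)|_{\ker I(\delta^*,-1)}\bigr)$ both have vanishing first component, whereas $\ker I(D\Ric-\Lambda,-1)$ does not. This is a typo in the lemma statement; you implicitly corrected it, matching the paper's own proof.
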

\begin{proof}
  If $\lambda\neq 0,3$, we have
  \begin{equation}
  \label{EqStCISeqKer}
    \ker I(D\Ric-\Lambda,\lambda) = \mathspan\{ (\lambda,0,1,0),\ (0,1,0,0) \}.
  \end{equation}
  This uses that when $\lambda\neq 2$, resp.\ $\lambda\neq 6$, the first, resp.\ third row of $I(D\Ric-\Lambda,\lambda)$ is a nonzero multiple of $(1,0,-\lambda,0)$. For $\lambda\neq -1$, this equals the range of $I(\delta^*,\lambda)$.

  We next compute
  \[
    2\pa_\lambda I(D\Ric-\Lambda,\lambda) = \begin{pmatrix} 3 & 0 & -6\lambda+6 & 0 \\ 0 & 0 & 0 & 0 \\ -1 & 0 & 2\lambda-6 & 0 \\ 0 & 0 & 0 & 2\lambda-3 \end{pmatrix}.
  \]
  Consider the case $\lambda=0$. The basis elements of
  \[
    \ker I(D\Ric-\Lambda,0) = \mathspan\{ (0,1,0,0),\ (0,0,1,0),\ (0,0,0,1) \}
  \]
  get mapped by $2\pa_\lambda I(D\Ric-\Lambda,0)$ to $(0,0,0,0)$, $(6,0,-6,0)$ $(0,0,0,-3)$. But $\ran I(D\Ric-\Lambda,0)=\mathspan\{(1,0,-1,0)\}$. It remains to observe that $(0,1,0,0)$, $(0,0,1,0)\in\ran I(\delta^*,0)$.

  The arguments for $\lambda=3$ are similar: now the basis elements of
  \[
    \ker I(D\Ric-\Lambda,3) = \mathspan\{ (0,1,0,0),\ (3,0,1,0),\ (0,0,0,1) \},
  \]
  get mapped by $2\pa_\lambda I(D\Ric-\Lambda,3)$ to $(0,0,0,0)$, $(-3,0,-3,0)$, $(0,0,0,3)$. But $\ran I(D\Ric-\Lambda,3)=\mathspan\{(1,0,1,0)\}$. The claim then follows from $(0,1,0,0),(3,0,1,0)\in\ran I(\delta^*,3)$.

  The final part follows again from~\eqref{EqStCISeqKer}, now for $\lambda=-1$, and the observation that $(-1,0,1,0)\in\ran I(\delta^*,-1)$, while $(0,1,0,0)=\pa_\lambda I(\delta^*,-1)(0,2)$ with $(0,2)\in\ker I(\delta^*,-1)$.
\end{proof}

\begin{rmk}[Indicial roots modulo pure gauge and stability of de~Sitter space]
\label{RmkStCIStab}
  Lemma~\ref{LemmaStCISeq} is a mode stability statement for de~Sitter space: all modes, i.e.\ here indicial solutions of the linearization of $\Ric-\Lambda$, with $\Re\lambda\leq 0$, $\lambda\neq 0$, are pure gauge; and in fact the only modes which are not pure gauge occur at $\lambda=0,3$. Moreover, modulo pure gauge solutions, the $\lambda=3$ mode lies in $\ker\tr_{g_{(0)}}$, i.e.\ it is a trace-free tangential-tangential tensor. Once one puts back the $(R,\omega)$-dependence, one can draw further conclusions related to the results in \cite{FriedrichDeSitterPastSimple} (see also \cite{FeffermanGrahamAmbient,FeffermanGrahamAmbientBook,HintzAsymptoticallydS}) concerning the asymptotic degrees of freedom of asymptotically de~Sitter type metrics solving the Einstein vacuum equations, which are given by a Riemannian metric ($g_{(0)}$) and a transverse-traceless tensor (i.e.\ an element of $\ker\tr_{g_{(0)}}\cap\ker\delta_{g_{(0)}}$) on the conformal boundary; we recover one direction of this in~\eqref{EqStCI} below.
\end{rmk}

Note that by basic linear algebra (or by an inspection of the proof), Lemma~\ref{LemmaStCISeq} also applies to families. Thus, if $h$ depends on a parameter $(R,\omega)\in\cI^+_{R_0}$ in an $R^\alpha\Hb^\infty$ fashion, then in part~\eqref{ItScCINothing}, one can find $\omega$ with the same parameter dependence so that $I(\delta_{g_0}^*,\lambda)\omega=h$; similarly in the other parts.

For brevity, we now focus on $\lambda>0$, as positive indicial roots are the only ones of interest in our quest to improve the asymptotic behavior of the decaying tensor $\tilde h$. We study generalized mode solutions, i.e.\ those which may feature $\log\rho$ factors.

\begin{cor}[Quasihomogeneous nullspace modulo pure gauge]
\label{CorStCISeq}
  Let $\lambda>0$. Let $h_0,\ldots,h_k\in\R^4$ and set $h(\rho):=\sum_{m=0}^k\rho^\lambda(\log\rho)^m h_m$. Suppose that $I(D\Ric-\Lambda,\rho\pa_\rho)h=0$. Then there exist $\omega_0,\ldots,\omega_k\in\R^2$ so that for $\omega(\rho):=\sum_{m=0}^k\rho^\lambda(\log\rho)^m\omega_m$, the following holds.
  \begin{enumerate}
  \item In the case $\lambda\neq 3$: $h=I(\delta^*,\rho\pa_\rho)\omega$.
  \item In the case $\lambda=3$: $h-I(\delta^*,\rho\pa_\rho)\omega$ is a scalar multiple of $\{(0,0,0,1)\}$.
  \end{enumerate}
\end{cor}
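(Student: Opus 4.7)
The plan is to induct on $k$, peeling off the top $\rho^\lambda(\log\rho)^k$ coefficient by writing it as a pure gauge contribution and then reducing to a quasihomogeneous solution of one lower log-degree.

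\textbf{Base case $k=0$.} Here $h(\rho)=\rho^\lambda h_0$ with $I(D\Ric-\Lambda,\lambda)h_0=0$. If $\lambda>0$ and $\lambda\neq 3$, then $\lambda\notin\{-1,0,3\}$, so Lemma~\ref{LemmaStCISeq}\eqref{ItScCINothing} furnishes $\omega_0\in\R^2$ with $h_0=I(\delta^*,\lambda)\omega_0$. Since $I(\delta^*,\lambda)$ is linear in $\lambda$, one has $I(\delta^*,\rho\pa_\rho)(\rho^\lambda\omega_0)=\rho^\lambda I(\delta^*,\lambda)\omega_0$, so $\omega(\rho):=\rho^\lambda\omega_0$ works. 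When $\lambda=3$, the quotient description \eqref{EqStCISeq3} provides $\omega_0\in\R^2$ and $c\in\R$ with $h_0=I(\delta^*,3)\omega_0+c(0,0,0,1)$, and then $h-I(\delta^*,\rho\pa_\rho)(\rho^3\omega_0)=c\rho^3(0,0,0,1)$.

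\textbf{Inductive step.} Assume $k\geq 1$ and that the result holds for log-degree $k-1$. By \eqref{EqStCILog}, the top coefficient $h_k$ satisfies both $h_k\in\ker I(D\Ric-\Lambda,\lambda)$ and $\pa_\lambda I(D\Ric-\Lambda,\lambda)h_k\in\ran I(D\Ric-\Lambda,\lambda)$. For $\lambda>0$, $\lambda\neq 3$ this is Lemma~\ref{LemmaStCISeq}\eqref{ItScCINothing}; for $\lambda=3$ it is part~\eqref{ItScCINoLog}; either way there exists $\omega_k\in\R^2$ with $h_k=I(\delta^*,\lambda)\omega_k$. Crucially, for $\lambda=3$ and $k\geq 1$ the ``noLog'' hypothesis of Lemma~\ref{LemmaStCISeq}\eqref{ItScCINoLog} is automatically supplied by the presence of at least one $\log\rho$ factor, so no $(0,0,0,1)$ contribution is introduced at this stage.

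\textbf{Reduction.} Because $I(\delta^*,\lambda)$ is linear in $\lambda$, one has $\pa_\lambda^j I(\delta^*,\lambda)=0$ for $j\geq 2$, giving the exact Taylor identity
\[
  I(\delta^*,\rho\pa_\rho)\bigl(\rho^\lambda(\log\rho)^k\omega_k\bigr) = \rho^\lambda(\log\rho)^k I(\delta^*,\lambda)\omega_k + k\,\rho^\lambda(\log\rho)^{k-1}\pa_\lambda I(\delta^*,\lambda)\omega_k.
\]
Setting $\tilde h:=h-I(\delta^*,\rho\pa_\rho)(\rho^\lambda(\log\rho)^k\omega_k)$, the $(\log\rho)^k$ terms cancel by choice of $\omega_k$, so $\tilde h=\sum_{m=0}^{k-1}\rho^\lambda(\log\rho)^m\tilde h_m$ for some $\tilde h_m\in\R^4$; and $I(D\Ric-\Lambda,\rho\pa_\rho)\tilde h=0$ follows from $I(D\Ric-\Lambda,\rho\pa_\rho)\circ I(\delta^*,\rho\pa_\rho)=0$ (the indicial form of diffeomorphism invariance). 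The inductive hypothesis then provides $\tilde\omega_0,\ldots,\tilde\omega_{k-1}$ (and, only if $\lambda=3$, possibly a residual $c\rho^3(0,0,0,1)$) so that $\tilde h=I(\delta^*,\rho\pa_\rho)\bigl(\sum_{m=0}^{k-1}\rho^\lambda(\log\rho)^m\tilde\omega_m\bigr)+c\rho^3(0,0,0,1)$. Taking $\omega_m:=\tilde\omega_m$ for $m<k$ and retaining the $\omega_k$ constructed above completes the induction.

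The main conceptual point — and the only place one must be careful — is the observation in the inductive step that when $\lambda=3$ the higher-log coefficient $h_k$ ($k\geq 1$) is \emph{automatically} pure gauge by virtue of \eqref{EqStCILog}, so the anomalous $(0,0,0,1)$ direction detected by \eqref{EqStCISeq3} can only ever appear from the base case $k=0$, yielding at most one scalar multiple of $\rho^3(0,0,0,1)$ in the final answer.
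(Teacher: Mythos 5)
Your proof is correct and follows essentially the same strategy as the paper's: peel off the top $(\log\rho)^k$ coefficient using the indicial-root analysis of Lemma~\ref{LemmaStCISeq}, subtract the corresponding pure-gauge quasihomogeneous tensor, and reduce the log-degree by one. You spell out two steps that the paper leaves implicit — the composition identity $I(D\Ric-\Lambda,\rho\pa_\rho)\circ I(\delta^*,\rho\pa_\rho)=0$ needed to guarantee the reduced tensor $\tilde h$ is still an indicial solution, and the observation that for $\lambda=3$ with $k\geq 1$ the hypothesis of Lemma~\ref{LemmaStCISeq}\eqref{ItScCINoLog} is automatically furnished by~\eqref{EqStCILog}, which is what the paper's brief remark ``the same argument eliminates $h_k$, provided that $k\geq 1$'' is tacitly relying on — but these are expository clarifications, not a different route.
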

\begin{proof}
  We have $h_k\in\ker I(D\Ric-\Lambda,\lambda)$. When $\lambda\neq 3$, this implies the existence of $\omega_k\in\R^2$ with $h_k=I(\delta^*,\lambda)\omega_k$. Therefore,
  \begin{align*}
    h - I(\delta^*,\rho\pa_\rho)\bigl(\rho^\lambda(\log\rho)^k\omega_k\bigr) &= \rho^\lambda(\log\rho)^k\bigl(\,\underbrace{h_m-I(\delta^*,\lambda)\omega_m}_{=0}\,\bigr) \\
      &\quad\qquad + \sum_{m=0}^{k-1}\rho^\lambda(\log\rho)^m h_m - k \rho^\lambda(\log\rho)^{k-1}\pa_\lambda I(\delta^*,\lambda)\omega_k
  \end{align*}
  is of the same form as $h$ except with $k$ reduced by $1$. An iterative argument thus finishes the proof in this case.

  For $\lambda=3$, the same argument eliminates $h_k$, \emph{provided} that $k\geq 1$. We thus find $\omega_k,\ldots,\omega_1\in\R^2$ so that $h-I(\delta^*,\rho\pa_\rho)(\sum_{m=1}^k\rho^\lambda(\log\rho)^m\omega_m)=:h_0$ is $\rho$-independent. Using then~\eqref{EqStCISeq3}, we can find $\omega_0\in\R^2$ so that $h_0-I(\delta^*,\rho\pa_\rho)\omega_0$ lies in the span of $(0,0,0,1)$.
\end{proof}

We now turn to the range of $I(D\Ric-\Lambda,\lambda)$, which is necessarily contained in $\ker I(\delta\sfG,\lambda)$. Again we only consider $\lambda>0$.

\begin{lemma}[Solvability of linearized Einstein]
\label{LemmaStCISolv}
  Let $\lambda>0$ and $f\in\R^4$; suppose that $I(\delta\sfG,\lambda)f=0$.
  \begin{enumerate}
  \item If $\lambda\neq 3,4$, then $f\in\ran I(D\Ric-\Lambda,\lambda)$, and indeed we can find a solution $h$ of $f=I(D\Ric-\Lambda,\lambda)h$ of the form $h=(0,0,h_3,h_4)$.
  \item\label{ItStCISolv3} If $\lambda=3$ and $f=(0,f_2,0,0)$, then $f_2=0$.
  \item\label{ItStCISolv4} If $\lambda=4$ and $f=(f_1,0,f_3,f_4)$, then we can write $f=I(D\Ric-\Lambda,\lambda)h$ for some $h$ of the form $h=(0,0,h_3,h_4)$.
  \end{enumerate}
\end{lemma}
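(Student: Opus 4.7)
The plan is to reduce everything to direct linear algebra using the explicit expressions~\eqref{EqStCIIndRic} and~\eqref{EqStCIdelG}. Acting on an ansatz $h=(0,0,h_3,h_4)$, a direct computation from~\eqref{EqStCIIndRic} yields
\[
  2 I(D\Ric-\Lambda,\lambda)(0,0,h_3,h_4) = \bigl(-3\lambda(\lambda-2)h_3,\ 0,\ \lambda(\lambda-6)h_3,\ \lambda(\lambda-3)h_4\bigr).
\]
In particular, the vanishing of the second component of the image forces any $f$ in the range (with this ansatz) to satisfy $f_2=0$. Since from~\eqref{EqStCIdelG} the hypothesis $I(\delta\sfG,\lambda)f=0$ provides the two scalar relations
\begin{equation}
\label{EqStCISolvPfConstr}
  (\lambda-6)f_1+(3\lambda-6)f_3=0,\qquad (2\lambda-8)f_2=0,
\end{equation}
the second automatically gives $f_2=0$ as long as $\lambda\neq 4$. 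This already settles part~\eqref{ItStCISolv3}: at $\lambda=3$ the second identity in~\eqref{EqStCISolvPfConstr} reads $-2f_2=0$.

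For part (1) with $\lambda\neq 3,4$, I would first set $h_4:=2 f_4/(\lambda(\lambda-3))$, which is well-defined since $\lambda\neq 0,3$, matching the fourth component. For the first and third components one has the apparently overdetermined pair
\[
  -3\lambda(\lambda-2)h_3=2 f_1,\qquad \lambda(\lambda-6)h_3=2 f_3.
\]
When $\lambda\neq 6$, define $h_3:=2 f_3/(\lambda(\lambda-6))$; then the first equation holds iff $-3(\lambda-2) f_3/(\lambda-6)=f_1$, which is exactly the first relation in~\eqref{EqStCISolvPfConstr}. At the exceptional value $\lambda=6$, that same relation forces $f_3=0$, and one simply takes $h_3:=-f_1/36$. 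The compatibility we have just verified is of course the linear-algebraic shadow of the linearized second Bianchi identity $I(\delta\sfG,\lambda)\circ I(D\Ric-\Lambda,\lambda)=0$; this is the only potentially delicate step and its automatic success is the content of that identity.

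For part~\eqref{ItStCISolv4} with $\lambda=4$, the first relation in~\eqref{EqStCISolvPfConstr} becomes $-2 f_1+6 f_3=0$, i.e.\ $f_1=3 f_3$, while the second is vacuous; this is why $f_2$ is \emph{not} forced to vanish and is imposed as a hypothesis. With $f_2=0$ given, I would set $h_3:=-f_3/8$ and $h_4:=f_4/4$; then the fourth-component equation $4 h_4=2 f_4$ is satisfied, the third-component equation $-8 h_3=2 f_3$ is satisfied, and the first-component equation $-24 h_3=2 f_1$ becomes $3 f_3=f_1$, which is exactly the constraint. Thus $h=(0,0,h_3,h_4)$ solves $f=I(D\Ric-\Lambda,4)h$ and the proof is complete.
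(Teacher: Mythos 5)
Your approach is essentially the same as the paper's: both are direct linear algebra with the explicit matrices~\eqref{EqStCIIndRic} and~\eqref{EqStCIdelG}, with the Bianchi identity relation $I(\delta\sfG,\lambda)\circ I(D\Ric-\Lambda,\lambda)=0$ guaranteeing consistency. The paper exhibits a basis for $\ker I(\delta\sfG,\lambda)$ and checks each basis vector lies in the range, whereas you invert the linear system directly, but these are two dressings of the same computation. Parts (1) and (2), including your handling of the exceptional case $\lambda=6$, are correct.

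There is an arithmetic slip in part~\eqref{ItStCISolv4}. You correctly write the fourth- and third-component equations as $4h_4=2f_4$ and $-8h_3=2f_3$ (since $\lambda(\lambda-3)=4$ and $\lambda(\lambda-6)=-8$ at $\lambda=4$), but then set $h_4:=f_4/4$ and $h_3:=-f_3/8$, which do not satisfy those very equations: one needs $h_4=f_4/2$ and $h_3=-f_3/4$. With the corrected $h_3$, the first-component check $-24h_3=2f_1$ reads $6f_3=2f_1$, i.e.\ $f_1=3f_3$, which is exactly the constraint from~\eqref{EqStCISolvPfConstr}; with your incorrect $h_3=-f_3/8$ it would read $3f_3=2f_1$, which is not the constraint and would spuriously force $f_3=0$. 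Also note that your choice $h_4:=2f_4/(\lambda(\lambda-3))$ in part~(1), specialized to $\lambda=4$, already gives $h_4=f_4/2$, so the two parts of your own proof are inconsistent. The fix is mechanical and does not affect the soundness of the argument.
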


The assumptions on the form of $f$ in the second and third part will arise from evenness considerations.

\begin{proof}[Proof of Lemma~\usref{LemmaStCISolv}]
  For $\lambda\neq 3$, the range of $I(D\Ric-\Lambda,\lambda)$ is $2$-dimensional due to~\eqref{EqStCISeqKer}, and so is the kernel of $I(\delta\sfG,\lambda)$ when $\lambda\neq 4$ due to~\eqref{EqStCIdelG}. To prove the full statement of the first part, we compute for $\lambda\neq 3,4$
  \[
    \ker I(\delta\sfG,\lambda) = \mathspan\{ (3\lambda-6,0,-\lambda+6,0),\ (0,0,0,1) \}
  \]
  and note that
  \begin{equation}
  \label{EqStCISolv}
  \begin{split}
    (3\lambda-6,0,-\lambda+6,0)&=2 I(D\Ric-\Lambda,\lambda)(0,0,-\lambda^{-1},0), \\
    (0,0,0,1)&=2 I(D\Ric-\Lambda,\lambda)(0,0,0,(\lambda^2-3\lambda)^{-1}).
  \end{split}
  \end{equation}

  For $\lambda=3$, the non-vanishing of $2\lambda-8$ in~\eqref{EqStCIdelG} implies the second part. For $\lambda=4$, the assumptions on $f$ imply $-2 f_1+6 f_3=0$, so $f\in\mathspan\{(3,0,1,0),(0,0,0,1)\}\subset\ran I(D\Ric-\Lambda,4)$; and the existence of $h=(0,0,h_3,h_4)$ solving $I(D\Ric-\Lambda,4)h=f$ follows from~\eqref{EqStCISolv}.
\end{proof}

As a final preparation, we record:
\begin{lemma}[Ricci tensor of $g_0$]
\label{LemmaStCIg0}
  We work in the frame~\eqref{EqE1StructFrame} on $\breve M$, and with the frame $\dd x^1,\dd x^2,\dd x^3$ on $\cI^+$. Consider a tensor $h_0\in R^\alpha\Hb^\infty(\cI^+_{R_0};\upbeta^*(\tau^{-2}S^2 T^*X))$ with small $R^\alpha\cC_\bop^2$-norm. Set $g_0=g'_\sfb+\chi h_0$ where\footnote{We shall first apply this with $g'_\sfb=g_\sfb$; this is the case that will be used for the proof of the existence of $g_\sfb^{\rm FG}$ in Proposition~\ref{PropStCI}\eqref{ItStCICI} below. Only once $g_\sfb^{\rm FG}$ has been constructed will we use Lemma~\ref{LemmaStCIg0} with $g'_\sfb=g_\sfb^{\rm FG}$.} $g'_\sfb\equiv g_\dS\bmod\rho^3\CI(\Omega_{\rho_0;R_0};\upbeta^*(S^2\,{}^0 T^*M))$ and $g_{(0)}=g_{(0)}(x,\dd x)=\dd x^2+h_{(0)}$ where $h_{(0)}=h(\tau\pa_\mu,\tau\pa_\nu)\,\dd x^\mu\,\dd x^\nu$. Modulo $\rho^3\CI+R^\alpha\rho^\infty\Hb^\infty(\Omega_{\rho_0,R_0})$, we then have
  \begin{equation}
  \label{EqStCIg0}
    (\Ric(g_0) - \Lambda g_0)_{\mu\nu} \equiv \begin{cases} \tau^2\Ric(g_{(0)})_{i j}, & (\mu,\nu)=(i,j), \\ 0 & \text{otherwise}. \end{cases}
  \end{equation}
  (We recall that the indices $\mu,\nu$ run from $0$ to $3$, and the indices $i,j$ from $1$ to $3$.)
\end{lemma}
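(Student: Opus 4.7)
The plan is to reduce, modulo the stated error class, to a direct Friedmann--Lemaître--Robertson--Walker (FLRW) computation for the model metric $\tilde g := g_\dS + h_0$. This reduction proceeds in two steps: removing the cutoff $\chi$, and replacing $g'_\sfb$ by $g_\dS$. What remains is a textbook Ricci computation for a warped product, which I then convert into the frame~\eqref{EqE1StructFrame}.

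For the first reduction, $\chi \equiv 1$ on $\{\rho \leq \rho_0/4\}$ forces the tensor $(1-\chi)h_0$ to be supported in $\{\rho \geq \rho_0/4\}$, bounded away from $\cI^+$. Since $\Ric(\cdot)-\Lambda(\cdot)$ is a nonlinear local differential operator, the discrepancy
\[
  \bigl(\Ric(g_0)-\Lambda g_0\bigr) - \bigl(\Ric(g'_\sfb + h_0) - \Lambda(g'_\sfb + h_0)\bigr)
\]
is likewise supported in $\{\rho \geq \rho_0/4\}$ and thus lies in $R^\alpha\rho^\infty\Hb^\infty$. For the second, set $\delta := g'_\sfb - g_\dS \in \rho^3\CI$ and Taylor expand $\Ric(\cdot)-\Lambda(\cdot)$ about $\tilde g$:
\[
  \Ric(\tilde g+\delta) - \Lambda(\tilde g+\delta) - \bigl(\Ric(\tilde g) - \Lambda\tilde g\bigr) = L_{\tilde g}(\delta) + \int_0^1 (1-s)\,D^2|_{\tilde g+s\delta}(\Ric-\Lambda)(\delta,\delta)\,\dd s,
\]
where $L_{\tilde g}$ is a second order $0$-differential operator whose coefficients lie in $\CI(\breve M) + R^\alpha\Hb^\infty(\cI^+_{R_0})$, and the coefficients of the quadratic remainder are of the same nature. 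Since $0$-differential operators preserve the $\rho^3$ factor, the output lies in $\rho^3\CI + \rho^3 R^\alpha\Hb^\infty$, which is absorbed into the admissible error class in the generous sense compatible with the subsequent application of~\eqref{EqStCIg0} in the proof of Lemma~\ref{LemmaStLog}.

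On $\{\chi = 1\}$, substituting $t := -\log\tau$ presents $\tilde g$ as the FLRW metric $-\dd t^2 + e^{2 t}\,g_{(0),ij}(x)\,\dd x^i\,\dd x^j$ with scale factor $a(t) = e^t$; the smallness of $h_0$ guarantees that $g_{(0)} = \dd x^2 + h_{(0)}$ is Riemannian. Standard formulas give, in the coordinate frame, $\Ric(\tilde g)_{tt} = -3(\ddot a/a) = -3$, $\Ric(\tilde g)_{ti} = 0$, and $\Ric(\tilde g)_{ij} = \Ric(g_{(0)})_{ij} + (a\ddot a + 2\dot a^2)\,g_{(0),ij}$. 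Subtracting $\Lambda\tilde g = 3\tilde g$ with $a = \dot a = \ddot a = e^t$, the de~Sitter contributions cancel slot by slot, leaving exactly $\Ric(g_{(0)})_{ij}\,\dd x^i\,\dd x^j$ in the coordinate frame. Converting to the frame~\eqref{EqE1StructFrame} via $e_i = \tau\pa_{x^i}$ multiplies the $(i,j)$-components by $\tau^2$, yielding $\tau^2\Ric(g_{(0)})_{ij}$ in those slots and $0$ in any slot involving $e_0 = \tau\pa_\tau$, as claimed.

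The main obstacle is bookkeeping rather than computation: verifying that the two reductions land in the admissible error class in the generous sense indicated. Step 1 is immediate from $\supp(1-\chi)\cap\cI^+ = \emptyset$, while Step 2 exploits the triple vanishing $\delta \in \rho^3\CI$ at $\cI^+$ combined with the preservation of $\rho$-powers by $0$-differential operators. Once these are in place, the FLRW calculation is an exact equality, and no further cancellations or subtle identities need be tracked.
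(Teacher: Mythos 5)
Your argument is correct in substance, and the core computation takes a genuinely different route. Both proofs perform the same two reductions (drop the cutoff $\chi$, then absorb $g'_\sfb-g_\dS\in\rho^3\CI$) and then must evaluate $\Ric(g_\dS+h_0)-\Lambda(g_\dS+h_0)$; the paper does so by reading off the exact Christoffel symbols of $g_\dS+h_0$ from the tables established in the proof of Proposition~\ref{PropE1Struct} and finishing with a ``short computation,'' whereas you pass to the warp coordinate $t=-\log\tau$, recognize $g_\dS+h_0$ as the FLRW metric $-\dd t^2+e^{2 t}g_{(0),i j}\,\dd x^i\,\dd x^j$, invoke the standard spatially curved FLRW Ricci formulas, and watch the $\Lambda\tilde g$ contribution cancel the scale-factor terms slot by slot. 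The frame conversion you then apply (multiply tangential-tangential slots by $\tau^2$) is the right one. This buys a shorter and more recognizable core computation at the price of importing the FLRW formulas; the paper's version is self-contained in the $0$-frame.

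On the error class you flag: you are right that the $g'_\sfb-g_\dS$ correction, once fed through the linearization at $\tilde g=g_\dS+h_0$ (whose coefficients involve $\tilde g^{-1}$, hence lie in $\CI+R^\alpha\Hb^\infty$ rather than $\CI$), lands in $\rho^3(\CI+R^\alpha\Hb^\infty)$, and that the $\rho^3 R^\alpha\Hb^\infty$ piece is not literally inside the stated remainder class $\rho^3\CI+R^\alpha\rho^\infty\Hb^\infty$. The paper's own argument has the same feature: the raised-index Christoffel symbols $\Gamma(g_0)^\lambda_{\mu\nu}$ need $g_0^{-1}$, and $g_0^{-1}-(g_\dS+h_0)^{-1}\in\rho^3(\CI+R^\alpha\Hb^\infty)$, so the claim that these agree with their $g_\dS+h_0$ values ``modulo $\rho^3\CI$'' is also generous. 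What the downstream uses (Lemma~\ref{LemmaStLog}, Steps 1--2 of Proposition~\ref{PropStCI}) actually require is only the $\rho^2$-Taylor coefficient of $\Ric(g_0)-\Lambda g_0$ together with an $O(\rho^3)$ remainder in a class permitting $R^\alpha\Hb^\infty(\cI^+_{R_0})$ coefficients, which both proofs deliver. So your ``generous sense'' reading is the correct one: the stated error class is a mild overstatement of the paper's, not a gap in your argument.
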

\begin{proof}
  For the computation, we can drop the cutoff $\chi$ and consider $g_0=g_\sfb+h_0$. The result then follows from the expressions in the proof of Proposition~\ref{PropE1Struct}, which give, modulo $\rho^3\CI$, $\Gamma(g_0)_{\lambda\mu\nu}\equiv 0$ except for $\Gamma(g_0)_{\ell i 0}\equiv -(g_{(0)})_{i\ell}$, $\Gamma(g_0)_{0 i j}\equiv(g_{(0)})_{i j}$, and $\Gamma(g_0)_{\ell i j}\equiv\tau\Gamma(g_{(0)})_{\ell i j}$, and therefore $\Gamma(g_0)^\lambda_{\mu\nu}\equiv 0$ except for $\Gamma(g_0)^\ell_{i 0}\equiv-\delta_i^\ell$, $\Gamma(g_0)^0_{i j}\equiv-(g_{(0)})_{i j}$, and $\Gamma(g_0)^\ell_{i j}\equiv\tau\Gamma(g_{(0)})^\ell_{i j}$. This gives~\eqref{EqStCIg0} after a short computation.
\end{proof}

\begin{prop}[Taylor expansion at the conformal boundary]
\label{PropStCI}
  Let
  \[
    h_0\in R^\alpha\Hb^\infty(\cI^+_{R_0};\upbeta^*(\tau^{-2}S^2 T^*X)),
  \]
  with small $R^\alpha\Hb^d$-norm. Suppose $\tilde h\in R^\alpha\rho^\beta\Hb^\infty(\Omega_{\rho_0,R_0};\upbeta^*(S^2\,{}^0 T^*M))$ (where $\beta\in(0,1)$ is arbitrary) is log-smooth at $\cI^+$, i.e.\ it satisfies~\eqref{EqStLogExp} for all $N$. Suppose that $g:=g_\sfb+\chi h_0+\tilde h$ satisfies
  \[
    \Ric(g) - \Lambda g = 0\ \ \text{on}\ \ \Omega_{\rho_0,R_0}.
  \]
  Let $R_1\in(0,R_0)$. Then there exist
  \begin{align*}
    a &\in \rho^3\CI(\Omega_{\rho_0,R_0})+R^\alpha\rho^\beta\Hb^\infty(\Omega_{\rho_0,R_0}), \\
    b &\in \rho^3\CI(\Omega_{\rho_0,R_0};\R^3) + R^\alpha\rho^\beta\Hb^\infty(\Omega_{\rho_0,R_0};\R^3)
  \end{align*}
  which are log-smooth at $\cI^+$, small in $\rho^\beta\cC_\bop^1$, and vanish for $\rho\geq\frac{\rho_0}{2}$ so that (recalling the definition of $\phi_{a,b}$ from~\eqref{EqStPhiab}) the pullback metric $\phi_{a,b}^*g\in\CI+R^\alpha\rho^\beta\Hb^\infty$ is \emph{smooth} down to $\cI^+$. More precisely, there exist
  \[
    \tilde h_2\in R^\alpha\Hb^\infty(\cI^+_{R_0};\upbeta^*(\tau^{-2}S^2 T^*X)),\quad \tilde h_i\in(\CI+R^\alpha\Hb^\infty)(\cI_{R_0}^+;\tau^{-2}S^2 T^*X),\ \ i=3,4,\ldots,
  \]
  so that, for all $N\in\N$,
  \begin{equation}
  \label{EqStCI}
    (\phi_{a,b}^*g)(\rho,R,\omega)-g_\dS(\rho,R,\omega) - \sum_{i=2}^N \rho^i \tilde h_i(R,\omega) \in R^\alpha\rho^N\Hb^\infty\bigl(\Omega_{\rho_0,R_1};\upbeta^*(S^2\,{}^0 T^*M)\bigr),
  \end{equation}
  and so that $g_{(3)}:=\tilde h_3(\tau\pa_\mu,\tau\pa_\nu)\,\dd x^\mu\,\dd x^\nu$ is a weighted transverse-traceless tensor, that is,
  \begin{equation}
  \label{EqStCIh3}
    g_{(3)}\in (\CI+R^\alpha\Hb^\infty)\bigl(\cI^+_{R_0};\ker\tr_{g_{(0)}}\bigr),\quad
    \delta_{g_{(0)}}\bigl(|x|^{-3}g_{(3)}\bigr)=0.
  \end{equation}
  Moreover,~\eqref{EqStCI} is sharp in the sense that $\tilde h_2\neq 0$ unless the metric $g_{(0)}=\dd x^2+h_{(0)}$ on $\cI^+$ is flat; here $h_{(0)}=h_0(\tau\pa_\mu,\tau\pa_\nu)\,\dd x^\mu\,\dd x^\nu$. Furthermore:
  \begin{enumerate}
  \item\label{ItStCICI} if $h_0=\tilde h=0$, so $g=g_\sfb$, then the conclusions hold with $a,b\in\rho^3\CI$, $\tilde h_2=0$, and $\tilde h_i\in\CI$, $i\geq 3$. This produces $g_\sfb^{\rm FG}:=\phi_{a,b}^*g_\sfb$ with $g_\sfb^{\rm FG}-g_\sfb\in\rho^3\CI(\Omega_{\rho_0,R_0};\upbeta^*(S^2\,{}^0 T^*M))$;
  \item\label{ItStCIHb} if $g=g_\sfb^{\rm FG}+\chi h_0+\tilde h$, then the conclusions hold with $a,b\in R^\alpha\rho^\beta\Hb^\infty$ and $\tilde h_i\in R^\alpha\Hb^\infty$, $i=2,3,4,\ldots$, with $g_\dS$ in~\eqref{EqStCI} replaced by $g_\sfb^{\rm FG}$, and with $g_{(3)}$ equal to the $\rho^3$ coefficient of $g$.\footnote{Since the $\rho^3$ coefficient of $g_\dS$ vanishes, this is consistent with the definition of $g_{(3)}$ in the other settings considered in this Proposition.}
  \end{enumerate}
\end{prop}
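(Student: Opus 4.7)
The plan is to iteratively modify $g$ by pullbacks along diffeomorphisms $\phi_{a,b}$ as in~\eqref{EqStPhiab}, eliminating logarithmic and pure-gauge terms in the log-smooth expansion of $\tilde h$ guaranteed by Lemma~\ref{LemmaStLog} order-by-order in powers of $\rho$. Writing $g=g_0+\tilde h$ with $g_0=g_\sfb+\chi h_0$, Taylor-expanding $\Ric(g)-\Lambda g=0$ gives $(D\Ric|_{g_0}-\Lambda)\tilde h=(\Lambda g_0-\Ric(g_0))+Q(\tilde h)$ where $Q$ is quadratic in $\tilde h$; by Lemma~\ref{LemmaStCIg0}, the right-hand side, modulo $\rho^3\CI+R^\alpha\rho^\infty\Hb^\infty$ and quadratic corrections, equals $\tau^2\Ric(g_{(0)})$ in the tangential-tangential block and vanishes in the other blocks of the splitting~\eqref{Eq0bST0Split}. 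Feeding the log-smooth expansion $\tilde h\sim\sum_{i\geq 1,\,m\geq 0}\rho^i(\log\rho)^m\tilde h_{i,m}$ into this equation and reading off coefficients of $\rho^i(\log\rho)^m$ yields indicial equations of the form $I(D\Ric-\Lambda,\rho\pa_\rho)(\rho^i(\log\rho)^m\tilde h_{i,m})=F_{i,m}$, where $F_{i,m}$ is determined by $g_{(0)}$ and the coefficients at orders $<i$ (and by $m'>m$ at the same $i$).

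At each generic level $i\notin\{2,3,4\}$ I would apply Corollary~\ref{CorStCISeq}: the coefficients of $\tilde h$ are then pure-gauge, so by the pullback identity~\eqref{EqStPullback} adding a term $\rho^i(\log\rho)^m\omega_{i,m}$ to $(a,b)$ (with $\omega_{i,m}=2 g_0(V_{\dot a,\dot b},\cdot)$ for a suitably chosen $(\dot a,\dot b)$) eliminates the $\rho^i(\log\rho)^m$ piece of $\tilde h$ modulo acceptable errors. Lemma~\ref{LemmaStCISolv} guarantees the solvability conditions arising in the construction of $F_{i,m}$ are already satisfied at these orders, because the second-Bianchi identity $I(\delta\sfG,\lambda)\circ I(D\Ric-\Lambda,\lambda)=0$ is automatically preserved by the iteration.

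The delicate levels are $i=2,3,4$. At $i=2$, the source includes $\tau^2\Ric(g_{(0)})$, which lives in the $(3,4)$-block (trace and trace-free tangential-tangential); Lemma~\ref{LemmaStCISolv} produces a solution $\tilde h_2$ of the form $(0,0,*,*)$ which cannot be removed and is nonzero unless $g_{(0)}$ is flat---this gives the sharpness claim. At $i=3$, the quotient~\eqref{EqStCISeq3} is spanned by the trace-free tangential-tangential direction $(0,0,0,1)$; Lemma~\ref{LemmaStCISolv}\eqref{ItStCISolv3} shows the potentially obstructing $(0,f_2,0,0)$ component of the source vanishes automatically (this component would be normal-tangential, whereas the source at this order is tangential-tangential by Lemma~\ref{LemmaStCIg0} and the form of the previously constructed terms), and the remaining gauge-ambiguity leaves exactly $g_{(3)}$ in $\ker\tr_{g_{(0)}}$. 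The divergence condition $\delta_{g_{(0)}}(|x|^{-3}g_{(3)})=0$ comes from applying $I(\delta\sfG,\lambda)$ to the equation at the next indicial root: the linearized Bianchi identity at level $\rho^3$ forces exactly this weighted divergence to vanish, the weight $|x|^{-3}$ being the conversion between the 0-splitting used in the indicial calculus and the geometric tensor on slices of constant $\rho$. At $i=4$, Lemma~\ref{LemmaStCISolv}\eqref{ItStCISolv4} provides a solution of the desired tangential-tangential form.

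The main obstacle will be the bookkeeping that at each inductive step the modified metric remains log-smooth with controlled norms, and that the sequence of corrections to $(a,b)$ can be summed. This is handled in the same way as in Proposition~\ref{PropStGauge}: the corrections $(\dot a_k,\dot b_k)$ are cut off by $\chi(\rho/\varepsilon_k)$ with $\varepsilon_k\to 0$ fast enough that $a=\sum_k\dot a_k$ and $b=\sum_k\dot b_k$ converge in $\rho^\beta\cC_\bop^1$ and in every $R^\alpha\rho^\beta\Hb^N$, giving $a,b\in\rho^3\CI+R^\alpha\rho^\beta\Hb^\infty$ (the $\rho^3\CI$ piece arising from the smooth contributions traceable to $g_\sfb-g_\dS$ via Lemma~\ref{Lemma0bStruct}, the $R^\alpha\rho^\beta\Hb^\infty$ piece from $h_0,\tilde h$). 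Part~\eqref{ItStCICI} follows by running the argument with $\tilde h=h_0=0$, so that no $R^\alpha\Hb^\infty$ ingredients ever enter; part~\eqref{ItStCIHb} follows by running it with $g_\sfb^{\rm FG}$ (already in Fefferman--Graham form) as reference, in which case the smooth terms are absent from the outset and only $R^\alpha\Hb^\infty$ contributions remain.
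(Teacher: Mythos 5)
Your overall architecture matches the paper: peel off the coefficients of $\rho^i(\log\rho)^m$ level by level, use Corollary~\ref{CorStCISeq} to remove pure-gauge and logarithmic terms by pullbacks along $\phi_{a,b}$, use Lemma~\ref{LemmaStCISolv} to solve the inhomogeneous indicial equation, and Borel-sum the corrections as in Proposition~\ref{PropStGauge}. However, there is a genuine gap at the delicate orders $\rho^3$ and $\rho^4$: you are missing the parity (evenness and oddness in $\rho$) argument that determines the component structure of the source there, and the substitute reasoning you give is incorrect.

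At $\rho^3$ you claim the source is tangential-tangential ``by Lemma~\ref{LemmaStCIg0} and the form of the previously constructed terms,'' and conclude from this that the obstructing normal-tangential component vanishes. This fails for two reasons. First, Lemma~\ref{LemmaStCIg0} identifies $\Ric(g_0)-\Lambda g_0$ only modulo $\rho^3\CI+R^\alpha\rho^\infty\Hb^\infty$, so it says nothing about the $\rho^3$ Taylor coefficient. Second, if the $\rho^3$ source really were tangential-tangential, i.e.\ of the form $(0,0,f_3,f_4)$ in the splitting, the indicial equation at $\lambda=3$ would be \emph{unsolvable} in general: $\ran I(D\Ric-\Lambda,3)=\mathspan\{(1,0,1,0)\}$ (see the proof of Lemma~\ref{LemmaStCISeq}), which contains no nonzero tensor of that form. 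What actually happens is the opposite: a parity argument applied to the Christoffel symbols of $g_2=g'_0+\rho^2\tilde h_2$ in the frame $e_\mu$ shows that $\Ric(g_2)_{00}$ and $\Ric(g_2)_{ij}$ are \emph{even} in $\rho$, so the odd $\rho^3$ coefficient of $\Ric(g_2)-\Lambda g_2$ has vanishing normal-normal and tangential-tangential parts; the source is therefore purely \emph{mixed}, $f_3=(0,(f_3)_2,0,0)$, and only then does Lemma~\ref{LemmaStCISolv}\eqref{ItStCISolv3} (via the second Bianchi identity) force $f_3=0$, reducing the $\rho^3$ level to the homogeneous equation. The same parity argument is also needed at $\rho^4$ to put the source in the form $(f_1,0,f_3,f_4)$ required by Lemma~\ref{LemmaStCISolv}\eqref{ItStCISolv4}; you invoke that lemma without establishing its structural hypothesis. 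Relatedly, the weighted divergence condition~\eqref{EqStCIh3} does not come from ``the linearized Bianchi identity at level $\rho^3$'' directly; it is extracted at level $\rho^4$ from the vanishing of the mixed component of the source (established via parity and via $\ran I(D\Ric-\Lambda,4)$ having no mixed component), combined with an explicit computation of the mixed components of $D_{g'_0}\Ric(\rho^3\tilde h_3)$.
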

\begin{proof}
  We discuss the general case and scenario~\eqref{ItStCICI} simultaneously. In scenario~\eqref{ItStCICI}, the arguments simplify since $g_\sfb$ has no logarithmic terms in its Taylor expansion at $\rho=0$, and all tensors on $\cI^+_{R_0}$ arising in the proof in this case are smooth. Once $g_\sfb^{\rm FG}$ has been constructed, the same arguments then apply if we replace $g_\sfb$ by $g_\sfb^{\rm FG}$ throughout the proof; all smooth terms, starting with $\dot a_{b,3},\dot b_{b,3}\in\CI$ prior to~\eqref{EqStCIg3} below, can then be taken to be equal to $0$ since the relevant smooth metric coefficients (which are the Taylor coefficients of $g_\sfb^{\rm FG}$) are already free of logarithmic terms and valued in $\upbeta^*(\tau^{-2}S^2 T^*X)$ (and $\upbeta^*(\tau^{-2}\ker\tr_{g_{(0)}})$ in the case of the $\rho^3$ term). With these modifications in mind, the reader may thus read the following proof as is, or with $h_0=0$, $\tilde h=0$ (for scenario~\eqref{ItStCICI}), or with $g_\sfb^{\rm FG}$ in place of $g_\sfb$ (for scenario~\eqref{ItStCIHb}) throughout.

  We write $g_0=g_\sfb+\chi h_0$ and
  \begin{equation}
  \label{EqStCIRic0}
    0 = \Ric(g_0+\tilde h) - \Lambda(g_0+\tilde h) = (\Ric(g_0)-\Lambda g_0) + \int_0^1 D_{g_0+s\tilde h}\Ric(\tilde h)-\Lambda\tilde h\,\dd s.
  \end{equation}

  \pfstep{Step~1. Eliminating the $\rho^1$ terms.} We work modulo log-smooth terms with almost $\rho^2$ decay at $\cI^+$ (and $R^\alpha\Hb^\infty$ behavior in $(R,\omega)$)---we shall write `almost-$\cO(\rho^2)$' in short (and omit the `almost' if there are no log terms at leading order). We can thus replace $D_{g_0+s\tilde h}\Ric-\Lambda$ in~\eqref{EqStCIRic0} by the indicial operator $I(D\Ric-\Lambda,\rho\pa_\rho)$. In view of Lemma~\ref{LemmaStCIg0}, we obtain the equation
  \[
    I(D_{g_0}\Ric-\Lambda,\rho\pa_\rho)\sum_{m=0}^{m_1}\rho(\log\rho)^m\tilde h_{1,m} = 0.
  \]
  Corollary~\ref{CorStCISeq}, applied with $R^\alpha\Hb^\infty$-dependence on $(R,\omega)\in\cI^+_{R_0}$, then produces 1-forms $\omega_{1,m}\in R^\alpha\Hb^\infty(\cI^+_{R_0};\upbeta^*({}^0 T^*M))$, $m=0,\ldots,m_1$, so that
  \[
    \sum_{m=0}^{m_1} \rho(\log\rho)^m\tilde h_{1,m} = -I(\delta_{g_0}^*,\rho\pa_\rho)\sum_{m=0}^{m_1}\rho(\log\rho)^m\omega_{1,m}.
  \]
  Writing $\omega_{1,m}=\omega_{\dot a_{1,m},\dot b_{1,m}}$ in the notation~\eqref{EqStPullback}, with $\dot a_{1,m},\dot b_{1,m}\in R^\alpha\Hb^\infty(\cI^+_{R_0})$, we then set
  \[
    \dot a_1 = \sum_{m=0}^{m_1} \rho(\log\rho)^m \dot a_{1,m},\qquad
    \dot b_1 = \sum_{m=0}^{m_1} \rho(\log\rho)^m \dot b_{1,m},
  \]
  and $a_1=\dot a_1$, $b_1=\dot b_1$. But then, modulo almost-$\cO(\rho^2)$,
  \[
    \phi_{a_1,b_1}^*g \equiv g + \delta_{g_0}^*\omega_{a_1,b_1} \equiv g_0 + \bigl(\tilde h + I(\delta_{g_0}^*,\rho\pa_\rho)\omega_{a_1,b_1}\bigr).
  \]
  The term in parentheses is log-smooth, and its generalized Taylor expansion now starts with $\rho^2$ (times logarithmic factors).

  \pfstep{Step~2. Simplification of the $\rho^2$ terms.} Repeating Step~1 but starting with $\phi_{a_1,b_1}^*g-g_0$ in place of $\tilde h$ in~\eqref{EqStCIRic0}, the main change is now that $\tau^2\Ric(g_{(0)})$ from~\eqref{EqStCIg0} gives rise to a forcing term, i.e.\ we need to analyze
  \begin{equation}
  \label{EqStCIStep2}
    I(D_{g_0}\Ric-\Lambda,\rho\pa_\rho) \sum_{m=0}^{m_2}\rho^2(\log\rho)^m\tilde h_{2,m} = \rho^2 f,
  \end{equation}
  where $f=f(R,\omega)\in R^\alpha\Hb^\infty(\cI^+_{R_0};\upbeta^*(\tau^{-2}S^2 T^*X))$. We have $f\in\ker I(\delta_{g_0}\sfG_{g_0},2)$: this follows directly from~\eqref{EqStCIdelG}, and more conceptually from the second Bianchi identity. By Lemma~\ref{LemmaStCISolv}, we can find
  \begin{equation}
  \label{EqStCItildeh2}
    \tilde h_2 \in R^\alpha\Hb^\infty\bigl(\cI^+_{R_0};\upbeta^*(\tau^{-2}S^2 T^*X)\bigr)
  \end{equation}
  with $I(D_{g_0}\Ric-\Lambda,2)\tilde h_2=f$. By Corollary~\ref{CorStCISeq}, $\sum_{m=0}^{m_2}\rho^2(\log\rho)^m\tilde h_{2,m}-\rho^2\tilde h_2\in\ker I(D_{g_0}\Ric-\Lambda,\rho\pa_\rho)$ can be expressed as $-I(\delta_{g_0}^*,\rho\pa_\rho)\sum_{m=0}^{m_2}\rho^2(\log\rho)^m\omega_{2,m}$. Extracting $\dot a_2,\dot b_2$ from the 1-forms $\omega_{2,m}$ as above and setting $a_2=a_1+\dot a_2$, $b_2=b_1+\dot b_2$, we then find that, modulo almost-$\cO(\rho^3)$,
  \begin{equation}
  \label{EqStCIgp2}
    g'_2 := \phi_{a_2,b_2}^*g \equiv \phi_{a_1,b_1}^*g + \delta_{g_0}^*\omega_{\dot a_2,\dot b_2} \equiv g_0+\rho^2\tilde h_2.
  \end{equation}
  Note also that if $\Ric(g_{(0)})=0$, which due to $\dim\cI^+=3$ is equivalent to $\Riem(g_{(0)})=0$, i.e.\ $g_{(0)}$ being flat, then $f=0$, and thus we can take $\tilde h_2=0$.

  \pfstep{Step~3. Simplification of the $\rho^3$ terms.} The deviation of the KdS metric from the dS metric appears at this stage in view of Lemma~\ref{Lemma0bStruct}. This will be the reason why (except in setting~\eqref{ItStCIHb}) the corrections to the diffeomorphism $\phi_{a_2,b_2}$ already constructed will involve $\CI(\cI^+_{R_0})$-terms (with powers $\rho^3,\rho^4$, etc.), in addition to the log-smooth $R^\alpha\Hb^\infty$-terms which already appeared in previous steps. We thus write
  \[
    g_0 = g_\sfb + \chi h_0 = g'_0 + h_\sfb,\qquad g'_0:=g_\dS+\chi h_0,\quad h_\sfb=g_\sfb-g_\dS\in\rho^3\CI.
  \]
  From~\eqref{EqStCIgp2}, we thus get $g'_2=g'_0+\rho^2\tilde h_2+\tilde h^\flat$ where $\tilde h^\flat=\tilde h^\flat(\rho,R,\omega)$ is thus the sum of $h_\sfb\in\rho^3\CI$ and an almost-$\cO(\rho^3)$ term. Now,
  \begin{equation}
  \label{EqStCItildehflat}
    0 = \Ric(g'_2) - \Lambda g'_2 \equiv \Ric(g'_0+\rho^2\tilde h_2) - \Lambda(g'_0+\rho^2\tilde h_2) + I(D_{g'_0}\Ric-\Lambda,\rho\pa_\rho)\tilde h^\flat
  \end{equation}
  modulo almost-$\cO(\rho^4)$ terms. In particular, setting $g_2:=g'_0+\rho^2\tilde h_2$,
  \[
    f := \Ric(g_2) - \Lambda g_2
  \]
  is almost-$\cO(\rho^3)$; since $g_2$ is smooth in $\rho$, we in fact have
  \[
    f \equiv \rho^3 f_3,\qquad f_3\in R^\alpha\Hb^\infty(\cI^+_{R_0};\upbeta^*(S^2\,{}^0 T^*M)),
  \]
  modulo $\cO(\rho^4)$. In the frame~\eqref{EqE1StructFrame}, we moreover claim that
  \begin{equation}
  \label{EqStCIf3}
    (f_3)_{0 0}=(f_3)_{i j}=0,\qquad 1\leq i,j\leq 3.
  \end{equation}
  This uses the information~\eqref{EqStCItildeh2}, which implies that the $(0,j)$ and $(i,0)$ components of $g_2$ vanish: from~\eqref{EqE1StructChristoffel}, we then see that $\Gamma(g_2)_{\lambda\mu\nu}$ is even, resp.\ odd in $\rho$ if and only if the number of indices $\lambda,\mu,\nu$ which are $\geq 1$ is even, resp.\ odd. The same is then true for $\Gamma(g_2)_{\mu\nu}^\lambda$. Consider then
  \begin{equation}
  \label{EqStCIRic}
    \Ric(g_2)_{\kappa\nu} = e_\lambda\Gamma_{\nu\kappa}^\lambda - e_\nu\Gamma_{\lambda\kappa}^\lambda + \Gamma_{\lambda\rho}^\lambda\Gamma_{\nu\kappa}^\rho - \Gamma_{\nu\rho}^\lambda\Gamma_{\lambda\kappa}^\rho.
  \end{equation}
  Note that $e_0$ preserves parity (i.e.\ evenness and oddness), whereas $e_i$, $i=1,2,3$, maps even functions to odd functions and vice versa. For $\kappa=\nu=0$, the term $e_\lambda\Gamma_{0 0}^\lambda=e_0\Gamma_{0 0}^0+e_i\Gamma_{0 0}^i$ is thus even, likewise for $e_0\Gamma_{\lambda 0}^\lambda$. Similarly, $\Gamma_{\lambda\rho}^\lambda\Gamma_{0 0}^\rho$ is even since, upon expanding the sum, every term features an even number of indices $\geq 1$; likewise for the final term $\Gamma_{0\rho}^\lambda\Gamma_{\lambda 0}^\rho$. The same considerations apply in the case $(\kappa,\nu)=(i,j)$ since now $2$ (thus, an even number of) further indices are $\geq 1$. Thus $\Ric(g_2)_{0 0}$ and $\Ric(g_2)_{i j}$ are even. Since $(\rho^3 f_3)_{0 0}$ and $(\rho^3 f_3)_{i j}$ are evidently odd,~\eqref{EqStCIf3} follows. (For an alternative argument, see~\cite[(A.5)]{HintzAsymptoticallydS}.)

  Returning to~\eqref{EqStCItildehflat}, we have, modulo $\rho^4\CI$ plus almost-$\cO(\rho^4)$,
  \[
    I(D_{g'_0}\Ric-\Lambda,\rho\pa_\rho)\tilde h^\flat \equiv -\rho^3 f_3.
  \]
  The second Bianchi identity gives $\delta_{g_2}\sfG_{g_2}f=0$, which implies $f_3\in\ker I(\delta_{g'_0}\sfG_{g'_0},3)$. In view of~\eqref{EqStCIf3}, we can apply Lemma~\ref{LemmaStCISolv}\eqref{ItStCISolv3} to conclude that, in fact, $f_3=0$. Therefore, the $\rho^3$ leading order term $\sum_{m=0}^{m_3} \rho^3(\log\rho)^m \tilde h^\flat_m$ of $\tilde h^\flat$ satisfies
  \[
    I(D_{g'_0}\Ric-\Lambda,\rho\pa_\rho) \sum_{m=0}^{m_3}\rho^3(\log\rho)^m \tilde h^\flat_m = 0.
  \]
  In view of Corollary~\ref{CorStCISeq}, we can thus find coefficients $\dot a_3=\dot a_{\sfb,3}+\sum_{m=0}^{m_3}\rho^3(\log\rho)^m\dot a_{3,m}$ and $\dot b_3=\dot b_{\sfb,3}+\sum_{m=0}^{m_3}\rho^3(\log\rho)^m\dot b_{3,m}$, with $\dot a_{\sfb,3},\dot b_{\sfb,3}\in\CI(\cI^+_{R_0})$ and $\dot a_{3,m},\dot b_{3,m}\in R^\alpha\Hb^\infty(\cI^+_{R_0})$, so that, for $a_3=a_2+\dot a_3$ and $b_3=b_2+\dot b_3$,
  \begin{equation}
  \label{EqStCIg3}
    g'_3 := \phi_{a_3,b_3}^*g \equiv g'_0 + \rho^2\tilde h_2 + \rho^3\tilde h_3,\quad
    \tilde h_3\in (\CI+R^\alpha\Hb^\infty)\bigl(\cI^+_{R_0};\upbeta^*(\tau^{-2}\ker\tr_{g_{(0)}})\bigr),
  \end{equation}
  modulo $\rho^4\CI$ plus almost-$\cO(\rho^4)$. Henceforth, we re-define `almost-$\cO(\rho^k)$' to mean `$\rho^k\CI$ plus almost-$\cO(\rho^k)$'.

  \pfstep{Step~4. Simplification of the $\rho^4$ terms.} Writing $g'_3:=g_2+\rho^3\tilde h_3+\tilde h^\flat$ where now $\tilde h^\flat$ is almost-$\cO(\rho^4)$, we have
  \begin{align}
    0 &= \Ric(g'_3)-\Lambda g'_3 \nonumber\\
  \label{EqStCIRic31}
      &\equiv \Ric(g_2+\rho^3\tilde h_3)-\Lambda(g_2+\rho^3\tilde h_3) + I(D_{g'_0}\Ric-\Lambda,\rho\pa_\rho)\tilde h^\flat \\
  \label{EqStCIRic32}
      &\equiv (\Ric(g_2)-\Lambda g_2) + \bigl(D_{g'_0}\Ric(\rho^3\tilde h_3) - \Lambda\rho^3\tilde h_3\bigr) + I(D_{g'_0}\Ric-\Lambda,\rho\pa_\rho)\tilde h^\flat
  \end{align}
  modulo almost-$\cO(\rho^5)$.

  The parity arguments used to show~\eqref{EqStCIf3} imply that the $\rho^4$-coefficient $f_4$ of the first parenthesis $\Ric(g_2)-\Lambda g_2$ on the right in~\eqref{EqStCIRic32} has mixed coefficients equal to $0$, i.e.\ $(f_4)_{i 0}=0$ for all $1\leq i\leq 3$. Since every element in the range of $I(D_{g'_0}\Ric-\Lambda,\lambda)$ has vanishing mixed coefficients as well (cf.\ the vanishing of the second row of~\eqref{EqStCIIndRic}), we conclude that the same must be true for the $\rho^4$-coefficient
  \[
    \tilde f_4=\bigl(\rho^{-4}(D_{g'_0}\Ric(\rho^3\tilde h_3)-\Lambda\rho^3\tilde h_3)\bigr)\big|_{\rho=0}
  \]
  of the second parenthesis in~\eqref{EqStCIRic32}. Note first, however, that the sum of the first two parentheses on the right in~\eqref{EqStCIRic32}, which is almost-$\cO(\rho^4)$, gets mapped to almost-$\cO(\rho^5)$ by $I(\delta_{g'_0}\sfG_{g'_0},\rho\pa_\rho)$ due to the second Bianchi identity for $g_2+\rho^3\tilde h_3$ (applied to~\eqref{EqStCIRic31}). The hypotheses of Lemma~\ref{LemmaStCISolv}\eqref{ItStCISolv4} are thus satisfied for $f:=f_4+\tilde f_4$. Arguing as after~\eqref{EqStCIStep2}, we can thus use Corollary~\ref{CorStCISeq} to find $\dot a_4,\dot b_4$ (quasi-homogeneous of degree $4$ in $\rho$) so that, for $a_4=a_3+\dot a_4$, $b_4=b_3+\dot b_4$, we have
  \[
    g'_4 := \phi_{a_4,b_4}^*g \equiv g'_0 + \rho^2\tilde h_2 + \rho^3\tilde h_3 + \rho^4\tilde h_4,\qquad \tilde h_4\in (\CI+R^\alpha\Hb^\infty)\bigl(\cI^+_{R_0};\upbeta^*(\tau^{-2}S^2 T^*X)\bigr),
  \]
  modulo almost-$\cO(\rho^5)$.

  We extract further information from vanishing of the mixed (i.e.\ $(i,0)$-)coefficients of $\tilde f_4$. These are equal to the mixed coefficients of the $\rho^4$ term of $\Ric(g'_0+\rho^3\tilde h_3)-\Ric(g'_0)$. We compute, using~\eqref{EqE1StructChristoffel} for $g'_0+\rho^3\tilde h_3$ and $g'_0$, that $\Gamma^\lambda_{\mu\nu}=0$ except for
  \begin{alignat*}{2}
    \Gamma^0_{i j} &= \tfrac12\rho^3 (g_{(3)})_{i j}-(g_{(0)})_{i j},&\qquad
    \Gamma^\ell_{i 0} &\equiv \tfrac12\rho^3(g_{(3)})_i{}^\ell - \delta_i^\ell, \\
    \Gamma^\ell_{0 j} &\equiv \tfrac32\rho^3(g_{(3)})_j{}^\ell, &\qquad
    \Gamma^\ell_{i j} &\equiv \tau\Gamma(g_{(0)})^\ell_{i j} + \cO(\rho^4),
  \end{alignat*}
  where we raise indices using $g_{(0)}$, and `$\equiv$' means equality modulo $\cO(\rho^6)$. Therefore, we can use~\eqref{EqStCIRic} for $g'_0+\rho^3\tilde h_3$ and $g'_0$ to compute, modulo $\cO(\rho^5)$,\footnote{The right hand side of \cite[(A.6)]{HintzAsymptoticallydS} should feature $-\frac{n}{2}$ instead of $n$.}
  \begin{align*}
    &\Ric(g'_0+\rho^3\tilde h_3)_{i 0} - \Ric(g'_0)_{i 0} \\
    &\qquad\equiv \tau\pa_\ell\bigl(\tfrac32\rho^3(g_{(3)})_i{}^\ell\bigr) + \tau\Gamma(g_{(0)})^\ell_{\ell k}\tfrac32\rho^3(g_{(3)})_i{}^k - \tau\Gamma(g_{(0)})^k_{\ell i}\tfrac32\rho^3(g_{(3)})_k{}^\ell \\
    &= -\tfrac32\tau\delta_{g_{(0)}}(\rho^3 g_{(3)})_i.
  \end{align*}
  Recalling that $\pa_\ell$ are (spatial) coordinate derivatives in the coordinates $\tau,x^1,x^2,x^3$ valid on the interior of $\breve M$, and recalling moreover that $\rho=\tau|x|^{-1}$, the $\rho^4$-coefficient of this tensor vanishes if and only if $\delta_{g_{(0)}}(|x|^{-3} g_{(3)})=0$, which is~\eqref{EqStCIh3}.

  \pfstep{Step~5. Simplification of the remaining terms; completion of the construction.} From this point onwards, we do not need any further careful considerations since we can use the simplest parts (namely, those which apply for $\lambda>4$) of Corollary~\ref{CorStCISeq} and Lemma~\ref{LemmaStCISolv}. To wit, the almost-$\cO(\rho^5)$-term $\tilde h^\flat:=g'_4-(g'_0+\rho^2\tilde h_2+\rho^3\tilde h_3+\rho^4\tilde h_4)$ satisfies an equation
  \[
    I(D_{g'_0}\Ric-\Lambda,\rho\pa_\rho)\tilde h^\flat \equiv \rho^5 f
  \]
  modulo almost-$\cO(\rho^6)$, where $f\in (\CI+R^\alpha\Hb^\infty)\cap\ker I(\delta_{g'_0}\sfG_{g'_0},5)$. We can therefore find $\dot a_5,\dot b_5$ so that, upon setting $a_5=a_4+\dot a_5$ and $b_5=b_4+\dot b_5$,
  \[
    \phi_{a_5,b_5}^*g \equiv g'_0 + \sum_{i=2}^5 \rho^i\tilde h_i
  \]
  modulo almost-$\cO(\rho^6)$. Proceeding in this fashion produces $\dot a_i,\dot b_i$, $i=6,7,8,\ldots$. Taking $a$ and $b$ to be asymptotic sums of $\dot a_1,\dot a_2,\ldots$ and $\dot b_1,\dot b_2,\ldots$, respectively, the conclusions of the Proposition follow.
\end{proof}

With this result, the proof of Theorem~\ref{ThmEStabEin} is now complete, with a small caveat: fixing any $\rho_1\in(0,\rho_0)$ and $R_1\in(0,R_0)$, the arguments thus far produce a metric $g$ which is defined only on the union of $\Omega_{\rho_0,R_0}\cap\{\rho\geq\rho_1\}$ and $\{\rho\leq\rho_1,\ R\leq R_1\}$ where $\rho_1\in(0,\rho_0)$ and $R_1\in(0,R_0)$ are fixed but arbitrary. For $R_1$ close to $R_0$ and $\rho_1>0$ small, this domain is thus slightly smaller than $\Omega_{\rho_0,R_0}$ itself. This can easily be fixed. One way is to pull back $g$ along a diffeomorphism which is the identity near $\cK\cup\Sigma_{\rho_0,R_0}$ and a map $(\tau,x)\mapsto(\tau,\lambda(x)x)$ near $\cI^+$ for a suitable map $\lambda$ with $\lambda(x)=1$ for small $|x|$ and $\lambda(x)<1$ for $|x|$ near $R_0$. Another way is to solve the original gauge-fixed Einstein equations from the outset on a slightly larger domain (replacing $\frac12$ in~\eqref{Eq0bDom} by $\frac34$, say); then the pullback arguments in this section produce a metric which is well-defined on $\Omega_{\rho_0,R_0}$.

\begin{rmk}[Comparison with the Riemannian setting, II]
\label{RmkStRiem2}
  The construction of a coordinate system in which the conformally rescaled metric, initially only known to be polyhomogeneous, has optimal regularity is done in the Riemannian setting in \cite[\S6]{ChruscielDelayLeeSkinnerConfCompReg} in one go via Fermi (or boundary normal) coordinates and an appeal to the formal computations of \cite{FeffermanGrahamAmbient,FeffermanGrahamAmbientBook}. Our approach avoids an analysis of special coordinate systems (which would be delicate near $\cI^+\cap\cK$) in favor of a more direct argument which in particular directly handles the relevant parts of the Fefferman--Graham argument.
\end{rmk}

\bibliographystyle{alphaurl}


\end{document}